\definecolor{grey1}{rgb}{0.5,0.5,0.5}
\newcommand{\INDSTATE}[1][1]{\STATE\hspace{\algorithmicindent}}
\definecolor{grau}{rgb}{0.8,0.8,0.8}
\newcommand{\chen}[1]{\color{orange}}
\definecolor{grau}{rgb}{0.8,0.8,0.8}
\newcommand{\cut}{} 
\theoremstyle{plain}
\newtheorem{proposition}{Proposition}
\newtheorem{theorem}{Theorem}
\newtheorem{lemma}{Lemma}
\newtheorem{corollary}{Corollary}
\theoremstyle{definition}
\newtheorem*{example}{Example}
\newtheorem*{algo}{Algorithm}
\theoremstyle{remark}
\newtheorem{remark}{Remark}
\numberwithin{equation}{section}
\newcommand{\iid}{{\mathrm{i.i.d.}}}
\newcommand{\iidsim}{{\overset{\mathrm{i.i.d.}}{\sim}}}
\newcommand{\transpose}{^{\mathrm{T}}}
\newcommand{\lsigma}{{\underline{\sigma}}}
\newcommand{\usigma}{{\overline{\sigma}}}
\newcommand{\leps}{{\underline{\epsilon}}}
\newcommand{\ueps}{{\overline{\epsilon}}}
\newcommand{\bLambda}{{\bm{\Lambda}}}
\newcommand{\calT}{{\mathcal{T}}}
\newcommand{\calC}{{\mathcal{C}}}
\newcommand{\calS}{{\mathcal{S}}}
\newcommand{\calN}{{\mathcal{N}}}
\newcommand{\bolda}{{\bm{a}}}
\newcommand{\bU}{{\bm{U}}}
\newcommand{\bu}{{\mathbf{u}}}
\newcommand{\bw}{{\bm{w}}}
\newcommand{\by}{{\bm{y}}}
\newcommand{\bz}{{\bm{z}}}
\newcommand{\mb}{{\bm{m}}}
\newcommand{\bV}{{\bm{V}}}
\newcommand{\bgamma}{{\bm{\gamma}}}
\newcommand{\bGamma}{{\bm{\Gamma}}}
\newcommand{\bSigma}{{\bm{\Sigma}}}
\newcommand{\eye}{{\bm{I}}}
\newcommand{\one}{{\mathbf{1}}}
\newcommand{\btheta}{{\bm{\theta}}}
\newcommand{\bmu}{{\bm{\mu}}}
\newcommand{\zero}{{\bm{0}}}
\newcommand{\eps}{\epsilon}
\newcommand{\calM}{{\mathcal{M}}}
\newcommand{\keywords}[1]{\par\addvspace\baselineskip\noindent\enspace\ignorespaces \textbf{Key Words: }#1}
\g@addto@macro\normalsize{%
	\setlength\abovedisplayskip{1ex}
	\setlength\belowdisplayskip{1ex}
	\setlength\abovedisplayshortskip{1ex}
	\setlength\belowdisplayshortskip{1ex}
}
\setlist[enumerate]{itemsep=0mm}
\setlist[itemize]{itemsep=0mm}
 \author{Fangzheng Xie\thanks{Department of Applied Mathematics and Statistics, Johns Hopkins University, Baltimore, MD, 21218} \and Yanxun Xu\footnotemark[1] \thanks{Correspondence should be addressed to Yanxun Xu (yanxun.xu@jhu.edu)}}
\title{{Bayesian Repulsive Gaussian Mixture Model}}
\date{}
\begin{document}
\maketitle
\doublespacing

\begin{abstract}
	We develop a general class of Bayesian repulsive Gaussian mixture models that encourage well-separated clusters, aiming at reducing potentially redundant components produced by independent priors for locations (such as the Dirichlet process). The asymptotic results for the posterior distribution of the proposed models are derived, including posterior consistency and posterior contraction rate in the context of nonparametric density estimation. More importantly, we show that
	compared to the independent prior on the component centers, the repulsive prior introduces additional shrinkage effect on the tail probability of the posterior number of components, which serves as a measurement of the model complexity.
	In addition, an efficient and easy-to-implement blocked-collapsed Gibbs sampler is developed based on the exchangeable partition distribution and the corresponding urn model. We evaluate the performance and demonstrate the advantages of the proposed model through extensive simulation studies and real data analysis. The R code is available at \url{https://drive.google.com/open?id=0B_zFse0eqxBHZnF5cEhsUFk0cVE}. 
	\keywords{Blocked-Collapsed Gibbs Sampler, Density Estimation, Model Complexity, Posterior Convergence, Urn-Model}
\end{abstract}
\section{Introduction} 
\label{sec:introduction}
In Bayesian analysis of mixture models, independent priors on the component-specific parameters have been widely used because of their flexibility and technical convenience.  
A nonparametric example is the renowned Dirichlet process (DP) 
where the atoms in the stick-breaking representation 
are independent and identically distributed (i.i.d.) from a base distribution. One of the potential but non-negligible issues for such an approach is the presence of redundant components, especially when parsimony on the number of components is preferred. 
For example, when a mixture model is used in biomedical applications, each component of the mixture may be interpreted as clinically or biologically meaningful subpopulations (of patients, disease types, etc.). 
To address this challenge, in this paper we argue for a Bayesian approach for modeling repulsive mixtures as a competitive alternative, establish its posterior consistency and posterior contraction rate, and study the shrinkage effect on the posterior number of components in the presence of such a repulsion. 

Mixture models have been extensively studied from both the frequentist and the Bayesian perspectives. Formally, given the parameter space $\Theta$, a mixture model with a kernel density $\psi:\mathbb{R}^p\times\Theta\to\mathbb{R}_+$ and a mixing distribution $G\in\mathcal{M}(\Theta)$ can be represented as
$\by_i\sim\int_\Theta \psi(\by,\btheta)\mathrm{d}G(\btheta)$, where $\mathcal{M}(\Theta)$ is a class of probability distributions on $\Theta$ (equipped with an implicitly specified suitable $\sigma$-field). 
The most commonly used kernel density $\psi$ is the normal density, which leads to the Gaussian mixture model (GMM). In particular, the GMM with a discrete (potentially infinitely supported) mixing $G=\sum_kw_k\delta_{(\bmu_k,\bSigma_k)}$ has been widely used for clustering, since an equivalent characterization is $
\by_i\mid z_i\sim\mathrm{N}(\bmu_{z_i},\bSigma_{z_i})$, $\mathbb{P}(z_i = k)=w_k$, where  $z_i$ encodes the clustering membership of the corresponding observation $\by_i$. The parameters for each component $(\bmu_k,\bSigma_k)$, $k=1,\cdots,K$, are referred to as the cluster/component-specific parameters. Throughout we use $K$ to denote the (potentially infinite) number of components in a mixture model. 
When $G$ is completely unknown, the GMM is referred to as nonparametric GMM \citep{chen2017consistency}. 
Frequentists' ways of modeling mixture models require a finite and fixed $K$, the estimation of which could  be accomplished using model selection approaches. Nonparametric Bayesian priors allow us to perform inference without {\it a priori} fixed and finite $K$. For example, the DP prior on $G$ yields an exchangeable partition distribution on $(\btheta_{z_1},\cdots,\btheta_{z_n})$, the inference of which indicates a distribution on the number of clusters among $(\btheta_{z_1},\cdots,\btheta_{z_n})$. 
The development of Markov chain Monte Carlo sampling techniques   \citep{ishwaran2001gibbs,ishwaran2002approximate,antoniak1974mixtures,maceachern1998estimating,neal2000markov,walker2007sampling} further popularized the DP mixture model in a wide array of applications, such as biomedicine, machine learning, pattern recognition, etc. 

Meanwhile, 
the asymptotic results of the DP mixture of Gaussians as a method of nonparametric density estimation have been studied. In the univariate case, the posterior consistency of the DP mixture of univariate Gaussians was established by \cite{ghosal1999posterior}, and the posterior convergence rate in the context of density estimation in nonparametric Gaussian mixture model was studied by \cite{ghosal2001entropies}. 
Posterior consistency in the multivariate setting \citep{wu2010l1} is harder due to the exponential growth of the $L_1$-entropy of sieves. \cite{shen2013adaptive,canale2017posterior} derived the posterior contraction rates of general smooth densities for multivariate density estimation using the DP mixture of Gaussians. 

Nevertheless, as shown in \cite{xu2016bayesian}, the DP mixture model typically produces relatively large number of clusters, some of which are typically redundant. Theoretically, \cite{miller2013simple} showed that when the underlying data generating density is a finite mixture of Gaussians, the posterior number of clusters under the DP mixture model is not consistent. In other words, the posterior distribution of the number of clusters does not converge to the point mass at the underlying true $K$. Alternatively, finite mixture models with a prior on $K$, referred to as the mixture of finite mixtures (MFM) \citep{nobile1994bayesianmixture,miller2016mixture}, was developed. The posterior inference of the MFM can be carried out either by the reversible-jump Markov chain Monte Carlo (RJ-MCMC) \citep{green1995reversible}, or by the collapsed Gibbs sampler derived via the exchangeable partition representation \citep{miller2016mixture}. Meanwhile, the posterior asymptotics for the MFM as a nonparametric density estimator, to the our best knowledge, is restricted to the cases of univariate location-scale mixtures \citep{kruijer2010adaptive} and multivariate location mixtures \citep{shen2013adaptive}, in which the priors on locations are assumed to be conditionally i.i.d. given $K$.

These approaches, however, assume independent prior on the component-specific parameters $(\btheta_1,\cdots,\btheta_K)$. In the context of parametric inference, where the underlying data generating distribution is a finite mixture of Gaussians, repulsive priors \citep{petralia2012repulsive, quinlan2017parsimonious} and non-local priors \citep{fuquene2016choosing} were developed as shrinkage methods to penalize mixture models with redundant components. In particular, theoretical properties regarding only univariate density estimations in parametric GMM (\emph{i.e.}, assuming the ground true density is a finite mixture of Gaussians) were discussed in \cite{petralia2012repulsive} and \cite{quinlan2017parsimonious}. 
In addition, \cite{xu2016bayesian} proposed repulsive mixtures via determinantal point process (DPP)  with a prior on $K$, where the RJ-MCMC sampler for the posterior inference is potentially inefficient in high-dimensional setting. 

In this paper, we propose a Bayesian repulsive Gaussian mixture (RGM) model. The main contributions of this paper are as follows. First, under certain mild regularity conditions, we establish the posterior consistency for density estimation in nonparametric GMM under the RGM prior, and obtain an ``almost'' parametric posterior contraction rate $(\log n)^{t}/\sqrt{n}$ for $t>p+1$.
To the best of our knowledge, earlier work such as \cite{ghosal2001entropies}, \cite{petralia2012repulsive}, and \cite{quinlan2017parsimonious}, have not addressed the asymptotic analysis of repulsive mixture models for density estimation in nonparametric GMM.  \cite{ghosal2001entropies} was the earliest work that discussed the posterior contraction rate for density estimation in nonparametric GMM, where the Dirichlet process (DP) prior is used. 
\cite{petralia2012repulsive} and \cite{quinlan2017parsimonious} discussed the posterior contraction rate using repulsive priors, but under the parametric assumption that the mixing distribution is finitely discrete. 
Second, the relationship between the posterior of $K$ (\emph{i.e.}, the number of components), which serves as a measurement of the model complexity, and the repulsive prior is studied as well. It turns out that compared to the independent prior on the component centers, the repulsive prior introduces additional shrinkage effect on the tail probability of the posterior of $K$ under the nonparametric GMM assumption. 
Furthermore, instead of fixing $K$ or implementing a RJ-MCMC sampler for the posterior inference of the RGM model, we 
develop a more efficient blocked-collapsed Gibbs sampler that is based on the exchangeable partition distributions.

The remainder of the paper is organized as follows. In Section \ref{sec:preliminaries} we formulate the Bayesian repulsive Gaussian mixture model. 
Section \ref{sec:posterior_convergence} elaborates the theoretical properties of the posterior distribution. In particular, we establish the posterior consistency, investigate posterior contraction rate, and study the shrinkage effect on the posterior number of components in the presence of the repulsive prior. In Section \ref{sec:posterior_inference} we develop the generalized urn model for the RGM model by integrating out the mixing weights and $K$, and design an efficient blocked-collapsed Gibbs sampler. Section \ref{sec:numerical_results} demonstrates the advantages of the proposed model as well as the efficiency of the proposed inference algorithm via simulation studies and real data analysis. 
We conclude the paper in Section \ref{sec:conclusion}. 

\section{Bayesian Repulsive Mixture Model} 
\label{sec:preliminaries}
In this section we formulate the RGM model in a Bayesian framework. Suppose $\mathcal{S}\subset\mathbb{R}^{p\times p}$ is a collection of positive definite matrices, 
\cut 
equipped with the Borel $\sigma$-field on $\mathcal{S}$. We consider the Gaussian mixture model, a family of densities of the form
\begin{eqnarray}
f_F(\by)=\int_{\mathbb{R}^p\times\calS}\phi(\by\mid\bmu,\bSigma)\mathrm{d}F(\bmu,\bSigma),
\end{eqnarray} 
where 
$
\phi(\by\mid\bmu,\bSigma)=\det(2\pi\bSigma)^{-\frac{1}{2}}\exp\left[-\frac{1}{2}(\by-\bmu)^\top\bSigma^{-1}(\by-\bmu)\right]$
is the density of the $p$-dimensional Gaussian distribution $\mathrm{N}(\bmu,\bSigma)$ with mean $\bmu$ and covariance matrix $\bSigma$, and $F$ is a distribution on $\mathbb{R}^p\times\calS$. 
We shall also use the shorthand notation $\phi_{\bSigma}(\by-\bmu)=\phi(\by\mid\bmu,\bSigma)$ and $f_F=\phi_{\bSigma}* F$, where $*$ is the conventional notation for convolution of two functions. We assume that the data 
$(\by_n)_{n=1}^{\infty}$ are i.i.d. generated from some unknown density $f_0$, the estimation of which is of interest. 

Denote the space of all probability distributions over $\mathbb{R}^p\times\calS$ by $\mathcal{M}(\mathbb{R}^p\times\calS)$, and that over $\mathbb{R}^p$ by $\mathcal{M}(\mathbb{R}^p)$. We define a prior $\Pi$ on $f$ over the space of all density functions in $\mathbb{R}^p$
by the following hierarchical model:  
\begin{eqnarray}
(f(\by)\mid F)&=&\int_{\mathbb{R}^p\times\calS}\phi(\by\mid\bmu,\bSigma)\mathrm{d}F(\bmu,\bSigma),\nonumber\\
\label{eqns:MFM}
\left(F\mid K, \{w_k, \bmu_k, \bSigma_k\}_{k=1}^K\right)&=&\sum_{k=1}^Kw_k\delta_{(\bmu_k,\bSigma_k)},\\
(\bmu_1,\bSigma_1,\cdots,\bmu_K,\bSigma_K\mid K)&\sim&p(\bmu_1,\bSigma_1,\cdots,\bmu_K,\bSigma_K\mid K),\nonumber\\
(w_1,\cdots,w_K\mid K)\sim\mathcal{D}_K(\beta), && K\sim p_K(K),\quad K\in \mathbb{N}_+. \nonumber
\end{eqnarray}
Here $p(\bmu_1,\bSigma_1,\cdots,\bmu_K,\bSigma_K\mid K)>0$ is some density function with respect to the Lebesgue measure on $(\mathbb{R}^p\times\calS)^K$,  $\mathcal{D}_K(\beta)$ is the symmetric Dirichlet distribution over $\Delta^{K}$ with density function $p(w_1,\cdots,w_K)=\Gamma(K\beta)/\Gamma(\beta)^K\prod_{k=1}^Kw_k^{\beta-1}$, where $\Delta^K=\{(w_1,\cdots,w_K)\transpose:\sum_{k=1}^Kw_k=1,w_k\geq0\}$ is the $\ell_1$-simplex on $\mathbb{R}^K$. 
The prior on $K$ that is supported on all positive integers is essential, as we allow the number of components to grow with the sample size in order to fit the data well. 

Instead of assuming $(\bmu_k,\bSigma_k)_{k=1}^K$ being $\iid$ from a ``base measure'', we introduce repulsion among components $\mathrm{N}(\bmu_k,\bSigma_k)$ through their centers $\bmu_k$, such that they are well separated.  
We assume the density $p(\bmu_1,\bSigma_1,\cdots,\bmu_K,\bSigma_K\mid K)$ is of the following form,
\begin{eqnarray}\label{eqn:repulsive_density}
p(\bmu_1,\bSigma_1,\cdots,\bmu_K,\bSigma_K\mid K)&=&\frac{1}{Z_K}\left[\prod_{k=1}^Kp_{\bmu}(\bmu_k)p_{\bSigma}(\bSigma_k)\right]h_K(\bmu_1,\cdots,\bmu_K),
\end{eqnarray}
where 
$Z_K=\int\cdots\int_{\mathbb{R}^{p\times K}} h_K(\bmu_1,\cdots,\bmu_K)\left[\prod_{k=1}^Kp(\bmu_k)\right]\mathrm{d}\bmu_1\cdots \mathrm{d}\bmu_K$ 
is the normalizing constant, and the function $h_K:(\mathbb{R}^p)^K\to[0,1]$ is invariant under permutation of its arguments: $h_K(\bmu_1,\cdots,\bmu_K)=h_K(\bmu_{\mathfrak{T}(1)},\cdots,\bmu_{\mathfrak{T}(K)})$ for any permutation $\mathfrak{T}:\{1,\cdots,K\}\to \{1,\cdots,K\}$. We require that $h_K$ satisfies the following repulsive condition: $h_K(\bmu_1,\cdots,\bmu_K)=0$ if and only if $\bmu_k=\bmu_{k'}$ for some $k\neq k'$, $k,k'\in\{1,\cdots,K\}$. 
In this paper, we focus on the case where the repulsive property is introduced only through the mean vectors $(\bmu_1,\cdots,\bmu_K)$, \emph{i.e.}, we allow nonvanishing density even when distinct components share an identical covariance matrix. The case where repulsion is introduced through the covariance matrices is of independent interest and may be further explored. 

We consider the following two classes of repulsive functions $h_K(\bmu_1,\cdots,\bmu_K)$:
\begin{eqnarray}
\label{eqn:repulsive_function_1}
h_K(\bmu_1,\cdots,\bmu_K)&=&\min_{1\leq k<k'\leq K}g(\|\bmu_k-\bmu_{k'}\|),\\
\label{eqn:repulsive_function_2}
h_K(\bmu_1,\cdots,\bmu_K)&=&\left[\prod_{1\leq k<k'\leq K}g\left(\|\bmu_k-\bmu_{k'}\|\right)\right]^{\frac{1}{K}},
\end{eqnarray}
for $K\geq2$, and $h_1(\bmu_1)\equiv 1$, 
where $g:\mathbb{R}_+\to[0,1]$ is a strictly monotonically increasing function with $g(0)=0$. Notice that the repulsive functions defined here generalize those in \cite{petralia2012repulsive, quinlan2017parsimonious}, who fix $K$ due to the challenges in estimating $K$ caused by the complicated relation between $Z_K$ and $K$.  
However, for the two repulsive functions \eqref{eqn:repulsive_function_1} and \eqref{eqn:repulsive_function_2}, we are able to find the connection between $Z_K$ and $K$ in \textbf{Theorem \ref{thm:bounding_Z_K}}, the proof of which is deferred to Section \ref{sec:proof_of_theorem_thm:bounding_z_k} of the Supplementary Material. We will discuss the non-asymptotic behavior of the posterior distribution of $K$ in Section \ref{sub:posterior_number_of_components}. 
\begin{theorem}\label{thm:bounding_Z_K}
	Suppose the repulsive function $h_K$ is either of the form \eqref{eqn:repulsive_function_1} or \eqref{eqn:repulsive_function_2}. If 
	$\iint_{\mathbb{R}^p\times\mathbb{R}^p}\left[\log g(\|\bmu_1-\bmu_2\|)\right]^2p(\bmu_1)p(\bmu_2)\mathrm{d}\bmu_1\mathrm{d}\bmu_2<\infty$,
	then $0\leq -\log Z_K\leq c_1K$ for some constant $c_1>0$. 
\end{theorem}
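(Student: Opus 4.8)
The plan is to bound $Z_K$ from both sides, the lower bound on $-\log Z_K$ being immediate and the upper bound being the real content. Since $h_K$ takes values in $[0,1]$ and $p$ is a probability density, $Z_K=\int\cdots\int h_K\prod_{k=1}^K p(\bmu_k)\,\mathrm{d}\bmu\le \int\cdots\int\prod_k p(\bmu_k)\,\mathrm{d}\bmu=1$, so $\log Z_K\le 0$, i.e. $-\log Z_K\ge 0$; moreover $h_K>0$ off a Lebesgue-null set and $p>0$, so $Z_K>0$ and $\log Z_K$ is well defined. It then remains to prove the linear upper bound $-\log Z_K\le c_1K$, equivalently the lower bound $Z_K\ge e^{-c_1K}$.

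The device for the lower bound is Jensen's inequality applied to the convex exponential: viewing $\bmu_1,\dots,\bmu_K$ as $\iid$ draws from $p$ and writing $Z_K=\mathbb{E}[\exp(\log h_K)]$, convexity gives $\log Z_K\ge\mathbb{E}[\log h_K]$, so it suffices to show $\mathbb{E}[\log h_K]\ge -c_1K$. I would introduce the nonnegative variables $X_{kk'}=-\log g(\|\bmu_k-\bmu_{k'}\|)\ge 0$ (nonnegative because $g\le 1$). The stated integral hypothesis is exactly $\mathbb{E}[X_{12}^2]=:M<\infty$, and since the $X_{kk'}$ are identically distributed, $\mathbb{E}[X_{12}]=:m\le\sqrt{M}<\infty$ as well.

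For the product form \eqref{eqn:repulsive_function_2} one has $\log h_K=-\tfrac1K\sum_{k<k'}X_{kk'}$, so by identical distribution $\mathbb{E}[\log h_K]=-\tfrac1K\binom{K}{2}m=-\tfrac{K-1}{2}m\ge-\tfrac{m}{2}K$, giving $c_1=m/2$. The min form \eqref{eqn:repulsive_function_1} is the delicate case: here $\log h_K=\min_{k<k'}\log g(\|\bmu_k-\bmu_{k'}\|)=-\max_{k<k'}X_{kk'}$, and the naive bound $\max\le\sum$ only produces an $O(K^2)$ estimate, which is too weak. Instead I would invoke Cauchy--Schwarz together with the elementary identity $(\max_{k<k'}X_{kk'})^2=\max_{k<k'}X_{kk'}^2\le\sum_{k<k'}X_{kk'}^2$ (valid precisely because the $X_{kk'}$ are nonnegative), yielding
\[
\mathbb{E}\Big[\max_{k<k'}X_{kk'}\Big]\le\sqrt{\mathbb{E}\Big[\big(\max_{k<k'}X_{kk'}\big)^2\Big]}\le\sqrt{\sum_{k<k'}\mathbb{E}[X_{kk'}^2]}=\sqrt{\tbinom{K}{2}M}\le\sqrt{M/2}\,K,
\]
so $c_1=\sqrt{M/2}$. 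Taking $c_1$ to be the larger of the two constants covers both forms simultaneously.

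The main obstacle is exactly the min form: pushing $\mathbb{E}[\max_{k<k'}X_{kk'}]$ down to the linear rate $O(K)$ rather than the trivial quadratic rate, and this is where the finite-second-moment hypothesis is genuinely needed (the product form requires only a finite first moment). One should note that although the $X_{kk'}$ are pairwise dependent through their shared indices, this causes no difficulty, since the max-of-squares $\le$ sum-of-squares step and linearity of expectation require no independence. It then only remains to check that $m$ and $M$ are finite as consequences of the stated integral condition and that the resulting $c_1$ is independent of $K$, both of which are routine.
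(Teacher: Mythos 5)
Your proof is correct and follows essentially the same route as the paper's: Jensen's inequality to reduce $-\log Z_K$ to $\mathbb{E}[-\log h_K]$, then for the min form the chain $\mathbb{E}[\max_{k<k'}X_{kk'}]\leq\{\mathbb{E}[\max_{k<k'}X_{kk'}^2]\}^{1/2}\leq\{\sum_{k<k'}\mathbb{E}[X_{kk'}^2]\}^{1/2}\lesssim K$ using the second-moment hypothesis, and for the product form linearity of expectation with the first moment controlled via Cauchy--Schwarz. Your side remarks (that $Z_K>0$ makes $\log Z_K$ well defined, and that the product form needs only a finite first moment) are accurate refinements the paper leaves implicit.
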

We refer to the prior $\Pi$ on $f\in\calM(\mathbb{R}^p)$ given by \eqref{eqns:MFM}, \eqref{eqn:repulsive_density}, \eqref{eqn:repulsive_function_1} or \eqref{eqn:repulsive_function_2} as the Bayesian \emph{repulsive Gaussian mixture} (RGM) model, denoted by $f\sim\mathrm{RGM}_1(\beta;g, p_{\bmu}, p_{\bSigma}, p_{K})$ if $h_K$ is of the form $\eqref{eqn:repulsive_function_1}$, or $f\sim\mathrm{RGM}_2(\beta;g, p_{\bmu}, p_{\bSigma}, p_{K})$ if $h_K$ is of the form $\eqref{eqn:repulsive_function_2}$.

\section{Theoretical Properties of the Posterior Distribution} 
\label{sec:posterior_convergence}

In this section we discuss the theoretical properties of the posterior of the RGM model defined in Section \ref{sec:preliminaries}. In particular, in the context of density estimation in nonparametric GMM, we establish the posterior consistency, discuss the posterior contraction rate, 
and study  
the shrinkage effect on the tail probability of the posterior number of components 
introduced by the repulsive prior. 
We defer the proofs of all theorems, corollaries, propositions, and lemmas to Sections  \ref{sec:proofs_of_posterior_consistency}, \ref{sec:proofs_for_posterior_contraction_rate}, and \ref{sec:proofs_for_the_model_complexity} of the Supplementary Material. 

\subsection{Preliminaries and Notations} 
\label{sub:preliminaries_and_notations}
We begin with some useful notations. Given a positive definite matrix $\bSigma$, we use $\lambda(\bSigma)$ to denote any eigenvalue  of $\bSigma$, and $\lambda_{\max}(\bSigma)$, $\lambda_{\min}(\bSigma)$ to denote the largest and smallest eigenvalue of $\bSigma$, respectively. Denote $\eye$  the identity matrix, and $\eye_p\in\mathbb{R}^{p\times p}$ the identity matrix of size $p\times p$ if specifying matrix dimension is needed. The Kullback-Leibler (KL) divergence between two densities $f$ and $g$ is denoted by $
\mathrm{D}(f\mid\mid g)=\int f\log(f/g)$. Denote $\|\cdot\|$ the Euclidean norm on $\mathbb{R}^p$. We use $\|\cdot\|_1$ to denote both the $L_1$-norm on $L^1(\mathbb{R}^p)$ and the $\ell_1$-norm on finite dimensional Euclidean space $\mathbb{R}^d$ for any $d\geq1$. $\|\cdot\|_\infty$ is used to denote both the $\ell_\infty$-norm of a vector and supremum norm of a bounded function. 
We use $\lfloor a\rfloor$ to denote the maximum integer that does not exceed $a$.
The notation $a \lesssim b$ is used throughout to represent $a\leq cb$ for some constant $c$ that is universal or unimportant for the analysis. Whenever possible, we use $\Pi$ to represent the prior/posterior probability measure, $\mathbb{P}_0$ and $\mathbb{E}_0$ to denote the probability and expectation with respect to the distribution $f_0$, and $p$ to denote all density functions in the model except $f_0$, $f$, and $\{f_F:F\in\calM(\mathbb{R}^p\times\calS)\}$. 
For random variables, we slightly abuse the notation and do not distinguish between the random variables themselves and their realizations. We shall also use $p(x)$ or $p_x(x)$ to denote the density of the random variable $x$. 

A weak neighborhood of $f_0$ is a set of densities containing a set of the form
$$
V=\left\{
f\in\calM(\mathbb{R}^p):\left|\int \varphi_i f_0-\int \varphi_i f\right|<\epsilon, i=1,\cdots,I
\right\},
$$
where $\varphi_i$'s are bounded continuous functions on $\mathbb{R}^p$ \citep{ghosal1999posterior}.
The posterior distribution is said to be \emph{weakly consistent} at $f_0$, if $\Pi(f\in U\mid \by_1,\cdots,\by_n)\to 1$ a.s. with respect to $\mathbb{P}_0$ for all weak neighborhoods $U$ of $f_0$. Given a prior $\Pi$ on $\calM(\mathbb{R}^p)$, a density function $f_0\in\mathcal{M}(\mathbb{R}^p)$ is said to be \emph{in the KL-support} of $\Pi$, or has the \emph{KL-property} (with respect to $\Pi$), if $\Pi(f\in\calM(\mathbb{R}^p):\mathrm{D}(f_0\mid\mid f)<\epsilon)>0$ for all $\epsilon>0$. The posterior distribution is said to be \emph{$L_1$(strongly) consistent} at $f_0$, if for all $\eps>0$, $\Pi(f\in\calM(\mathbb{R}^p):\|f-f_0\|_1>\eps\mid \by_1,\cdots,\by_n)\to0$ as $n\to\infty$ or in $\mathbb{P}_0$-probability. The \emph{posterior contraction rate} is any sequence $(\eps_n)_{n=1}^{\infty}$ such that $\Pi(f\in\calM(\mathbb{R}^p):\|f-f_0\|_1>M\eps_n\mid \by_1,\cdots,\by_n)\to 0$ as $n\to\infty$ in $\mathbb{P}_0$-probability for some constant $M>0$. 
Given a family of densities $\mathcal{F}$ on $\mathbb{R}^p$ with a metric $d$ on $\mathcal{F}$,
the \emph{$\epsilon$-covering number} of $\mathcal{F}$ with respect to $d$, denoted by $\mathcal{N}(\epsilon, \mathcal{F},d)$, is defined to be the minimum number of $\epsilon$ balls of the form $\{g\in\mathcal{F}:d(f,g)<\epsilon\}$ that are needed to cover $\mathcal{F}$. The $d$-metric entropy is the logarithm of the covering number under the $d$-metric. 

Above all, we assume that $f\sim\mathrm{RGM}_r(\beta;g, p_{\bmu},p_{\bSigma},p_K)$, $r=1$ or $2$. In order to develop the posterior convergence theory, we need some regularity conditions, most of which are typically satisfied in practice. We group these conditions into two categories. The first set of conditions are the requirements for the model.
\begin{itemize}[noitemsep,topsep=0cm]
	\item[\textbf{A0}]The data generating density $f_0$ is of the form $f_0=\phi_{\bSigma}*F_0$ for some $F_0\in\calM(\mathbb{R}^p\times\calS)$ that has a sub-Gaussian tail: $F_0\left(\|\bmu\|\geq t\right)\leq B_1\exp(-b_1t^2)$ for some $B_1,b_1>0$. 
	\item[\textbf{A1}]For some $\delta>0,c_2>0$, we have $g(x)\geq c_2\epsilon$ whenever $x\geq\epsilon$ and $\epsilon\in(0,\delta)$.
	\item[\textbf{A2}]$g$ satisfies $
	\iint_{\mathbb{R}^{p}\times \mathbb{R}^{p}} [\log g(\|\bmu_1-\bmu_2\|)]^2p(\bmu_1)p(\bmu_2)\mathrm{d}\bmu_1\mathrm{d}\bmu_2<\infty.$
	\item[\textbf{A3}]For some $\lsigma^2,\usigma^2\in(0,+\infty)$, we have $\lsigma^2\leq\inf_{\calS}\lambda(\bSigma)\leq\sup_{\calS}\lambda(\bSigma)\leq\usigma^2$.
	\item[\textbf{A4}]For some (non-random) unitary $\bU\in\mathbb{R}^{p\times p}$, $\bU\transpose\bSigma\bU$ is diagonal for all $\bSigma\in\calS$.
\end{itemize}
Condition A2 guarantees that $1/Z_K$ does not grow super-exponentially in $K$ by \textbf{Theorem \ref{thm:bounding_Z_K}}. Conditions A0 and A3 assume that both $f_0$ and $f$ are of the nonparametric GMM form, and hence guarantee that $f_0$ and $f$ are not too ``spiky'' such that a faster rate of convergence is obtainable. Condition A4, the simultaneous diagonalizability of all $\bSigma\in\calS$, appears to be of less importance, but it turns out that a structured space $\calS$ of covariance matrices decreases the $\|\cdot\|_1$-metric entropy of the proposed sieves in Section \ref{subsec:weak_consistency_kullback_leibler_property}, and hence affects the posterior contraction rate. We assume that $\bU\transpose\bSigma\bU=\mathrm{diag}(\lambda_1,\cdots,\lambda_p)$ for all $\bSigma\in\calS$, \emph{i.e.}, the eigenvalues of $\bSigma\in\calS$ are ordered according to the orthonormal eigenvectors in $\bU$.

We also need some requirements for the prior distributions. 
\begin{itemize}[noitemsep,topsep=0cm]
	\item[\textbf{B1}]$(w_1,\cdots,w_K\mid K)\sim\mathcal{D}_K(\beta)$ is weakly informative: $\beta\in(0,1]$.
	\item[\textbf{B2}]$p_{\bmu}$ has a sub-Gaussian tail: $\int_{\{\|\bmu\|\geq t\}}p(\bmu)\mathrm{d}\bmu\leq B_2\exp(-b_2t^2)$ for some $B_2,b_2>0$. 
	\item[\textbf{B3}]For all $\bmu\in\mathbb{R}^p$, $p(\bmu)\geq B_3\exp(-b_3\|\bmu\|^\alpha)$ for some $\alpha\geq2,B_3,b_3>0$.
	\item[\textbf{B4}]$p(\bSigma)$ is induced by $\prod_{j=1}^p p_\lambda(\lambda_j(\bSigma))$ with $\mathrm{supp}(p_\lambda)=[\lsigma^2,\usigma^2]$. 	
	\item[\textbf{B5}]There exists some $B_4,b_4>0$ such that for sufficiently large $K$, we have
	\begin{eqnarray}
	p_K(K)\geq\exp\left(-b_4K\log K\right),\quad\sum_{N=K}^{\infty}p_K(N)\leq\exp\left(-B_4K\log K\right).\nonumber
	\end{eqnarray}
\end{itemize}
Condition B1 assumes a vague prior on $(w_1,\cdots,w_K)$. Conditions B2 and B3 are requirements for the tail behavior of the function $p_{\bmu}$ in the sense that they are neither heavier than Gaussian nor thinner than an exponential power density \citep{scricciolo2011posterior}. Alternatively, one may assume $p(\bmu)\propto\exp(-b_3\|\bmu\|^\alpha)$ for some $b_3>0$, as suggested by \cite{kruijer2010adaptive}. Condition B4 is adopted in \cite{ghosal2001entropies} to obtain an ``almost'' parametric convergence rate. We will also discuss possible extensions to the case where $p_\lambda$ has full support on $(0,+\infty)$ later in this section. Condition B5 is the requirement for the tail behavior of the prior on $K$. Similar assumption on the tail behavior of the prior on $K$ is adopted in \cite{kruijer2010adaptive} and \cite{shen2013adaptive} for finite mixture models. As a useful example, we show that the commonly used zero-truncated Poisson prior on $K$ satisfies condition B5. 
\begin{example}
	The zero-truncated Poisson prior has a density function $p_K(K)=\frac{\lambda^K}{(\mathrm{e}^\lambda-1)K!}\mathbb{I}(K\geq1)$ with respect to the counting measure on $\mathbb{N}_+$ for some intensity parameter $\lambda>0$. Directly compute
	\begin{eqnarray}
	\sum_{N=K+1}^{\infty}p_K(N)=\frac{1}{\mathrm{e}^\lambda-1}\left(\mathrm{e}^\lambda-\sum_{N=0}^{K}\frac{\lambda^N}{N!}\right)=\frac{1}{\mathrm{e}^\lambda-1}\int_0^{\lambda}\frac{(\lambda-t)^{K}\mathrm{e}^t\mathrm{d}t}{K!}\lesssim\frac{\lambda^{K+1}}{(K+1)!}\nonumber,
	\end{eqnarray}
	where the second equality is due to Taylor's expansion. By Stirling's formula, this is further upper bounded by $(\frac{\lambda\mathrm{e}}{K+1})^{K+1}$. Therefore, substituting $K+1$ with $K$, we obtain
	\begin{eqnarray}
	\sum_{N=K}^{\infty}p_K(N)\lesssim\exp\left(K\log (\lambda\mathrm{e})-K\log K\right)\leq\exp\left(-\frac{1}{2}K\log K\right)\nonumber
	\end{eqnarray}
	for sufficiently large $K$. The constant for $\lesssim$ can be absorbed into the exponent, and hence we conclude $\sum_{N=K}^{\infty}p_K(N)\leq\exp(-B_4K\log K)$ for some $B_4>0$.
	
	For the lower bound on $p(K)$, for sufficiently large $K$ we again use Stirling's formula,
	\begin{eqnarray}
	p(K)=\frac{1}{\mathrm{e}^\lambda-1}\frac{\lambda^K}{K!}\geq\exp(K\log(\lambda\mathrm{e})-\log K-K\log K)\geq\exp(-2K\log K).\nonumber
	\end{eqnarray}
	Hence the zero-truncated Poisson prior on $K$ satisfies condition B5. 
\end{example}

\subsection{Posterior Consistency} 
\label{subsec:weak_consistency_kullback_leibler_property}
\textbf{Weak consistency.} Using the result from \cite{schwartz1965bayes}, a sufficient condition for $\Pi$ to be weakly consistent at $f_0$ is that $f_0$ is in the KL-support of $\Pi$. 
The following lemma is useful in that it provides a compactly supported $F_m$ such that $f_{F_m}$ can approximate $f_0$ arbitrarily well in the KL divergence sense.
\begin{lemma}\label{lemma:Approximation}
	Assume conditions A0-A4 and B1-B5 hold. 
	For all $m\in\mathbb{N}_+$, define a sequence of distributions $(F_m)_{m=1}^{\infty}$ by $F_m(A)=c_mF_0(A\cap\calT_m)$ for any measurable $A\subset\mathbb{R}^p\times\calS$, where 
	$$
	\calT_m=\left\{(\bmu:\bSigma)\in\mathbb{R}^p\times\calS:\|\bmu\|\leq m,\lsigma^2+\frac{1}{m}\leq\lambda_{\min}(\bSigma)\leq\lambda_{\max}(\bSigma)\leq\usigma^2-\frac{1}{m}\right\}
	$$
	and $c_m$ is the normalizing constant for $F_m$ with $c_m^{-1}=F_0(\calT_m)$. 
	Then $\int f_0\log\frac{f_0}{f_{F_m}}\to 0$ as $m\to\infty$.
\end{lemma}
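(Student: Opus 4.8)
The plan is to recognize the claimed convergence as $\mathrm{D}(f_0\mid\mid f_{F_m})\to 0$ and to establish it by a dominated-convergence argument after isolating the delicate tail contribution. First I would record the elementary facts that the sets $\calT_m$ increase to $\{(\bmu,\bSigma):\bSigma\text{ has all eigenvalues in }(\lsigma^2,\usigma^2)\}$, so that (using that $F_0$ places no mass on the eigenvalue boundary) $F_0(\calT_m)\to1$ and $c_m\to1$; since $\phi(\by\mid\cdot,\cdot)$ is bounded continuous for each fixed $\by$ and $\phi\le(2\pi\lsigma^2)^{-p/2}$ by condition A3, dominated convergence gives $f_{F_m}(\by)\to f_0(\by)$ pointwise and hence $\log(f_0/f_{F_m})\to0$ pointwise. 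The remaining task is to upgrade this pointwise statement to convergence of the integral against $f_0$, for which I would control the positive and negative parts of the integrand separately.

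For the negative part, note $f_{F_m}=c_m\int_{\calT_m}\phi\,\mathrm{d}F_0\le c_m f_0$, so $\log(f_0/f_{F_m})\ge-\log c_m\ge-\log c_1$, and the negative part is dominated by the $f_0$-integrable function $f_0\log c_1$. For the positive part I would use the identity $f_0=c_m^{-1}f_{F_m}+T_m$ with $T_m(\by):=\int_{\calT_m^c}\phi(\by\mid\bmu,\bSigma)\,\mathrm{d}F_0$, which together with $\log(1+x)\le x$ and $c_m^{-1}\le1$ yields the pointwise bound $\log(f_0/f_{F_m})\le T_m/f_{F_m}$. I would then split the excluded region as $\calT_m^c=A_m\cup E_m$, where $A_m=\{\|\bmu\|>m\}$ carries the far locations and $E_m=\{\|\bmu\|\le m,\ \lambda(\bSigma)\notin[\lsigma^2+1/m,\usigma^2-1/m]\}$ carries the near-boundary covariances; the latter means stay in the bulk and its $F_0$-mass vanishes, so its contribution is handled by dominated convergence just as in the pointwise step.

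The crux, and the step I expect to be the main obstacle, is the far-location contribution. Bounding $\phi\le(2\pi\lsigma^2)^{-p/2}$ and integrating gives the uniform estimate $\int_{A_m}\phi\,\mathrm{d}F_0\le(2\pi\lsigma^2)^{-p/2}F_0(\|\bmu\|>m)\le(2\pi\lsigma^2)^{-p/2}B_1\exp(-b_1m^2)$ by the sub-Gaussian tail in condition A0. The danger is that a crude uniform lower bound on $f_{F_m}$ costs a factor $\exp(\|\by\|^2/\lsigma^2)$, which the Gaussian weight $f_0\sim\exp(-\|\by\|^2/(2\usigma^2))$ cannot absorb because $\lsigma^2\le\usigma^2$; so one cannot simply dominate and must instead argue on the bad set $G_m=\{T_m>\tfrac12 f_0\}$ where the truncated-away mass dominates. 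On $G_m^c$ one has $f_0/f_{F_m}=\{c_m(1-T_m/f_0)\}^{-1}\le2$, so the integrand is bounded by $f_0\log2$ and dominated convergence sends $\int_{G_m^c}\to0$. On $G_m$ the exponential estimate forces $f_0\lesssim\exp(-b_1m^2)$, which by the Gaussian tail of $f_0$ confines $G_m$ to $\{\|\by\|\gtrsim m\}$; there I would apply the crude bound $\log(f_0/f_{F_m})\lesssim\|\by\|^2+m^2$ (valid by the eigenvalue bounds A3) and conclude using $\int_{\|\by\|\gtrsim m}f_0\|\by\|^2\,\mathrm{d}\by\to0$ together with $m^2\,\mathbb{P}_0(\|\by\|\gtrsim m)\to0$, the latter because the Gaussian tail $\exp(-cm^2)$ defeats the polynomial factor $m^2$. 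The whole argument rests on conditions A0 and A3; the interplay of the two covariance scales $\lsigma^2\le\usigma^2$ with the truncation radius is the genuine difficulty, resolved precisely by using the sub-Gaussian tail A0 quantitatively.
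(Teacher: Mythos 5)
Your overall architecture (pointwise convergence, separate treatment of the two signs, a good-set/bad-set split for the positive part) can be made to work, but as written there are two linked gaps, both traceable to the eigenvalue-boundary slice $E_m=\{\|\bmu\|\le m,\ \lambda(\bSigma)\notin[\lsigma^2+1/m,\usigma^2-1/m]\}$. First, its contribution is \emph{not} "handled by dominated convergence just as in the pointwise step": after your linearization $\log(f_0/f_{F_m})\le T_m/f_{F_m}$, the $E_m$-piece of the integrand is $f_0\,T_m^{E}/f_{F_m}$, and any lower bound on $f_{F_m}$ that is uniform in $m$ costs the factor $\exp(2\|\by\|^2/\lsigma^2)$ — exactly the obstruction you yourself identified for $A_m$ — while the best uniform bound on $T_m^{E}$ is $(2\pi\lsigma^2)^{-p/2}F_0(E_m)$, a constant in $\by$; so no integrable dominator exists and the same difficulty recurs. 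Second, your bad set $G_m=\{T_m>\tfrac12 f_0\}$ is defined with the \emph{full} $T_m$, but the quantitative step "$f_0\lesssim\exp(-b_1m^2)$ on $G_m$" uses only the sub-Gaussian bound on $T_m^{A}$; conditions A0--A4 give no decay rate whatsoever for $F_0(E_m)$, so the confinement of $G_m$ to $\{\|\by\|\gtrsim m\}$ fails. (If instead you define $G_m$ with $T_m^{A}$ alone, the bound $f_0/f_{F_m}\le 2$ on $G_m^c$ breaks, since $f_{F_m}=c_m(f_0-T_m^{A}-T_m^{E})$.) The repair is easy and makes your quantitative machinery unnecessary: keep $G_m=\{T_m>\tfrac12 f_0\}$, note $G_m\subset\{f_0<\delta_m\}$ with $\delta_m:=2\|T_m\|_\infty\le 2(2\pi\lsigma^2)^{-p/2}F_0(\calT_m^c)\to0$ (no rate needed), bound $\log(f_0/f_{F_m})\le\log(f_0/\xi)\lesssim 1+\|\by\|^2$ on all of $\mathbb{R}^p$ using the uniform lower bound $f_{F_m}\ge\int_{\calT_1}\phi\,\mathrm{d}F_0\ge\xi(\by)\gtrsim\exp(-2\|\by\|^2/\lsigma^2)$, and conclude $\int_{G_m}\le\int_{\{f_0<\delta_m\}}f_0(1+\|\by\|^2)\to0$ by dominated convergence, since $f_0>0$ everywhere and $f_0(1+\|\by\|^2)$ is integrable by A0 and A3.

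It is also worth noting that your starting premise — "one cannot simply dominate" — is mistaken, and this is where the paper's proof is simpler than yours. The paper never linearizes the logarithm: it establishes the uniform-in-$m$ sandwich $\xi(\by)\le f_{F_m}(\by)\le c_1(2\pi\lsigma^2)^{-p/2}$ (using $c_m\ge1$ and $\calT_m\supseteq\calT_1$), so that $|\log(f_0/f_{F_m})|\le\max\{|\log(f_0/(c_1(2\pi\lsigma^2)^{-p/2}))|,\log(f_0/\xi)\}$, a single dominating function. The exponential cost $\exp(2\|\by\|^2/\lsigma^2)$ you feared enters only \emph{inside} the logarithm, where it contributes the quadratic term $2\|\by\|^2/\lsigma^2$, and $\mathbb{E}_0\|\by\|^2<\infty$ (from A0 via Fubini) makes this $f_0$-integrable regardless of the mismatch between $\lsigma^2$ and $\usigma^2$. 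The blow-up you worked around is thus an artifact of the bound $\log(1+x)\le x$: the ratio $T_m/f_{F_m}$ indeed admits no integrable dominator, but $\log(f_0/f_{F_m})$ does, and the entire bad-set analysis can be dispensed with.
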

We remark that the construction in \cite{wu2008kullback} is not directly applicable. The major reason is that the variance of the convolving $\phi$ is allowed to be arbitrarily close to $0$ there, whereas we impose uniform boundedness on the eigenvalues of the covariance matrices. The sequence of densities constructed in \cite{wu2008kullback} is $(f_m(\by))_{m=1}^\infty=\left(\int_{\mathbb{R}^p}\phi_{\sigma_m^2\eye}(\by-\bmu)f_0(\by)\mathrm{d}\by\right)_{m=1}^\infty$, where $(\sigma_m)_{m=1}^\infty$ is a sequence that converges to $0$ at a certain rate. This construction does not apply when covariance matrices are bounded in spectrum. 
The construction of the sequence of densities $(f_{F_m})_{m=1}^\infty$ in \textbf{Lemma \ref{lemma:Approximation}} also serves as a technical contribution to the Kullback-Leibler property of Bayesian nonparametric GMM.

Based on \textbf{Lemma \ref{lemma:Approximation}}, we are able to establish the weak consistency via the KL-property.
\begin{theorem}\label{thm:weak_consistency}
	Assume conditions A0-A4 and B1-B5 hold. 
	Then $f_0$ is in the KL-support of $\Pi$, and hence $\Pi(\cdot\mid\by_1,\cdots,\by_n)$ is weakly consistent at $f_0$. 
\end{theorem}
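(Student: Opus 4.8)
The plan is to invoke Schwartz's theorem \citep{schwartz1965bayes}, which reduces weak consistency to showing that $f_0$ lies in the KL-support of $\Pi$; that is, for every $\epsilon>0$ one must verify $\Pi\{f:\mathrm{D}(f_0\mid\mid f)<\epsilon\}>0$. Fixing $\epsilon>0$, the first step is to apply \textbf{Lemma \ref{lemma:Approximation}} to obtain a compactly supported mixing distribution $F_m$, supported on $\calT_m$, with $\mathrm{D}(f_0\mid\mid f_{F_m})<\epsilon/3$ for $m$ large. Because $\calT_m$ keeps the eigenvalues of $\bSigma$ strictly inside $[\lsigma^2,\usigma^2]$ (this is the role of the $1/m$ buffer) and constrains $\|\bmu\|\le m$, all subsequent constructions take place over a fixed compact parameter region.

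The second step is to replace $F_m$ by a finitely supported $F^\star=\sum_{k=1}^N w_k^\star\delta_{(\bmu_k^\star,\bSigma_k^\star)}$ whose atoms have pairwise distinct mean vectors $\bmu_k^\star$ (achievable after an arbitrarily small perturbation of the representatives), strictly positive weights, and covariances lying in the interior of the admissible set. Partitioning $\calT_m$ into small cells and choosing one representative per cell, the uniform continuity of $(\bmu,\bSigma)\mapsto\phi(\by\mid\bmu,\bSigma)$ over the compact region (where the covariance eigenvalues are bounded above and below by A3) yields $\|f_{F^\star}-f_{F_m}\|_\infty$ arbitrarily small; combined with the Gaussian tail control this upgrades to $\mathrm{D}(f_0\mid\mid f_{F^\star})<2\epsilon/3$.

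The third step is to show that the prior charges a neighborhood of the configuration $(N,\{w_k^\star,\bmu_k^\star,\bSigma_k^\star\})$. Here $p_K(N)>0$ by B5; conditioned on $K=N$, the repulsive density \eqref{eqn:repulsive_density} is strictly positive at $(\bmu_1^\star,\ldots,\bSigma_N^\star)$ because the $\bmu_k^\star$ are distinct (so $h_N>0$ by the repulsive condition), $p_{\bmu}>0$ everywhere by B3, and each $\bSigma_k^\star$ lies in the interior of $[\lsigma^2,\usigma^2]^p$ so $p_{\bSigma}>0$ by B4 (with A4 supplying the eigenvalue parameterization); finally the symmetric Dirichlet $\mathcal{D}_N(\beta)$ with $\beta\in(0,1]$ (B1) has positive density on the interior of $\Delta^N$ where $(w_1^\star,\ldots,w_N^\star)$ lives. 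Hence the prior assigns positive mass to a small neighborhood $\mathcal{U}$ of this parameter configuration.

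Finally, I would show that every $f_F$ arising from $\mathcal{U}$ satisfies $\mathrm{D}(f_0\mid\mid f_F)<\epsilon$, whence $\Pi\{f:\mathrm{D}(f_0\mid\mid f)<\epsilon\}\ge\Pi(\mathcal{U})>0$. This last step is the main obstacle: the logarithm in the KL divergence is sensitive to the regions where $f_F$ is small, so $\|\cdot\|_\infty$-closeness alone does not control KL. The resolution again uses A3 and A0. Since all means stay bounded and all covariance eigenvalues lie in $[\lsigma^2,\usigma^2]$, every such $f_F$ admits a uniform Gaussian lower bound $f_F(\by)\ge c_1\exp(-c_2\|\by\|^2)$ together with the uniform upper bound $f_F\le(2\pi\lsigma^2)^{-p/2}$, so that $\log(f_0/f_F)$ is dominated by $C+c_2\|\by\|^2$. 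Condition A0 (the sub-Gaussian tail of $F_0$, hence $\int f_0\|\by\|^2\,\mathrm{d}\by<\infty$) renders this envelope $f_0$-integrable, so a dominated-convergence argument gives $\mathrm{D}(f_0\mid\mid f_F)\to\mathrm{D}(f_0\mid\mid f_{F^\star})<2\epsilon/3$ as $\mathcal{U}$ shrinks, which is below $\epsilon$ for $\mathcal{U}$ small enough. Combining the four steps establishes the KL-property, and Schwartz's theorem then delivers weak consistency at $f_0$.
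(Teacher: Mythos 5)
Your proof is correct, but it takes a genuinely different route from the paper. The paper does \emph{not} verify the KL-property by directly discretizing and computing prior mass; instead it invokes {Theorem 1} and {Lemma 3} of \cite{wu2008kullback}, which convert KL-support into four checkable conditions (listed as C1--C4 in the Supplementary Material): it takes $F_\epsilon=F_m$ from \textbf{Lemma \ref{lemma:Approximation}} for C1 and $D=\calT_m$, then verifies $f_0$-integrability of $\log f_{F_m}$ and of $\log\inf_{(\bmu,\bSigma)\in D}\phi(\by\mid\bmu,\bSigma)$ (C2), positivity of $\phi$ on compacts (C3), and uniform equicontinuity plus a tail bound $\sup\{\phi(\by\mid\bmu,\bSigma):\by\in C,(\bmu,\bSigma)\in E^c\}<c\epsilon/4$ on a suitable compact $E$ (C4); the repulsive structure enters only through the assertion that $\Pi^\star$ has full weak support on $\calM(\mathbb{R}^p\times\calS)$. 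Your argument — finite discretization of $F_m$, perturbation to distinct atoms so that $h_N>0$, explicit prior positivity under $p_K,p_\bmu,p_\bSigma,\mathcal{D}_N(\beta)$, and a dominated-convergence KL-continuity step using the uniform Gaussian lower bound $f_F\geq c_1\exp(-c_2\|\by\|^2)$ and $\mathbb{E}_0\|\by\|^2<\infty$ — is self-contained and, notably, makes explicit why the repulsion does not destroy prior positivity near the target configuration, which in the paper's route is hidden inside the unproved-in-detail full-support claim for $\Pi^\star$; it also parallels the quantitative prior-concentration computation the paper performs anyway for the contraction rate (\textbf{Lemma \ref{lemma:KL_ball_lower_bound}} and \textbf{Theorem \ref{thm:contraction_rate}}). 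What the paper's route buys is economy: only kernel-level tail and continuity facts need checking, with the measure-approximation machinery outsourced to the citation. Two small points to tighten in your write-up, neither a genuine gap: in your second step the upgrade from $\|\cdot\|_\infty$-closeness to KL-closeness needs the same ball-plus-tail split you describe in the final step (sup-norm control on $\{\|\by\|\leq T\}$ where $f_{F^\star}$ is bounded below by a constant, the quadratic envelope on the complement), since the ratio bound $\delta/f_{F^\star}(\by)$ grows like $\mathrm{e}^{c\|\by\|^2}$ globally; and for dominated convergence you must dominate $|\log(f_0/f_F)|$ two-sidedly, which additionally uses a Gaussian-type lower bound on $f_0$ itself — this is exactly the function $\xi$ constructed from condition A0 in the paper's proof of \textbf{Lemma \ref{lemma:Approximation}}, so the ingredient is available.
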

\noindent\textbf{Strong consistency.} 
To establish the posterior strong consistency, we utilize Theorem 1 in \cite{canale2017posterior}, which is a standard result for proving consistency for general Bayesian nonparametric density estimation methods (see Section \ref{sec:supporting_results} of the Supplementary Material). Specializing to the RGM model, we need to construct a sequence of submodels and partitions of each of these submodels that satisfy the conditions in Theorem 1 in \cite{canale2017posterior}. We now make these statements precise. 
Consider the following 
submodels of $\calM(\mathbb{R}^p)$:
\begin{eqnarray}
\mathcal{F}_{K_n}=\left\{f_F:F=\sum_{k=1}^Kw_k\delta_{(\bmu_k,\bSigma_k)},K\leq K_n,\bmu_k\in\mathbb{R}^p,\bSigma_k\in\calS\right\}\nonumber
\end{eqnarray}
and the following partition of the submodel $\mathcal{F}_{K_n}$
\begin{eqnarray}
\mathcal{G}_K(\bolda_K)=\mathcal{F}_K\left(\prod_{k=1}^K(a_k,a_k+1]\right),\quad\bolda_K=(a_1,\cdots,a_K)\in\mathbb{N}^K, \quad K=1,\cdots,K_n,\nonumber
\end{eqnarray}
where 
\begin{eqnarray}
\mathcal{F}_{K}\left(\prod_{k=1}^K(a_k,b_k]\right)=\left\{f_F:F=\sum_{k=1}^Kw_k\delta_{(\bmu_k,\bSigma_k)},\|\bmu_k\|_\infty\in(a_k,b_k]\right\}.\nonumber
\end{eqnarray}
According to Theorem 1 in \cite{canale2017posterior}, it suffices to show the following:
$f_0$ is in the KL-support of $\Pi$, and there exists some $b,\tilde{b}>0$, some sequence $(K_n)_{n=1}^\infty$, such that $\Pi(\mathcal{F}_{K_n}^c)\lesssim\mathrm{e}^{-bn}$ for sufficiently large $n$, and for all $\epsilon>0$, 
\begin{eqnarray}\label{eqn:summability_condition}
\lim_{n\to\infty}\mathrm{e}^{-(4-\tilde{b})n\epsilon^2}\sum_{K=1}^{K_n}\sum_{a_1=0}^\infty\cdots\sum_{a_K=0}^\infty \sqrt{\mathcal{N}\left(\epsilon,\mathcal{G}_K(\bolda_K),\|\cdot\|_1\right)}\sqrt{\Pi\left(\mathcal{G}_K(\bolda_K)\right)}=0.
\end{eqnarray}
\begin{lemma}\label{lemma:entropy}
	Let $a_k<b_k$ be non-negative integers, $k=1,\cdots,K$. Then for sufficiently small $\delta>0$,
	there exists constant $c_3>0$ such that 
	\begin{eqnarray}
	\mathcal{N}\left(\delta, \mathcal{F}_K\left(\prod_{k=1}^K(a_k,b_k]\right),\|\cdot\|_1\right)
	\leq\left(\frac{c_3}{\delta^{2p+1}}\right)^K\left(\prod_{k=1}^Kb_k\right)^p.\nonumber
	\end{eqnarray}
\end{lemma}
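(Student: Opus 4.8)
The plan is to build an explicit $\delta$-net by discretizing each free parameter of a $K$-component mixture — the $p$ coordinates of every mean, the $p$ eigenvalues of every covariance, and the $K$ mixing weights — and then counting the resulting net. The exponent $2p+1$ is exactly the per-component parameter budget: $p$ for the mean, $p$ for the covariance (which by Condition~A4 is determined by its $p$ eigenvalues through the fixed unitary $\bU$), and $1$ for the weight.

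First I would establish a Lipschitz bound for the $L_1$ distance between single Gaussian kernels: there exist constants $L_1,L_2$ depending only on $p,\lsigma^2,\usigma^2$ such that
$$
\left\| \phi_{\bSigma}(\cdot-\bmu) - \phi_{\bSigma'}(\cdot-\bmu') \right\|_1 \leq L_1\|\bmu-\bmu'\| + L_2 \sum_{j=1}^p |\lambda_j - \lambda_j'|,
$$
where $\lambda_j,\lambda_j'$ are the commonly ordered eigenvalues of $\bSigma,\bSigma'$. The mean part follows from bounding $\|\partial_{\bmu}\phi\|_1$, the covariance part from bounding $\|\partial_{\lambda_j}\phi\|_1$; both integrals are finite and uniformly bounded precisely because Condition~A3 confines every eigenvalue to the compact interval $[\lsigma^2,\usigma^2]$ with $\lsigma^2>0$, so the kernels never degenerate. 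Condition~A4 is what lets me parametrize $\bSigma$ by its $p$ eigenvalues alone rather than a full $p(p+1)/2$-dimensional matrix, which is the source of the $p$ (rather than $p^2$) in the exponent.

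Second I would pass from kernels to mixtures through the triangle-inequality decomposition
$$
\|f_F - f_{F'}\|_1 \leq \sum_{k=1}^K |w_k - w_k'| + \sum_{k=1}^K w_k' \left\| \phi_{\bSigma_k}(\cdot - \bmu_k) - \phi_{\bSigma_k'}(\cdot - \bmu_k') \right\|_1,
$$
matching components index by index (which is natural here, since the constraint $\|\bmu_k\|_\infty\in(a_k,b_k]$ already labels the components); because $\sum_k w_k'=1$, the second sum is at most the largest per-component kernel distance. Combined with Step~1, controlling the mixture within $L_1$ distance $\delta$ reduces to controlling each weight to $\ell_1$-accuracy $\lesssim\delta$, each mean to accuracy $\lesssim\delta$, and each eigenvalue to accuracy $\lesssim\delta$. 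Then I would count the nets: for component $k$ the mean ranges in $\{\|\bmu_k\|_\infty\in(a_k,b_k]\}\subset\{\|\bmu_k\|_\infty\le b_k\}$, a box of side $2b_k$, so an $\ell_\infty$-net at scale $\sim\delta$ has at most $(c\,b_k/\delta)^p$ points (using $b_k\ge1$, since $a_k<b_k$ are nonnegative integers, and absorbing a $\sqrt{p}$ into $c$ to pass from $\ell_\infty$ to Euclidean closeness); the $p$ eigenvalues range over the fixed interval $[\lsigma^2,\usigma^2]$, giving at most $(c/\delta)^p$ points per component; and the weight simplex $\Delta^K$ admits an $\ell_1$-net of size at most $(c/\delta)^K$. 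Multiplying over the $K$ components, for sufficiently small $\delta$,
$$
\mathcal{N}\left(\delta, \mathcal{F}_K\left(\prod_{k=1}^K(a_k,b_k]\right), \|\cdot\|_1\right) \leq \left(\frac{c}{\delta}\right)^{K} \prod_{k=1}^{K} \left(\frac{c\,b_k}{\delta}\right)^{p} \left(\frac{c}{\delta}\right)^{p} = \left(\frac{c_3}{\delta^{2p+1}}\right)^{K} \left(\prod_{k=1}^{K} b_k\right)^{p},
$$
after absorbing all constants into $c_3$.

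The main obstacle is the Lipschitz estimate of Step~1, specifically the eigenvalue sensitivity of $\|\phi_{\bSigma}(\cdot-\bmu)-\phi_{\bSigma'}(\cdot-\bmu')\|_1$: one must verify that $\|\partial_{\lambda_j}\phi_{\bSigma}\|_1$ stays bounded uniformly over $\calS$, and this is exactly where the lower bound $\lsigma^2>0$ does the essential work, since without it the derivative blows up as a variance tends to $0$. The remaining counting is routine bookkeeping.
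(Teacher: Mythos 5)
Your proposal is correct and follows essentially the same route as the paper: a product net over the weight simplex, the mean boxes, and the per-component eigenvalue interval (the reduction to $p$ eigenvalues being exactly the role of condition A4), combined through the same component-by-component triangle-inequality decomposition and the same counting that yields $(c_3/\delta^{2p+1})^K\left(\prod_{k}b_k\right)^p$. The only local difference is at the single-kernel step, where the paper bounds $\|\phi_{\bSigma_k}(\cdot-\bmu_k)-\phi_{\bSigma_k^\star}(\cdot-\bmu_k^\star)\|_1$ via an explicit Hellinger inequality and accordingly nets $\sqrt{\lambda_j(\bSigma_k)}$ rather than $\lambda_j(\bSigma_k)$; your direct Lipschitz estimate via uniform bounds on $\|\partial_{\bmu}\phi\|_1$ and $\|\partial_{\lambda_j}\phi\|_1$ is an equivalent substitute, since condition A3 keeps all eigenvalues in $[\lsigma^2,\usigma^2]$, so both derivative norms stay uniformly bounded and the square-root reparametrization is bi-Lipschitz there.
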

\begin{lemma}\label{lemma:summability_upper_bound}
	Assume conditions A0-A4 and B1-B5 hold. Then we have
	\begin{eqnarray}
	\sum_{K=1}^{K_n}\sum_{a_1=0}^\infty\cdots\sum_{a_K=0}^\infty \sqrt{\mathcal{N}\left(\delta,\mathcal{G}_K(\bolda_K),\|\cdot\|_1\right)}\sqrt{\Pi\left(\mathcal{G}_K(\bolda_K)\right)}\leq K_n\left(\frac{M}{\delta^{p+\frac{1}{2}}}\right)^{K_n}.\nonumber
	\end{eqnarray}
	for sufficiently small $\delta$ for some constant $M>0$.
\end{lemma}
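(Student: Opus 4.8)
The plan is to combine the entropy bound of \textbf{Lemma \ref{lemma:entropy}} with an upper bound on the prior mass $\Pi(\mathcal{G}_K(\bolda_K))$ of each partition cell, and then to carry out the double summation over $\bolda_K$ and $K$. First I would specialize \textbf{Lemma \ref{lemma:entropy}} to $b_k=a_k+1$ (the cells $\mathcal{G}_K(\bolda_K)$ use unit intervals), and take square roots, obtaining
\begin{eqnarray}
\sqrt{\mathcal{N}\left(\delta,\mathcal{G}_K(\bolda_K),\|\cdot\|_1\right)}&\leq&\left(\frac{\sqrt{c_3}}{\delta^{p+\frac12}}\right)^{K}\prod_{k=1}^K(a_k+1)^{p/2},\nonumber
\end{eqnarray}
since $(2p+1)/2=p+\frac12$. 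The polynomial factors $(a_k+1)^{p/2}$ are precisely what the prior tail must absorb.

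Next I would bound the cell probability. Writing the repulsive prior \eqref{eqn:repulsive_density}, discarding the coupling via $h_K\leq 1$, and invoking $1/Z_K\leq \mathrm{e}^{c_1K}$ from \textbf{Theorem \ref{thm:bounding_Z_K}} (which holds under condition A2), the integral decouples across components, and because each $\bSigma_k$-integral equals one,
\begin{eqnarray}
\Pi(\mathcal{G}_K(\bolda_K))&\leq& p_K(K)\,\mathrm{e}^{c_1K}\prod_{k=1}^K\int_{\{\|\bmu\|_\infty\in(a_k,a_k+1]\}}p_{\bmu}(\bmu)\,\mathrm{d}\bmu.\nonumber
\end{eqnarray}
Since $\|\bmu\|_\infty> a_k$ forces $\|\bmu\|> a_k$, condition B2 gives $\int_{\{\|\bmu\|_\infty>a_k\}}p_{\bmu}\leq B_2\exp(-b_2a_k^2)$, so each factor decays sub-Gaussianly in $a_k$.

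I would then take square roots and separate the $K$- and $\bolda_K$-dependence. The summation over $(a_1,\dots,a_K)$ factorizes into a $K$-fold product of the single series $C_0:=\sum_{a=0}^{\infty}(a+1)^{p/2}\exp(-b_2a^2/2)$, and its finiteness is exactly where the sub-Gaussian tail B2 is needed to dominate the polynomial growth inherited from the entropy. Using the trivial bound $\sqrt{p_K(K)}\leq 1$, the entire expression is dominated by $\sum_{K=1}^{K_n}\left(M/\delta^{p+\frac12}\right)^{K}$ with $M:=\sqrt{c_3}\,\mathrm{e}^{c_1/2}\sqrt{B_2}\,C_0$. For $\delta$ small enough the base exceeds one, so each of the $K_n$ terms is at most the last one, yielding the claimed $K_n\left(M/\delta^{p+\frac12}\right)^{K_n}$.

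The main obstacle is the repulsive coupling: the density \eqref{eqn:repulsive_density} does not factorize across components, so the cell-probability integral is not a product a priori. The resolution is the crude but sufficient estimate $h_K\leq 1$ combined with the exponential control $1/Z_K\leq\mathrm{e}^{c_1K}$ of \textbf{Theorem \ref{thm:bounding_Z_K}}, which is the role played by condition A2. The only point requiring care is that both the discarded $h_K$ and the absorbed factor $\mathrm{e}^{c_1K}$ contribute merely a constant raised to the $K$-th power, and hence do not affect the exponent of $\delta$; they are folded harmlessly into the base $M$.
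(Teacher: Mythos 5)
Your proposal is correct and follows essentially the same route as the paper's proof: specialize \textbf{Lemma \ref{lemma:entropy}} to unit cells $b_k=a_k+1$, bound the cell probability by discarding $h_K\leq 1$ and invoking $Z_K^{-1}\leq\mathrm{e}^{c_1K}$ from \textbf{Theorem \ref{thm:bounding_Z_K}}, absorb the polynomial factors $(a_k+1)^{p/2}$ via the sub-Gaussian tail B2, factorize the sum over $\bolda_K$ into a $K$-fold product of a convergent series, and bound the geometric sum by $K_n$ times its largest term once $M/\delta^{p+\frac{1}{2}}\geq 1$. The only (immaterial) difference is in the tail exponent: the paper claims $\exp(-pb_2a_k^2)$ per component while you use the safer implication $\|\bmu\|_\infty>a_k\Rightarrow\|\bmu\|>a_k$ to get $\exp(-b_2a_k^2)$, which changes only the constant $M$.
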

\noindent Based on \textbf{Lemma \ref{lemma:entropy}} and \textbf{Lemma \ref{lemma:summability_upper_bound}}, we are able to verify \eqref{eqn:summability_condition} and hence establish the strong consistency.
\begin{theorem}\label{thm:strong_consistency}
	Assume conditions A0-A4 and B1-B5 hold. Then $\Pi(\cdot\mid\by_1,\cdots,\by_n)$ is strongly consistent at $f_0$. 
\end{theorem}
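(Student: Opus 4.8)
The plan is to apply Theorem 1 of \cite{canale2017posterior} to the sieve $\mathcal{F}_{K_n}$ and its partition $\{\mathcal{G}_K(\bolda_K)\}$ introduced above, for a suitable truncation sequence $(K_n)_{n=1}^\infty$. That theorem asks for three things: (i) that $f_0$ be in the KL-support of $\Pi$; (ii) that the sieve complement have exponentially small prior mass, $\Pi(\mathcal{F}_{K_n}^c)\lesssim e^{-bn}$ for some $b>0$; and (iii) that the summability condition \eqref{eqn:summability_condition} hold for every $\epsilon>0$. Requirement (i) is precisely \textbf{Theorem \ref{thm:weak_consistency}}, so the entire task reduces to choosing $K_n$ and verifying (ii) and (iii) simultaneously.

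For the sieve complement I would note that $\mathcal{F}_{K_n}^c$ is the event $\{K>K_n\}$, so that $\Pi(\mathcal{F}_{K_n}^c)=\sum_{N=K_n+1}^\infty p_K(N)$. Condition B5 bounds this tail by $\exp(-B_4 K_n\log K_n)$ for large $n$, and since $\log K_n\sim\log n$ the choice $K_n=\lfloor n/\log n\rfloor$ gives $K_n\log K_n\sim n$; hence $\Pi(\mathcal{F}_{K_n}^c)\leq\exp(-\tfrac{1}{2}B_4 n)$ eventually, which delivers (ii) with $b=B_4/2$.

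For the summability condition I would feed the same $K_n$ and $\delta=\epsilon$ into \textbf{Lemma \ref{lemma:summability_upper_bound}}, which bounds the nested sum over $K\le K_n$ and $\bolda_K\in\mathbb{N}^K$ by $K_n(M/\epsilon^{p+1/2})^{K_n}$. Multiplying by the prefactor in \eqref{eqn:summability_condition} and taking logarithms, the exponent is
\[
-(4-\tilde{b})n\epsilon^2+\log K_n+K_n\log\!\left(M/\epsilon^{p+1/2}\right).
\]
For each fixed $\epsilon>0$ the last two terms are $o(n)$ because $K_n=O(n/\log n)=o(n)$, whereas the first term is of exact order $n$ with a negative sign; therefore the exponent tends to $-\infty$ and \eqref{eqn:summability_condition} holds. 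Invoking Theorem 1 of \cite{canale2017posterior} then yields strong consistency.

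The crux --- and the only genuinely delicate point in this assembly --- is the tension between the two conditions: (ii) forces $K_n\log K_n$ to grow at least linearly in $n$, while (iii) needs $K_n$ itself to be $o(n)$ so that the entropy-times-prior-mass bound grows only subexponentially and is swamped by $e^{-(4-\tilde{b})n\epsilon^2}$. The window $K_n\asymp n/\log n$ is exactly where both hold, which mirrors the near-parametric complexity of the model. The heavier analytic lifting has already been localized in \textbf{Lemma \ref{lemma:summability_upper_bound}}, which fuses the covering-number bound of \textbf{Lemma \ref{lemma:entropy}} with prior-mass estimates that are themselves controlled through the sub-Gaussian tail B2 and the normalizing-constant bound of \textbf{Theorem \ref{thm:bounding_Z_K}}; once those are in hand, the present statement follows from the balancing argument above.
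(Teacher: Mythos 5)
Your proposal is correct and follows essentially the same route as the paper's own proof: both apply Theorem 1 of \cite{canale2017posterior} to the sieve $\mathcal{F}_{K_n}$ with the identical choice $K_n=\lfloor n/\log n\rfloor$, obtain $\Pi(\mathcal{F}_{K_n}^c)\lesssim\exp(-B_4 n/2)$ from condition B5 since $K_n\log K_n\sim n$, and verify \eqref{eqn:summability_condition} via \textbf{Lemma \ref{lemma:summability_upper_bound}} by noting that $K_n\log(1/\epsilon)=o(n)$ is dominated by the $-(4-\tilde{b})n\epsilon^2$ term for each fixed $\epsilon>0$. Your explicit articulation of the tension between (ii) requiring $K_n\log K_n\gtrsim n$ and (iii) requiring $K_n=o(n)$ is a faithful account of exactly the balancing the paper performs.
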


\subsection{Posterior Contraction Rate} 
\label{subsec:prior_concentration_rate}
To compute the posterior contraction rate, it is sufficient to find two sequences $(\leps_n)_{n=1}^\infty,(\ueps_n)_{n=1}^\infty$ such that
\begin{align}
\label{eqn:posterior_exponential_decay}
&\Pi\left(\mathcal{F}_n^c\right)\lesssim\exp(-4n\leps_n^2),\\
\label{eqn:summability_rate}
&\exp(-n\ueps_n^2)\sum_{K=1}^{K_n}\sum_{a_1=0}^\infty\cdots\sum_{a_K=0}^\infty \sqrt{\mathcal{N}\left(\ueps_n,\mathcal{G}_K(\bolda_K),\|\cdot\|_1\right)}\sqrt{\Pi\left(\mathcal{G}_K(\bolda_K)\right)}\to 0,\\
\label{ineq:prior_concentration}
&\Pi\left(f:\int f_0\log\frac{f_0}{f}\leq\leps_n^2,\int f_0\left(\log\frac{f_0}{f}\right)^2\leq\leps_n^2\right)\geq\exp(-n\leps_n^2).
\end{align}
(See Theorem 3 in \citealp{kruijer2010adaptive}, which is also provided in Section \ref{sec:supporting_results} in the Supplementary Material). For notation convenience we refer to the set of densities\\ $\left(f:\int f_0\log\frac{f_0}{f}\leq\epsilon_n^2,\int f_0\left(\log\frac{f_0}{f}\right)^2\leq\epsilon^2\right)$ as the KL-type ball, and denote it as $B(f_0,\epsilon)$. Equation \eqref{ineq:prior_concentration} is also known as the prior concentration condition. 

\textbf{Lemma \ref{lemma:summability_upper_bound}} not only plays a fundamental role in establishing the posterior strong consistency, but also provides an upper bound for the sum in terms of $\delta$, which is again used to verify equation \eqref{eqn:summability_rate}. \textbf{Proposition \ref{prop:supersmooth_rate}} finds the rates $(\leps_n)_{n=1}^{\infty},(\ueps_n)_{n=1}^{\infty}$ that satisfy \eqref{eqn:posterior_exponential_decay} and \eqref{eqn:summability_rate}. 
\begin{proposition}\label{prop:supersmooth_rate}
	Assume conditions A0-A4 and B1-B5 hold. 
	Let $\leps_n=(\log n)^{t_0}/\sqrt{n}$, $\ueps_n=(\log n)^{t}/\sqrt{n}$ where $t$ and $t_0$ satisfy $t>t_0+\frac{1}{2}>\frac{1}{2}$, and $K_n=\lfloor(p+1)^{-1}(\log n)^{2t-1}\rfloor$. Then 
	\eqref{eqn:posterior_exponential_decay} and \eqref{eqn:summability_rate} hold. 
\end{proposition}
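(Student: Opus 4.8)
The plan is to verify the two displayed conditions \eqref{eqn:posterior_exponential_decay} and \eqref{eqn:summability_rate} separately, where $\mathcal{F}_n=\mathcal{F}_{K_n}$ is the sieve defined above. After substituting the explicit rates, both reduce to comparisons of powers of $\log n$, so it helps to record at the outset that $n\leps_n^2=(\log n)^{2t_0}$ and $n\ueps_n^2=(\log n)^{2t}$, that $K_n\asymp(p+1)^{-1}(\log n)^{2t-1}$ so $\log K_n\asymp(2t-1)\log\log n$, and that $-\log\ueps_n=\tfrac12\log n-t\log\log n\sim\tfrac12\log n$.

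For \eqref{eqn:posterior_exponential_decay}, I would first observe that under the prior a draw $f_F$ has exactly $K$ distinct atoms almost surely (the Dirichlet weights are a.s. positive and the repulsive density \eqref{eqn:repulsive_density} forces the $\bmu_k$ to be distinct), so $f_F\in\mathcal{F}_{K_n}$ if and only if the sampled number of components satisfies $K\leq K_n$. Hence $\Pi(\mathcal{F}_{K_n}^c)=\sum_{N=K_n+1}^\infty p_K(N)$, which condition B5 bounds by $\exp(-B_4(K_n+1)\log(K_n+1))\leq\exp(-B_4K_n\log K_n)$ once $K_n$ (hence $n$) is large. It then remains to check $B_4K_n\log K_n\geq 4(\log n)^{2t_0}$ eventually. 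Since $K_n\log K_n$ is of order $(\log n)^{2t-1}\log\log n$ while the right-hand side is of order $(\log n)^{2t_0}$, this holds precisely because $2t-1>2t_0$, i.e. $t>t_0+\tfrac12$, which is exactly the hypothesis.

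For \eqref{eqn:summability_rate}, I would apply \textbf{Lemma~\ref{lemma:summability_upper_bound}} with $\delta=\ueps_n$ (admissible since $\ueps_n\to0$ is eventually small enough) to bound the multiple sum by $K_n(M\ueps_n^{-(p+1/2)})^{K_n}$. Taking logarithms, the whole expression in \eqref{eqn:summability_rate} has exponent
\begin{equation}
-n\ueps_n^2+\log K_n+K_n\left[\log M-(p+\tfrac12)\log\ueps_n\right]. \nonumber
\end{equation}
Using $-\log\ueps_n\sim\tfrac12\log n$ and $K_n\sim(p+1)^{-1}(\log n)^{2t-1}$, the bracketed term is asymptotically $\tfrac{2p+1}{4(p+1)}(\log n)^{2t}$, while $n\ueps_n^2=(\log n)^{2t}$; the remaining pieces ($\log K_n$, the $K_n\log M$ term, and the $t\log\log n$ correction to $-\log\ueps_n$) are all of lower order $(\log n)^{2t-1}$ or smaller. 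Since $\tfrac{2p+1}{4(p+1)}<\tfrac12<1$, the negative leading term dominates, the exponent tends to $-\infty$, and \eqref{eqn:summability_rate} follows. The factor $(p+1)^{-1}$ in the definition of $K_n$ is exactly what keeps this coefficient below $1$.

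The main obstacle is the bookkeeping of the competing $(\log n)^{2t}$ contributions in the second part: one must track the constants carefully enough to confirm that the coefficient $\tfrac{2p+1}{4(p+1)}$ arising from the entropy-and-prior sum is strictly smaller than the coefficient $1$ of the decay term $n\ueps_n^2$, and that the lower-order $\log K_n$ and $\log\log n$ terms cannot close this gap. By contrast, the first part is comparatively routine once $\mathcal{F}_{K_n}^c$ is identified with $\{K>K_n\}$ and B5 is invoked; its only delicate point is translating the strict hypothesis $t>t_0+\tfrac12$ into the exponential separation $2t-1>2t_0$ that makes $B_4K_n\log K_n$ outgrow $4(\log n)^{2t_0}$.
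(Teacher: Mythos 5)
Your proposal is correct and follows essentially the same route as the paper's proof: part \eqref{eqn:posterior_exponential_decay} by identifying $\Pi(\mathcal{F}_{K_n}^c)$ with the prior tail $\Pi(K>K_n)$, invoking B5, and using $2t-1>2t_0$ so that $K_n\log K_n\asymp(\log n)^{2t-1}\log\log n$ outgrows $4(\log n)^{2t_0}$; and part \eqref{eqn:summability_rate} by applying \textbf{Lemma \ref{lemma:summability_upper_bound}} with $\delta=\ueps_n$ and comparing exponents. The only cosmetic difference is bookkeeping: the paper absorbs the $\log K_n$ and $K_n\log M$ terms into a coefficient $(p+1)$, giving the bound $\exp\bigl[-\tfrac{1}{2}(\log n)^{2t}\bigr]$, whereas you track the leading coefficient $\tfrac{2p+1}{4(p+1)}<\tfrac{1}{2}$ explicitly (and your part 1 is, if anything, stated more carefully than the paper's corresponding display).
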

We are now left with finding the prior concentration rate $(\leps_n)_{n=1}^\infty$ that satisfies \eqref{ineq:prior_concentration}. 
In particular, we need to bound the KL-type balls $B(f_0,\epsilon)$ by the $L_1$ distance. The strategy is to approximate $F_0$ using a finitely discrete distribution with sufficiently small number of support points. \textbf{Lemma \ref{lemma:KL_ball_lower_bound}} allows us to formalize this idea. 
\begin{lemma}\label{lemma:KL_ball_lower_bound}
	Assume conditions A0-A4 and B1-B5 hold. For some constant $\eta>0$ and for all sufficiently small $\epsilon>0$, there exists a discrete distribution $F^\star=\sum_{k=1}^Nw_k^\star\delta_{(\bmu_k^\star,\bSigma_k^\star)}$ supported on a subset of $\{(\bmu,\bSigma)\in\mathbb{R}^p\times\calS:\|\bmu\|_\infty\leq 2a\}$ with $a=b_1^{-\frac{1}{2}}\left(\log\frac{1}{\epsilon}\right)^{\frac{1}{2}}$, $\|\bmu_k^\star-\bmu_{k'}^\star\|_\infty\geq2\epsilon$, $|\lambda_j(\bSigma_k^\star)-\lambda_j(\bSigma_{k'}^\star)|\geq2\epsilon$ whenever $k\neq k'$, $j=1,\cdots,p$, $N\lesssim \left(\log\frac{1}{\epsilon}\right)^{2p}$, such that
	\begin{eqnarray}
	\left\{f_F:F=\sum_{k=1}^Nw_k\delta_{(\bmu_k,\bSigma_k)}:(\bmu_k,\bSigma_k)\in E_k,\sum_{k=1}^N|w_k-w_k^\star|<\epsilon\right\}\subset B\left(f_0,\eta\epsilon^{\frac{1}{2}}\left(\log\frac{1}{\epsilon}\right)^{\frac{p+4}{4}}\right)\nonumber,
	\end{eqnarray}
	where $$	E_k=\left\{(\bmu,\bSigma)\in\mathbb{R}^p\times\calS:\|\bmu-\bmu_k^\star\|_\infty<\frac{\epsilon}{2},|\lambda_j(\bSigma)-\lambda_j(\bSigma_k^\star)|<\frac{\epsilon}{2},j=1,\cdots,p\right\}. $$
\end{lemma}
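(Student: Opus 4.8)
The plan is to follow the classical three-stage recipe for prior concentration in Gaussian mixtures (as in \cite{ghosal2001entropies,kruijer2010adaptive,shen2013adaptive}): first truncate $F_0$ to a compact set, then replace the truncated mixing measure by a finitely supported $F^\star$ with few atoms via a moment-matching argument, and finally convert a sup-/$L_1$-norm approximation of $f_0$ by $f_{F^\star}$ into control of the KL-type functionals defining $B(f_0,\cdot)$. Condition A4 is what reduces the covariance to its $p$ ordered eigenvalues, so that the effective parameter $(\bmu,\lambda_1(\bSigma),\dots,\lambda_p(\bSigma))$ lives in a $2p$-dimensional space; this is the origin of the exponent $2p$ in $N\lesssim(\log(1/\epsilon))^{2p}$. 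Note the lemma asserts only a geometric \emph{containment}; the prior-mass lower bound \eqref{ineq:prior_concentration} is obtained later by combining it with prior concentration estimates.

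First I would truncate. By the sub-Gaussian tail in condition A0, $F_0(\|\bmu\|\geq a)\leq B_1\exp(-b_1a^2)=B_1\epsilon$ with $a=b_1^{-1/2}(\log(1/\epsilon))^{1/2}$, so restricting $F_0$ to $\{\|\bmu\|\leq a\}\times\calS$ and renormalizing (as in the construction of $F_m$ in \textbf{Lemma \ref{lemma:Approximation}}) perturbs the mixed density by only $O(\epsilon)$ in $L_1$. Because $\calS$ is spectrally bounded by A3, the truncated measure sits on a compact subset of the $2p$-dimensional diagonalized parameter space. On this set I would approximate the truncated $F_0$ by a discrete $F^\star$ matching all mixed moments up to order $k\sim\log(1/\epsilon)$ in each of the $2p$ coordinates. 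Since $\phi(\by\mid\bmu,\bSigma)$ is (real-)analytic in $(\bmu,\bSigma)$ with uniformly bounded high-order derivatives over the spectral range permitted by A3, matching moments up to order $k$ forces the approximation error of the convolved density to decay factorially in $k$, hence to be $\leq\epsilon$ for $k\sim\log(1/\epsilon)$. A moment representation theorem (Tchakaloff's theorem) then yields such an $F^\star$ with at most $\binom{k+2p}{2p}\lesssim k^{2p}\lesssim(\log(1/\epsilon))^{2p}$ atoms. The separation requirements $\|\bmu_k^\star-\bmu_{k'}^\star\|_\infty\geq2\epsilon$ and $|\lambda_j(\bSigma_k^\star)-\lambda_j(\bSigma_{k'}^\star)|\geq2\epsilon$ are enforced afterward by merging any two atoms closer than $2\epsilon$ into a single atom carrying their combined weight; each merge moves $f_{F^\star}$ by $O(\epsilon)$ in $L_1$ (Lipschitz continuity of $\bmu\mapsto\phi(\cdot\mid\bmu,\bSigma)$ from A3) and only decreases $N$.

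It remains to turn the bound on $\|f_0-f_{F^\star}\|$ into membership in the KL-type ball of radius $\eta\epsilon^{1/2}(\log(1/\epsilon))^{(p+4)/4}$. I would use the chi-square domination $\mathrm{D}(f_0\mid\mid f)\leq\int(f_0-f)^2/f$, together with the companion bound for $\int f_0(\log(f_0/f))^2$, and split $\mathbb{R}^p$ into the ball $\{\|\by\|\leq Ca\}$ and its complement. On the ball, all atoms of $F^\star$ lie in $\{\|\bmu\|\leq 2a\}$ with eigenvalues in $[\lsigma^2,\usigma^2]$, so both $f_0$ and $f_{F^\star}$ are bounded below by a constant multiple of a fixed Gaussian, making $f_0/f_{F^\star}$ controlled by $\exp(O(a^2))=(1/\epsilon)^{O(1)}$; on the complement the sub-Gaussian tails of $f_0$ from A0 make the contribution negligible for $C$ large. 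Optimizing this split produces both the factor $\epsilon^{1/2}$ and the polylogarithmic factor $(\log(1/\epsilon))^{(p+4)/2}$, whose square root gives the stated radius exponent $(p+4)/4$. I expect this tail/ratio bookkeeping — showing that the unbounded ratio $f_0/f_{F^\star}$ does not spoil the KL integrals, and extracting the exact power of $\log(1/\epsilon)$ — to be the main obstacle and the place where the constants must be tracked carefully.

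Finally, I would establish the claimed robustness: for any $F=\sum_k w_k\delta_{(\bmu_k,\bSigma_k)}$ with $(\bmu_k,\bSigma_k)\in E_k$ and $\sum_k|w_k-w_k^\star|<\epsilon$, the triangle inequality gives $\|f_F-f_{F^\star}\|_1\leq\sum_k|w_k-w_k^\star|+\sum_k w_k\|\phi(\cdot\mid\bmu_k,\bSigma_k)-\phi(\cdot\mid\bmu_k^\star,\bSigma_k^\star)\|_1\lesssim\epsilon$, again using the uniform Lipschitz continuity of the Gaussian kernel in $(\bmu,\bSigma)$ over the spectrally bounded range (A3), since $E_k$ confines $(\bmu_k,\bSigma_k)$ to an $\epsilon/2$-neighborhood of $(\bmu_k^\star,\bSigma_k^\star)$. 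Because the atoms of $F$ still lie in a ball of radius $O(a)$, feeding $\|f_F-f_0\|_1\lesssim\epsilon$ back through the same chi-square/tail argument of the previous paragraph (with $F^\star$ replaced by $F$) places every such $f_F$ in $B(f_0,\eta\epsilon^{1/2}(\log(1/\epsilon))^{(p+4)/4})$ for a suitable $\eta>0$, completing the proof.
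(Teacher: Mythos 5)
Your overall architecture — truncate $F_0$ via the sub-Gaussian tail, discretize by moment matching up to order $\sim\log(1/\epsilon)$ (your Tchakaloff step is equivalent to the paper's Taylor-expansion argument in Lemmas \ref{lemma:discretization}--\ref{lemma:discretization_grid}), and then absorb the $E_k$-perturbations by the triangle inequality (the paper's Lemma \ref{lemma:discrete_approximation}) — matches the paper's proof. The genuine gap is your final KL-conversion step. The chi-square domination $\mathrm{D}(f_0\mid\mid f)\leq\int (f_0-f)^2/f$ with a split at $\{\|\by\|\leq Ca\}$ cannot produce the stated radius: since the atoms of $F$ lie only in $\{\|\bmu\|_\infty\leq 2a\}$ with $a^2=b_1^{-1}\log(1/\epsilon)\to\infty$, the best available lower bound on $f_F$ over that ball is of order $\exp\left(-c(C+2)^2a^2\right)$, so $1/f_F\lesssim(1/\epsilon)^{c(C+2)^2/b_1}$ — polynomial in $1/\epsilon$ with an exponent that is not small. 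Feeding this \emph{linearly} into the chi-square integral gives $\int_{\{\|\by\|\leq Ca\}}(f_0-f_F)^2/f_F\lesssim\epsilon^{1-O(1)}$, which is polynomially (not polylogarithmically) worse than $\epsilon$, and no choice of $C$ repairs it: small $C$ ruins the tail term, large $C$ ruins the ratio. This is precisely why the paper does not use chi-square but instead the Wong--Shen-type bound (Lemma \ref{lemma:KL_ball_versus_Hellinger_ball}, via Theorem 5 of \cite{wong1995probability}), in which the ratio enters only through a small $\delta$-th moment $\int(f_0/f_F)^\delta f_0<\infty$, yielding $\int f_0\log(f_0/f_F)\lesssim H^2\log(1/H)$ and $\int f_0(\log(f_0/f_F))^2\lesssim H^2(\log(1/H))^2$; combined with $H^2\lesssim\|f_0-f_F\|_1\lesssim\epsilon(\log\frac{1}{\epsilon})^{p/2}$ this gives exactly the exponent $\frac{p+4}{2}$ (note the discretization costs $(\log\frac{1}{\epsilon})^{p/2}$ in $L_1$, not the plain $\epsilon$ you claim, though this is harmless here).

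A second, linked omission: Lemma \ref{lemma:KL_ball_versus_Hellinger_ball} requires $F(\|\bmu\|\leq B)>\frac{1}{2}$ for a constant $B$ \emph{fixed independently of $\epsilon$} — that is what yields the $\epsilon$-free Gaussian envelope $f_F(\by)\gtrsim\phi_{\eye_p}\left((\|\by\|+B)/\lsigma\right)$ and hence $f_0/f_F\lesssim\mathrm{e}^{b\|\by\|^2}$ with fixed constants; ``all atoms in $\{\|\bmu\|_\infty\leq 2a\}$'' is not enough since $a\to\infty$. The paper secures this by additionally requiring $\int\|\bmu\|^2\mathrm{d}F_0'=\int\|\bmu\|^2\mathrm{d}F^\star$ in the discretization and then running a Chebyshev argument that survives the perturbation of weights and atoms over the $E_k$'s; this step has no counterpart in your proposal and must be added. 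Finally, a repairable quibble: your iterated merging for the $2\epsilon$-separation can relocate a unit of weight across up to $N$ merges, and the naive accounting gives an $L_1$ cost of order $N\epsilon\sim\epsilon(\log\frac{1}{\epsilon})^{2p}$, which exceeds the budget $\epsilon(\log\frac{1}{\epsilon})^{p/2}$; either control the total transport by a move-lighter-into-heavier (union-by-size) argument, or do what the paper does in Lemma \ref{lemma:discretization_grid} and snap the atoms to a $2\epsilon$-grid, which enforces the separation at a single $O(\epsilon)$ cost.
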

We are in a position to derive the posterior contraction rates for the RGM model.
\begin{theorem}\label{thm:contraction_rate}
	Assume conditions A0-A4 and B1-B5 hold. Then the posterior distribution $\Pi(\cdot\mid\by_1,\cdots,\by_n)\to 0$ contracts at $f_0$ with rate $\epsilon_n=(\log n)^{t}/\sqrt{n}$, $t>p+\frac{\alpha+2}{4}$. 
\end{theorem}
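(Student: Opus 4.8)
The plan is to invoke the master theorem for posterior contraction rates (Theorem 3 in \citealp{kruijer2010adaptive}), which reduces the problem to exhibiting sequences $(\leps_n)$ and $(\ueps_n)$ for which the sieve-mass bound \eqref{eqn:posterior_exponential_decay}, the entropy-summability bound \eqref{eqn:summability_rate}, and the prior-concentration bound \eqref{ineq:prior_concentration} all hold; the contraction rate is then $\ueps_n$. Two of these three are already supplied by \textbf{Proposition \ref{prop:supersmooth_rate}}, which verifies \eqref{eqn:posterior_exponential_decay} and \eqref{eqn:summability_rate} for $\leps_n=(\log n)^{t_0}/\sqrt n$, $\ueps_n=(\log n)^{t}/\sqrt n$ and any $t>t_0+\tfrac12$, with the sieve cutoff $K_n=\lfloor(p+1)^{-1}(\log n)^{2t-1}\rfloor$. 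Thus the entire remaining task is to verify the prior concentration condition \eqref{ineq:prior_concentration} and to track how small $t_0$, hence $t$, may be taken.

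To this end I would lower-bound the prior mass of the explicit neighborhood produced by \textbf{Lemma \ref{lemma:KL_ball_lower_bound}}. Since that neighborhood lies inside the KL-type ball $B(f_0,\eta\epsilon^{1/2}(\log\frac1\epsilon)^{(p+4)/4})$, it suffices to bound $\Pi$ of the event $\{K=N,\ (\bmu_k,\bSigma_k)\in E_k,\ \sum_k|w_k-w_k^\star|<\epsilon\}$ from below. Using the hierarchical structure of the RGM prior, conditional on $K$ the weights and the $(\bmu,\bSigma)$ are drawn independently, so I would factor this mass into three pieces: the mass $p_K(N)$ on the number of components, the Dirichlet mass of the weight ball $\{\sum_k|w_k-w_k^\star|<\epsilon\}$, and the mass of $\prod_kE_k$ under the repulsive density \eqref{eqn:repulsive_density}. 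By condition B5, $p_K(N)\geq\exp(-b_4N\log N)$; by the standard small-ball estimate for the symmetric Dirichlet $\mathcal{D}_N(\beta)$ with $\beta\in(0,1]$ (condition B1), the weight ball has mass at least $\exp(-cN\log\frac1\epsilon)$.

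The crux is the repulsive piece. Here the repulsive factor $h_N$, which ordinarily penalizes close centers, works in our favor because the support points $\bmu_k^\star$ are $\ell_\infty$-separated by at least $2\epsilon$ while $E_k$ forces $\|\bmu_k-\bmu_k^\star\|_\infty<\epsilon/2$; hence any admissible configuration has $\|\bmu_k-\bmu_{k'}\|\geq\epsilon$ for $k\neq k'$, and condition A1 gives $g(\|\bmu_k-\bmu_{k'}\|)\geq c_2\epsilon$. Consequently $h_N\geq c_2\epsilon$ for form \eqref{eqn:repulsive_function_1} and $h_N\geq(c_2\epsilon)^{(N-1)/2}$ for form \eqref{eqn:repulsive_function_2}. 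Combining this with $1/Z_N\geq1$ (which follows from $Z_N\leq1$ in \textbf{Theorem \ref{thm:bounding_Z_K}}), the lower bound $p_{\bmu}(\bmu)\geq B_3\exp(-b_3\|\bmu\|^\alpha)$ of condition B3 evaluated on $\prod_kE_k$ (where $\|\bmu_k\|\lesssim a\asymp(\log\frac1\epsilon)^{1/2}$), and the factor $\gtrsim\epsilon^{p}$ per component furnished by the Lebesgue cube of side $\epsilon$ together with condition B4 for the $\bSigma$-coordinates, the repulsive piece is bounded below by $\exp(-CN(\log\frac1\epsilon)^{\alpha/2})$ times polynomial-in-$\epsilon$ factors of order $\exp(-C'N\log\frac1\epsilon)$.

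Collecting the three pieces and writing $L=\log\frac1\epsilon$, I obtain $-\log\Pi(B(f_0,\cdot))\lesssim N\log N+NL+NL^{\alpha/2}$. Since $\alpha\geq2$ the term $NL^{\alpha/2}$ dominates, and since \textbf{Lemma \ref{lemma:KL_ball_lower_bound}} gives $N\lesssim L^{2p}$, this is $\lesssim L^{2p+\alpha/2}=(\log\frac1\epsilon)^{2p+\alpha/2}$. Matching the ball radius $\eta\epsilon^{1/2}(\log\frac1\epsilon)^{(p+4)/4}$ to $\leps_n$ forces $\log\frac1\epsilon\asymp\log\frac1{\leps_n}\asymp\log n$, so the prior-concentration requirement $-\log\Pi(B(f_0,\leps_n))\leq n\leps_n^2=(\log n)^{2t_0}$ reduces to $(\log n)^{2p+\alpha/2}\lesssim(\log n)^{2t_0}$, i.e. $t_0\geq p+\alpha/4$. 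Feeding this into the constraint $t>t_0+\tfrac12$ from \textbf{Proposition \ref{prop:supersmooth_rate}} yields the admissible range $t>p+\frac{\alpha+2}{4}$, which is exactly the claimed rate. I expect the main obstacle to be the careful bookkeeping of the repulsive piece: isolating the exponent $NL^{\alpha/2}$ (traceable entirely to the sub-exponential lower tail B3 of $p_{\bmu}$ over a neighborhood of radius $\asymp L^{1/2}$), and verifying that the repulsion and the normalizing constant $Z_N$ contribute only the subordinate order $NL$, so that they do not inflate $t_0$ beyond $p+\alpha/4$.
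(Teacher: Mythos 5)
Your proposal is correct and takes essentially the same route as the paper's own proof: reduce to the prior-concentration condition \eqref{ineq:prior_concentration} via \textbf{Proposition \ref{prop:supersmooth_rate}} and Theorem 3 of \cite{kruijer2010adaptive}, lower-bound the mass of the neighborhood from \textbf{Lemma \ref{lemma:KL_ball_lower_bound}} by factoring into $p_K(N)$, the Dirichlet small ball, and the repulsive piece (handled via condition A1 on the $\epsilon$-separated configuration together with $Z_N\leq 1$ from \textbf{Theorem \ref{thm:bounding_Z_K}}), and conclude $-\log\Pi\left(B(f_0,\leps_n)\right)\lesssim\left(\log n\right)^{2p+\frac{\alpha}{2}}$, hence $t_0>p+\frac{\alpha}{4}$ and $t>p+\frac{\alpha+2}{4}$. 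The only cosmetic difference is that you retain the sharper bounds $h_N\geq c_2\epsilon$ for form \eqref{eqn:repulsive_function_1} and $h_N\geq(c_2\epsilon)^{(N-1)/2}$ for form \eqref{eqn:repulsive_function_2}, where the paper uses the common crude bound $(c_2\epsilon)^N$; both yield the same subordinate order $N\log\frac{1}{\epsilon}$, so nothing changes.
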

It is interesting that the RGM model and some other independent-prior models (\emph{e.g.} DP mixtures of Gaussians) yield similar posterior contraction rate. The major complication for the RGM model comes from proving that the KL-type ball is assigned with sufficiently large prior probability, since in the RGM model the repulsive function $h$ can only be lower bounded by $0$, whereas $h$ is always unity in the independent-prior model. 
\begin{remark}
	Notice that the optimal rate $(\log n)^{(p+1)+}/\sqrt{n}$ is achieved when $\alpha=2$, where $(p+1)+$ means that any $t>p+1$ is satisfied. Namely, the posterior contraction rate is optimal when $p_\bmu$ has a Gaussian tail. For comparison, recall that for general location-scale Gaussian mixture problem with bounded variance, {Theorem 6.2} in \cite{ghosal2001entropies} gives a contraction rate of $(\log n)^{3.5}/\sqrt{n}$ in the univariate case ($p=1$) using the DP mixture model, in which the distribution of the location parameters is Gaussian. Analogously, in the RGM model, we may use Gaussian $p_\bmu$ to control the tail rate of the joint distribution of $(\bmu_1,\cdots,\bmu_K)$ as $\|\bmu_k\|$ gets large, since the repulsive function $h_K$ is bounded. \textbf{Theorem \ref{thm:contraction_rate}} improves the  contraction rate to $(\log n)^t/\sqrt{n}$ with $t>2$ compared to that given by \cite{ghosal2001entropies}. 
	However, such an improvement is not due to the repulsive structure of the prior. The underlying reason is that we use Theorem 3 in \cite{kruijer2010adaptive} to derive the posterior contraction rate, whereas \cite{ghosal2001entropies} use Theorem 2.1 in \cite{ghosal2000convergence}, a weaker version of Theorem 3 in \cite{kruijer2010adaptive}, to derive it. 
	In other words, this suggests that it is also possible to obtain an improved posterior contraction rate for some independent-prior GMM for component centers using Theorem 3 in \cite{kruijer2010adaptive}. 
\end{remark}
\begin{remark}
	The boundedness on the eigenvalues of the covariance matrices (condition A3) was originally adopted in \cite{ghosal2001entropies}, which is necessary to obtain an ``almost'' parametric rate $(\log n)^t/\sqrt{n}$ for some $t>0$. \cite{walker2007rates} adopted the same assumption and improved the posterior contraction rate of the location mixture problem. Requiring $p_\lambda$ to have full support on $(0,+\infty)$, however, is necessary in cases where the underlying true density $f_0$ is no longer of the form $f_0=\phi_{\bSigma} * F_0$ for some $F_0\in\calM(\mathbb{R}^p\times\calS)$. For general mixtures of finite location mixture models, the contraction rate is known to be $(\log n)^t n^{-\tilde{\beta}/(2\tilde{\beta}+d)}$ for some $t>0$, where $f_0$ is in a locally $\tilde{\beta}$-H{\" o}lder class \citep{shen2013adaptive}.
	It will be interesting to extend \textbf{Theorem \ref{thm:contraction_rate}} to the case where
	$\mathrm{supp}(p_\lambda)=(0,+\infty)$ and explore the corresponding posterior contraction rate. 
\end{remark}

\subsection{Shrinkage Effect on the Posterior of $K$} 
\label{sub:posterior_number_of_components}

The behavior of the posterior of $K$ is of great interest, since it is a measurement of the complexity of a nonparametric density estimator. If a parametric assumption on $f_0$ is made in the sense that $f_0=\phi_\bSigma * F_0$ for some finitely discrete $F_0\in\calM(\mathbb{R}\times\calS)$, then under mild regularity condition, \cite{nobile1994bayesianmixture} proved that the posterior distribution $p(K\mid\by_1,\cdots,\by_n)$ converges weakly to the point mass at $K_0$ a.s. under the MFM model, where $K_0$ is the number of support points of $F_0$. However, when $F_0$ is no longer assumed to be finitely discrete, and a repulsive prior is introduced among components in MFM, there is little result concerning the mixture complexity in the literature. 
This issue is addressed in \textbf{Theorem \ref{thm:model_complexity}} in terms of the shrinkage effect on the tail probability of the posterior of $K$ in the presence of the repulsive prior. 
For simplicity, we only consider the case where both $f_0$ and the model are of location-mixture form only. 
We also assume that the $g$ function is of the form $g(x)=\frac{x}{g_0+x}$ for some $g_0\in[0,\infty)$, $x>0$. In particular, we allow $g_0=0$ so that the RGM model includes the special case of the independent-prior GMM. 
\begin{theorem}\label{thm:model_complexity}
	Suppose $f_0(\by)=\int_{\mathbb{R}}\phi_{\bSigma_0}(\by-\bmu)F_0(\mathrm{d}\bmu)$ for some fixed $\bSigma{}_0\in\calS$ and conditions A0-A3 and B1-B3 hold with $\beta=1$, $p(\bmu) = \phi(\bmu|\zero,\tau^2\eye)$, and $p_\bSigma = \delta_{\bSigma}$. Without loss of generality assume $\int_{\mathbb{R}^p}\mb F_0(\mathrm{d}\mb) = \zero$. Further assume $p(K)=\Omega Z_K \frac{\lambda^K}{K!}$ where $\Omega=\left[\sum_{K=1}^\infty Z_K\frac{\lambda^K}{K!}\right]^{-1}$ and $g$ is of the form $g(x)=\frac{x}{g_0+x}$ for some $g_0\geq0$, $x>0$. Suppose that $f\sim\mathrm{RGM}_1(1,g,\phi(\bmu|\zero,\tau^2\eye),\delta_{\bSigma_0},p(K))$, where $r=1$ or $2$. Then when $N\geq 3$, we have the following result:
	\begin{align}
	\mathbb{E}_0\left[\Pi(K\geq N\mid\by_1,\cdots,\by_n)\right]\leq C(\lambda)\chi_r(g_0;n,N)\exp\left[\frac{n\tau^2}{2}\mathrm{tr}\left(\bSigma_0^{-1}\right)\right]\sum_{K=N+1}^\infty\frac{\lambda^K}{(\mathrm{e}^\lambda - 1)K!}\nonumber,
	\end{align}
	where
	\[
	\chi_r(g_0;n,N) = 
	\left\{
	\begin{aligned}
	&\frac{\left(1+g_0^{\frac{3}{2}}\delta(\tau)\right)^{\frac{2}{3}}\left[2p\tau^2+\frac{2n}{N}\tau^4\mathbb{E}_0\left(\mb\transpose{}\bSigma_0^{-2}\mb\right)\right]^{\frac{1}{2}}}{g_0+\left[2p\tau^2+\frac{2n}{N}\tau^4\mathbb{E}_0\left(\mb\transpose{}\bSigma_0^{-2}\mb\right)\right]^{\frac{1}{2}}},\quad\text{if }r=1,\\
	&\frac{\left(1+\delta(\tau)\sqrt{g_0}\right)\left[2p\tau^2+\frac{2n}{N}\tau^4\mathbb{E}_0\left(\mb\transpose{}\bSigma_0^{-2}\mb\right)\right]^{\frac{1}{2}}}{g_0+\left[2p\tau^2+\frac{2n}{N}\tau^4\mathbb{E}_0\left(\mb\transpose{}\bSigma_0^{-2}\mb\right)\right]^{\frac{1}{2}}},\quad\text{if }r=2.
	\end{aligned}\right.
	\]
	Here $C(\lambda)$ are some constants depending on $\lambda$ only, $\delta(\tau)$ is a constant depending on $\tau$ only such that $\delta(\tau)<1$ for sufficiently large $\tau$, $\mathbb{E}_0\left(\mb\transpose{}\bSigma{}_0^{-2}\mb\right):=\int_{\mathbb{R}^p}\mb\transpose{}\bSigma_0^{-2}\mb F_0(\mathrm{d}\mb)$, and $\chi_r(g_0;n,N)$ is referred to as the shrinkage constant. 
\end{theorem}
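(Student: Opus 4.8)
The plan is to exploit the deliberately chosen prior $p(K)=\Omega Z_K\lambda^K/K!$, whose sole purpose is to cancel the intractable normalizing constant $Z_K$ in the repulsive density \eqref{eqn:repulsive_density}. Writing the posterior of $K$ as a ratio of marginal likelihoods, the factor $Z_K^{-1}$ in the prior on $(\bmu_1,\dots,\bmu_K)$ and the factor $Z_K$ in $p(K)$ annihilate each other (as does $\Omega$), so that $\Pi(K=k\mid\by_1,\dots,\by_n)\propto \tfrac{\lambda^k}{k!}A_k$, where
\[
A_k=\int_{\Delta^k}\!\!\int_{(\mathbb{R}^p)^k}\Big[\prod_{i=1}^n\sum_{l=1}^k w_l\,\phi_{\bSigma_0}(\by_i-\bmu_l)\Big]\Big[\prod_{l=1}^k\phi(\bmu_l\mid\zero,\tau^2\eye)\Big]h_k(\bmu_1,\dots,\bmu_k)\,\mathrm{d}\bmu\,\mathrm{d}\mathcal{D}_k(w).
\]
The key structural step is then to peel the repulsion off the likelihood. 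Letting $A_k^{\mathrm{ind}}$ denote the same integral with $h_k\equiv1$ (the independent-prior marginal likelihood) and $Q_k$ the corresponding independent-prior posterior on $(\bmu_1,\dots,\bmu_k,w_1,\dots,w_k)$ given $K=k$, one has the exact identity $A_k=A_k^{\mathrm{ind}}\,\mathbb{E}_{Q_k}[h_k]$. Consequently $\Pi(K\geq N\mid\by_1,\dots,\by_n)$ is a ratio in which every numerator term ($k\geq N$) carries the extra factor $\mathbb{E}_{Q_k}[h_k]\leq1$; bounding the denominator from below by its $k=1$ term (for which $h_1\equiv1$, giving $\lambda A_1^{\mathrm{ind}}$) isolates the shrinkage.

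The heart of the argument is to bound $\mathbb{E}_{Q_k}[h_k]$ uniformly over $k\geq N$. For both repulsive functions the minimum and the geometric mean of the pairwise values $g(\|\bmu_i-\bmu_j\|)\in[0,1]$ are dominated by their arithmetic mean over pairs; for the geometric mean this domination uses $(k-1)/2\geq1$, i.e. $k\geq3$, which is exactly where the hypothesis $N\geq3$ enters. Since $g(x)=x/(g_0+x)$ is increasing and concave, Jensen's inequality applied pairwise gives $\mathbb{E}_{Q_k}[g(\|\bmu_i-\bmu_j\|)]\leq g\big(\sqrt{\mathbb{E}_{Q_k}\|\bmu_i-\bmu_j\|^2}\big)$, so that $\mathbb{E}_{Q_k}[h_k]\lesssim g(S_k)=S_k/(g_0+S_k)$ with $S_k^2=\mathbb{E}_{Q_k}\|\bmu_i-\bmu_j\|^2$. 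Evaluating $S_k^2$ through the conjugate Gaussian posterior update splits it into a prior part $2p\tau^2$ (the posterior variance of two centers) and a data-driven part scaling like $\tfrac{2n}{k}\tau^4\,\mathbb{E}_0(\mb\transpose\bSigma_0^{-2}\mb)$ (the squared shift of a posterior mean when roughly $n/k$ observations load on a component). Because $g$ is increasing and $S_k$ decreases in $k$, the bound is largest at $k=N$, producing the uniform factor $\chi_r(g_0;n,N)$; the two distinct algebraic forms for $r=1$ and $r=2$, together with the error term $\delta(\tau)$, come from tracking the lower-order cross terms in $S_k^2$ and in the respective minimum/geometric-mean estimates.

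It then remains to control the residual independent-prior ratio $\big(\sum_{k\geq N}\tfrac{\lambda^k}{k!}A_k^{\mathrm{ind}}\big)/\big(\lambda A_1^{\mathrm{ind}}\big)$ after taking $\mathbb{E}_0$ over $\by_1,\dots,\by_n\sim f_0$. The Gaussian integral defining $A_1^{\mathrm{ind}}$ is closed-form, and a likelihood-ratio bound on $A_k^{\mathrm{ind}}/A_1^{\mathrm{ind}}$ produces the factor $\exp[\tfrac{n\tau^2}{2}\mathrm{tr}(\bSigma_0^{-1})]$, while the leftover sum $\sum_{k\geq N}\lambda^k/k!$, renormalized by $(\mathrm{e}^\lambda-1)$ and shifted by one index, reproduces the zero-truncated Poisson tail $\sum_{K=N+1}^\infty\lambda^K/((\mathrm{e}^\lambda-1)K!)$; all remaining $\lambda$-dependent constants are absorbed into $C(\lambda)$.

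I expect the main obstacle to be the step involving the data expectation: the repulsion factor $g(S_k)$ is itself data-dependent before averaging, and it multiplies the likelihood ratio, so one cannot simply factor $\mathbb{E}_0$ through the product. Decoupling the two — for instance by concavity of $g(\sqrt{\cdot})$ to replace the random second moment by its $\mathbb{E}_0$-average $S_k^2$, or by a Cauchy–Schwarz split — while keeping every bound finite is delicate, since the naive domination $A_k^{\mathrm{ind}}\leq(\sup_{\bmu}\phi_{\bSigma_0}(\cdot-\bmu))^n$ makes $\mathbb{E}_0$ diverge (a $\chi^2$ moment-generating function evaluated at $1/2$). Pinning down the exact constants in $\chi_r$ — in particular the two correction factors $(1+g_0^{3/2}\delta(\tau))^{2/3}$ and $(1+\delta(\tau)\sqrt{g_0})$ and the threshold $N\geq3$ — is the most calculation-intensive part.
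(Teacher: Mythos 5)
Your skeleton is essentially the paper's: the $Z_K$-cancellation built into $p(K)\propto Z_K\lambda^K/K!$, the domination of $h_K$ by the arithmetic mean of the pairwise $g$'s (with $K\geq 3$ needed precisely for the geometric-mean form, as you say), the iterated use of Jensen via concavity of $x\mapsto g(\sqrt{x})$, the conjugate-Gaussian second moment $2p\tau^2+\frac{2n}{K}\tau^4\mathbb{E}_0\left(\mb\transpose\bSigma_0^{-2}\mb\right)$ with $2n/K\leq 2n/N$, and the truncated-Poisson tail are exactly the ingredients of \textbf{Lemmas \ref{lemma:marginal_likelihood_upper_bound}--\ref{lemma:expected_value_equation}}. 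Two organizational differences are worth noting. First, the paper never forms your independent-prior posterior $Q_k$; it bounds $p(\by_{1:n}\mid z_{1:n},K)$ above directly, applying Jensen inside the $\bmu$-integral with respect to the conjugate Gaussian posterior of the centers — operationally the same as your $\mathbb{E}_{Q_k}[h_k]$ step. Second, the paper lower-bounds $p(\by_{1:n})$ uniformly over all $K$ by Jensen applied to $\log p(\by_{1:n}\mid z_{1:n},K)$, rather than truncating the denominator to the $k=1$ term. This matters for your accounting: the correction factors $\left(1+\delta(\tau)g_0^{2/3}\right)^{3/2}$ and $\left(1+\delta(\tau)\sqrt{g_0}\right)$ do \emph{not} arise from ``lower-order cross terms in $S_k^2$''; they come from the denominator, namely the prior expectation $\int\log h_K\prod_k p(\bmu_k)$, bounded below through negative chi-square moments $\mathbb{E}[\Delta^{-1/3}]$ (resp.\ $\mathbb{E}[\Delta^{-1/4}]$) of $\Delta=\|\bmu_k-\bmu_{k'}\|^2/(2\tau^2)$, with the residual polynomial-in-$K$ factors absorbed into $C(\lambda)$ via $\mathbb{E}(K^{-3})\geq[\mathbb{E}(K^3)]^{-1}$. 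Your $k=1$ truncation avoids these factors entirely (since $h_1\equiv1$), so it would yield a formally sharper bound and nothing breaks — but your stated plan to recover them by tracking cross terms is not a derivation. (Incidentally, the theorem statement's $\left(1+g_0^{3/2}\delta(\tau)\right)^{2/3}$ versus the proof's $\left(1+\delta(\tau)g_0^{2/3}\right)^{3/2}$ appears to be a typo in the statement.)

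The one genuine gap is the step you yourself flag: decoupling $\mathbb{E}_0$ from the product of the data-dependent shrinkage factor and the likelihood ratio. The paper fills it in \textbf{Lemma \ref{lemma:conditional_density_upper_bound}} by exploiting that its denominator bound depends on the data only through $\prod_i\phi_{\bSigma_0}(\by_i)$: multiplying the numerator bound of \textbf{Lemma \ref{lemma:marginal_likelihood_upper_bound}} by $\prod_i\phi(\by_i\mid\mb_i,\bSigma_0)/\prod_i\phi_{\bSigma_0}(\by_i)$, the Gaussian marginal-likelihood terms combine — after completing the square and discarding a factor $\exp\left[-(\one_{n_k}\transpose\widehat\mb_{(k)j})^2/(2\sigma_j^2(n_k+\kappa_j^2))\right]\leq 1$ — into a proper Gaussian density in the statistics $x_{kk'j}=\one_{n_k}\transpose\widehat\by_{(k)j}/(n_k+\kappa_j^2)-\one_{n_{k'}}\transpose\widehat\by_{(k')j}/(n_{k'}+\kappa_j^2)$; Jensen for the concave $g(\sqrt{\cdot})$ is then applied with respect to \emph{that} Gaussian measure in the data, producing $g$ evaluated at $\left[\sum_j\kappa_j^{-4}\left(\one_{n_k}\transpose\widehat\mb_{(k)j}-\one_{n_{k'}}\transpose\widehat\mb_{(k')j}\right)^2+2p\tau^2\right]^{1/2}$, and a final Jensen over $\mb_{1:n}\sim F_0^{(n)}$ and $z_{1:n}$ (\textbf{Lemma \ref{lemma:expected_value_equation}}, where $\int\mb\,F_0(\mathrm{d}\mb)=\zero$ kills the cross terms and $\mathbb{E}_\bz[n_k+n_{k'}]=2n/K$) gives the stated argument of $\chi_r$. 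So no Cauchy--Schwarz split and no sup-bound on the likelihood is needed; your own structure — the $Q_k$-Jensen step leaving posterior-mean differences inside $g$ — funnels into exactly this closed-form Gaussian computation, and with that step executed your proposal closes along the paper's lines.
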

As pointed out in Section \ref{sec:preliminaries}, the normalizing constant $Z_K$ yields complication in the posterior inference of $K$. In \textbf{Theorem \ref{thm:model_complexity}} the prior density $p(K)$ of the number of components is assumed to be proportional to the Poisson density function modulus $Z_K$ to eliminate such effect: $p(K)\propto Z_K\frac{\lambda^K}{K!}$. 
\textbf{Theorem \ref{thm:model_complexity}} unveils the relationship between the tail probability of the marginal posterior of $K$ and the hyperparameter $g_0$ that introduces repulsion: as long as $\tau$ is moderately large so that $\delta(\tau)<1$ (corresponding to the weakly informative prior), the upper bound for $\mathbb{E}_0\left[\Pi\left(K>N|\by_1,\cdots,\by_n\right)\right]$ decreases as $g_0$ increases when $g_0$ is large enough. In particular, the shrinkage constant $\chi_r(g_0;n,N)$ is $1$ when $g_0=0$ (\emph{i.e.}, no repulsion is enforced among component centers), decreases when $g_0$ increases, and is smaller than $1$ for large enough $g_0$. Namely, compared to the independent prior for the component centers $\bmu_k$'s, the repulsive prior introduces additional shrinkage effect on the tail probability of the posterior of $K$. 
In addition, it is worth mentioning that \textbf{Theorem \ref{thm:model_complexity}} is a non-asymptotic result.

\textbf{Theorem \ref{thm:model_complexity}} also serves as a guidance for constructing a sample-size dependent RGM prior that yields a slower rate of growth of $K$ compared to the independent-prior Gaussian mixture model. Specifically, instead of using a hyperparameter $g_0$ that does not change with $n$, it is possible to choose a sample-size dependent hyperparameter $g_0(n)$ that tends to infinity and thus affects the rate of decay of $\mathbb{E}_0\left[\Pi(K\geq K_n|\by_1,\cdots,\by_n)\right]$ for certain sequences of $(K_n)_{n=1}^\infty$. However, the prior concentration condition might no longer hold, potentially resulting a slower posterior contraction rate. It might be interesting to explore the trade-off between the shrinkage effect on $K$ and the posterior contraction rate using sample-size dependent repulsive prior. 
\begin{corollary}\label{corr:model_complexity}
	Assume the conditions in \textbf{Theorem \ref{thm:model_complexity}} hold. If the sequence $(K_n)_{n=1}^{\infty}\subset\mathbb{N}_+$ satisfies $\liminf_{n\to\infty}K_n/n>0$, then the tail probability of the posterior distribution of $K$ satisfies
	$\Pi(K\geq K_n\mid\by_1,\cdots,\by_n)\to 0$
	in $\mathbb{P}_0$-probability as $n\to\infty$. 
\end{corollary}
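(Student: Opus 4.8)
The plan is to deduce the corollary directly from the non-asymptotic bound in \textbf{Theorem \ref{thm:model_complexity}} by setting $N=K_n$, showing that the resulting upper bound on the expected posterior tail probability tends to zero, and then invoking Markov's inequality to upgrade $L_1$-convergence to convergence in $\mathbb{P}_0$-probability. Since $\liminf_{n\to\infty}K_n/n>0$ forces $K_n\to\infty$, for all sufficiently large $n$ we have $K_n\geq 3$, so the theorem applies with $N=K_n$ and gives
\begin{align}
\mathbb{E}_0\left[\Pi(K\geq K_n\mid\by_1,\ldots,\by_n)\right]\leq C(\lambda)\,\chi_r(g_0;n,K_n)\,\exp\left[\tfrac{n\tau^2}{2}\mathrm{tr}(\bSigma_0^{-1})\right]\sum_{K=K_n+1}^\infty\frac{\lambda^K}{(\mathrm{e}^\lambda-1)K!}.\nonumber
\end{align}
It then suffices to control each of the three $n$-dependent factors on the right.

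First I would show that the shrinkage constant $\chi_r(g_0;n,K_n)$ stays bounded uniformly in $n$. The hypothesis supplies a constant $c>0$ with $n/K_n\leq 1/c$ for large $n$, so the quantity $A:=[2p\tau^2+\tfrac{2n}{K_n}\tau^4\mathbb{E}_0(\mb\transpose\bSigma_0^{-2}\mb)]^{1/2}$ (finite by condition A0) lies in a fixed interval $[\sqrt{2p}\,\tau,\,A_{\max}]$ bounded away from $0$ and $\infty$. Since $g_0,\tau,\delta(\tau)$ are fixed, both the $r=1$ and $r=2$ expressions for $\chi_r$ are ratios of a bounded numerator over $g_0+A\geq\sqrt{2p}\,\tau$, hence $\chi_r(g_0;n,K_n)\leq C_\chi$ for some constant $C_\chi$ independent of $n$.

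The crux of the argument is the competition between the remaining two factors. For the Poisson tail I would reuse the Stirling estimate already derived in the Example of Section \ref{sub:preliminaries_and_notations}, namely $\sum_{K=K_n+1}^\infty\lambda^K/[(\mathrm{e}^\lambda-1)K!]\lesssim(\lambda e/K_n)^{K_n}$, so that taking logarithms of the product of the two surviving factors yields
\begin{align}
\log\left\{\exp\left[\tfrac{n\tau^2}{2}\mathrm{tr}(\bSigma_0^{-1})\right]\left(\tfrac{\lambda e}{K_n}\right)^{K_n}\right\}=\tfrac{n\tau^2}{2}\mathrm{tr}(\bSigma_0^{-1})+K_n\log(\lambda e)-K_n\log K_n.\nonumber
\end{align}
Because $K_n\geq cn$ eventually, the subtracted term obeys $K_n\log K_n\geq cn\log(cn)\sim cn\log n$, which dominates both the linear-in-$n$ term $\tfrac{n\tau^2}{2}\mathrm{tr}(\bSigma_0^{-1})$ and the linear term $K_n\log(\lambda e)=O(n)$; hence the right-hand side diverges to $-\infty$ and the product tends to $0$. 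Combining this with the boundedness of $C(\lambda)C_\chi$ gives $\mathbb{E}_0[\Pi(K\geq K_n\mid\by_1,\ldots,\by_n)]\to 0$.

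Finally, since $\Pi(K\geq K_n\mid\by_1,\ldots,\by_n)$ is a nonnegative random variable whose expectation vanishes, Markov's inequality gives, for every $\epsilon>0$, $\mathbb{P}_0(\Pi(K\geq K_n\mid\by_1,\ldots,\by_n)>\epsilon)\leq\epsilon^{-1}\mathbb{E}_0[\Pi(K\geq K_n\mid\by_1,\ldots,\by_n)]\to 0$, which is precisely the claimed convergence in $\mathbb{P}_0$-probability. The main obstacle, and the only place where the hypothesis $\liminf_n K_n/n>0$ is genuinely needed, is the super-exponential-versus-exponential comparison above: one must verify that the Poisson tail's $\exp(-K_n\log K_n)$ decay beats the $\exp(c'n)$ growth of the likelihood-ratio factor $\exp[\tfrac{n\tau^2}{2}\mathrm{tr}(\bSigma_0^{-1})]$, which holds exactly because $K_n$ grows at least linearly in $n$ (indeed $K_n\log K_n/n\to\infty$ would already suffice).
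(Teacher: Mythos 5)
Your proposal is correct and follows essentially the same route as the paper's own proof: apply the non-asymptotic bound of \textbf{Theorem \ref{thm:model_complexity}} with $N=K_n$, bound the Poisson tail by $\exp\left(-\tfrac{1}{2}K_n\log K_n\right)$ (your Stirling estimate $(\lambda\mathrm{e}/K_n)^{K_n}$ is equivalent), note that $K_n\geq\delta_0 n$ forces the exponent $\tfrac{n\tau^2}{2}\mathrm{tr}(\bSigma_0^{-1})-\tfrac{1}{2}K_n\log K_n\to-\infty$, and conclude via Markov's inequality. Your explicit check that $\chi_r(g_0;n,K_n)$ is uniformly bounded is a detail the paper absorbs into the $\lesssim$ constant (indeed $\chi_r\leq\left(1+g_0^{\frac{3}{2}}\delta(\tau)\right)^{\frac{2}{3}}$ for all $n$ since the remaining ratio is at most one), so it adds rigor but no new idea.
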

\begin{remark}
	In terms of $K$, the number of support points in the RGM model, which is a measurement of the model complexity of estimating an unknown density, \textbf{Corollary \ref{corr:model_complexity}} says that the posterior probability that $K$ is at least a non-negligible fraction of $n$ (in the limit) converges to $0$ in $\mathbb{P}_0$-probability as $n\to\infty$. In other words, the posterior number of components grows sub-linearly with respect to the sample size. 
\end{remark}

\section{Posterior Inference} 
\label{sec:posterior_inference}
For the DPP mixture model, \cite{xu2016bayesian} developed a variation of the RJ-MCMC sampler
that can be extended to the RGM model. However, the reversible-jump moves in multi-dimensional problems could be challenging and inefficient. In this section, we design an efficient and easy-to-implement blocked-collapsed Gibbs sampler by representing the RGM model using the random partition distribution.

Let us begin with characterizing the RGM model using the latent cluster configurations. Given a random measure $F=\sum_{k=1}^Kw_k\delta_{(\bmu_k,\bSigma_k)}$ with $(w_1,\cdots,w_K)\sim\mathcal{D}_K(\beta)$, we may represent the finite mixture model as follows by integrating out $(w_1,\cdots,w_K)$:
\begin{eqnarray}
(\by_i\mid z_i,\{\bmu_k,\bSigma_k\}_{k=1}^K,K)&\sim&\mathrm{N}(\bmu_{z_i},\bSigma_{z_i}),\nonumber\\
\label{eqn:latent_class_configuration}
p(z_1,\cdots,z_n\mid K)&=&\frac{\Gamma(K\beta)}{\Gamma(n+K\beta)}\prod_{k=1}^K\frac{\Gamma(\beta+\sum_{i=1}^n\mathbb{I}(z_i=k))}{\Gamma(\beta)}.
\end{eqnarray}
Let $\mathcal{C}_n$ denote the partition of $\{1,\cdots,n\}$ induced by $\bz=(z_1,\cdots,z_n)$ as $\mathcal{C}_n=\{E_k:|E_k|>0\}$ where $E_k=\{i:z_i=k\}$ for $k=1,\cdots,K$, and $|E|$ denotes the cardinality of a finite set $E$. For example, if one has $\bz=(z_1,z_2,z_3,z_4,z_5,z_6)=(1, 3, 4, 4, 3, 1)$ with $n=6$, then the corresponding partition is $\mathcal{C}_6=\{\{1,6\},\{2,5\},\{3,4\}\}$.
Using the exchangeable partition distribution in \cite{miller2016mixture}, we establish the generalized urn-model induced by the RGM model in \textbf{Theorem \ref{thm:urn_model}} after marginalizing out the intractable random distribution $F$. 
The proof is provided in Section \ref{sec:proof_of_the_generalized_urn_model} of the Supplementary Material. 
\begin{theorem}\label{thm:urn_model}
	Suppose the prior $\Pi$ on $\mathcal{M}(\mathbb{R}^p)$ is defined as in Section \ref{sec:preliminaries}, and the latent class configuration variables $\bz = (z_1,\cdots,z_n)$ is defined as in \eqref{eqn:latent_class_configuration}. Let $\bgamma_i=\bmu_{z_i},\bGamma_i=\bSigma_{z_i}$, $\btheta_i=(\bgamma_i,\bGamma_i)$, $i=1,\cdots,n$, $\mathcal{C}_{n-1}$ be the partition on $\{1,\cdots,n-1\}$ induced by $\btheta_1,\cdots,\btheta_{n-1}$, $(\bgamma_c^\star:c\in\mathcal{C}_{n-1})$ be the unique values of $(\bgamma_1,\cdots,\bgamma_{n-1})$, and $(\bSigma_c^\star:c\in\mathcal{C}_{n-1})$ be those of $(\bGamma_1,\cdots,\bGamma_{n-1})$. Let $\ell=|\mathcal{C}_{n-1}|$ be the number of clusters, and $K$  be the number of components in $F$, where $K\geq\ell$. Denote $\mathcal{C}_\varnothing\subset\mathbb{N}_+$ the indexes for the components associated with no observations with $|\mathcal{C}_\varnothing|=K-\ell$, $((\bgamma_c^\star,\bGamma_c^\star)\in\mathbb{R}^p\times\calS:c\in \mathcal{C}_{\varnothing})$ the component-specific parameters of the components that are not associated with any observation, and $\underline{c}=\min(c:c\in C_\varnothing)$ provided that $K\geq \ell+1$. Denote $\Pi(\btheta_n\in\cdot\mid-)$ the full conditional distribution of $\btheta_n$ with $F$ marginalized out. Then
	\begin{align}
	\label{eqn:Gibbs_sampler_urn_model}
	\Pi(\btheta_n\in\cdot|-)
	\propto\left[\frac{V_n(\ell+1)\beta}{V_n(\ell)}\right]\sum_{K=\ell+1}^{\infty}\alpha_K
	G_K(\cdot)+\sum_{c\in\mathcal{C}_{n-1}}(|c|+\beta)\phi(\by_n|\bgamma_c^\star,\bGamma_c^\star)\delta_{(\bgamma_c^\star,\bGamma_c^\star)}(\cdot),
	\end{align}
	where 
	\begin{eqnarray}
	V_n(\ell)&=&\sum_{K=\ell}^\infty\frac{K(K-1)\cdots (K-\ell+1)}{(\beta K)(\beta K + 1)\cdots(\beta K+n-1)}p_K(K),\nonumber\\
	\alpha_K&=&m_Kp(K\mid \mathcal{C}_n=\mathcal{C}_{n-1}\cup\{\{n\}\}),\nonumber\\
	m_K&=&\frac
	{\displaystyle\int\cdots\iint \phi(\by_n\mid\bgamma_{\underline{c}}^\star,\bGamma_{\underline{c}}^\star)
		h_K(\bgamma_{c}^\star:c\in\mathcal{C}_{n-1}\cup\mathcal{C}_\varnothing)p_{\bSigma}(\bGamma_{\underline{c}}^\star)\mathrm{d}\bGamma_{\underline{c}}^\star\prod_{c\in\mathcal{C}_\varnothing}p_{\bmu}(\bgamma_c^\star)\mathrm{d}\bgamma_{c}^\star}
	{\displaystyle\int\cdots\int h_K(\bgamma_{c}^\star:c\in\mathcal{C}_{n-1}\cup\mathcal{C}_\varnothing)\prod_{c\in\mathcal{C}_\varnothing}p_{\bmu}(\bgamma_c^\star)\mathrm{d}\bgamma_c^\star},\nonumber\\
	G_K(A)&\propto&\iint_{A}L_K(\bgamma_{\underline{c}}^\star)\phi(\by_n\mid\bgamma_{\underline{c}}^\star,\bGamma_{\underline{c}}^\star)p_\bmu(\bgamma_{\underline{c}}^\star)p_\bSigma(\bGamma_{\underline{c}}^\star)\mathrm{d}\bgamma_{\underline{c}}^\star
	\mathrm{d}\bGamma_{\underline{c}}^\star,\nonumber\\
	L_K(\bgamma_{\underline{c}}^\star)&=&\int\cdots\int h_K(\bgamma_c^\star:c\in\mathcal{C}_{n-1}\cup\mathcal{C}_\varnothing)\prod_{c\in\mathcal{C}_\varnothing,c\neq\underline{c}}p_\bmu(\bgamma_c^\star)\mathrm{d}\bgamma_{c}^\star,\nonumber
	\end{eqnarray}
	and $h_K(\bgamma_c:c\in\calC_{n-1}\cup\calC_\varnothing)=h_K(\bgamma_{c_1}^\star,\cdots,\bgamma_{c_K}^\star)$ if one labels $\calC_{n-1}\cup\calC_\varnothing$ as $\{c_1,\cdots,c_K\}$. 
\end{theorem}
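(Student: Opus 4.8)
The plan is to marginalize the mixing distribution $F$ in two stages—first the weights $(w_1,\dots,w_K)$ together with the component count $K$, then the parameters of the unoccupied components—and throughout to split the full conditional of $\btheta_n$ according to whether observation $n$ reuses one of the distinct values already present among $\btheta_1,\dots,\btheta_{n-1}$ or takes a fresh value.

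First I would recover the exchangeable partition probability function of the mixture of finite mixtures. Starting from \eqref{eqn:latent_class_configuration} and summing over all component labelings consistent with a partition $\calC_n$ into $\ell$ blocks, the Dirichlet–multinomial structure collapses the weights and yields $p(\calC_n)=V_n(\ell)\prod_{c\in\calC_n}\Gamma(\beta+|c|)/\Gamma(\beta)$, where the falling factorial $K_{(\ell)}=K(K-1)\cdots(K-\ell+1)$ counts the assignments of the $\ell$ occupied blocks to distinct component indices and $V_n(\ell)$ is exactly the series in the statement; its convergence is guaranteed because condition A2 together with \textbf{Theorem \ref{thm:bounding_Z_K}} prevents $1/Z_K$ from growing super-exponentially. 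Taking the ratio $p(\calC_n)/p(\calC_{n-1})$ then produces the partition urn rule, in which observation $n$ joins an existing block $c$ with weight proportional to $(|c|+\beta)$ and opens a new block with weight proportional to $\beta V_n(\ell+1)/V_n(\ell)$.

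Next I would reinstate the component parameters, which is where the repulsion enters. Because the repulsive density \eqref{eqn:repulsive_density} does not factorize, I cannot treat the distinct cluster parameters as i.i.d.\ draws from a base measure; instead I would condition on $K$ and the partition and write the joint density of the occupied means $(\bgamma_c^\star:c\in\calC_{n-1})$ together with the $K-\ell$ unoccupied means $(\bgamma_c^\star:c\in\calC_\varnothing)$ as $Z_K^{-1}h_K(\bgamma_c^\star:c\in\calC_{n-1}\cup\calC_\varnothing)\prod_c p_\bmu(\bgamma_c^\star)\prod_c p_\bSigma(\bSigma_c^\star)$, then attach the Gaussian likelihoods. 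In the reuse event the likelihood $\phi(\by_n\mid\bgamma_c^\star,\bSigma_c^\star)$ simply multiplies the $(|c|+\beta)$ weight, producing the second sum in \eqref{eqn:Gibbs_sampler_urn_model}. In the fresh-value event, permutation invariance of $h_K$ lets me designate a canonical new index $\underline{c}\in\calC_\varnothing$; integrating the joint density over the remaining unoccupied means $c\in\calC_\varnothing\setminus\{\underline{c}\}$ against $p_\bmu$ gives precisely $L_K(\bgamma_{\underline{c}}^\star)$, and normalizing the resulting density of $\btheta_{\underline{c}}^\star=(\bgamma_{\underline{c}}^\star,\bGamma_{\underline{c}}^\star)$ in $(\by_n,\bgamma_{\underline{c}}^\star,\bGamma_{\underline{c}}^\star)$ yields the kernel $G_K$, while its normalizing integral divided by the analogous integral with the likelihood removed yields $m_K$; weighting by $p(K\mid\calC_n=\calC_{n-1}\cup\{\{n\}\})$ gives $\alpha_K$, and summing over $K\geq\ell+1$ recovers the first term.

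I expect the main obstacle to be the bookkeeping in the fresh-value case. Because $h_K$ couples the new mean to every other mean—occupied and unoccupied alike—I must integrate out the unoccupied means against a non-product density while simultaneously summing over $K$, and I must verify that the combinatorial labeling factors ($K_{(\ell)}$ versus $K_{(\ell+1)}$) and the constants $Z_K$ compose so that the two marginalization stages collapse into the clean coefficient $\beta V_n(\ell+1)/V_n(\ell)$ multiplying $\sum_K\alpha_K G_K$. In particular, confirming that the $Z_K^{-1}$ factors from the repulsive prior cancel between the numerator and denominator of $m_K$—so that $m_K$ is a predictive weight free of the intractable constant—is the step demanding the most care, and is precisely what makes the resulting urn scheme implementable.
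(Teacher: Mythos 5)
Your proposal follows essentially the same route as the paper's proof: you recover the Miller--Harrison exchangeable partition distribution $p(\calC_n)=V_n(\ell)\prod_{c}\Gamma(\beta+|c|)/\Gamma(\beta)$ and its restaurant-process ratio to obtain the $\beta V_n(\ell+1)/V_n(\ell)$ and $(|c|+\beta)$ weights, then handle the fresh-value event exactly as the paper does --- conditioning on $K$, exploiting permutation invariance of $h_K$ to integrate the empty components' means against $p_\bmu$ (so that $Z_K$ cancels in the conditional, yielding $L_K$, $m_K$, and $G_K$), applying the generalized Bayes rule to attach $\phi(\by_n\mid\cdot)$, and summing over $K\geq\ell+1$ with weights $p(K\mid\calC_n=\calC_{n-1}\cup\{\{n\}\})$. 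The only cosmetic differences are that you re-derive the EPPF from the Dirichlet--multinomial collapse where the paper cites \cite{miller2016mixture}, and your aside attributing convergence of $V_n(\ell)$ to condition A2 and \textbf{Theorem \ref{thm:bounding_Z_K}} is unnecessary, since the summand is bounded by $\beta^{-n}p_K(K)$ and the EPPF never involves $Z_K$ (the cluster labels $z_{1:n}$ are independent of the component parameters given $K$).
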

\textbf{Theorem \ref{thm:urn_model}} is instructive for deriving the blocked-collapsed Gibbs sampler for the posterior inference of the proposed RGM model. 
We follow the notation in \textbf{Theorem \ref{thm:urn_model}}. Let $\mathcal{C}_{-i}$ be the partition induced by $\btheta_{-i}:=(\btheta_1,\cdots,\btheta_n)\backslash\{\btheta_i\}$, and $(\bgamma_c^\star,\bGamma_c^\star:c\in\calC_{-i})$ be the unique values of $\btheta_{-i}$. Notice that by exchangeability
\begin{align}
&\Pi(\mathcal{C}=\mathcal{C}_{-i}\cup\{\{i\}\}\mid\by_i,\btheta_{-i},\calC_{-i})\propto\left[\frac{V_{n}(|\mathcal{C}_{-i}|+1)\beta}{V_{n}(|\mathcal{C}_{-i}|)}\right]\sum_{K=|\mathcal{C}_{-i}|+1}^{\infty}\alpha_K,\nonumber\\
\label{eqn:Gibbs_sampler_marginal_urn_model}
&\Pi(\mathcal{C}=(\mathcal{C}_{-i}\backslash\{c\})\cup\{c\cup\{i\}\}\mid\by_i,\btheta_{-i},\calC_{-i})\propto\phi(\by_i|\bgamma_c^\star,\bGamma_c^\star)\left(|c|+\beta\right),
\end{align}
where $c\in\mathcal{C}_{-i}$. Namely, given a partition $\mathcal{C}_{-i}$ on $\{1,\cdots,n\}\backslash\{i\}$, the left-out index $i$ forms a new singleton cluster with probability proportional to $
\left[\frac{V_{n}(|\mathcal{C}_{-i}|+1)\beta}{V_{n}(|\mathcal{C}_{-i}|)}\right]\sum_{K=|\mathcal{C}_{-i}|+1}^{\infty}\alpha_K,\nonumber$
and is merged into an existing cluster $c\in\mathcal{C}_{-i}$ with probability proportional to $\phi(\by_i|\bgamma_c^\star,\bGamma_c^\star)\left(|c|+\beta\right)$.

Instead of directly sampling from the above categorical distribution, which involves computing the intractable $\alpha_K$'s, we take advantage of the integral structure of $\alpha_K$ and design auxiliary variables following the data augmentation technique in \cite{neal2000markov}. Roughly speaking, when sampling from $p(x,y)$ via MCMC, one introduces an auxiliary variable $z$ and samples   $p(z\mid x,y)$, $p(y\mid x,z)$, and $p(x\mid z)$ alternately (collapsing). 
The auxiliary $z$ is discarded after such an update. \cut
\begin{theorem}\label{thm:Gibbs_sampler_auxiliary_variable}
Using above notations, we denote 
\begin{align}
\widetilde{G}(A\mid\calC_{-i},\btheta_{-i})=\sum_{K=|\calC_{-i}|+1}^\infty p(K\mid\calC=\calC_{-i}\cup\{\{i\}\})
\frac
{\displaystyle \iint_A L_K(\bgamma_{\underline{c}}^\star)p_\bmu(\bgamma_{\underline{c}}^\star)p_\bSigma(\bGamma_{\underline{c}}^\star)\mathrm{d}\bgamma_{\underline{c}}^\star\mathrm{d}\bGamma_{\underline{c}}^\star}
{\displaystyle \int L_K(\bgamma_{\underline{c}}^\star)p_\bmu(\bgamma_{\underline{c}}^\star)\mathrm{d}\bgamma_{\underline{c}}^\star},\nonumber
\end{align}
where $L_K$ is defined in \textbf{Theorem \ref{thm:urn_model}}. Let $\widetilde{g}(\bgamma_{\underline{c}}^\star,\bGamma_{\underline{c}}^\star|\calC_{-i},\btheta_{-i})$ be the density of $\widetilde{G}(\cdot|\calC_{-i},\btheta_{-i})$ and the density of auxiliary variable $(\bgamma_{\underline{c}}^\star,\bGamma_{\underline{c}}^\star)$ be  
\begin{align}
&p(\bgamma_{\underline{c}}^\star,\bGamma_{\underline{c}}^\star\mid\by_i,\calC_{-i},\btheta_{-i})\nonumber\\
&\qquad=\frac{\displaystyle
\left[\frac{V_n(|\calC_{-i}|+1)\beta}{V_n(|\calC_{-i}|)}\right]\phi(\by_i\mid\bgamma_{\underline{c}}^\star,\bGamma_{\underline{c}}^\star)+\sum_{c\in\calC_{-i}}(|c|+\beta)\phi(\by_i\mid\bgamma_c^\star,\bGamma_c^\star)
}
{\displaystyle
\left[\frac{V_n(|\calC_{-i}|+1)\beta}{V_n(|\calC_{-i}|)}\right]\sum_{K=|\calC_{-i}|+1}^\infty\alpha_K+\sum_{c\in\calC_{-i}}(|c|+\beta)\phi(\by_i\mid\bgamma_{c}^\star,\bGamma_c^\star)
}\widetilde{g}(\bgamma_{\underline{c}}^\star,\bGamma_{\underline{c}}^\star\mid\bgamma_c^\star,\bGamma_c^\star,c\in\calC_{-i})
\nonumber.
\end{align}
Given the auxiliary variable $(\bgamma_{\underline{c}}^\star,\bGamma_{\underline{c}}^\star)$, suppose $\calC$ and $\btheta_n$ are sampled as follows:
\begin{align}
\label{eqn:full_conditional_PXDA1}
&\mathbb{P}(\mathcal{C}=\mathcal{C}_{-i}\cup\{\{i\}\}\mid\bgamma_{\underline{c}}^\star,\bGamma_{\underline{c}}^\star,\by_i,\calC_{-i},\btheta_{-i})\propto\left[\frac{V_{n}(|\mathcal{C}_{-i}|+1)\beta}{V_{n}(|\mathcal{C}_{-i}|)}\right]\phi(\by_i\mid\bgamma_{\underline{c}}^\star,\bGamma_{\underline{c}}^\star),\\
\label{eqn:full_conditional_PXDA2}
&\mathbb{P}(\mathcal{C}=(\mathcal{C}_{-i}\backslash\{c\})\cup\{c\cup\{i\}\}\mid\bgamma_{\underline{c}}^\star,\bGamma_{\underline{c}}^\star,\by_i,\calC_{-i},\btheta_{-i})\propto\left(|c|+\beta\right)\phi(\by_i\mid\bgamma_c^\star,\bGamma_c^\star),\\
&\mathbb{P}\left(\btheta_i\in A\mid\calC=\calC_{-i}\cup\{\{i\}\},\bgamma_{\underline{c}}^\star,\bGamma_{\underline{c}}^\star,\by_i,\calC_{-i},\btheta_{-i}\right)=\delta_{(\bgamma_{\underline{c}}^\star,\bGamma_{\underline{c}}^\star)}(A)\nonumber,\\
&\mathbb{P}\left(\btheta_i\in A\mid\calC=(\calC_{-i}\backslash\{c\})\cup(\{c\cup\{i\}\}),\bgamma_{\underline{c}}^\star,\bGamma_{\underline{c}}^\star,\by_i,\calC_{-i},\btheta_{-i}\right)=\delta_{(\bgamma_{c}^\star,\bGamma_{c}^\star)}(A)\nonumber.
\end{align}
Then the marginal posterior $(\btheta_i|\by_i,\calC_{-i},\btheta_{-i})$ with $(\bgamma_{\underline{c}}^\star,\bGamma_{\underline{c}}^\star)$ and $\calC|\calC_{-i}$ integrated out coincides with \eqref{eqn:Gibbs_sampler_urn_model}, and the complete conditional distribution of $(\bgamma_{\underline{c}}^\star,\bGamma_{\underline{c}}^\star)$ is given by
\begin{align}
&\mathbb{P}((\bgamma_{\underline{c}}^\star,\bGamma_{\underline{c}}^\star)\in A\mid\by_i,\calC,\btheta_{-i},\btheta_i)
\nonumber\\
\label{eqn:Gibbs_sampling_auxiliary_variables}
&\qquad
=\mathbb{I}(\calC = \calC_{-i}\cup\{\{i\}\})\delta_{\btheta_i}(A)+\mathbb{I}(\calC\neq \calC_{-i}\cup\{\{i\}\})\widetilde{G}(A\mid\calC_{-i},\btheta_{-i}).
\end{align}
\end{theorem}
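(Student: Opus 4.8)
The plan is to treat this as a verification that the stated auxiliary-variable (data-augmentation) scheme has the correct invariant conditionals, i.e.\ that the augmented chain targets the full conditional \eqref{eqn:Gibbs_sampler_urn_model}. The strategy is to build the augmented joint law of $\big((\bgamma_{\underline{c}}^\star,\bGamma_{\underline{c}}^\star),\calC,\btheta_i\big)$ from the three specified pieces -- the marginal density of the auxiliary variable $(\bgamma_{\underline{c}}^\star,\bGamma_{\underline{c}}^\star)$, the categorical law of $\calC$ given the auxiliary variable in \eqref{eqn:full_conditional_PXDA1}--\eqref{eqn:full_conditional_PXDA2}, and the deterministic assignment of $\btheta_i$ given $(\calC,\text{auxiliary})$ -- and then (i) integrate out the auxiliary variable and $\calC$ to recover the marginal of $\btheta_i$, and (ii) apply Bayes' rule to this \emph{same} joint to read off the complete conditional of the auxiliary variable. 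Both claims then reduce to algebraic identities among the quantities $\alpha_K$, $m_K$, $L_K$, and $G_K$ defined in \textbf{Theorem \ref{thm:urn_model}}.

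The first observation I would exploit is that the normalizing constant of the categorical distribution induced by \eqref{eqn:full_conditional_PXDA1}--\eqref{eqn:full_conditional_PXDA2}, namely
\[
\left[\frac{V_n(|\calC_{-i}|+1)\beta}{V_n(|\calC_{-i}|)}\right]\phi(\by_i\mid\bgamma_{\underline{c}}^\star,\bGamma_{\underline{c}}^\star)+\sum_{c\in\calC_{-i}}(|c|+\beta)\phi(\by_i\mid\bgamma_c^\star,\bGamma_c^\star),
\]
is exactly the bracketed factor multiplying $\widetilde{g}$ in the stated density of $(\bgamma_{\underline{c}}^\star,\bGamma_{\underline{c}}^\star)$. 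Consequently, when I form the joint of $\big((\bgamma_{\underline{c}}^\star,\bGamma_{\underline{c}}^\star),\calC\big)$ by multiplying the auxiliary density by the $\calC$-conditional, this common factor cancels, leaving a clean product: in the new-cluster branch the joint is proportional to $\frac{V_n(|\calC_{-i}|+1)\beta}{V_n(|\calC_{-i}|)}\phi(\by_i\mid\bgamma_{\underline{c}}^\star,\bGamma_{\underline{c}}^\star)\widetilde{g}(\bgamma_{\underline{c}}^\star,\bGamma_{\underline{c}}^\star)$, and in each merge branch it is proportional to $(|c|+\beta)\phi(\by_i\mid\bgamma_c^\star,\bGamma_c^\star)\widetilde{g}(\bgamma_{\underline{c}}^\star,\bGamma_{\underline{c}}^\star)$. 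Since $\btheta_i$ equals $(\bgamma_{\underline{c}}^\star,\bGamma_{\underline{c}}^\star)$ in the first branch and the fixed $(\bgamma_c^\star,\bGamma_c^\star)$ in the merge branches, integrating the auxiliary variable out of the merge branches contributes $\sum_{c\in\calC_{-i}}(|c|+\beta)\phi(\by_i\mid\bgamma_c^\star,\bGamma_c^\star)\delta_{(\bgamma_c^\star,\bGamma_c^\star)}$ (because $\widetilde{g}$ integrates to one), matching the discrete part of \eqref{eqn:Gibbs_sampler_urn_model}.

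The crux -- and the step I expect to be most delicate -- is the new-cluster branch, where I must show that
\[
\int_A \phi(\by_i\mid\bgamma_{\underline{c}}^\star,\bGamma_{\underline{c}}^\star)\widetilde{g}(\bgamma_{\underline{c}}^\star,\bGamma_{\underline{c}}^\star)\mathrm{d}\bgamma_{\underline{c}}^\star\mathrm{d}\bGamma_{\underline{c}}^\star=\sum_{K=|\calC_{-i}|+1}^\infty \alpha_K\,G_K(A).
\]
I would substitute the series definition of $\widetilde{g}$ and, for each $K$, compare the integrand against the definitions of $m_K$, $L_K$, and $G_K$. The matching hinges on two bookkeeping identities: first, the normalizing constant of $G_K$, namely $\iint L_K(\bgamma_{\underline{c}}^\star)\phi(\by_i\mid\bgamma_{\underline{c}}^\star,\bGamma_{\underline{c}}^\star)p_{\bmu}(\bgamma_{\underline{c}}^\star)p_{\bSigma}(\bGamma_{\underline{c}}^\star)\mathrm{d}\bgamma_{\underline{c}}^\star\mathrm{d}\bGamma_{\underline{c}}^\star$, coincides with the numerator of $m_K$ once the inner $\calC_\varnothing$-integrals are identified with $L_K$; second, the denominator $\int L_K(\bgamma_{\underline{c}}^\star)p_{\bmu}(\bgamma_{\underline{c}}^\star)\mathrm{d}\bgamma_{\underline{c}}^\star$ appearing in $\widetilde{G}$ is precisely the denominator of $m_K$. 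Together these give $m_K=(\text{$G_K$ normalizer})/(\text{$\widetilde{G}$ denominator})$, so the per-$K$ terms collapse to $p(K\mid\calC=\calC_{-i}\cup\{\{i\}\})\,m_K\,G_K(A)=\alpha_K G_K(A)$; summing over $K$ recovers the continuous part of \eqref{eqn:Gibbs_sampler_urn_model} and completes claim (i). Finally, for claim (ii) I would return to the constructed joint and condition on $(\by_i,\calC,\btheta_{-i},\btheta_i)$: when $\calC=\calC_{-i}\cup\{\{i\}\}$ the assignment rule forces $(\bgamma_{\underline{c}}^\star,\bGamma_{\underline{c}}^\star)=\btheta_i$, yielding the point mass $\delta_{\btheta_i}$, whereas when $\calC\neq\calC_{-i}\cup\{\{i\}\}$ the value of $\btheta_i$ equals the cluster parameter $(\bgamma_c^\star,\bGamma_c^\star)$, which is free of the auxiliary variable, so after discarding factors constant in $(\bgamma_{\underline{c}}^\star,\bGamma_{\underline{c}}^\star)$ the conditional is proportional to $\widetilde{g}$, i.e.\ equals $\widetilde{G}(\cdot\mid\calC_{-i},\btheta_{-i})$. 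Combining the two cases gives \eqref{eqn:Gibbs_sampling_auxiliary_variables}.
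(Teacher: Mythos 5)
Your proposal is correct and follows essentially the same route as the paper's proof: the same cancellation of the common normalizing factor between the auxiliary density and the categorical law of $\calC$, the same bookkeeping identities matching the numerator and denominator of $m_K$ with the $G_K$ normalizer and the $\widetilde{G}$ denominator to collapse the new-cluster branch to $\sum_{K}\alpha_K G_K(A)$, and the same two-case conditioning argument for \eqref{eqn:Gibbs_sampling_auxiliary_variables}. No gaps to flag.
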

The proof of \textbf{Theorem \ref{thm:Gibbs_sampler_auxiliary_variable}} is deferred to Section \ref{sec:proof_of_the_generalized_urn_model} of the Supplementary Material. Now we are in a position to introduce the blocked-collapsed Gibbs sampler for the posterior inference. We remark that this Gibbs sampler can also be regarded as the generalization of the ``Algorithm 8'' in \cite{neal2000markov} to the case where a repulsive prior among component centers is introduced. The basic idea is to draw samples from $\mathbb{P}(\calC_n\mid\bgamma_{\underline{c}}^\star,\bGamma_{\underline{c}}^\star,\by_i,\calC_{-i},\btheta_{-i})$, $\mathbb{P}(\btheta_i\mid\calC_n,\bgamma_{\underline{c}}^\star,\bGamma_{\underline{c}}^\star,\by_i,\calC_{-i},\btheta_{-i})$, and $\mathbb{P}(\bgamma_{\underline{c}}^\star,\bGamma_{\underline{c}}^\star\mid\calC_n,\btheta_i,\by_i,\calC_{-i},\btheta_{-i})$ alternately, where $(\bgamma_{\underline{c}}^\star,\bGamma_{\underline{c}}^\star)$ is the auxiliary variable introduced in \textbf{Theorem \ref{thm:Gibbs_sampler_auxiliary_variable}}. 

\begin{algo}
Suppose the current state of the Markov chain consists of $(\bgamma_c^\star,\bGamma_c^\star:c\in\mathcal{C}_n)$, and a partition $\mathcal{C}_n$ on $\{1,\cdots,n\}$. We instantiate $(\btheta_1,\cdots,\btheta_n)$ using $(\bgamma_c^\star,\bGamma_c^\star:c\in\mathcal{C}_n)$ and $\mathcal{C}_n$ by letting $\btheta_{z_i}=(\bgamma_c^\star,\bGamma_c^\star)$ if $i\in c$. 
A complete iteration of the blocked-collapsed Gibbs sampler is desribed as below.

\noindent
\begin{itemize}[noitemsep, topsep = 0cm]
	\item\textbf{Step 1: For }$i=1,\cdots,n$:
	\begin{enumerate}[noitemsep, topsep = 0cm]
		\item\textbf{Sample auxiliary variable $(\bgamma_{\underline{c}}^\star,\bGamma{}_{\underline{c}}^\star)$ from \eqref{eqn:Gibbs_sampling_auxiliary_variables}}: If $\mathcal{C}_n=\mathcal{C}_{-i}\cup\{\{i\}\}$, then set $(\bgamma_{\underline{c}}^\star,\bGamma_{\underline{c}}^\star) = \btheta_i$; Otherwise sample $(\bgamma_{\underline{c}}^\star,\bGamma_{\underline{c}}^\star)$ from $\widetilde{G}(\cdot\mid\calC_{-i},\btheta_{-i})$ as follows:
		\begin{itemize}[noitemsep, topsep = 0cm]
			\item[i)]Sample $K\sim p(K\mid\mathcal{C}_n=\mathcal{C}_{-i}\cup\{\{i\}\})$, set $\ell=|\mathcal{C}_{-i}|$, compute $\mathcal{C}_\varnothing$ with $|\mathcal{C}_\varnothing|=K-\ell$, and set $\btheta_{-i}=(\btheta_1,\cdots,\btheta_n)\backslash\{\btheta_i\}$.
			\item[ii)]Sample $\bGamma_{\underline{c}}^\star\sim p_\bSigma(\bGamma_{\underline{c}}^\star)$. Sample $(\bgamma_{c}^\star:c\in\mathcal{C}_\varnothing)$ by accept-reject sampling: Sample $(\bgamma_c^\star:c\in\mathcal{C}_\varnothing)$ independently from  $p_\bmu$ and $U\sim\mathrm{Unif}(0,1)$, independent of $(\bgamma_c^\star:c\in\mathcal{C}_\varnothing)$; If $U<h_K(\bgamma_c^\star:c\in\mathcal{C}_{-i}\cup\mathcal{C}_\varnothing)$, then accept the new proposed samples; Otherwise resample $(\bgamma_c^\star:c\in\mathcal{C}_\varnothing)$ from $p_\bmu$ and $U$ until $U<h_K(\bgamma_c^\star:c\in\mathcal{C}_{-i}\cup\mathcal{C}_\varnothing)$. Discard all $(\bgamma_c^\star,\bGamma_c^\star:c\in\calC_\varnothing\backslash\{\underline{c}\})$.
		\end{itemize}
		\item\textbf{Sample $\mathcal{C}_n$ from $p(\mathcal{C}_n|,\bgamma_{\underline{c}}^\star,\bGamma_{\underline{c}}^\star,\by_i,\calC_{-i},\btheta_{-i})$ according to \eqref{eqn:full_conditional_PXDA1} and \eqref{eqn:full_conditional_PXDA2}:}
		\begin{align}
		&\Pi(\mathcal{C}_n=\mathcal{C}_{-i}\cup\{\{i\}\}\mid-)\propto\left[\frac{V_{n}(|\mathcal{C}_{-i}|+1)\beta}{V_{n}(|\mathcal{C}_{-i}|)}\right]\phi(\by_i\mid\bgamma_{\underline{c}}^\star,\bGamma_{\underline{c}}^\star)\nonumber,\\
		&\Pi(\mathcal{C}_n=(\mathcal{C}_{-i}\backslash\{c\})\cup\{c\cup\{i\}\}\mid-)\propto\left(|c|+\beta\right)\phi(\by_i\mid\bgamma_c^\star,\bGamma_c^\star)\nonumber.
		\end{align}
		\item \textbf{Assign $\btheta_i$ value according to $\mathbb{P}(\btheta_i\in\cdot\mid\calC,\bgamma_{\underline{c}}^\star,\bGamma_{\underline{c}}^\star,\by_i,\calC_{-i},\btheta_{-i})$.}
		Set $\btheta_i=(\bgamma_{\underline{c}}^\star,\bGamma_{\underline{c}}^\star)$ if $\calC_n = \calC_{-i}\cup\{\{i\}\}$, and set $\btheta_i=(\bgamma_c^\star,\bGamma_c^\star)$ if $\calC_n=(\calC_{-i}\backslash\{c\})\cup(\{c\cup\{i\}\})$ for some $c\in\calC_{-i}$. 
	\end{enumerate}
	\item\textbf{Step 2: }\textbf{Sample $K$ from $p(K\mid \mathcal{C}_n,\by_1,\cdots,\by_n,\bGamma_c^\star:c\in\calC_n)$}; Set $\ell=|\mathcal{C}_n|$, and compute $\mathcal{C}_\varnothing$ such that $|\mathcal{C}|=K-\ell$.
	\item\textbf{Step 3: }\textbf{Sample $(\bGamma_c^\star:c\in\mathcal{C}_{n})$ from $p(\bGamma_c^\star\mid\by_i:i\in c,\bgamma_c^\star,\calC_n)$}: For all $c\in\mathcal{C}_n$, sample $\bGamma_c^\star$ from
	\begin{eqnarray}
	p(\bGamma_{c}^\star\mid-)\propto p_{\bSigma}(\bSigma_{c}^\star)\prod_{i\in c}\phi(\by_i\mid\bgamma_c^\star,\bSigma_{c}^\star)\nonumber.
	\end{eqnarray}
	\item\textbf{Step 4 (Blocking): }\textbf{Sample $(\bgamma_c^\star:c\in\mathcal{C}_n)$ from $p(\bgamma_c^\star:c\in\calC_n\mid K,\bGamma_c^\star,\by_1,\cdots,\by_n,\calC_n)$}. This can be done by accept-reject sampling: For each $c\in\mathcal{C}_n$, sample 
	$$
	p(\bgamma_c^\star\mid-)\propto p_\bmu(\bgamma_c^\star)\prod_{i\in c}\phi(\by_i\mid\bgamma_c^\star,\bGamma_c^\star),
	$$
	and for each $c\in\mathcal{C}_\varnothing$, sample $\bgamma_c^\star\sim p_\bmu(\bgamma_c^\star)$. 
	Next independently sample $U\sim\mathrm{Unif}(0,1)$; If $U<h_K(\bgamma_c^\star:c\in\mathcal{C}_{n}\cup\mathcal{C}_\varnothing)$, then accept the new proposed samples; Otherwise resample $(\bgamma_{c}^\star:c\in\calC_n\cup\calC_\varnothing)$ and $U$ until $U<h_K(\bgamma_c^\star:c\in\calC_n\cup\calC_\varnothing)$. 
	\item \textbf{Step 5: }Change the current state to $(\btheta_c^\star,c\in\mathcal{C}_n)$ and $\mathcal{C}_n$.
\end{itemize}
\end{algo}
The detailed implementation of the blocked-collapsed Gibbs sampler, including the discussion of sampling from $p(K\mid\calC_n)$ and $p(K\mid\calC_n,\by_1,\cdots,\by_n,\bGamma_c^\star:c\in\calC_n)$, is provided in Section \ref{sec:details_of_posterior_inference} of the Supplementary Material. 
\begin{remark}
It is worth noticing that in theory, only \textbf{Step 1} in the above Gibbs sampler is necessary to create a Markov chain with the stationary distribution being the full posterior distribution. Nevertheless, such an urn-model-based sampler could potentially yield a Markov chain converging rather slowly, as has been pointed out in \cite{neal2000markov}. The resampling steps (\textbf{Step 2} through \textbf{Step 5}) are hence introduced to improve the mixing of the chain.
\end{remark}
\begin{remark}
	The proposed sampler can be easiliy extended to the case where a non-Gaussian mixture model is used, provided that we use priors $p_\bmu,p_\bSigma$ in \eqref{eqn:repulsive_density} that are conjugate to the non-Gaussian kernel density. In cases where non-conjugate priors $p_\bmu,p_\bSigma$ are used, it is also possible to extend the blocked-collapsed Gibbs sampler either by a method of ``no-gaps'' proposed by \cite{maceachern1998estimating} or a Metropolis-within-Gibbs sampler \citep{neal2000markov}. 
\end{remark}

\section{Numerical Examples}
\label{sec:numerical_results}
We evaluate the performance of the RGM model and the blocked-collapsed Gibbs sampler proposed in Section \ref{sec:posterior_inference} through extensive simulation studies and real data analysis. 
Subsections \ref{sub:multi_modal_estimation_finite_gaussian_mixtures} and \ref{sub:uni_modal_density_continuous_gaussian_mixtures} aim to illustrate the advantages of the RGM concerning accurate density estimation, identification of correct number of components, 
and shrinkage effect on the model complexity. Subsection \ref{sub:multivariate_model_based_clustering} demonstrates the efficiency of the proposed blocked-collapsed Gibbs sampler compared to the DP mixture model and the DPP mixture model \citep{xu2016bayesian}. In Subsection \ref{sub:real_data_analysis} we apply the RGM model to analyze the Old Faithful geyser eruption data \citep{silverman1986density}. 
We assume $\beta=1$, indicating a uniform prior on $(w_1,\cdots,w_K\mid K)$. We assign a zero-truncated Poisson prior on $K$ with intensity $\lambda=1$ (\emph{i.e.}, $p(K)=\frac{\mathbb{I}(K\geq 1)}{(\mathrm{e}-1)K!}$) 
for all numerical examples except the location-mixture problem in Section \ref{sub:uni_modal_density_continuous_gaussian_mixtures}. 
The repulsive function is defined as $g(x)=\frac{x}{g_0+x}$ for some $g_0>0$, and without loss of generality, we let $h_K$ to be of the form \eqref{eqn:repulsive_function_1}. 
Lastly, we assume $p(\bmu)=\phi(\bmu\mid\mathrm{0},\tau^2\eye_p)$ and a truncated inverse Gamma prior on $\lambda(\bSigma)$, $p(\lambda)\propto\mathbb{I}(\lsigma^2\leq \lambda\leq \usigma^2)\lambda^{-a_0-1}\exp(-b_0/\lambda)$ for some $a_0,b_0>0$. 

We give the convergence diagnostics via trace plots and autocorrelation plots in Section \ref{sec:convergence_diagnostics} of the Supplementary Material. To compare the performance of the proposed models with the competitors (\emph{e.g.} the DP mixture (DPM) model and the DPP mixture model), we follow the ideas in \cite{pettit1990conditional} and compute the \emph{logarithm of the conditional predictive ordinate} (log-CPO) of different models using the post-burn-in samples as follows: 
\begin{eqnarray}
\text{log-CPO}=-\sum_{i=1}^n\log\left[\frac{1}{n_{\mathrm{mc}}}\sum_{i_{\mathrm{it}}=1}^{n_{\mathrm{mc}}}p(\by_i\mid\boldsymbol{\Theta}_{\mathrm{mc}}^{i_{\mathrm{it}}})\right],\nonumber
\end{eqnarray}
where $n_{\mathrm{mc}}$ is the number of the post-burn-in MCMC samples, $i_{\mathrm{it}}$ indexes the post-burn-in iterations, and $\bm{\Theta}_{\mathrm{mc}}^{i_{\mathrm{it}}}$ represents the post-burn-in samples of all parameters generated by the MCMC at the $i_{\mathrm{it}}$th iteration. 

\subsection{Fitting Multi-modal Density: Finite Gaussian Mixtures} 
\label{sub:multi_modal_estimation_finite_gaussian_mixtures}

In this subsection, to demonstrate multi-modal density fitting, we fit a finite mixture of Gaussians using the RGM model, and evaluate its performance regarding the density estimation and the identification of the number of components. In particular, suppose the simulated data $\by_1,\cdots,\by_{n}$, $n = 1000$, are i.i.d. generated from the bivariate density:
\begin{eqnarray}
f_0(\by)=0.4\phi(\by\mid\mathbf{0},\mathrm{diag}(2,1)) + 0.3\phi(\by\mid(-6,-6)\transpose, 3\eye_2) + 0.3\phi(\by\mid(6,6)\transpose, 2\eye_2)\nonumber.
\end{eqnarray}
We implement the proposed blocked-collapsed Gibbs sampler with $g_0 = 10$, $\tau = 10$, $m = 2,\lsigma=0.1, \usigma = 10$, and a total number of 2000 iterations with the first $1000$ iterations discarded as burn-in.
For comparison, we consider the following DPM model,
\begin{eqnarray}
(\by_i\mid\bmu_{z_i},\bSigma_{z_i})\sim N(\bmu_{z_i},\bSigma_{z_i}),\quad (\bmu_{z_i},\bSigma_{z_i}\mid G)\iidsim G,\quad\text{and }(G\mid\alpha, G_0)\sim\mathrm{DP}(\alpha, G_0), \nonumber
\end{eqnarray}
where $G_0=\mathrm{N}(\bmu, \bSigma)$ with $\bmu \sim \mathrm{N}\left(\mb_1,\bSigma/k_0\right)$ and $\bSigma\sim\text{Inv-Wishart}(4,\boldsymbol{\Psi}_1)$, $\alpha\sim\mathrm{Gamma}(1, 1)$, $\mb_1\sim\mathrm{N}(\mathbf{0},2\eye_2)$, $k_0\sim\mathrm{Gamma}(0.5,0.5)$, and $\boldsymbol{\Psi}_1\sim\text{Inv-Wishart}(4, 0.5\eye_2)$. For the DP mixture model, we use $K$ to represent the number of clusters throughout this section, since the number of components is always infinity. 

Table \ref{table:log_cpo_numerical_results} shows that the log-CPO of the RGM model is higher than that of the DPM model, indicating that RGM is preferred according to the data. 
Figures \ref{multi_modal_figure}a and \ref{multi_modal_figure}c show the posterior density estimation under the RGM model and the DP mixture model, respectively, indicating that both methods perform well in terms of density estimation. 
\begin{table}[htbp]
	\centering
	\caption{Log-Conditional Predictive Ordinate (log-CPO) for Numerical Results}
	\begin{tabular}{c|c|c|c}
		\hline\hline
		Model & Subsection \ref{sub:multi_modal_estimation_finite_gaussian_mixtures} & Subsection \ref{sub:uni_modal_density_continuous_gaussian_mixtures} & 
		Subsection \ref{sub:real_data_analysis}
		 \\
		\hline
		RGM model&-3596.525& -3385.989& -240.2669\\
		DPM model&-4599.204& -3483.667& -315.1032\\
		DPP mixture model&& & -512.6564\\
		\hline\hline
	\end{tabular}%
	\label{table:log_cpo_numerical_results}
\end{table}%

\begin{figure}[htbp!]
	\centerline{\includegraphics[width=1\textwidth]{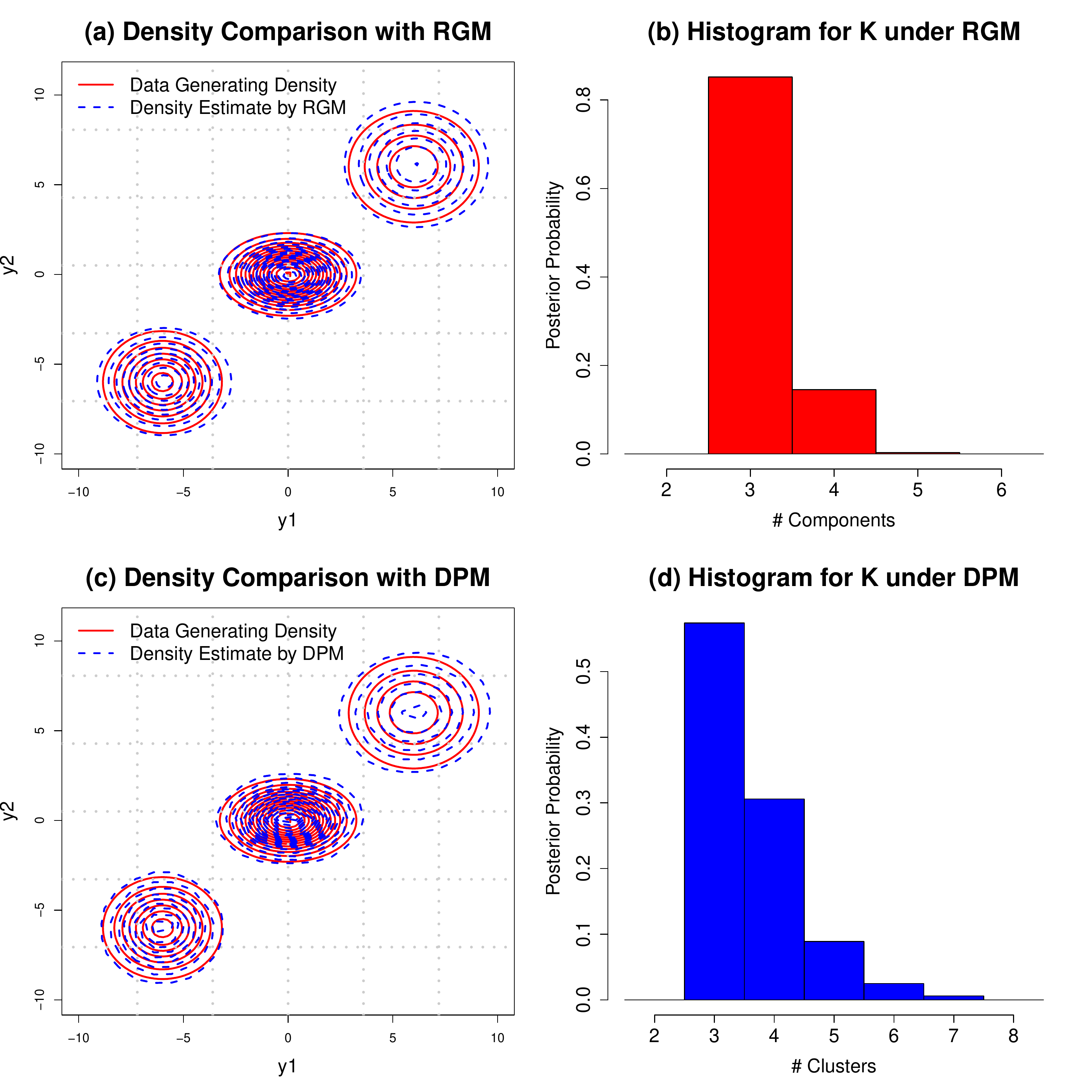}}
	\caption{Fiting Multi-modal Density: Panels (a) and (c) are the contour plots for the posterior density estimation of the RGM model and the DPM model, respectively. Panels (b) and (d) are the histograms of the posterior number of components under the RGM model and the posterior number of clusters under the DPM model, respectively, where the underlying true number of components is $K = 3$. 
	}
	\label{multi_modal_figure}
\end{figure} 
However, as shown in the histograms of the posterior numbers of components/clusters in 
Figures \ref{multi_modal_figure}b and \ref{multi_modal_figure}d,  the posterior distribution of the number of components is highly concentrated around the underlying true $K$ under the RGM model, whereas the DPM model assigns relatively higher posterior probability to redundant clusters. This agrees with the inconsistency phenomenon of the DPM model for the identification of number of components, which is reported in \cite{miller2013simple}. 

\subsection{Fitting Uni-modal Density: Continuous Gaussian Mixtures}  
\label{sub:uni_modal_density_continuous_gaussian_mixtures}
Besides generating the simulated data from a finite discrete Gaussian mixture model, 
in this subsection we consider a continuous mixture of Gaussians, 
\begin{eqnarray}\label{eqn:norm_convolve_exp}
f_0(y_1,y_2)=\prod_{j=1}^2\int_0^{\infty}\phi(y_i-\mu_i-\mu_0\mid 0,1)\exp(-\mu_i)\mathrm{d}\mu_i.
\end{eqnarray}
Notice that $f_0$ is uni-modal. The random variables $y_i$, $i=1,2$ can be i.i.d. generated as the sum of a normal random variable and an exponential random variable with intensity parameter $1$, \emph{i.e.}, $y_i=z_i+\mu_i$ where $z_i\sim\mathrm{N}(\mu_0,1)$ and $\mu_i\sim\mathrm{Exp}(1)$, $i=1,2$. Then $\by=(y_1,y_2)$ is the random vector following the distribution in \eqref{eqn:norm_convolve_exp}. The marginal distribution of $y_i$ is  referred to as the \emph{exponentially modified Gaussian} (EMG) distribution, the density of which can be alternatively represented as
$f(y)=\frac{1}{2}\exp\left(\mu_0-y+\frac{1}{2}\right)\mathrm{erfc}\left(\frac{\mu_0+1-y}{\sqrt{2}}\right)$, 
where $\mathrm{erfc}$ is the well-known complementary error function
$\mathrm{erfc}(x)=\frac{2}{\sqrt{\pi}}\int_x^{\infty}\exp(-t^2)\mathrm{d}t$.
We generate $n=1000$ i.i.d. samples from $f_0$ with $\mu_0=-4$, and implement the proposed blocked-collapsed Gibbs sampler with $g_0 = 7$, $\tau = 10$, $m = 2, \lsigma=0.1, \usigma = 10$, and a total number of 2000 iterations with the first $1000$ iterations discarded as burn-in phase. For comparison, we consider the similar DPM model with the same setting as in Section \ref{sub:multi_modal_estimation_finite_gaussian_mixtures}. 

Figures \ref{uni_modal_figure}a and \ref{uni_modal_figure}c show that the RGM model and the DPM model provide similar accurate density estimation to the underlying true density $f_0$. 
\begin{figure}[htbp!]
	\centerline{\includegraphics[width=1\textwidth]{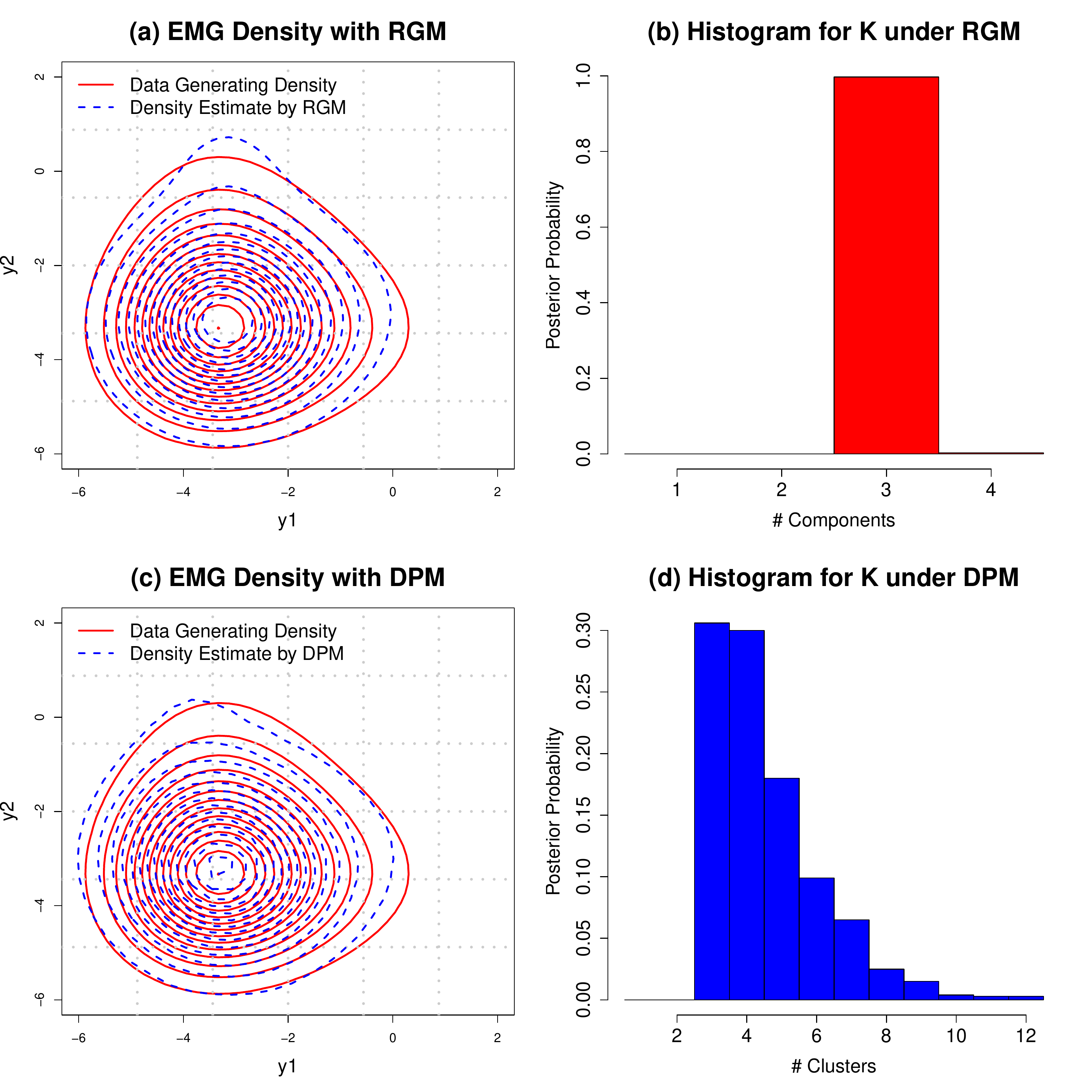}}
	\caption{Fitting Uni-modal Density: Panels (a) and (c) are the contour plots for the posterior density estimation under the RGM model and the DPM model, respectively. Panels (b) and (d) are the histograms of the posterior number of components under the RGM model and the posterior number of clusters under the DPM model, respectively.
	}
	\label{uni_modal_figure}
\end{figure}
However, Figures \ref{uni_modal_figure}b and \ref{uni_modal_figure}d indicate that under the DPM model, the number of active components tends be larger than that under the RGM model in order to fit the data well. In other words, the posterior of the RGM model provides the same level of accuracy in density estimation as the DPM model does, but with less number of components. In this simulation study, with high posterior probability, the RGM model only utilizes $3$ components to fit the density, whereas the DPM model assigns large posterior probability to utilizing $4$ or more components. The log-CPO comparison in Table \ref{table:log_cpo_numerical_results}, clearly show that the RGM model outperforms the DPM model. 

To demonstrate the parsimony effect on the number $K$ of necessary components to fit the density well, we perform comparison between the RGM and the independent-prior MFM. Suggested by \textbf{Theorem 5}, we consider location-mixture problem here only. That is, the covariance matrices for all components under both RGM and MFM are fixed at $\bSigma_k=\eye_2$, $k=1,\cdots,K$. 
We use the prior $p(K)\propto \frac{Z_K\mathbb{I}(K\geq 1)}{K!}$ for the RGM, and $p(K)\propto\frac{\mathbb{I}(K\geq 1)}{K!}$ for the MFM. 
We implement the proposed blocked-collapsed Gibbs sampler with $\tau = 10$, $m = 2$, $g_0=7$ for the location-RGM, $g_0=0$ for the MFM, and a total number of 2000 iterations with the first $1000$ iterations discarded as burn-in phase. 

Since the data generating density is a continuous mixture of Gaussians, there is no ``ground true'' $K$. We evaluate the two methods in terms of the posterior of $K$ and the log-CPO values. Figures \ref{uni_modal_compare_figure}a and \ref{uni_modal_compare_figure}c show that the location-RGM and the MFM provide similar accurate density estimation to the underlying true density $f_0$ and yield similar log-CPO. 
\begin{figure}[htbp!]
	\centerline{\includegraphics[width=1\textwidth]{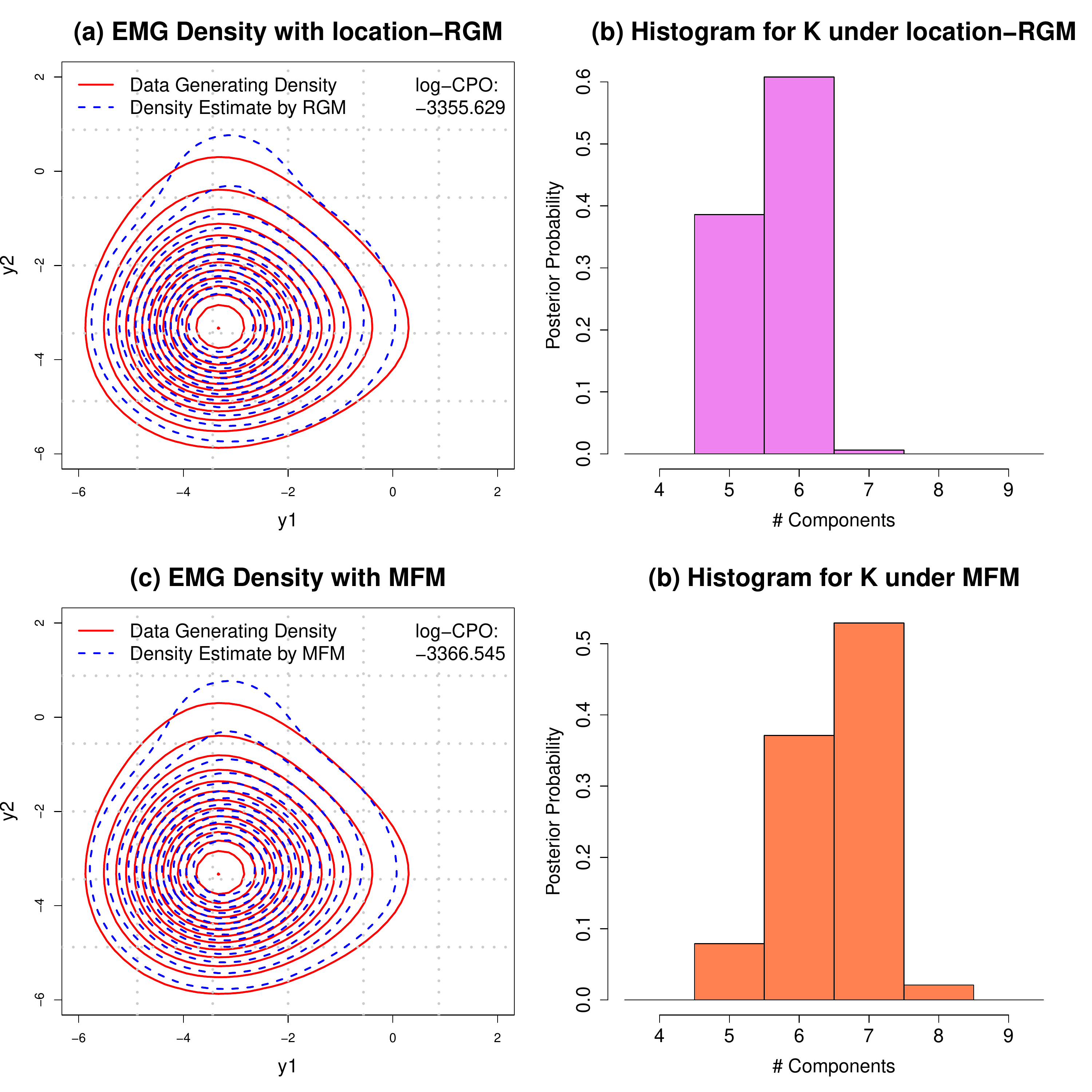}}
	\caption{Fitting Uni-modal Density using Location-Mixtures only: Panels (a) and (c) are the contour plots for the posterior density estimation under the location-RGM and the  MFM, respectively. Panels (b) and (d) are the histograms of the posterior number of components under the locatio-RGM and MFM, respectively.
	}
	\label{uni_modal_compare_figure}
\end{figure}
Nevertheless, it can be seen from Figures \ref{uni_modal_compare_figure}b and \ref{uni_modal_compare_figure}d that the MFM model assigns larger number components than the location-RGM. This phenomenon also numerically verifies \textbf{Theorem \ref{thm:model_complexity}}: compared to the independent prior ($g_0=0$), the posterior number $K$ of components under the repulsive prior ($g_0>0$) tends to be less.
We also observe that both the location-RGM and the MFM provide similar performance in terms of the density estimation, measured by the log-CPO ($-3355.629$ and $-3366.545$ under the the location-RGM and the MFM, respectively). 


\subsection{Multivariate Model-Based Clustering} 
\label{sub:multivariate_model_based_clustering}


Now we focus on a higher dimensional model-based clustering problem. Suppose that we generate $n = 500$ i.i.d. samples from a mixture of 3 10-dimensional Gaussians:
\begin{eqnarray}
f_0(\by)=0.4\phi(\by\mid\bmu_1,\bSigma_1) + 0.3\phi(\by\mid\bmu_2, 3\eye_{10}) + 0.3\phi(\by\mid\bmu_3, 2\eye_{10}),\nonumber
\end{eqnarray}
where the covariance matrix for the first component is a randomly generated diagonal matrix:
$$
\bSigma_1=\mathrm{diag}(5.5729, 5.0110, 3.6832, 8.1931, 5.7717, 3.0267, 3.5011, 7.8291, 4.2233, 4.3885),
$$ 
and $\bmu_1=\mathbf{0}$, $\bmu_2 = (-6,\cdots,-6)\transpose\in\mathbb{R}^{10}$, $\bmu_3 = -\bmu_2$. 
In this simulation study, we focus on the model-based clustering without fixing the number $K$ of components \emph{a priori}. 
Due to the challenge of visualizing high-dimensional clustering, we only show the scatter plot of the 4th versus 8th coordinate of the simulated data in Figure \ref{model_clustering_figure}a. These two dimensions correspond to the first two largest eigenvalues in the covariance matrix. The projection of the data onto this 2-dimensional subspace shows that the three clusters are not well-separated. We implement the proposed blocked-collapsed Gibbs sampler with $g_0 = 70$, $\tau = 10$, $m = 2, \lsigma=0.1, \usigma = 10$. To demonstrate the efficiency of the proposed sampler, we keep all MCMC samples and compare the efficiency of the algorithms in terms of their numbers of burn-in iterations. 
\begin{figure}[h!]
	\centerline{\includegraphics[width=1\textwidth]{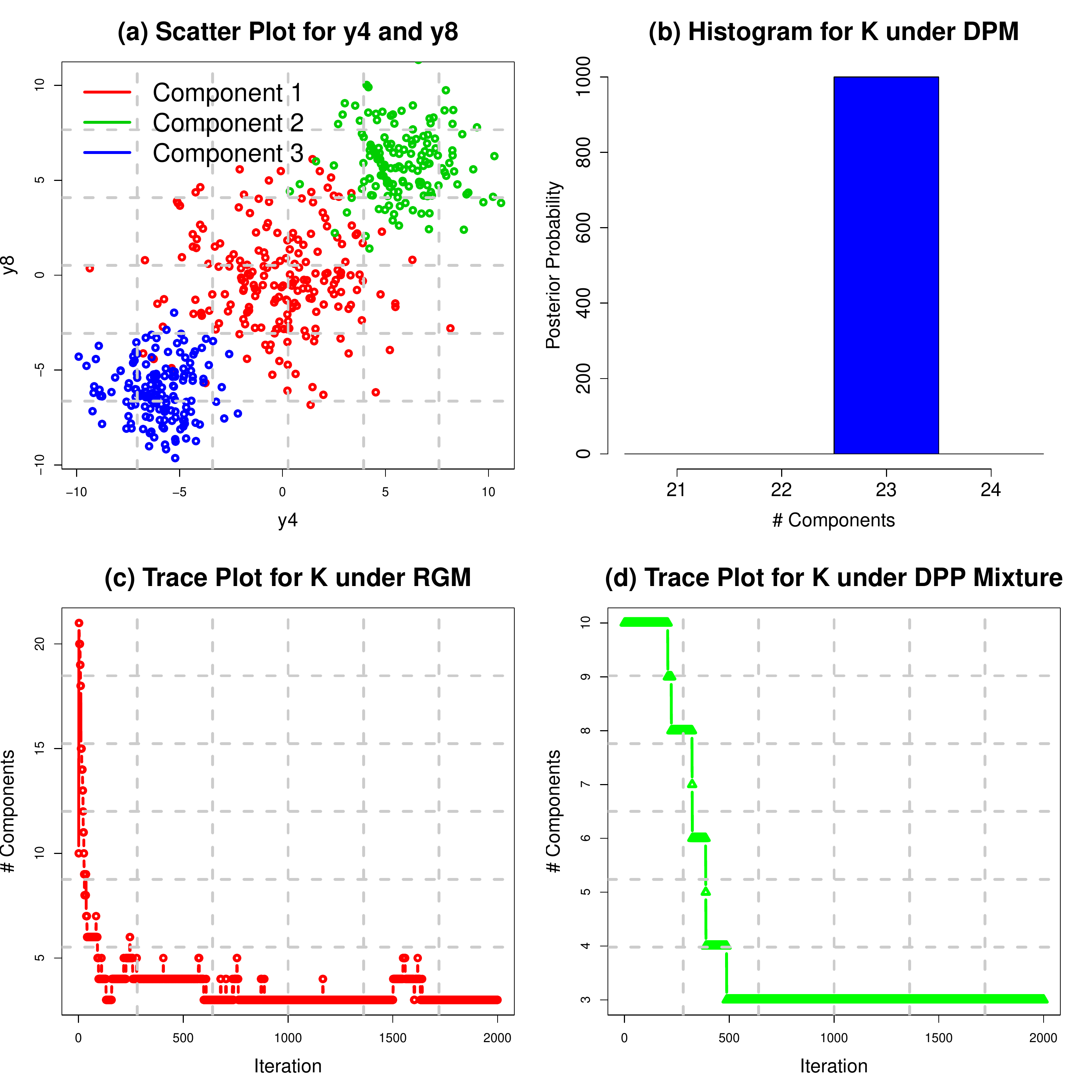}}
	\caption{Multivariate Model-Based Clustering: Panel (a) is the scatter plot of the 4th-versus-8th coordinate of the simulated data; Panel (b) is the histogram of the posterior number of clusters under the DPM model; Panels (c) and (d) are the trace plots for the posterior samples of $K$ under the RGM model, and that of the number of clusters under the DPP mixture model, respectively. }
	\label{model_clustering_figure}
\end{figure}

For comparison, we consider the two alternative clustering models and evaluate their performance in terms of efficiency in estimating posterior number of components. The first one is the {DPM model}: $(\by_i\mid\bmu_{z_i},\bSigma_{z_i})\sim N(\bmu_{z_i},\bSigma_{z_i}),\quad (\bmu_{z_i},\bSigma_{z_i}\mid G)\iidsim G,\quad\text{and }(G\mid\alpha, G_0)\sim\mathrm{DP}(\alpha, G_0)$, where $G_0=\mathrm{N}(\mathbf{\bmu},\bSigma)$ with $\bmu\sim\mathrm{N}(\mathbf{0},\bSigma/k_0)$ and $\bSigma\sim\text{Inv-Wishart}(12,\boldsymbol{\Psi}_1)$, $\alpha=1$, $k_0\sim\mathrm{Gamma}(0.005,0.005)$, and $\boldsymbol{\Psi}_1=0.1\eye_{10}$. The Second alternative model is the \emph{DPP mixture model} proposed in \cite{xu2016bayesian}, who used the determinantal point process as a repulsive function: $h_K(\bmu_1,\cdots,\bmu_K)=\det\left\{\left[\exp\left(-\frac{1}{2\theta^2}\|\bmu_k-\bmu_{k'}\|^2\right)\right]_{K\times K}\right\}$ for $K\geq 2$, $h_K\equiv1$ otherwise.  The posterior inference of the DPP mixture model was performed using a potentially inefficient RJ-MCMC sampler. 
We initialize the Markov chains with $K=10$ for all three models. By comparing the histogram and trace plots of the posterior number of components/clusters in Figures \ref{model_clustering_figure}b, \ref{model_clustering_figure}c, and \ref{model_clustering_figure}d,
we find the DPM model significantly over-estimates the number of components at $23$ in order to fit the 10-dimensional data well; The DPP mixture inferred with RJ-MCMC, though eventually stabilizes at the correct $K = 3$, requires relatively large number of iterations to find the underlying truth (approximately $500$ iterations). In contrast, the posterior number of components under the RGM model highly concentrates around the underlying true $K = 3$, and stabilizes within only $100$ iterations. In terms of efficiency of the Markov chain, the blocked-collapsed Gibbs sampler of the RGM model outperforms the other two alternatives. 

We further report the performance of the model-based clustering procedure under the RGM model. Adopting the ideas in \cite{xu2016bayesian} and \cite{dahl2006model}, we define the association matrix $S\in\{0,1\}^{n\times n}$ with $(i,j)$th entries being $\mathbb{I}(\bgamma_i=\bgamma_j)$, and $H\in\{0,1\}^{n\times n}$ with $(i,j)$th entries being $\mathbb{I}(\bgamma_i=\bgamma_j\mid\by_1,\cdots,\by_n)$. Using the posterior samples, $H$ can be approximated using the posterior mean of $\mathbb{I}(\bgamma_i=\bgamma_j)$ for all $(i,j)$ pairs. 
We compute the mean of the absolute mis-classification matrix $\left(|H_{ij}-S_{ij}|\right)_{n\times n}$. The mis-classification error defined by $\frac{1}{n^2}\|\hat{H}-S\|_{\mathrm{F}}$ is $1.0215\times 10^{-5}$, where $\hat{H}$ is computed using the posterior means. 

\subsection{Old Faithful Geyser Eruption Data} 
\label{sub:real_data_analysis}
In this subsection, we consider the Old Faithful geyser eruption data that record 
the eruption length of the Old faithful geyser in the Yellowstone National Park with the number of observations $n=272$ as a real world example. Following the procedure described in \cite{qin2013maximum, garcia1999robustness}
for each observed eruption duration time, we pair it with the time length of
the next eruption, so that we have a bivariate data of sample size $271$. The points with the ``short followed by short'' eruption property were identified as outliers in \cite{garcia1999robustness}, 
in which
a robust trimmed mean procedure was used to reduce the effects from these outliers. 
Alternatively, we apply the RGM model to analyze the bivariate dataset, and show that the outliers can actually be identified as an extra component. 
We also compare the proposed method with the two alternative models: the {DPM model} and the {DPP mixture model} as described in subsection \ref{sub:multivariate_model_based_clustering}. 

Figure  \ref{old_faithful_figure} shows the predictive densities and the histograms of the number of components/clusters estimated by the three models: the RGM model, the DPM model, and the DPP mixture model.  
\begin{figure}[h!]
	\centerline{\includegraphics[width=.85\textwidth]{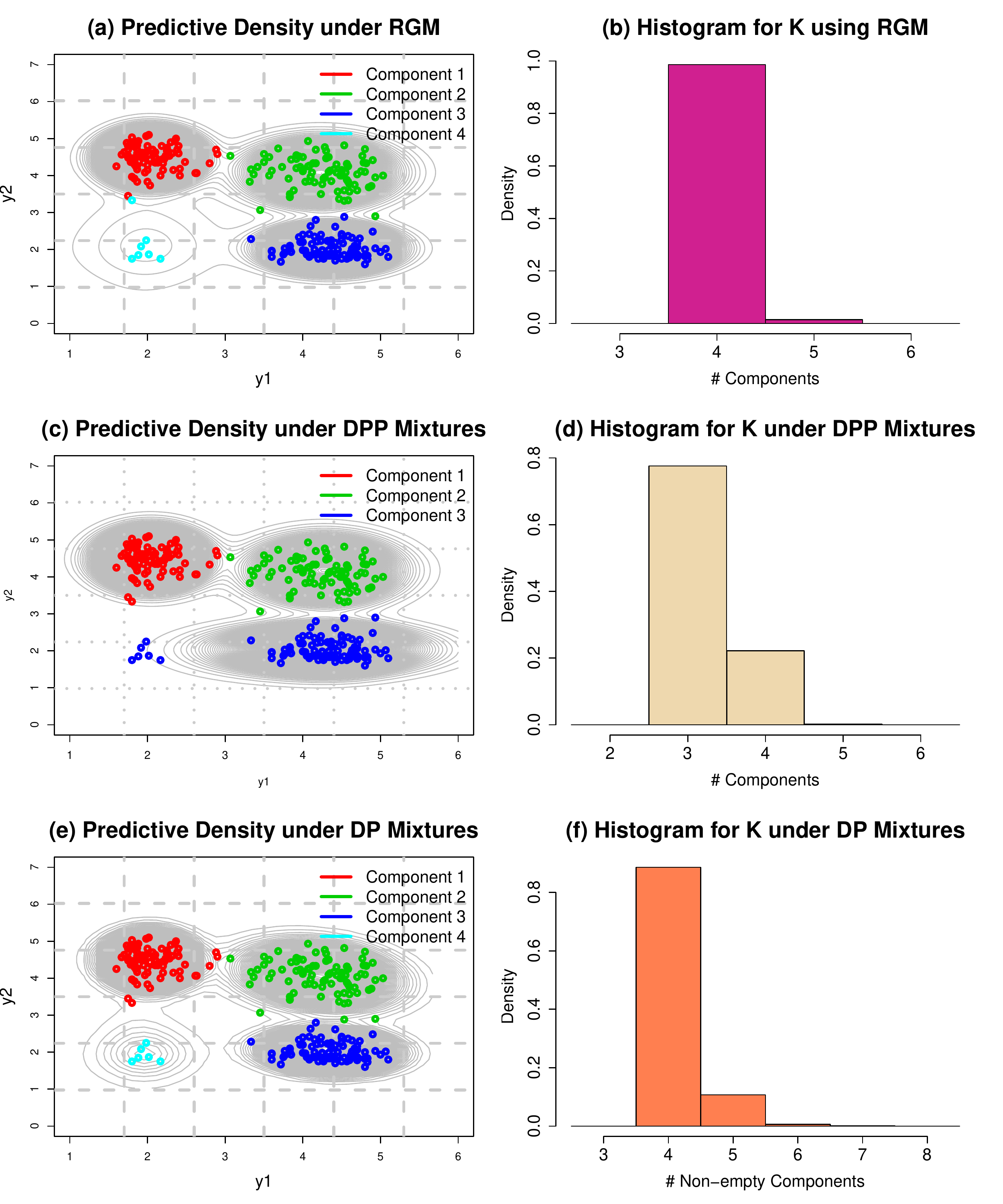}}
	\caption{Old Faithful Geyser Eruption Data: Panels (a), (c), and (e) are the scatter plots of the observations with their corresponding clusters and contour plots of the posterior predictive density estimate(grey level curves) stratified by the RGM model, the DPP mixture model, and the DPM model, respectively. Panels (b), (d), and (f) are the histograms of the posterior distributions of the number of components/clusters under the RGM model, the DPP mixture model, and the DPM model, respectively. }
	\label{old_faithful_figure}
\end{figure}
The proposed RGM, not only identifies the outliers component (Figure \ref{old_faithful_figure}a), but also provides the posterior number of components that is highly concentrated at $K = 4$ (Figure \ref{old_faithful_figure}b). In contrast, Figure \ref{old_faithful_figure}c shows that DPP mixture fails to identify the outliers at the bottom-left corner of the scatter plot -- instead, they are merged into the existing cluster located  at the bottom-right corner. The corresponding posterior number of components $K$, as illustrated in Figure \ref{old_faithful_figure}d, is highly concentrated at $K = 3$, failing to detect the outlier component. In addition, notice that failure in identifying the outliers significantly affects the posterior predictive density estimate, as shown from the comparison of the level curves among Figures \ref{old_faithful_figure}a, \ref{old_faithful_figure}c, and \ref{old_faithful_figure}e. The DPM model in Figure \ref{old_faithful_figure}e, although successfully detects the outliers component, still assigns relatively larger posterior probability to redundant components(Figure \ref{old_faithful_figure}f). Hence the proposed RGM model outperforms the other two alternatives in terms of the robustness or the model complexity measured by the posterior of $K$. This conclusion is also supported by the fact that log-CPO of the RGM model is higher than those of the DPM model and the DPP mixture model (Table \ref{table:log_cpo_numerical_results}). 



\section{Conclusion} 
\label{sec:conclusion}
We propose the RGM model, in which the location parameters for each component are not \emph{a priori} independent, but jointly distributed according to some symmetric repulsive distribution that encourages the separation of the locations for different components. We establish the posterior consistency and obtain an ``almost'' parametric posterior contraction rate($(\log n)^t/\sqrt{n}$ with $t> p+1$), generalizing the repulsive mixture model proposed by \cite{petralia2012repulsive,quinlan2017parsimonious} to the context of density estimation  in nonparametric GMM. Furthermore, we study the shrinkage effect on the model complexity of the proposed RGM model regarding the number of necessary components needed to fit the data well.

Based on the exchangeable partition distribution,
we develop a blocked-collapsed Gibbs sampler for the posterior inference. 
Through extensive simulation studies and real data analysis, we demonstrate that the proposed RGM model is able to detect outliers and simultaneously penalize the number of components to reduce model complexity and accurately estimate the underlying true density.
Moreover, the proposed sampler converges much faster than the RJ-MCMC sampler in \cite{xu2016bayesian} even in slightly higher dimensional clustering problems.

There are several potential further extensions. Beyond mixture models for density estimation, 
it is also interesting to extend the repulsive mixture model to the nested clustering of grouped data, and perform simultaneous clustering of individuals within each group and the group level features when the inference prefers the parsimonious model and the focus is the interpretation of the clusters as meaningful subgroups. Secondly, 
the posterior distribution of the number of components under the RGM model is potentially sensitive to the hyperparameters in the repulsive function $h_K$. Performing sensitivity analysis by imposing suitable priors on the hyperparameters is possible if an efficient updating rule for them can be integrated within the blocked-collapsed Gibbs sampler. 
Lastly, instead of implementing a Gibbs sampler, which is not scalable to large number of observations, one can develop an optimization-based fast inference algorithm, which would greatly improve the computational efficiency and scalability of the posterior inference.

\bibliographystyle{apalike}
\bibliography{reference_appendix_new}

\clearpage
\begin{center}
	\begin{LARGE}
		\textbf{Bayesian Repulsive Gaussian Mixture Model}
	\end{LARGE}
\end{center}
\begin{center}
	\begin{LARGE}
		\textbf{Supplementary Material}
	\end{LARGE}
\end{center}
\appendix
\counterwithin{lemma}{section}
\counterwithin{theorem}{section}
\section{Supporting Results} 
\label{sec:supporting_results}
\subsubsection*{Sufficient Conditions for Posterior Weak Consistency}
We use the results in \cite{wu2008kullback} to establish the weak consistency of $\Pi$. Denote $\Pi^\star$ the prior on $F\in\calM(\mathbb{R}^p\times\calS)$ that induces the prior $\Pi$ on $f$. Notice that the prior $\Pi^*$ on $F$ is supported on the class of all finitely discrete probability distributions on $\mathbb{R}^p\times\calS$, which is dense in $\mathcal{M}(\mathbb{R}^p\times\calS)$ under the weak topology, we conclude that $\Pi^*$ has the weak full support on $\mathcal{M}(\mathbb{R}^p\times\calS)$. As a consequence, we need to verify the conditions A1, A7, A8, and A9 (which we list as C1, C2, C3, and C4) there: For all $\epsilon>0$ in \cite{wu2008kullback} exists some $F_\epsilon\in\mathrm{supp}(\Pi^*)$, a closed set $D\supset\mathrm{supp}(F_\epsilon)$ such that
\begin{enumerate}[noitemsep, topsep = 0cm]
	\item[{C1}]$\int_{\mathbb{R}^p}f_0(\by)\log\frac{f_0(\by)}{f_{F_\epsilon}(\by)}\mathrm{d}\by<\epsilon$;
	\item[{C2}]$\int_{\mathbb{R}^p}f_0(\by)\left|\log\frac{f_{F_\epsilon}(\by)}{\inf_{(\bmu,\bSigma)\in D}\phi(\by\mid\bmu,\bSigma)}\right|\mathrm{d}\by<\infty$;
	\item[{C3}]For any compact $C\subset\mathbb{R}^p$, $c:=\inf_{(\by, \bmu,\bSigma)\in C\times D}\phi(\by\mid\bmu,\bSigma)>0$;
	\item[{C4}]For any compact $C\subset\mathbb{R}^p$, there exists some $E\subset\mathbb{R}^p\times\calS$ such that $D$ is contained in the interior of $E$, the class of functions $\{(\bmu,\bSigma)\mapsto \phi(\by\mid\bmu,\bSigma):\by\in C\}$ is uniformly equicontinuous on $E$, and $\sup\{\phi(\by\mid\bmu,\bSigma):\by\in C, (\bmu,\bSigma)\in E^c\}<c\epsilon/4$. 
\end{enumerate}

\subsubsection*{Sufficient Conditions for Posterior Strong Consistency}
To prove the posterior strong consistency of the RGM model we apply {Theorem 1} in \cite{canale2017posterior}. 
\begin{theorem}
Consider a statistical $\mathcal{F}$ with a prior $\Pi$, let $(\by_i)_{i=1}^n$ be an $\iid$ sequence with density $f_0\in\mathcal{F}$. Assume that there exists a sequence of submodels $(\mathcal{F}_n)_{n=1}^{\infty}$ with partitions $\mathcal{F}_n=\bigcup_{j=1}^{\infty}\mathcal{F}_{nj}$. If $f_0$ is in the KL-support of $\Pi$, and there exists some $a,b>0$ such that $\Pi(\mathcal{F}_n^c)\lesssim\mathrm{e}^{-bn}$, and
\begin{eqnarray}
	\exp\left(-(4-a)n\epsilon^2\right)\sum_{j=1}^{\infty}\sqrt{\mathcal{N}(2\epsilon, \mathcal{F}_{nj},\|\cdot\|_1)}\sqrt{\Pi(\mathcal{F}_{nj})}\to 0,
\end{eqnarray}
then $\Pi(f:\|f-f_0\|>\epsilon\mid \by_1,\cdots,\by_n)\to 0$ in $\mathbb{P}_0$-probability. 
\end{theorem}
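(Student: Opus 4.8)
The plan is to follow the classical Schwartz--Le Cam--Barron testing paradigm, writing the posterior as a ratio and controlling its numerator and denominator separately. Throughout set $R_n(f)=\prod_{i=1}^n f(\by_i)/f_0(\by_i)$ and $D_n=\int R_n\,\mathrm{d}\Pi$, so that $\Pi(A\mid\by_1,\cdots,\by_n)=D_n^{-1}\int_A R_n\,\mathrm{d}\Pi$. The first ingredient is a lower bound on the denominator: since $f_0$ lies in the KL-support of $\Pi$, for every $c>0$ the set $\{f:\mathrm{D}(f_0\mid\mid f)<c\}$ has positive prior mass, and applying the law of large numbers to $n^{-1}\sum_i\log(f_0(\by_i)/f(\by_i))$ on this set (Schwartz's lemma) yields $D_n\geq\mathrm{e}^{-cn}$ eventually, with $\mathbb{P}_0$-probability tending to one. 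I would fix $c$ strictly smaller than the rate $b$ from $\Pi(\mathcal{F}_n^c)\lesssim\mathrm{e}^{-bn}$ and work henceforth on the high-probability event $\{D_n\geq\mathrm{e}^{-cn}\}$.

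Next, split the alternative $U^c=\{f:\|f-f_0\|_1>\epsilon\}$ into the tail piece $U^c\cap\mathcal{F}_n^c$ and the bulk piece $U^c\cap\mathcal{F}_n$. For the tail, Fubini gives $\mathbb{E}_0\int_{\mathcal{F}_n^c}R_n\,\mathrm{d}\Pi=\Pi(\mathcal{F}_n^c)\lesssim\mathrm{e}^{-bn}$, so by Markov's inequality the numerator restricted to $\mathcal{F}_n^c$ is $o(\mathrm{e}^{-cn})$ in probability; dividing by the denominator bound $\mathrm{e}^{-cn}$ (here $c<b$ is used) shows $\Pi(\mathcal{F}_n^c\mid\cdot)\to0$ in $\mathbb{P}_0$-probability.

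The bulk piece is where the entropy condition enters, and this is the step I expect to be the main obstacle. For each block $\mathcal{F}_{nj}$ of the partition I would cover $U^c\cap\mathcal{F}_{nj}$ by $\|\cdot\|_1$-balls whose radius is a suitable fixed fraction of $\epsilon$, their number controlled by $\mathcal{N}(2\epsilon,\mathcal{F}_{nj},\|\cdot\|_1)$, arranged so that every retained ball stays a positive multiple of $\epsilon$ away from $f_0$. Each such ball is convex in $L_1$, so the Birg\'e--Le Cam minimax testing theory (via the Hellinger affinity $\int\sqrt{f f_0}\leq 1-\kappa\epsilon^2$ along the closest point) furnishes a single test $\phi_{njk}$ with both type-I and type-II errors at most $\mathrm{e}^{-Kn\epsilon^2}$ for a universal $K>0$. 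Setting $\phi_n=\max_{j,k}\phi_{njk}$, the type-I error obeys $\mathbb{E}_0\phi_n\leq\sum_j\mathcal{N}(2\epsilon,\mathcal{F}_{nj},\|\cdot\|_1)\,\mathrm{e}^{-Kn\epsilon^2}$, while on the complementary event the change-of-measure identity $\mathbb{E}_0[(1-\phi_n)R_n(f)]=\mathbb{E}_f(1-\phi_n)$ bounds the expected numerator over $\mathcal{F}_{nj}$ by $\mathrm{e}^{-Kn\epsilon^2}\Pi(\mathcal{F}_{nj})$.

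The delicate point, and the reason the hypothesis is phrased with square roots, is that the raw type-I sum involves the covering numbers unweighted, which need not be summable on their own. The cure is to trade the two error bounds against each other: the effective contribution of block $j$ to the expected posterior is of order $\min\{\mathcal{N}_{nj},\Pi(\mathcal{F}_{nj})\}\,\mathrm{e}^{-Kn\epsilon^2}\leq\sqrt{\mathcal{N}_{nj}\,\Pi(\mathcal{F}_{nj})}\,\mathrm{e}^{-Kn\epsilon^2}$, where $\mathcal{N}_{nj}=\mathcal{N}(2\epsilon,\mathcal{F}_{nj},\|\cdot\|_1)$, the geometric-mean step being exactly what converts the entropy--prior balance into the stated summand. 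After multiplying by the denominator factor $\mathrm{e}^{cn}$ and summing over $j$, the bound takes the form $\mathrm{e}^{-(4-a)n\epsilon^2}\sum_j\sqrt{\mathcal{N}_{nj}}\sqrt{\Pi(\mathcal{F}_{nj})}$ of the hypothesis, the numerical constant being an artifact of the test power $K$, the two square roots, and the denominator exponent. This tends to $0$ by assumption, so both $\mathbb{E}_0\phi_n\to0$ and the weighted bulk numerator vanish; combining the tail and bulk estimates and invoking Markov's inequality once more yields $\Pi(\|f-f_0\|_1>\epsilon\mid\by_1,\cdots,\by_n)\to0$ in $\mathbb{P}_0$-probability, as claimed.
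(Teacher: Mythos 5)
First, a point of reference: the paper offers no proof of this statement at all --- it is quoted in the Supplementary Material's ``Supporting Results'' section as Theorem 1 of Canale and De Blasi (2017), so your attempt has to be measured against the standard proof of that result rather than anything in this paper. Your overall frame is the right one and matches that proof's architecture: the Schwartz lower bound $D_n\geq \mathrm{e}^{-cn}$ from the KL-support with $c$ chosen small, the Fubini--Markov disposal of $\mathcal{F}_n^c$ using $c<b$, and piecewise testing over the sieve partition are all correct as you state them.

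The genuine gap is exactly the step you flag as delicate. With a single (symmetric-error) minimax test $\phi_{nj}$ per piece, the two error contributions \emph{add}: the bound is $\mathbb{E}_0\phi_{nj}+\mathrm{e}^{cn}\,\mathbb{E}_0\bigl[(1-\phi_{nj})\int_{U^c\cap\mathcal{F}_{nj}}R_n\,\mathrm{d}\Pi\bigr]\leq \mathcal{N}_{nj}\mathrm{e}^{-Kn\epsilon^2}+\mathrm{e}^{cn}\Pi(\mathcal{F}_{nj})\mathrm{e}^{-Kn\epsilon^2}$, and you never get the minimum of type-I and type-II terms, since both are incurred by the same test; note $\mathcal{N}_{nj}+\Pi(\mathcal{F}_{nj})$ can vastly exceed $\sqrt{\mathcal{N}_{nj}\Pi(\mathcal{F}_{nj})}$ precisely in the regime the theorem is designed for ($\mathcal{N}_{nj}$ huge, $\Pi(\mathcal{F}_{nj})$ tiny). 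The only legitimate ``min'' is between the strategies ``test piece $j$'' and ``do not test piece $j$,'' but the no-test branch yields only $\mathrm{e}^{cn}\Pi(\mathcal{F}_{nj})$ with no exponential gain, so the geometric mean of the two branches is $\sqrt{\mathcal{N}_{nj}\Pi(\mathcal{F}_{nj})}\,\mathrm{e}^{cn/2}\mathrm{e}^{-Kn\epsilon^2/2}$ --- the exponent is halved. Because the hypothesis is only assumed for \emph{some} $a>0$, possibly tiny, the weighted sum may grow nearly like $\mathrm{e}^{4n\epsilon^2}$, and a proof with half the test exponent cannot recover the constant $4-a$; your route proves the theorem only with $(4-a)$ replaced by roughly half that. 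The device that produces the square root with no exponent loss is an \emph{asymmetric} Le Cam--Birg\'e test: taking $\phi_{njk}=\mathbb{I}\{\prod_{i=1}^n\sqrt{f_{jk}(\by_i)/f_0(\by_i)}\geq t_j\}$ gives type-I error at most $t_j^{-1}\rho^n$ and, uniformly over the convex ball, type-II error at most $t_j\rho^n$, with $\rho$ the Hellinger affinity; choosing $t_j=\mathrm{e}^{cn/2}\sqrt{\mathcal{N}_{nj}/\Pi(\mathcal{F}_{nj})}$ equalizes \emph{both} contributions at $\sqrt{\mathcal{N}_{nj}\Pi(\mathcal{F}_{nj})}\,\mathrm{e}^{cn/2}\rho^n$, retaining the full exponent and hence the stated constant. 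A secondary slip: you propose covering $U^c\cap\mathcal{F}_{nj}$ by balls of radius ``a suitable fixed fraction of $\epsilon$'' while claiming their number is controlled by $\mathcal{N}(2\epsilon,\mathcal{F}_{nj},\|\cdot\|_1)$; covering numbers \emph{increase} as the radius shrinks, so the argument must be run at the radius appearing in the hypothesis, with the separation of the resulting convex hulls from $f_0$ accounted for at that radius rather than repaired after the fact.
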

\subsubsection*{Theorem 3 in \cite{kruijer2010adaptive}}
To compute the posterior rate of convergence of the RGM model, we rely on the conditions of {Theorem 3} in \cite{kruijer2010adaptive}.
\begin{theorem}
Given a statistical model $\mathcal{F}$ with a prior $\Pi$, let $(\by_i)_{i=1}^n$ be an $\iid$ sequence with density $f_0\in\mathcal{F}$. Assume that there exists a sequence of submodels $(\mathcal{F}_n)_{n=1}^{\infty}$ with partitions $\mathcal{F}_n=\bigcup_{j=1}^{\infty}\mathcal{F}_{nj}$, and two sequences $(\leps_n)_{n=1}^{\infty},(\ueps_n)_{n=1}^{\infty}$ with $\leps_n,\ueps_n\to0$, $n\leps_n^2,n\ueps_n^2\to\infty$, $\ueps_n\geq\leps_n$, such that
\begin{align}
&\Pi\left(\mathcal{F}_n^c\right)\lesssim\exp(-4n\leps_n^2),\\
&\exp\left(-n\ueps_n^2\right)\sum_{j=1}^{\infty}\sqrt{\mathcal{N}(\ueps_n,\mathcal{F}_{nj},\|\cdot\|_1)}\sqrt{\Pi(\mathcal{F}_{nj})}\to0,\\
&\Pi\left(f:\int f_0\log\frac{f_0}{f}\leq\leps_n^2,\int f_0\left(\log\frac{f_0}{f}\right)^2\leq\leps_n^2\right)\geq\exp(-n\leps_n^2).
\end{align}
Then $\Pi(f:\|f-f_0\|>\ueps_n\mid\by_1,\cdots,\by_n)\to 0$ in $\mathbb{P}_0$-probability.
\end{theorem}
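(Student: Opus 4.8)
The statement is the standard prior-mass-and-testing criterion for posterior contraction of \cite{ghosal2000convergence}, refined to allow the in-pieces (sieve-partition) entropy bound. I would prove it by writing the posterior of the ``bad'' set as a likelihood-ratio quotient, bounding its denominator from below by the prior-concentration hypothesis and its numerator from above by combining the sieve-complement bound with a family of tests controlled by the entropy-summability hypothesis. Throughout, write
\begin{align}
\Pi(B\mid\by_1,\cdots,\by_n)=\frac{\int_B\prod_{i=1}^n\frac{f(\by_i)}{f_0(\by_i)}\,\mathrm{d}\Pi(f)}{D_n},\qquad D_n:=\int\prod_{i=1}^n\frac{f(\by_i)}{f_0(\by_i)}\,\mathrm{d}\Pi(f).\nonumber
\end{align}
The first step is the evidence lower bound: restricting $D_n$ to the KL-type ball $B(f_0,\leps_n)$, applying Jensen's inequality to $\log$, and using Chebyshev's inequality on $\sum_{i=1}^n\log(f_0/f)(\by_i)$ — whose mean is at most $n\leps_n^2$ and whose variance is at most $n\leps_n^2$ by the two moment bounds defining $B(f_0,\leps_n)$ — together with the third hypothesis $\Pi(B(f_0,\leps_n))\ge\exp(-n\leps_n^2)$, I would conclude that on an event $A_n$ with $\mathbb{P}_0(A_n)\to1$ one has $D_n\ge\exp(-3n\leps_n^2)$.

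Next I would build the tests. For the alternative $U_n=\{f:\|f-f_0\|_1>\ueps_n\}$, I would cover each far piece $\mathcal{F}_{nj}$ by $\mathcal{N}(\ueps_n,\mathcal{F}_{nj},\|\cdot\|_1)$ balls of $\|\cdot\|_1$-radius $\ueps_n$; converting the $L_1$ separation to a Hellinger separation, the Le Cam--Birg\'e theory gives for each ball a test with both type-I and uniform type-II error at most $\exp(-c\,n\ueps_n^2)$ for a universal $c>0$. Taking the maximum $\varphi_n$ of all these tests over all $j$, the standard combination argument — applying the Cauchy--Schwarz inequality to balance the number of tests in a piece against the prior mass of that piece — bounds $\mathbb{E}_0[\varphi_n]$ and the prior-weighted type-II error by a multiple of $\sum_j\sqrt{\mathcal{N}(\ueps_n,\mathcal{F}_{nj},\|\cdot\|_1)}\,\sqrt{\Pi(\mathcal{F}_{nj})}\,\exp(-c\,n\ueps_n^2)$, which by the second hypothesis is $o(\exp(-(c-1)n\ueps_n^2))$ and in particular tends to $0$.

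Finally I would assemble the ratio. Writing the numerator over $U_n$ as the sum of its parts over $\mathcal{F}_n^c$ and over $\mathcal{F}_n$: since $\mathbb{E}_0\int_{\mathcal{F}_n^c}\prod_i(f/f_0)(\by_i)\,\mathrm{d}\Pi(f)=\Pi(\mathcal{F}_n^c)\lesssim\exp(-4n\leps_n^2)$, Markov's inequality makes the $\mathcal{F}_n^c$-part at most $\exp(-\tfrac{7}{2}n\leps_n^2)$ with $\mathbb{P}_0$-probability tending to $1$; dividing by $D_n\ge\exp(-3n\leps_n^2)$ yields a contribution $\lesssim\exp(-\tfrac{1}{2}n\leps_n^2)\to0$. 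For the part over $U_n\cap\mathcal{F}_n$, on the event $\{\varphi_n=0\}$ its $\mathbb{P}_0$-expectation is dominated by the prior-weighted type-II error of the previous step, so after division by $D_n$ it is $o(1)$; on $\{\varphi_n=1\}$ the posterior is trivially at most $1$, but this event occurs with $\mathbb{P}_0$-probability $\mathbb{E}_0[\varphi_n]\to0$. Adding the two contributions and the test-rejection probability gives $\Pi(U_n\mid\by_1,\cdots,\by_n)\to0$ in $\mathbb{P}_0$-probability.

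The main obstacle I anticipate is the bookkeeping of the exponential rates: the single-test exponent $c\,n\ueps_n^2$ must be matched against the covering numbers and prior masses so that the combined error collapses to exactly the $\exp(-n\ueps_n^2)\sum_j\sqrt{\mathcal{N}}\,\sqrt{\Pi}$ form of the second hypothesis, and the sieve-complement ratio must be shown to vanish. This is precisely why the first hypothesis carries the constant $4$ (so that $4-3>0$ against the $\exp(-3n\leps_n^2)$ evidence bound) and why the slack $\ueps_n\ge\leps_n$ is imposed, so that the $\exp(-n\ueps_n^2)$ factor in the second condition dominates the comparison against the $\leps_n$-scale evidence lower bound.
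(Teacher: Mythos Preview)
The paper does not supply its own proof of this statement: it is listed in the Supporting Results section as ``Theorem 3 in \cite{kruijer2010adaptive}'' and is simply quoted without argument. Your sketch is the standard Ghosal--Ghosh--van der Vaart evidence-lower-bound-plus-tests argument, with the Cauchy--Schwarz refinement that produces the $\sum_j\sqrt{\mathcal{N}}\sqrt{\Pi}$ summability form; this is indeed how the result is established in the cited reference, so your approach is correct and matches the literature proof the paper relies on.
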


\clearpage
\section{Proof of Theorem \ref{thm:bounding_Z_K}} 
\label{sec:proof_of_theorem_thm:bounding_z_k}
\begin{proof}
First of all, notice that $h_K(\bmu_1,\cdots,\bmu_K)\leq 1$, we see immediately that 
\begin{eqnarray}
Z_K\leq\int_{\mathbb{R}^p}\cdots\int_{\mathbb{R}^p}\prod_{k=1}^K p_{\bmu}(\bmu_k)\mathrm{d}\bmu_1\cdots\mathrm{d}\bmu_K=1,\nonumber
\end{eqnarray}
and hence $-\log Z_K\geq 0$. Now we consider the upper bound for $-\log Z_K$. Suppose $h_K$ is of the form \eqref{eqn:repulsive_function_1}. Let $\bmu_1,\cdots,\bmu_K\iidsim p_\bmu$. Then by Jensen's inequality,
\begin{eqnarray}
-\log Z_K=-\log \mathbb{E}\left[\min_{1\leq k<k'\leq K}g(\|\bmu_k-\bmu_{k'}\|)\right]
\leq \mathbb{E}\left[\max_{1\leq k<k'\leq K}-\log g(\|\bmu_k-\bmu_{k'}\|)\right].\nonumber
\end{eqnarray}
Observing that
$$
\left[\max_{1\leq k<k'\leq K}-\log g(\|\bmu_k-\bmu_{k'}\|)\right]^2=\max_{1\leq k<k'\leq K}\left[\log g(\|\bmu_k-\bmu_{k'}\|)\right]^2,
$$
we obtain
\begin{eqnarray}
-\log Z_K
&\leq &\left\{\mathbb{E}\left[\max_{1\leq k<k'\leq K}\left[\log g(\|\bmu_k-\bmu_{k'}\|)\right]^2\right]\right\}^{\frac{1}{2}}
\leq\left\{\sum_{1\leq k<k'\leq K}\mathbb{E}\left[\log g(\|\bmu_k-\bmu_{k'}\|)\right]^2\right\}^{\frac{1}{2}}\nonumber\\
&=&\left\{\frac{1}{2}K(K-1)\mathbb{E}\left[\log g(\|\bmu_1-\bmu_2\|)\right]^2\right\}^{\frac{1}{2}}
\leq c_1K,\nonumber
\end{eqnarray}
where the constant $c_1$ can be taken as
$$
c^2_1=\frac{1}{2}\mathbb{E}\left[\log g(\|\bmu_1-\bmu_2\|)\right]^2=\frac{1}{2}\iint_{\mathbb{R}^p\times\mathbb{R}^p}\left[\log g(\|\bmu_1-\bmu_2\|)\right]^2p(\bmu_1)p(\bmu_2)\mathrm{d}\bmu_1\mathrm{d}\bmu_2<\infty.
$$
Now we consider the case where $h_K$ is of the form \eqref{eqn:repulsive_function_2}. Still let $\bmu_1,\cdots,\bmu_K\iidsim p(\bmu)$. Jensen's inequality yields
\begin{eqnarray}
-\log Z_K&=&-\log\mathbb{E}\left[\prod_{1\leq k<k'\leq K}g(\|\bmu_k-\bmu_{k'}\|)^{\frac{1}{K}}\right]
\leq\sum_{1\leq k<k'\leq K}\frac{1}{K}\mathbb{E}\left[-\log g(\|\bmu_k-\bmu_{k'}\|)\right]\nonumber\\
&\leq&\frac{K-1}{2}\left\{\mathbb{E}\left[\log g(\|\bmu_1-\bmu_2\|)\right]^2\right\}^{\frac{1}{2}}\leq c_2K\nonumber
\end{eqnarray}
for some constant $c_2>0$. 
\end{proof}


\section{Proofs of Posterior Consistency} 
\label{sec:proofs_of_posterior_consistency}

\subsection*{Proof of Lemma \ref{lemma:Approximation}} 
\label{sub:proof_of_lemma_lemma:approximation}
\begin{proof}
	Without loss of generality we assume that $\calT_1$ is non-empty. Clearly, $\calT_m\uparrow\mathbb{R}^p\times\calS$ and $c_m\downarrow 1$ as $m\to\infty$ by the monotone continuity of $F_0$. Furthermore, $\phi(\by\mid\bmu,\bSigma)\leq(2\pi\lsigma^2)^{-\frac{p}{2}}$. Hence, $f_{F_m}=c_m\left[\phi_{\bSigma}(\by-\bmu)\mathbb{I}_{\calT_m}(\bmu)\right]*F_0\to\phi_{\bSigma}*F_0=f_0$ by the bounded convergence theorem, implying that $\log\frac{f_0}{f_{F_m}}\to 0$ as $m\to\infty$. In order to show $\int f_0\log\frac{f_0}{f_{F_m}}\to0$ as $m\to\infty$, it suffices to find a dominating function $g(\by)$ such that $\left|\log\frac{f_0}{f_{F_m}}\right|\leq g$ for all $m\in\mathbb{N}_+$, and the conclusion is guaranteed by the dominating convergence theorem. 
	
	First of all, notice that for all $m\in\mathbb{N}_+$, we have $f_{F_m}\leq c_m\phi_{\bSigma}*F_0\leq c_1(2\pi\lsigma^2)^{-\frac{p}{2}}$, and thus $f_0\leq c_1(2\pi\lsigma^2)^{-\frac{p}{2}}$ by letting $m\to\infty$. It follows that $\log\frac{f_0}{f_{F_m}}\geq\log\frac{f_0}{c_1(2\pi\lsigma^2)^{-\frac{p}{2}}}$. Next, we see that
	\begin{eqnarray}
	f_{F_m}(\by)&=&c_m\int_{\calT_m}\phi(\by\mid\bmu,\bSigma)\mathrm{d}F_0(\bmu,\bSigma)\nonumber\\
	&\geq&\int_{\calT_1}\phi(\by\mid\bmu,\bSigma)\mathrm{d}F_0(\bmu,\bSigma)\nonumber\\
	&\geq&(2\pi\usigma^2)^{-\frac{p}{2}}\int_{\calT_1}\exp\left(-\frac{1}{2\lsigma^2}\|\by-\bmu\|^2\right)\mathrm{d}F_0(\bmu,\bSigma)\nonumber.
	\end{eqnarray}
	If $\|\by\|\leq1$, then $\|\by-\bmu\|\leq 2$ as $\|\bmu\|\leq1$, and hence $\exp\left(-\frac{\|\by-\bmu\|^2}{2\lsigma^2}\right)\geq\exp\left(-\frac{2}{\lsigma^2}\right)$; If $\|\by\|>1$, then $\|\by-\bmu\|\leq2\|\by\|$ as $\|\bmu\|\leq\|\by\|$, and hence $\exp\left(-\frac{\|\by-\bmu\|^2}{2\lsigma^2}\right)\geq\exp\left(-\frac{2\|\by\|^2}{\lsigma^2}\right)$. It follows that
	\begin{eqnarray}\label{ineq:xi_function}
	f_{F_m}(\by)\geq\xi(\by):=(2\pi\usigma^2)^{-\frac{p}{2}}\left\{
	\begin{aligned}
	&\exp\left(-\frac{2}{\lsigma^2}\right)F_0(\{\bmu:\|\bmu\|\leq1\}\times\calT_1),\quad&&\text{if }\|\by\|\leq 1,\\
	&\exp\left(-\frac{2\|\by\|^2}{\lsigma^2}\right)F_0(\{\bmu:\|\bmu\|\leq1\}\times\calT_1),\quad&&\text{if }\|\by\|>1.
	\end{aligned}
	\right.
	\end{eqnarray}
	and thus, $\log\frac{f_0}{f_{F_m}}\leq\log\frac{f_0}{\xi}$. In particular, $f_0\geq\xi$ by letting $m\to\infty$. Together we have
	\begin{align}
	&\log\frac{f_0}{c_1(2\pi\lsigma^2)^{-\frac{p}{2}}}\leq\log\frac{f_0}{f_{F_m}}\leq\log\frac{f_0}{\xi}\nonumber\\
	&\qquad\Longrightarrow
	\left|\log\frac{f_0}{f_{F_m}}\right|\leq g:=\max\left\{\left|\log\frac{f_0}{c_1(2\pi\lsigma^2)^{-\frac{p}{2}}}\right|, \log\frac{f_0}{\xi}\right\}.\nonumber	
	\end{align}
	To show that $g$ is $f_0$-integrable, it suffices to verify the $f_0$-integrability of $\log f_0$ and $\log\xi$. Notice that $\log c_1-(\frac{p}{2})\log(2\pi\usigma^2)\geq \log f_0\geq\log\xi$, implying 
	$$
	|\log f_0|\leq |\log c_1|+\frac{p}{2}|\log(2\pi\usigma^2)|+|\log\xi|,
	$$ it is only left to verify the $f_0$-integrability of $\log\xi$. When $\|\by\|\leq1$, $\log\xi$ is constant, and when $\|\by\|>1$, we have
	\begin{align}
	&\int_{\{\|\by\|\geq1\}}f_0(\by)\left|\log\xi(\by)\right|\mathrm{d}\by\nonumber\\
	&\qquad\leq \frac{p}{2}\left|\log(2\pi\usigma^2)\right|+\left|\log F_0(\{\bmu:\|\bmu\|\leq 1\}\times\calT_1)\right|+\frac{2}{\lsigma^2}\int_{\{\|\by\|\geq1\}}\|\by\|^2f_0(\by)\mathrm{d}\by\nonumber\\
	&\qquad\leq \frac{p}{2}\left|\log(2\pi\usigma^2)\right|+\left|\log F_0(\{\bmu:\|\bmu\|\leq 1\}\times\calT_1)\right|+\frac{2}{\lsigma^2}\mathbb{E}_0\|\by\|^2\nonumber\\
	&\qquad<\infty,\nonumber
	\end{align}
	where the finiteness of $\mathbb{E}_0\|\by\|^2$ is guaranteed by condition A1 and Fubini's theorem. Hence $\log\xi$ is $f_0$-integrable. 
\end{proof}

\subsection*{Proof of Theorem \ref{thm:weak_consistency}} 
\label{sub:proof_of_theorem_thm:weak_consistency}

\begin{proof}
	By {Theorem 1} and {Lemma 3} in \cite{wu2008kullback}, it suffices to verify conditions {C1}, {C2}, {C3}, and {C4}. By \textbf{Lemma \ref{lemma:Approximation}}, for all $\epsilon>0$, there exists an integer $m$ such that $F_\epsilon = F_m$ satisfies {C1}. Noticing that $F_\epsilon\in \mathrm{supp}(\Pi^*)$ automatically holds since $\mathrm{supp}(\Pi^*)=\calM(\mathbb{R}^p\times\calS)$, and that $\calS$ itself is compact, we can take $D=\calT_m$. For any compact $C\subset\mathbb{R}^p$, take large enough $a$ such that $C\subset\{\by:\|\by\|\leq a\}$. In addition, {C3} automatically holds, since $C\times D$ is compact in $\mathbb{R}^p\times \mathbb{R}^p\times\calS$, and $\phi$ is strictly positive. It suffices to verify {C2} and {C4}. 
	
	To verify {C2}, it suffices to show that $\log f_{F_m}$ and $\log \inf_{(\bmu,\bSigma)\in D}\phi(\by\mid\bmu,\bSigma)$ are $f_0$-integrable. Notice that
	$$
	(2\pi\lsigma^2)^{-\frac{p}{2}}\geq\inf_{(\bmu,\bSigma)\in D}\phi(\by\mid\bmu,\bSigma)\geq \zeta_m(\by):=(2\pi\usigma^2)^{-\frac{p}{2}}\left\{
	\begin{aligned}
	&\exp\left(-\frac{2m^2}{\lsigma^2}\right),\quad&&\text{if }\|\by\|\leq m,\\
	&\exp\left(-\frac{2\|\by\|^2}{\lsigma^2}\right),\quad&&\text{if }\|\by\|>m,
	\end{aligned}
	\right.
	$$
	since when $\|\by\|\leq m$ we have $(\by-\bmu)^{\mathrm{T}}\bSigma^{-1}(\by-\bmu)\leq\lsigma^{-2}\|\by-\bmu\|^2\leq 4\lsigma^{-2}m^2$, and when $\|\by\|>m$ we have $(\by-\bmu)^{\mathrm{T}}\bSigma^{-1}(\by-\bmu)\leq\lsigma^{-2}\|\by-\bmu\|^2\leq 4\lsigma^{-2}\|\by\|^2$. It follows that $\log \inf_{(\bmu,\bSigma)\in D}\phi(\by\mid\bmu,\bSigma)$ is $f_0$-integrable if $\log\zeta_m$ is integrable. When $\|\by\|\leq m$, $\zeta_m$ is a constant, and when $\|\by\|>m$, 
	$$
	\int_{\{\|\by\|>m\}}f_0(\by)|\log\zeta_m(\by)|\mathrm{d}y\leq \frac{p}{2}\left|\log(2\pi\usigma^2)\right|+\frac{2}{\lsigma^2}\mathbb{E}_0\|\by\|^2<\infty.
	$$
	Hence $\log\inf_{(\bmu,\bSigma)\in D}\phi(\by\mid\bmu,\bSigma)$ is $f_0$-integrable. Using the $\xi$ function constructed in \eqref{ineq:xi_function} in the proof of \textbf{Lemma \ref{lemma:Approximation}}, we see that $c_1(2\pi\lsigma^2)^{-\frac{p}{2}}\geq f_{F_m}(\by)\geq\xi(\by)$, and it is proved that $\log\xi(\by)$ is $f_0$-integrable. It follows that $\log f_{F_m}$ is $f_0$-integrable. 
	
	To verify {C4}, given compact $C$ with $C\subset\{\by:\|\by\|\leq a\}$ for some large enough $a>0$,  let 
	$$
	E=\left\{\bmu:\|\bmu\|\leq \max(a, m)+\max\left[1,\sqrt{2\usigma^2\log\left(\frac{8}{(2\pi\lsigma^2)^{\frac{p}{2}}c\epsilon}\right)}\right]\right\}\times{\calS}.
	$$
	Then $E$ contains $D$ in its interior, and $E$ is also compact. Therefore the function $(\by,\bmu,\bSigma)\mapsto \phi(\by\mid\bmu,\bSigma)$ on $C\times E$ is uniformly continuous, and hence, as $\by$ varies over $C$, the class of functions $\{(\bmu,\bSigma)\in E\mapsto \phi(\by\mid\bmu,\bSigma):\by\in C\}$ is also uniformly equicontinuous. Now we show that $\sup\{\phi(\by\mid\bmu,\bSigma):\by\in C, (\bmu,\bSigma)\in E^c\}<c\epsilon/4$. Since for any $(\by,\bmu,\bSigma)\in C\times E^c$, we have 
	\begin{align}
	&\|\by\|\leq a,\quad\|\bmu\|>a+\max\left[1,\sqrt{2\usigma^2\log\left(\frac{8}{(2\pi\lsigma^2)^{\frac{p}{2}}c\epsilon}\right)}\right]\nonumber\\
	&\qquad\Longrightarrow\|\by-\bmu\|\geq\|\bmu\|-\|\by\|\geq\max\left[1,\sqrt{2\usigma^2\log\left(\frac{8}{(2\pi\lsigma^2)^{\frac{p}{2}}c\epsilon}\right)}\right],\nonumber
	\end{align}
	then we obtain
	\begin{align}
	\sup_{(\by,\bmu,\bSigma)\in C\times E^c}\phi(\by\mid\bmu,\bSigma)
	\leq\frac{1}{(2\pi\lsigma^2)^{\frac{p}{2}}}\exp\left[-\frac{1}{2\usigma^2}(\|\bmu\|-\|\by\|)^2\right]<\frac{c\epsilon}{4}\nonumber.
	\end{align}
	The proof is thus completed. 
\end{proof}



\subsection*{Proof of Lemma \ref{lemma:entropy}} 
\label{sub:proof_of_lemma_lemma:entropy}

\begin{proof}[Proof of \textbf{Lemma \ref{lemma:entropy}}]
	Suppose $\delta>0$ is given. By {Lemma A.4} in \cite{ghosal2001entropies}, there exists an $\ell_1$ $\delta$-net $\mathcal{I}_0$ of $\Delta^K$, such that the cardinality $|\mathcal{I}_0|$ of $\mathcal{I}_0$ is upper bounded by $(5/\delta)^K$. Now let $\mathcal{R}_k$ be an $\delta$-net of $\{\bmu_k:\|\bmu_k\|_\infty\in(a_k,b_k]\}$ under the $\|\cdot\|_\infty$-metric. Clearly, one can make $|\mathcal{R}_k|\leq \left(b_k/\delta+1\right)^p$. 
	Furthermore let $\mathcal{S}_{jk}$ be an $\delta$-net of $\{\sqrt{\lambda_j(\bSigma_k)}:\lambda_j(\bSigma_k)\in[\lsigma^2,\usigma^2]\}$ with cardinality $|\mathcal{S}_{jk}|\leq(\usigma-\lsigma)/\delta+1$ under the $\|\cdot\|_\infty$-metric. It follows that for all $f_F\in\mathcal{F}_K\left(\prod_{k=1}^K(a_k,b_k]\right)$ with $
	F=\sum_{k=1}^Kw_k\delta_{(\bmu_k,\bSigma_k)},$
	there exists some $\bw^\star=(w_1^\star,\cdots,w_K^\star)\in\mathcal{I}_0$, $\bmu_k^\star\in\mathcal{R}_k$, $\lambda_{jk}^\star\in \mathcal{S}_{jk}$ for $j=1,\cdots,p$ with $\bSigma_k^\star=\bU\mathrm{diag}(\lambda_{1k}^\star,\cdots,\lambda_{pk}^\star)\bU\transpose$ for $k=1,\cdots,K$, such that $\sum_{k=1}^K|w_k-w_k^\star|<\delta$, $\|\bmu_k-\bmu_k^\star\|<\sqrt{p}\|\bmu_k-\bmu_k^\star\|_\infty<\sqrt{p}\delta$, and $|\sqrt{\lambda_j(\bSigma_k)}-\sqrt{\lambda_{jk}^\star}|<\delta$ for $j=1,\cdots,p$. Denote $H(f,g)$ to be the Hellinger distance between densities $f$ and $g$, defined by $H(f,g)=\left(\frac{1}{2}\int(\sqrt{f}-\sqrt{g})^2\right)^{\frac{1}{2}}$. Observe that
	\begin{align}\label{ineq:Hellinger_distance_upper_bound}
	H(\phi_{\bSigma_k}(\by-\bmu_k),\phi_{\bSigma_k^\star}(\by-\bmu_k^\star))^2
	&\leq1-\prod_{j=1}^p\left(1-\frac{(\sqrt{\lambda_j(\bSigma_k)}-\sqrt{\lambda_{jk}^\star})^2}{\lambda_j(\bSigma_k)+\lambda_{jk}^\star}\right)^{\frac{1}{2}}\exp\left(-\frac{\|\bmu_k-\bmu_k^\star\|^2}{8\lsigma^2}\right)\nonumber\\
	&\leq1-\left(1-\frac{\delta^2}{2\lsigma^2}\right)^{\frac{p}{2}}\exp\left(-\frac{p\delta^2}{8\lsigma^2}\right)
	\nonumber\\
	&\leq
	 1-\left(1-\frac{p\delta^2}{2\lsigma^2}\right)^{\frac{p}{2}+1}
	\end{align}
	where we use the fact $\exp(-x)\geq 1-x$ in the last inequality. Denote $
	F^\star=\sum_{k=1}^Kw_k^\star\delta_{(\bmu_k^\star,\bSigma_k^\star)}$. It follows by the triangle inequality that
	\begin{align}
	\left\|
	f_F-f_{F^\star}
	\right\|_1
	&\leq\sum_{k=1}^Kw_{k}\left\|
	\phi_{\bSigma_{k}}(\by-\bmu_{k})-\phi_{\bSigma_{k}^\star}(\by-\bmu_{k}^\star)
	\right\|_1+\sum_{k=1}^K|w_{k}-w_{k^\star}|\nonumber\\
	&\leq\sum_{k=1}^K2\sqrt{2}w_{k}H(\phi_{\bSigma_k}(\by-\bmu_k),\phi_{\bSigma_{k}^\star}(\by-\bmu_k^\star))+\delta\nonumber\\
	&\leq\delta+2\sqrt{2}\left[1-\left(1-\frac{p\delta^2}{2\lsigma^2}\right)^{\frac{p}{2}+1}\right]^{\frac{1}{2}}.\nonumber
	\end{align}
	Observing that $\lim_{t\downarrow 0}\frac{1-(1-t)^{a}}{at}=1$ holds for $a>1$, we see that for sufficiently small $\delta$, $\left\|f_F-f_{F^\star}\right\|_1\leq C_3 \delta$ for some constant $C_3>0$, and therefore
	\begin{align}
	\calN\left(C_3\delta,\mathcal{F}_K\left(\prod_{k=1}^K(a_k,b_k]\right),\|\cdot\|_1\right)
	&\leq\left(\frac{5}{\delta}\right)^{K}\left(\frac{2(\usigma-\lsigma)}{\delta}\right)^{Kp}\prod_{k=1}^K\left(\frac{b_k}{\delta}+1\right)^p.\nonumber\\
	&\leq\frac{\tilde{c}_3^K}{\delta^{Kp+K}}\prod_{k=1}^K\left(\frac{b_k+\delta}{\delta}\right)^p.\nonumber
	\end{align}
	for some constant $\tilde{c}_3>0$. This yields that
	\begin{eqnarray}
	\calN\left(\delta,\mathcal{F}_K\left(\prod_{k=1}^K(a_k,b_k]\right),\|\cdot\|_1\right)
	\leq\left(\frac{c_3}{\delta^{2p+1}}\right)^K\left(\prod_{k=1}^Kb_k\right)^p\nonumber
	\end{eqnarray}
	for some constant $c_3>0$. 
\end{proof}

\subsection*{Proof of Lemma \ref{lemma:summability_upper_bound}} 
\label{sub:proof_of_lemma_lemma:summability_upper_bound}

\begin{proof}
	First we need to bound $\sqrt{\Pi\left(\mathcal{G}_K(\bolda_K)\right)}$. 
	Recall that $\mathrm{e}^{-c_1K}\leq Z_K\leq 1$ for some constant $c_1>0$ by \textbf{Theorem \ref{thm:bounding_Z_K}} and condition A2. We estimate
		\begin{eqnarray}
		\Pi\left(\mathcal{G}_K(\bolda_K)\right)
		&\leq&\Pi\left(\bmu_1,\cdots,\bmu_K:\|\bmu_k\|\geq \sqrt{p}a_k, k=1,\cdots,K\mid K\right)p(K)\nonumber\\
		&\leq&\frac{p(K)}{Z_K}\int\cdots\int \prod_{k=1}^K\mathbb{I}\left(\|\bmu_k\|^2\geq p a_k^2\right)p(\bmu_1)\mathrm{d}\bmu_1\cdots p(\bmu_K)\mathrm{d}\bmu_K\nonumber\\
		&\leq&\mathrm{e}^{c_1K}\prod_{k=1}^K\int_{\{\|\bmu_k\|^2\geq pa_k^2\}}p(\bmu_k)\mathrm{d}\bmu_{k}\nonumber\quad\text{(by \textbf{Theorem \ref{thm:bounding_Z_K}})}\\
		&\leq&\mathrm{e}^{c_1K}B_2^K\prod_{k=1}^K\exp\left(-pb_2a_k^2\right)\nonumber.\quad\text{(by condition B2)}
		\end{eqnarray}
		Now by \textbf{Lemma \ref{lemma:entropy}} for some constant $c_3>0$, we have
		\begin{eqnarray}
			\calN(\delta, \mathcal{G}_K(\bolda_K),\|\cdot\|_1)\leq\left(\frac{c_3}{\delta^{2p+1}}\right)^K\prod_{k=1}^K(a_k+1)^p.\nonumber
		\end{eqnarray}
		Hence, by defining $
		S=\sum_{a_k=0}^{\infty}(a_k+1)^{\frac{p}{2}}\exp\left(-\frac{pb_2a_k^2}{2}\right)<\infty，
		$ we estimate
		\begin{align}
		&\sum_{K=1}^{K_n}\sum_{a_1=0}^\infty\cdots\sum_{a_K=0}^\infty \sqrt{\calN\left(\delta,\mathcal{G}_K(\bolda_K),\|\cdot\|_1\right)}\sqrt{\Pi\left(\mathcal{G}_K(\bolda_K)\right)}\nonumber\\
		&\qquad\leq\sum_{K=1}^{K_n}\left[\frac{\sqrt{B_2c_3\mathrm{e}^{c_1}}}{\delta^{p+\frac{1}{2}}}\right]^{K}\left[\prod_{k=1}^K\sum_{a_k=0}^{\infty}(a_k+1)^{\frac{p}{2}}\exp\left(-\frac{b_2pa_k^2}{2}\right)\right]\nonumber\\
		&\qquad=\sum_{K=1}^{K_n}\left[\frac{S\sqrt{B_2c_3\mathrm{e}^{c_1}}}{\delta^{p+\frac{1}{2}}}\right]^K\nonumber\\
		&\qquad\leq K_n\left(\frac{M}{\delta^{p+\frac{1}{2}}}\right)^{K_n},\nonumber
		\end{align}
		for some constant $M>0$ for sufficiently small $\delta$. 
\end{proof}

\subsection*{Proof of Theorem \ref{thm:strong_consistency}} 
\label{sub:proof_of_theorem_thm:strong_consistency}

\begin{proof}
	It is sufficient to verify \eqref{eqn:summability_condition} and that $\Pi(\mathcal{F}_{K_n}^c)\lesssim \exp(-bn)$ for some $b>0$, since the KL-property is satisfied. Now take $K_n=\lfloor n/\log n\rfloor$. Then $K_n\log K_n\geq n-\log\log n/\log n\geq n/2$ for large $n$, which yields $\Pi(\mathcal{F}_{K_n}^c)\lesssim \exp(-B_4n/2)$ condition B5. Furthermore by \textbf{Lemma \ref{lemma:summability_upper_bound}} we have
\begin{align}
	&\sum_{K=1}^{K_n}\sum_{a_1=0}^{\infty}\cdots\sum_{a_K=0}^{\infty}\sqrt{\calN(\epsilon,\mathcal{G}_K(\bolda_K),\|\cdot\|_1)}\sqrt{\Pi(\mathcal{G}_K(\bolda_K))}\nonumber\\
	&\qquad\leq \exp\left[\log K_n+K_n\log M+\left(\frac{2p+1}{2}\right)K_n\left(\log\frac{1}{\epsilon}\right)\right]
	\nonumber\\
	&\qquad\leq \exp\left[(p+1)K_n\left(\log\frac{1}{\epsilon}\right)\right]\nonumber
\end{align}
for sufficiently small $\epsilon$ and sufficiently large $n$. The proof is completed by observing that $(p+1)K_n\log(1/\epsilon)-(4-\tilde{b})n\epsilon^2\to -\infty$ as $n\to\infty$ for any fixed $\epsilon>0$ and fixed $\tilde{b}\in(0,4)$. 
\end{proof}

\section{Proofs for Posterior Contraction Rate} 
\label{sec:proofs_for_posterior_contraction_rate}

\subsection*{Proof of Proposition \ref{prop:supersmooth_rate}} 
\label{sub:proof_of_proposition_prop:supersmooth_rate}

\begin{proof}
	Denote $C=1/(p+1)$. Then by condition B5 we have
	\begin{eqnarray}
	\Pi(\mathcal{F}_{K_n}^c)=\Pi(K>K_n)\leq\exp\left(-B_4K_n\log K_n\right)\leq \exp\left[-B_4C\log C(\log n)^{2t-1}\right]\leq\exp(-4n\leps_n^2)\nonumber
	\end{eqnarray}
	with $t>t_0+\frac{1}{2}$ for sufficiently large $n$. Next, by \textbf{Lemma \ref{lemma:summability_upper_bound}}
	\begin{align}
		&\exp(-n\ueps_n^2)\sum_{K=1}^{K_n}\sum_{a_1=0}^{\infty}\cdots\sum_{a_K=0}^{\infty}\sqrt{\calN(\ueps_n,\mathcal{G}_K(a_1,\cdots,a_K),\|\cdot\|_1)}\sqrt{\Pi(\mathcal{G}_K(a_1,\cdots,a_K))}\nonumber\\
		&\qquad\leq \exp\left[-(\log n)^{2t}+(p+1)C(\log n)^{2t-1}\left(\frac{1}{2}\log n-t\log\log n\right)\right]\nonumber\\
		&\qquad\leq \exp\left[-\frac{1}{2}(\log n)^{2t}\right]\nonumber.
	\end{align}
	The RHS of the last display converges to $0$ as $n\to\infty$. 
\end{proof}

\subsection*{Proof of Lemma \ref{lemma:KL_ball_lower_bound}} 
\label{sub:proof_of_lemma_4}

The proof of \textbf{Lemma \ref{lemma:KL_ball_lower_bound}} requires the following auxiliary \textbf{Lemmas \ref{lemma:discretization}-\ref{lemma:discrete_approximation}}   that  generalize {Lemma 3.4, Lemma 4.1}, and {Lemma 5.1} in \cite{ghosal2001entropies}. Since the proofs are quite similar to those there, we defer them in Section \ref{sec:proofs_of_auxiliary_results_in_appendix_e}. 
\begin{lemma}\label{lemma:discretization}
	Let ${F}$ be a probability distribution compactly supported on a subset of $\{(\bmu,\bSigma)\in\mathbb{R}^p\times\calS:\|\bmu\|_\infty\leq a\}$ with $a\lesssim\left(\log\frac{1}{\epsilon}\right)^{\frac{1}{2}}$. Then for sufficiently small $\epsilon>0$, there exists a discrete probability distribution $F^\star$ on a subset of $\{(\bmu,\bSigma)\in\mathbb{R}^p\times\calS:\|\bmu\|_\infty\leq a\}$ with at most $N\lesssim \left(\log\frac{1}{\epsilon}\right)^{2p}$ support points, such that $\|f_F-f_{F^\star}\|_\infty\lesssim\epsilon$, and $\|f_F-f_{F^\star}\|_1\lesssim\epsilon\left(\log\frac{1}{\epsilon}\right)^{\frac{p}{2}}$.
\end{lemma}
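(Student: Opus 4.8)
The plan is to follow the moment-matching strategy behind Lemma 3.4 of \cite{ghosal2001entropies}, adapted to the $p$-dimensional kernel with structured covariances. First I would use condition A4 to reduce to a product structure: after the orthogonal change of coordinates $\by\mapsto\bU\transpose\by$, which preserves both the $\|\cdot\|_1$- and $\|\cdot\|_\infty$-norms of densities (the Jacobian is unity), every $\bSigma\in\calS$ becomes $\mathrm{diag}(\lambda_1,\dots,\lambda_p)$ with $\lambda_j\in[\lsigma^2,\usigma^2]$ by condition A3, and the kernel factorizes as $\phi(\by\mid\bmu,\bSigma)=\prod_{j=1}^p\phi(y_j\mid\mu_j,\lambda_j)$. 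Thus $F$ may be regarded as a probability measure on the compact set $\mathcal{K}=\{\|\bmu\|_\infty\leq a\}\times[\lsigma^2,\usigma^2]^p\subset\mathbb{R}^{2p}$, with $a\lesssim(\log\frac{1}{\epsilon})^{1/2}$.

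Next, for a truncation level $k$ to be chosen, I would Taylor-expand the map $(\bmu,\lambda_1,\dots,\lambda_p)\mapsto\phi(\by\mid\bmu,\bSigma)$ to total degree $k$, writing $\phi=P_{\by,k}+R_{\by,k}$ where $P_{\by,k}$ is a polynomial of degree $k$ in the $2p$ parameters and $R_{\by,k}$ is the remainder. The dimension of the space of polynomials of degree at most $k$ in $2p$ variables is $\binom{k+2p}{2p}\lesssim k^{2p}$. By a Carath\'eodory/Tchakaloff-type argument for the truncated moment problem -- the set of probability measures on $\mathcal{K}$ with prescribed moments up to degree $k$ is convex with finite-dimensional moment image -- there exists a discrete measure $F^\star$ supported on $\mathcal{K}$ with at most $\binom{k+2p}{2p}$ atoms that matches all moments of $F$ up to degree $k$. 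Choosing $k\asymp\log\frac{1}{\epsilon}$ then yields $N\lesssim(\log\frac{1}{\epsilon})^{2p}$, the claimed bound on the number of support points.

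Because $F$ and $F^\star$ share all moments up to degree $k$, the polynomial parts cancel, $\int P_{\by,k}\,\mathrm{d}F=\int P_{\by,k}\,\mathrm{d}F^\star$, so that $|f_F(\by)-f_{F^\star}(\by)|\leq\int|R_{\by,k}|\,\mathrm{d}F+\int|R_{\by,k}|\,\mathrm{d}F^\star\leq 2\sup_{\mathcal{K}}|R_{\by,k}|$; with $k\asymp\log\frac1\epsilon$ and $a\lesssim(\log\frac1\epsilon)^{1/2}$ the remainder is $\lesssim\epsilon$ uniformly in $\by$, giving $\|f_F-f_{F^\star}\|_\infty\lesssim\epsilon$. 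For the $L_1$ bound I would split $\mathbb{R}^p$ into a box $B_R=\{\|\by\|_\infty\leq R\}$ and its complement with $R\asymp a\asymp(\log\frac1\epsilon)^{1/2}$: on $B_R$ the sup-norm estimate gives $\int_{B_R}|f_F-f_{F^\star}|\leq\|f_F-f_{F^\star}\|_\infty(2R)^p\lesssim\epsilon(\log\frac1\epsilon)^{p/2}$, while outside $B_R$ both mixtures are dominated by a fixed Gaussian tail (supports satisfy $\|\bmu\|_\infty\leq a$ and eigenvalues are bounded by $\usigma^2$), whose integral over $\{\|\by\|_\infty>R\}$ is $\lesssim\epsilon$ for the chosen $R$. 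Combining the two pieces yields $\|f_F-f_{F^\star}\|_1\lesssim\epsilon(\log\frac1\epsilon)^{p/2}$.

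The main obstacle is the uniform-in-$\by$ control of $R_{\by,k}$: the parameter-derivatives of the Gaussian grow factorially through the Hermite polynomials, and one must verify that the factorial in the denominator of the Taylor remainder beats the growth coming from $a\lesssim(\log\frac1\epsilon)^{1/2}$ and the degree $k$, so that the single choice $k\asymp\log\frac1\epsilon$ simultaneously delivers the $\lesssim\epsilon$ sup-norm error and the $\lesssim(\log\frac1\epsilon)^{2p}$ bound on the number of atoms. This is exactly where the compact support of $F$ and the spectral bounds of condition A3 enter, and the computation parallels \cite{ghosal2001entropies} but must be carried out in the $2p$-dimensional parameter space rather than the univariate one.
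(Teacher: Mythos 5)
Your overall architecture coincides with the paper's: truncate a series expansion of the Gaussian kernel at degree $J\asymp\log\frac{1}{\epsilon}$, annihilate the truncated part by matching finitely many moments with a discrete $F^\star$ supplied by a Carath\'eodory-type lemma (the paper invokes Lemma A.1 of Ghosal and van der Vaart (2001), which plays exactly the role of your Tchakaloff step and yields the same count $N\lesssim J^{2p}$), and obtain the $L_1$ bound by splitting $\mathbb{R}^p$ at radius $\asymp(\log\frac{1}{\epsilon})^{1/2}$, just as the paper does. The genuine gap sits in the one step you flag as the ``main obstacle'' and then assert away: the uniform remainder bound for a joint Taylor expansion in the parameters $(\bmu,\lambda_1,\dots,\lambda_p)$. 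In the mean direction your claim is sound: by Cram\'er's bound the $m$-th $\mu$-derivative of the kernel grows only like $\sqrt{m!}\,\lsigma^{-m}$, so the $m!$ in the Lagrange remainder wins, and with displacement $\lesssim a\lesssim(\log\frac{1}{\epsilon})^{1/2}$ and $k\asymp\log\frac{1}{\epsilon}$ the remainder is indeed $\lesssim\epsilon$. But in the variance direction the factorial does \emph{not} beat the derivative growth --- it cancels exactly. By the heat-equation identity $\partial_\lambda^m\phi=2^{-m}\partial_y^{2m}\phi$ and $\sqrt{(2m)!}\asymp 2^m m!$ (up to polynomial factors), one has $\sup|\partial_\lambda^m\phi|\asymp m!\,\lambda^{-m}$, so the degree-$k$ Lagrange remainder is of order $(|\Delta\lambda|/\lambda)^k$, reflecting the singularity of $\lambda\mapsto\phi$ at $\lambda=0$ (radius of convergence $\lambda_0$ about $\lambda_0$). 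Since the Lagrange form evaluates the derivative at an intermediate point, this factor can be as large as $\bigl((\usigma^2-\lsigma^2)/\lsigma^2\bigr)^k$, which diverges whenever $\usigma^2>2\lsigma^2$; as written, your uniform bound $\sup|R_{\by,k}|\lesssim\epsilon$ therefore fails for a general spectral window $[\lsigma^2,\usigma^2]$.

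The paper sidesteps this entirely by never expanding in the parameters: it writes $\phi_{\bSigma}(\by-\bmu)=\det(2\pi\bSigma)^{-1/2}\exp(-Q/2)$ with $Q=Q_{\bSigma}(\by-\bmu)$ and truncates the \emph{exponential series in $Q$}, an entire function, so the remainder is $\lesssim(\mathrm{e}Q/(2J))^J$ with $Q\lesssim\log\frac{1}{\epsilon}$ uniformly on the relevant region. The price is that the functions to be moment-matched become the generalized moments $\det(\bSigma)^{-1/2}\prod_i\mu_i^{s_i+2t_i}\lambda_i^{-(r_i+s_i+t_i)}$ --- polynomials in $\bmu$ but negative and half-integer powers of the $\lambda_i$ --- which is harmless because the Carath\'eodory lemma applies to any finite family of integrable functions, not only polynomial moments in the parameters. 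Your version can be repaired while keeping the parameter-space expansion, e.g.\ by a Cauchy-integral remainder centered at $\lambda_0=\usigma^2$ (using $|\exp(-z/(2\lambda))|\leq 1$ on $\{\mathrm{Re}\,\lambda>0\}$, which gives a geometric factor $\theta^k$ with $\theta<1$ independent of $\by$), or by partitioning $[\lsigma^2,\usigma^2]$ into $O(1)$ subintervals and matching moments piecewise; but some such device must be supplied --- the factorial-versus-growth argument you invoke is not sufficient in the $\lambda$ direction.
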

\begin{lemma}\label{lemma:discretization_grid}
	Let ${F}$ be a probability distribution compactly supported on a subset of $\{(\bmu,\bSigma)\in\mathbb{R}^p\times\calS:\|\bmu\|_\infty\leq a\}$ with $a\lesssim\left(\log\frac{1}{\epsilon}\right)^{\frac{1}{2}}$. Then for sufficiently small $\epsilon>0$, there exists a discrete probability distribution $F^\star$ on $\{(\bmu,\bSigma)\in\mathbb{R}^p\times\calS:\|\bmu\|_\infty\leq 2a\}$ with at most $N\lesssim \left(\log\frac{1}{\epsilon}\right)^{2p}$ support points that are taken from 
	$$
	\left\{(\bmu,\bSigma)\in\mathbb{R}^p\times\calS:\frac{\bmu}{2\epsilon}\in\mathbb{Z}^p,\frac{\lambda_j(\bSigma)}{2\epsilon}\in\mathbb{N}_+,j=1,\cdots,p\right\},$$
	such that $\|f_F-f_{F^\star}\|_1\lesssim\epsilon\left(\log\frac{1}{\epsilon}\right)^{\frac{p}{2}}$. 
\end{lemma}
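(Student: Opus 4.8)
The plan is to bootstrap from the preceding \textbf{Lemma \ref{lemma:discretization}} and then round the resulting atoms onto the prescribed lattice, controlling the additional error through the same Hellinger estimate used in the proof of \textbf{Lemma \ref{lemma:entropy}}. First I would apply \textbf{Lemma \ref{lemma:discretization}} to the given $F$, obtaining a finitely discrete $\hat{F}=\sum_{k=1}^N w_k\delta_{(\hat{\bmu}_k,\hat{\bSigma}_k)}$ supported in $\{\|\bmu\|_\infty\leq a\}$ with $N\lesssim(\log\frac{1}{\epsilon})^{2p}$ and $\|f_F-f_{\hat{F}}\|_1\lesssim\epsilon(\log\frac{1}{\epsilon})^{p/2}$. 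This reduces the task to rounding the finitely many atoms of $\hat{F}$ onto the grid while losing only $O(\epsilon)$ in $L_1$.

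Next I would round each atom individually: replace $\hat{\bmu}_k$ by the nearest lattice point $\bmu_k^\star\in 2\epsilon\mathbb{Z}^p$, so that $\|\hat{\bmu}_k-\bmu_k^\star\|_\infty\leq\epsilon$ and hence $\|\bmu_k^\star\|_\infty\leq a+\epsilon\leq 2a$ for small $\epsilon$ (using $a\to\infty$); and replace each eigenvalue $\lambda_j(\hat{\bSigma}_k)$ by the nearest grid value in $2\epsilon\mathbb{N}_+\cap[\lsigma^2,\usigma^2]$, so that $|\lambda_j(\hat{\bSigma}_k)-\lambda_j^\star|\leq\epsilon$. Invoking the simultaneous diagonalizability (condition A4), I set $\bSigma_k^\star=\bU\,\mathrm{diag}(\lambda_1^\star,\cdots,\lambda_p^\star)\,\bU\transpose\in\calS$ and define $F^\star=\sum_k w_k\delta_{(\bmu_k^\star,\bSigma_k^\star)}$, merging any atoms that collapse onto the same grid point (which only lowers the support count, preserving $N\lesssim(\log\frac{1}{\epsilon})^{2p}$).

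The perturbation from rounding a single atom is then controlled exactly as in \textbf{Lemma \ref{lemma:entropy}}: the Hellinger distance obeys
$$H\big(\phi_{\hat{\bSigma}_k}(\by-\hat{\bmu}_k),\phi_{\bSigma_k^\star}(\by-\bmu_k^\star)\big)^2\leq 1-\Big(1-\tfrac{p\epsilon^2}{2\lsigma^2}\Big)^{\frac{p}{2}+1}\lesssim\epsilon^2,$$
and since $\|f-g\|_1\leq 2\sqrt{2}\,H(f,g)$, a triangle-inequality bound against the weights gives $\|f_{\hat{F}}-f_{F^\star}\|_1\leq 2\sqrt{2}\sum_k w_k H(\cdots)\lesssim\epsilon\sum_k w_k=O(\epsilon)$. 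Combining with the first step yields $\|f_F-f_{F^\star}\|_1\lesssim\epsilon(\log\frac{1}{\epsilon})^{p/2}+\epsilon\lesssim\epsilon(\log\frac{1}{\epsilon})^{p/2}$, which is the claim.

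The main obstacle is the bookkeeping at the eigenvalue boundary: one must round \emph{inward} so that each $\lambda_j^\star$ stays in $[\lsigma^2,\usigma^2]$ and hence $\bSigma_k^\star\in\calS$, which is feasible because $2\epsilon<\usigma^2-\lsigma^2$ for small $\epsilon$ guarantees a grid value within distance $\epsilon$ on the correct side. The remaining care is to verify that the per-atom Hellinger bound above is uniform in $k$, so that the weighted sum stays $O(\epsilon)$ independently of $N$; this is immediate since the bound depends only on the fixed displacements $\epsilon$ and the lower spectral bound $\lsigma^2$, not on $k$.
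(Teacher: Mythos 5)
Your proposal is correct, and its skeleton coincides with the paper's proof: both first invoke \textbf{Lemma \ref{lemma:discretization}} to reduce to a finitely discrete $\hat F$ with $N\lesssim(\log\frac{1}{\epsilon})^{2p}$ atoms, then round each atom onto the prescribed $2\epsilon$-lattice (inward at the spectral boundary so that $\bSigma_k^\star\in\calS$, using condition A4) and finish by the triangle inequality. Where you diverge is in how the rounding error is controlled. The paper exploits the uniform Lipschitz continuity of $(\bmu,\bSigma)\mapsto\phi(\by\mid\bmu,\bSigma)$ (gradient bounded uniformly in $\by$, $\calS$ compact) to get $\|f_{\hat F}-f_{F^\star}\|_\infty\lesssim\epsilon$, and then reuses its $L_\infty$-to-$L_1$ truncation device \eqref{ineq:L1_versus_Linfinity} with $T\asymp\max\{2a,\usigma(\log\frac{1}{\epsilon})^{1/2}\}$, incurring a factor $(\log\frac{1}{\epsilon})^{p/2}$ on this step. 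You instead bound each perturbed component pair directly in $L_1$ via the Hellinger estimate \eqref{ineq:Hellinger_distance_upper_bound} together with $\|f-g\|_1\leq 2\sqrt{2}\,H(f,g)$; since $|\sqrt{\lambda_j}-\sqrt{\lambda_j^\star}|\lesssim\epsilon/\lsigma$ and $\|\bmu_k-\bmu_k^\star\|\leq\sqrt{p}\,\epsilon$ with constants depending only on $(p,\lsigma)$, this gives $\|f_{\hat F}-f_{F^\star}\|_1\lesssim\epsilon$ uniformly over the atoms — in fact slightly sharper than the paper's bound for this step, although the final rate is unchanged because the first step already contributes $\epsilon(\log\frac{1}{\epsilon})^{p/2}$. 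This Hellinger device is the same one the paper itself uses in \textbf{Lemma \ref{lemma:entropy}} and \textbf{Lemma \ref{lemma:discrete_approximation}}, so your argument stays entirely within the paper's toolkit. Two trivial remarks: near $\usigma^2$ the admissible grid value on the correct side may be at distance up to $2\epsilon$ rather than $\epsilon$, which only changes constants under $\lesssim$; and the containment $\|\bmu_k^\star\|_\infty\leq a+\epsilon\leq 2a$ indeed uses $\epsilon\leq a$, which holds in the intended regime $a\asymp(\log\frac{1}{\epsilon})^{1/2}$ — the paper's own proof makes the same implicit assumption.
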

\begin{lemma}\label{lemma:KL_ball_versus_Hellinger_ball}
	If $F(\|\bmu\|\leq B)>\frac{1}{2}$ for some constant $B$ and $F_0$ is such that for all $t\geq0$, ${F_0}(\|\bmu\|>t)\leq\exp(-b't^2)$ for some $b'>0$, then for $\epsilon=H(f_{F_0},f_F)$ sufficiently small, 
	$$\int f_0\left(\log\frac{f_0}{f_F}\right)^2\lesssim \epsilon^2\left(\log\frac{1}{\epsilon}\right)^2,\quad
	\int f_0\log\frac{f_0}{f_F}\lesssim\epsilon^2\left(\log\frac{1}{\epsilon}\right).$$ 
\end{lemma}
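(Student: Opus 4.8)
The plan is to mirror the proof of Lemma 5.1 in \cite{ghosal2001entropies}: I would bound both KL-type integrals by the squared Hellinger distance $\epsilon^2 = H(f_{F_0},f_F)^2$, at the cost of logarithmic factors, by combining a pointwise bound on the likelihood ratio with a split of $\mathbb{R}^p$ into a ball $A_R=\{\by:\|\by\|\le R\}$ of radius $R\asymp(\log(1/\epsilon))^{1/2}$ and its complement.

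First I would derive a crude pointwise control of the log-ratio. Because every kernel satisfies $\det(2\pi\bSigma)^{-1/2}\le(2\pi\lsigma^2)^{-p/2}$, we have $f_0\le(2\pi\lsigma^2)^{-p/2}$; and since $F(\|\bmu\|\le B)>\tfrac12$ with $\lambda_{\max}(\bSigma)\le\usigma^2$, restricting the mixing integral to $\{\|\bmu\|\le B\}$ yields $f_F(\by)\ge\tfrac12(2\pi\usigma^2)^{-p/2}\exp(-(\|\by\|+B)^2/(2\lsigma^2))$. Together these give $\log^{+}(f_0/f_F)(\by)\lesssim 1+\|\by\|^2$ for every $\by$. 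The ratio is therefore not globally bounded — only up to $\exp(c\|\by\|^2)$ — which is precisely the difficulty.

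Next I would treat the two integrals region by region. On the set $\{f_F\ge f_0\}$ the log-ratio is nonpositive, so it is discarded for the upper bound on the KL integral; for the second moment I would use $\log(f_F/f_0)\le 2(\sqrt{f_F/f_0}-1)$ to get $\int_{\{f_F\ge f_0\}}f_0(\log(f_0/f_F))^2\le 8\epsilon^2$, with no logarithmic factor. On $A_R\cap\{f_0>f_F\}$, writing $r=f_0/f_F>1$, I would invoke the elementary inequality $r(\log r)^2/(\sqrt r-1)^2\lesssim 1+(\log r)^2$ to pass from $f_0(\log r)^2$ to $(\sqrt{f_0}-\sqrt{f_F})^2(1+\|\by\|^2)^2$, bound $(1+\|\by\|^2)^2\le(1+R^2)^2$ on the ball, and integrate against the Hellinger integrand to obtain a contribution $\lesssim(1+R^2)^2\epsilon^2$; the first moment on the ball is handled by the standard bounded-likelihood-ratio comparison between KL divergence and Hellinger distance, whose log-ratio bound is $\lesssim R^2$ there, giving $\lesssim(1+R^2)\epsilon^2$.

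Finally, on the tail $A_R^{c}$ I would use the sub-Gaussian hypothesis $F_0(\|\bmu\|>t)\le\exp(-b't^2)$: convolving such a mixing measure with bounded-variance Gaussians preserves sub-Gaussian decay, so $\int_{A_R^{c}}f_0(1+\|\by\|^2)^2\lesssim\exp(-cR^2)$, and applying the pointwise bound from the second step the whole tail contribution is $\lesssim\exp(-cR^2)$. Choosing $R^2=\kappa\log(1/\epsilon)$ with $\kappa$ large makes this $\le\epsilon^2$, negligible against the in-ball term, and yields $\int f_0\log(f_0/f_F)\lesssim\epsilon^2\log(1/\epsilon)$ and $\int f_0(\log(f_0/f_F))^2\lesssim\epsilon^2(\log(1/\epsilon))^2$. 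The main obstacle is exactly the region where $f_F$ is exponentially small relative to $f_0$: there is no uniform bound on $f_0/f_F$, only the Gaussian lower bound $f_F\gtrsim\exp(-c\|\by\|^2)$, so the entire argument hinges on balancing the in-ball log-ratio growth (of order $R^2$) against the out-of-ball decay (of order $e^{-cR^2}$), which forces $R\asymp(\log(1/\epsilon))^{1/2}$ and produces the stated logarithmic penalties.
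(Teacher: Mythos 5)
Your second-moment argument is sound: the pointwise bounds on $f_0$ and $f_F$, the inequality $r(\log r)^2/(\sqrt r-1)^2\lesssim 1+(\log r)^2$ for $r>1$ (whose left side stays bounded as $r\downarrow 1$), the $8\epsilon^2$ bound on $\{f_F\geq f_0\}$, and the sub-Gaussian tail of $f_0$ with $R^2\asymp\log(1/\epsilon)$ together give $\int f_0(\log(f_0/f_F))^2\lesssim\epsilon^2(\log\frac1\epsilon)^2$. But the first-moment step contains a genuine gap: discarding the region $\{f_F\geq f_0\}$ is fatal. The positive part $\int_{\{f_0>f_F\}}f_0\log(f_0/f_F)$ is generically of order $\epsilon$, not $\epsilon^2\log(1/\epsilon)$ — take $f_0\equiv 1$ on $[0,1]$ and $f_F=1+a$ on $[0,\frac12]$, $f_F=1-a$ on $(\frac12,1]$; then the positive part equals $\frac12\log\frac{1}{1-a}\asymp a$ while $H(f_0,f_F)\asymp a$. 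Every bound of the form $\mathrm{KL}\lesssim H^2\cdot\mathrm{polylog}$ relies on the cancellation $\int(f_0-f_F)=0$, which your decomposition destroys before it can be used. For the same reason you cannot invoke the ``standard bounded-likelihood-ratio comparison'' on the ball alone: its proof integrates over the whole space, and the pointwise analogue fails near $r=1$ since $r\log r/(\sqrt r-1)^2\sim 4/(r-1)\to\infty$ as $r\downarrow 1$ (unlike the squared-log version you correctly used).

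Two repairs are available. The paper's route is much shorter than your outline: from exactly the two pointwise bounds you derived, $f_0/f_F\lesssim \exp(b_1\|\by\|^2)$ and $f_0\lesssim\exp(-c_0\|\by\|^2)$ (the latter from the sub-Gaussian tail of $F_0$ and $\lambda(\bSigma)\leq\usigma^2$), one gets $\int f_0\,(f_0/f_F)^\delta<\infty$ for any $\delta<c_0/b_1$, and then Theorem 5 of \cite{wong1995probability} delivers both displayed inequalities at once — note the exponents force $\delta$ strictly less than $1$ here, which is why that theorem's fractional moment, rather than a $\sqrt{f_0/f_F}$-moment, is the right tool. Alternatively your truncation scheme can be patched by splitting on the ratio level instead of only on $\|\by\|$: writing $r=f_0/f_F$, on $\{r\leq 4\}$ use $\log r\leq 2(\sqrt r-1)$ together with the identity $f_0(\sqrt r-1)=\sqrt{f_0}(\sqrt{f_0}-\sqrt{f_F})+\sqrt r\,(\sqrt{f_0}-\sqrt{f_F})^2$, observing that the cross term restricted to $\{r\leq 4\}$ is at most its global value $\int\sqrt{f_0}(\sqrt{f_0}-\sqrt{f_F})=H^2(f_0,f_F)=\epsilon^2$ because the integrand is nonnegative on $\{r>4\}$; on $\{r>4\}\cap\{\|\by\|\leq R\}$ use $(\sqrt{f_0}-\sqrt{f_F})^2\geq f_0/4$ so that $f_0\log r\lesssim(1+R^2)(\sqrt{f_0}-\sqrt{f_F})^2$; and on $\{r>4\}\cap\{\|\by\|>R\}$ use your tail bound. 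With $R^2=\kappa\log(1/\epsilon)$ this recovers $\int f_0\log(f_0/f_F)\lesssim\epsilon^2\log\frac1\epsilon$ while preserving the cancellation your version threw away.
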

\begin{lemma}\label{lemma:discrete_approximation}
	Let $\epsilon>0$ be sufficiently small, $F^\star=\sum_{k=1}^Nw_k^\star\delta_{(\bmu_k^\star,\bSigma_k^\star)}$ be such that $\|\bmu_k^\star-\bmu_{k'}^\star\|_\infty\geq 2\epsilon$, and $|\lambda_j(\bSigma_k^\star)-\lambda_j(\bSigma_{k'}^\star)|\geq 2\epsilon$ whenever $k\neq k'$, $j=1,\cdots,p$. 
	Define 
	$$
	E_k=\left\{(\bmu,\bSigma)\in\mathbb{R}^p\times\calS:\|\bmu-\bmu_k^\star\|_\infty<\frac{\epsilon}{2},|\lambda_j(\bSigma)-\lambda_j(\bSigma_k^\star)|<\frac{\epsilon}{2},j=1,\cdots,p\right\}.
	$$
	Then for any probability distribution $F$ on $\mathbb{R}^p\times\calS$, 
	\begin{eqnarray}
		\|f_F-f_{F^\star}\|\lesssim\epsilon+\sum_{k=1}^N|{P}_F(E_k)-w_k^\star|.\nonumber
	\end{eqnarray}
\end{lemma}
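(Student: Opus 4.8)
The plan is to read $\|\cdot\|$ as the $L_1$-norm on densities and to bound $\|f_F-f_{F^\star}\|_1$ by partitioning the parameter space into the neighborhoods $E_1,\dots,E_N$ together with the leftover region $R=\bigl(\bigcup_{k=1}^N E_k\bigr)^c$. First I would check that the $E_k$ are pairwise disjoint: if $(\bmu,\bSigma)\in E_k\cap E_{k'}$ with $k\neq k'$, then $\|\bmu_k^\star-\bmu_{k'}^\star\|_\infty\leq\|\bmu_k^\star-\bmu\|_\infty+\|\bmu-\bmu_{k'}^\star\|_\infty<\epsilon<2\epsilon$, contradicting the separation hypothesis on the atoms. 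Disjointness guarantees $\sum_{k=1}^N P_F(E_k)\leq 1$ and makes $f_F=\sum_{k=1}^N\int_{E_k}\phi(\cdot\mid\bmu,\bSigma)\,\mathrm{d}F+\int_R\phi(\cdot\mid\bmu,\bSigma)\,\mathrm{d}F$ a genuine splitting of the mixture.

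Writing $\phi_k^\star:=\phi(\cdot\mid\bmu_k^\star,\bSigma_k^\star)$, I would then expand the difference by the triangle inequality into three pieces:
\begin{align}
f_F-f_{F^\star}=\sum_{k=1}^N\left[\int_{E_k}\phi(\cdot\mid\bmu,\bSigma)\,\mathrm{d}F-P_F(E_k)\phi_k^\star\right]+\sum_{k=1}^N\bigl(P_F(E_k)-w_k^\star\bigr)\phi_k^\star+\int_R\phi(\cdot\mid\bmu,\bSigma)\,\mathrm{d}F.\nonumber
\end{align}
The core estimate is the first sum, which measures the cost of replacing a nearby kernel by the one centered at the atom. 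For $(\bmu,\bSigma)\in E_k$ we have $\|\bmu-\bmu_k^\star\|<\sqrt{p}\,\epsilon/2$ and $|\lambda_j(\bSigma)-\lambda_j(\bSigma_k^\star)|<\epsilon/2$; since the eigenvalues are bounded below by $\lsigma^2$ (condition A3), this gives $|\sqrt{\lambda_j(\bSigma)}-\sqrt{\lambda_j(\bSigma_k^\star)}|\lesssim\epsilon$, and the simultaneous-diagonalizability structure (condition A4) lets me reuse the Hellinger bound \eqref{ineq:Hellinger_distance_upper_bound} from the proof of \textbf{Lemma \ref{lemma:entropy}} to conclude $H(\phi(\cdot\mid\bmu,\bSigma),\phi_k^\star)\lesssim\epsilon$, hence $\|\phi(\cdot\mid\bmu,\bSigma)-\phi_k^\star\|_1\lesssim\epsilon$ via $\|f-g\|_1\leq 2\sqrt{2}\,H(f,g)$. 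Integrating over $E_k$ and summing, the first contribution is at most a constant times $\epsilon\sum_{k=1}^N P_F(E_k)\leq\epsilon$.

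Finally I would bound the remaining two contributions. Because each $\phi_k^\star$ is a probability density, the second sum has $L_1$-norm at most $\sum_{k=1}^N|P_F(E_k)-w_k^\star|$. For the leftover term, $\bigl\|\int_R\phi(\cdot\mid\bmu,\bSigma)\,\mathrm{d}F\bigr\|_1=P_F(R)=1-\sum_{k=1}^N P_F(E_k)$, and using $\sum_{k=1}^N w_k^\star=1$ this equals $\sum_{k=1}^N\bigl(w_k^\star-P_F(E_k)\bigr)\leq\sum_{k=1}^N|P_F(E_k)-w_k^\star|$. Summing the three bounds yields $\|f_F-f_{F^\star}\|_1\lesssim\epsilon+\sum_{k=1}^N|P_F(E_k)-w_k^\star|$, as claimed. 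The only genuinely technical input is the kernel-continuity estimate, which is already available from the Hellinger computation in \textbf{Lemma \ref{lemma:entropy}}; the rest is a careful accounting of mass, and the point that requires attention is that disjointness of the $E_k$ together with the normalization $\sum_k w_k^\star=1$ is exactly what allows the leftover mass to be absorbed into the weight-discrepancy sum.
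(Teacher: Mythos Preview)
Your proof is correct and follows essentially the same approach as the paper: decompose the parameter space into the disjoint neighborhoods $E_k$ and the leftover region, use the Hellinger bound \eqref{ineq:Hellinger_distance_upper_bound} (together with the eigenvalue bounds from condition A3) to control the kernel-continuity term, and absorb the leftover mass $P_F(R)=1-\sum_k P_F(E_k)$ into the weight-discrepancy sum via $\sum_k w_k^\star=1$. The paper organizes the triangle inequality pointwise before integrating, whereas you work directly with the algebraic identity for $f_F-f_{F^\star}$ and then take $L_1$-norms, but the content is the same.
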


\begin{proof}[\textbf{Proof of \textbf{Lemma \ref{lemma:KL_ball_lower_bound}}}]
	The proof is similar to those in {Theorem 5.1} and {Theorem 5.2} in \cite{ghosal2001entropies}. First let $F_0'$ be the re-normalized restriction of $F_0$ on $\{(\bmu,\bSigma)\in\mathbb{R}^p\times\calS:\|\bmu\|\leq a\}$. By {Lemma A.3} in \cite{ghosal2001entropies} we obtain $\|f_0-f_{F_0'}\|_1\lesssim \epsilon$. Next find $F^\star=\sum_{k=1}^Nw_k^\star\delta_{(\bmu_k^\star,\bSigma_k^\star)}$ by \textbf{Lemma \ref{lemma:discretization_grid}} such that $N\lesssim \left(\log\frac{1}{\epsilon}\right)^{2p}$, $\|f_{F_0'}-f_{F^\star}\|_1\lesssim\epsilon\left(\log\frac{1}{\epsilon}\right)^{\frac{p}{2}}$, 
	$$
	(\bmu_k^\star,\bSigma_k^\star)\in\left\{(\bmu,\bSigma)\in\mathbb{R}^p\times\calS:\frac{\bmu}{2\epsilon}\in\mathbb{Z}^p,\frac{\lambda_j(\bSigma)}{2\epsilon}\in\mathbb{N}_+,j=1,\cdots,p\right\},\quad k =1,\cdots,N,
	$$
	and $F^\star$ is supported on a subset of $\{(\bmu,\bSigma)\in\mathbb{R}^p\times\calS:\|\bmu\|_\infty\leq 2a\}$. In addition, we can require that $\int \|\bmu\|^2\mathrm{d}F_0'=\int\|\bmu\|^2\mathrm{d}F^\star$ and still $N\lesssim\left(\log\frac{1}{\epsilon}\right)^{2p}$. Now we claim that there exists some constant $\gamma>0$ such that
	\begin{eqnarray}
		\left\{F=\sum_{k=1}^Nw_k\delta_{(\bmu_k,\bSigma_k)}:(\bmu_k,\bSigma_k)\in E_k,\sum_{k=1}^K|w_k-w_k^\star|<\epsilon\right\}\subset\left\{F:\|f_0-f_F\|_1\leq\gamma\epsilon\left(\log\frac{1}{\epsilon}\right)^{\frac{p}{2}}\right\}.\nonumber
	\end{eqnarray}
	Suppose $F$ is in the LHS of the last display. Observing that $F(E_k)=w_k$, by \textbf{Lemma \ref{lemma:discrete_approximation}}, $F$ must satisfy $\|f_F-f_{F^\star}\|_1\lesssim \epsilon$. By the construction of $F^\star$ and $F_0'$, $\|f_{F_0'}-f_{F^\star}\|_1\lesssim\epsilon\left(\log\frac{1}{\epsilon}\right)^{\frac{p}{2}}$, and $\|f_{F_0'}-f_0\|_1\lesssim \epsilon$. The result follows from the triangle inequality. 

	Now still let $F$ be on the LHS of the last display. Observe that $H(f_0,f_F)\lesssim\|f_F-f_0\|_1^{\frac{1}{2}}\lesssim \epsilon^{\frac{1}{2}}\left(\log\frac{1}{\epsilon}\right)^{\frac{p}{4}}$. Let $B=2\left(\int\|\bmu\|^2\mathrm{d}F_0\right)^{\frac{1}{2}}$. It follows that 
	$$
	{F^\star}(\|\bmu\|> B)\leq\frac{1}{B^2}\int\|\bmu\|^2\mathrm{d}F^\star=\frac{1}{B^2}\int\|\bmu\|^2\mathrm{d}F_0'\leq\frac{1}{B^2}\int\|\bmu\|^2\mathrm{d}F_0=\frac{1}{4},
	$$
	where the second equality is due to the requirement $\int\|\bmu\|^2\mathrm{d}F_0'=\int\|\bmu\|^2\mathrm
	{d}F^\star$, and the last inequality is because the second moment of $F_0'$ is no greater than that of $F_0$. Therefore for $\epsilon<\min(B/\sqrt{p},1/4)$, we have $\|\bmu_k-\bmu_k^\star\|\leq\sqrt{p}\|\bmu_k-\bmu_k^\star\|_\infty<B$, and hence
	$$
	\|\bmu_k\|>2B\quad\Longrightarrow\quad
	\|\bmu_k^\star\|\geq\|\bmu_k\|-\|\bmu_k-\bmu_k^\star\|>2B-B=B.
	$$
	Hence
	\begin{align}
	{F}(\|\bmu\|>2B)&=\sum_{k=1}^Nw_k\mathbb{I}(\|\bmu_k\|>2B)
	\leq\sum_{k=1}^N|w_k-w_k^\star|\mathbb{I}(\|\bmu_k\|>2B)+\sum_{k=1}^Nw_k^\star\mathbb{I}(\|\bmu_k\|>2B)\nonumber\\
	&<\epsilon+\sum_{k=1}^Nw_k^\star\mathbb{I}(\|\bmu_k\|>2B)
	\leq \epsilon+\sum_{k=1}^Nw_k^\star\mathbb{I}(\|\bmu_k^\star\|>B)\nonumber\\
	&=\epsilon+F^\star(\|\bmu_k^\star\|>B)\leq\frac{1}{2}.\nonumber
	\end{align}
	Hence by \textbf{Lemma \ref{lemma:KL_ball_versus_Hellinger_ball}}, we have 
	$$
	\int f_0\left(\log\frac{f_0}{f_F}\right)^2\lesssim\epsilon\left(\log\frac{1}{\epsilon}\right)^{\frac{p+4}{2}},\quad
	\int f_0\log\frac{f_0}{f_F}\lesssim\epsilon\left(\log\frac{1}{\epsilon}\right)^{\frac{p+2}{2}}\leq\epsilon\left(\log\frac{1}{\epsilon}\right)^{\frac{p+4}{2}},
	$$
	and, as a consequence, 
	\begin{eqnarray}
	\left\{f_F:F=\sum_{k=1}^Nw_k\delta_{(\bmu_k,\bSigma_k)}:(\bmu_k,\bSigma_k)\in E_k,\sum_{k=1}^N|w_k-w_k^\star|<\epsilon\right\}\subset B\left(f_0,\eta\epsilon^{\frac{1}{2}}\left(\log\frac{1}{\epsilon}\right)^{\frac{p+4}{4}}\right).\nonumber
	\end{eqnarray}
\end{proof}

\subsection*{Proof of Theorem \ref{thm:contraction_rate}} 
\label{sub:proof_of_theorem_thm:contraction_rate}


\begin{proof}
By \textbf{Proposition \ref{prop:supersmooth_rate}} it suffices to find the prior concentration rate. 
Motivated by \textbf{Lemma \ref{lemma:KL_ball_lower_bound}}, we are interested in finding the prior probability of the following event:
\begin{eqnarray}
\tilde{B}(F^\star,\epsilon):=\left\{f_F:F=\sum_{k=1}^Nw_k\delta_{(\bmu_k,\bSigma_k)}:(\bmu_k,\bSigma_k)\in E_k,\sum_{k=1}^N|w_k-w_k^\star|<\epsilon\right\}.\nonumber
\end{eqnarray}
where $F^\star=\sum_{k=1}^Nw_k^\star\delta_{(\bmu_k^\star,\bSigma_k^\star)}$, $\|\bmu_k^\star\|\leq \kappa \left(\log\frac{1}{\epsilon}\right)^{\frac{1}{2}}$ for $k=1,\cdots,K$ for some $\kappa>0$, $\|\bmu_k^\star-\bmu_{k'}^\star\|_\infty\geq2\epsilon$, $|\lambda_j(\bSigma_k^\star)-\lambda_j(\bSigma_{k'}^\star)|\geq2\epsilon$ whenever $k\neq k'$, $j=1,\cdots,p$, $N\lesssim \left(\log\frac{1}{\epsilon}\right)^{2p}$, and
$$
E_k=\left\{(\bmu,\bSigma)\in\mathbb{R}^p\times\calS:\|\bmu-\bmu_k^\star\|_\infty<\frac{\epsilon}{2},|\lambda_j(\bSigma)-\lambda_j(\bSigma_k^\star)|<\frac{\epsilon}{2},j=1,\cdots,p\right\}.
$$
It follows that
\begin{align}
\Pi(\tilde{B}(F^\star,\epsilon))=\Pi(K=N)\Pi\left(\left.\bigcap_{k=1}^N\{(\bmu_k,\bSigma_k)\in E_k\}\right|K=N\right)
\Pi\left(\|\bw-\bw^\star\|_1<\epsilon\mid K=N\right),\nonumber
\end{align}
where $\bw=(w_1,\cdots,w_N),\bw^\star=(w_1^\star,\cdots,w_N^\star)\in\Delta^N$. Since $(\bmu_k,\bSigma_k)\in E_k$ implies $\|\bmu_k-\bmu_{k'}\|>\epsilon$, for sufficiently small $\epsilon$ we see that
\begin{eqnarray}
\bigcap_{k=1}^N\left\{(\bmu_k,\bSigma_k)\in E_k\right\}\subset
\left\{(\bmu_k,\bSigma_k)_{k=1}^N:h_N(\bmu_1,\cdots,\bmu_N)\geq(c_2\epsilon)^N\right\}\nonumber
\end{eqnarray}
by condition A1 for both $r=1$ and $r=2$. Notice that $\|\bmu_k-\bmu_k^\star\|_\infty<\epsilon/2$ for sufficiently small $\epsilon$ implies that
\begin{eqnarray}
\|\bmu_k\|_\infty\leq\|\bmu_k^\star\|_\infty+\frac{\epsilon}{2}\leq 2\kappa\left(\log\frac{1}{\epsilon}\right)^{\frac{1}{2}}\quad\Longrightarrow\quad\|\bmu_k\|\leq 2\kappa\sqrt{p}\left(\log\frac{1}{\epsilon}\right)^{\frac{1}{2}},
\end{eqnarray}
in which case we have
\begin{eqnarray}
\int_{\|\bmu_k-\bmu_k^\star\|_\infty<\epsilon/2}p(\bmu_k)\mathrm{d}\bmu_k&\geq&B_3\epsilon^p\exp\left[-b_3(2\kappa\sqrt{p})^\alpha\left(\log\frac{1}{\epsilon}\right)^{\frac{\alpha}{2}}\right].\nonumber
\end{eqnarray}
Hence we may proceed to compute
\begin{align}
	&\Pi\left(\bigcap_{k=1}^N\{(\bmu_k,\bSigma_k)\in E_k\}\right)\nonumber\\
	&\qquad\geq\frac{1}{Z_K}\prod_{k=1}^N\left[\int_{\|\bmu_k-\bmu_k^\star\|_\infty<\epsilon/2} c_2\epsilon p(\bmu_k)\mathrm{d}\bmu_k\right]\prod_{k=1}^N\prod_{j=1}^p\left[\int_{\lambda_j(\bSigma_k^\star)-\epsilon/2}^{\lambda_j(\bSigma_k^\star)+\epsilon/2}p_\lambda(\lambda_{jk})\mathrm{d}\lambda_{jk}\right]\nonumber\\
	&\qquad\geq\prod_{k=1}^N\left\{c_2B_3\epsilon^{p+1}\exp\left[-b_3(2\kappa\sqrt{p})^\alpha\left(\log\frac{1}{\epsilon}\right)^{\frac{\alpha}{2}}\right]\right\}\left(\epsilon\min_{\lsigma^2\leq\lambda\leq\usigma^2} p_\lambda(\lambda)\right)^{Np}\nonumber\\
	&\qquad\geq\epsilon^{2Np+N}\left[c_2B_3\min_{\lsigma^2\leq\lambda\leq\usigma^2}p_\lambda(\lambda)^p\right]^{N}\exp\left[-b_3(2\kappa\sqrt{p})^\alpha N\left(\log\frac{1}{\epsilon}\right)^{\frac{\alpha}{2}}\right],\nonumber
\end{align}
For sufficiently small $\epsilon>0$, taking logarithm yields
\begin{eqnarray}
-N\left(\log\frac{1}{\epsilon}\right)^{\frac{\alpha}{2}}\lesssim \log \Pi((\bmu_k,\bSigma_k)\in E_k,k=1,\cdots,N).\nonumber
\end{eqnarray}
Using condition B5 and the fact $N\lesssim\left(\log\frac{1}{\epsilon}\right)^{2p}$, we may further obtain
\begin{eqnarray}
-\left(\log\frac{1}{\epsilon}\right)^{2p+\frac{\alpha}{2}}\lesssim \log \Pi(K=N)+\log \Pi((\bmu_k,\bSigma_k)\in E_k,k=1,\cdots,N)\nonumber.
\end{eqnarray}
By {Lemma A.2} in \cite{ghosal2001entropies}, we have
\begin{eqnarray}
-\left(\log\frac{1}{\epsilon}\right)^{2p+1}\lesssim -N\left(\log\frac{1}{\epsilon}\right)\lesssim\log \Pi\left(w_1,\cdots,w_N:\sum_{k=1}^N|w_k-w_k^\star|<\epsilon\right). \nonumber
\end{eqnarray}
Observing that $\alpha\geq2$, we obtain
\begin{eqnarray}
\exp\left[-c_5\left(\log\frac{1}{\epsilon}\right)^{2p+\frac{\alpha}{2}}\right]\lesssim \Pi\left(\tilde{B}(F^\star,\epsilon)\right)\lesssim \Pi\left(B\left(f_0,\eta\epsilon^{\frac{1}{2}}\left(\log\frac{1}{\epsilon}\right)^{\frac{p+4}{4}}\right)\right)\nonumber
\end{eqnarray}
for some constant $c_5>0$. Since $\log\left[\eta\epsilon^{\frac{1}{2}}\left(\log\frac{1}{\epsilon}\right)^{\frac{p+4}{4}}\right]$ and $\log\epsilon$ are of the same order in the sense that their ratio converges to a positive constant as $\epsilon\to0$, we conclude that
\begin{eqnarray}
\exp\left[-c_5\left(\log\frac{1}{\epsilon}\right)^{2p+\frac{\alpha}{2}}\right]\lesssim \Pi\left(B(f_0,\epsilon)\right).\nonumber
\end{eqnarray}
Setting $\leps_n=(\log n)^{t_0}/\sqrt{n}$, $\ueps_n=(\log n)^{t}/\sqrt{n}$ with $t_0>p+\frac{\alpha}{4}$, $t>t_0+\frac{1}{2}>p+\frac{\alpha+2}{4}$, we see that
\begin{eqnarray}
-n\leps_n^2=-\left(\log n\right)^{2t_0}<-\left(\log\frac{1}{\leps_n}\right)^{2p+\frac{\alpha}{2}}\lesssim \log\Pi\left(B(f_0,\leps_n)\right).\nonumber
\end{eqnarray}
Hence \eqref{ineq:prior_concentration} is satisfied with $\leps_n=(\log n)^{t_0}/\sqrt{n}$, $t_0>p+\frac{\alpha}{4}$. The proof is thus completed by applying {Proposition \ref{prop:supersmooth_rate}} and {Theorem 3} in \cite{kruijer2010adaptive}. 
\end{proof}

\section{Proofs for the Model Complexity} 
\label{sec:proofs_for_the_model_complexity}
\subsection*{Preliminary Lemmas for Theorem \ref{thm:model_complexity}}
The proof of Theorem \ref{thm:model_complexity} is seemingly daunting but quite straightforward: By repeatedly using Jensen's inequality, we directly bound the marginal density $p(\by_1,\cdots,\by_n)$ of the data and the joint density $p(\by_1,\cdots,\by_n,K)$ between the data and $K$ under the RGM prior. To keep track of the road map of the proof, we begin with several preliminary lemmas, the proofs of which are deferred to the end of this section. 
To avoid the confusion of using the parameter $\bmu_1,\cdots,\bmu_K$ in the RGM prior and the dummy variable $\bmu$ in the underlying true density $f_0(\by)=\int_{\mathbb{R}^p}\phi_{\bSigma_0}(\by-\bmu)F_0(\mathrm{d}\bmu)$, we shall write $f_0(\by)=\int_{\mathbb{R}^p}\phi_{\bSigma_0}(\by-\mb)F_0(\mathrm{d}\mb)$. 
For convenience we use the following notation (only for the proof of \textbf{Theorem \ref{thm:model_complexity}} in this section): $\by_i=(y_{i1},\cdots,y_{ip})\transpose$, $\by_{1:n}=(\by_1,\cdots,\by_n)$, $\mb_i=(m_{i1},\cdots,m_{ip})\transpose{}$, $\mb_{1:n}=(\mb_1,\cdots,\mb_n)$, $\bmu_k=(\mu_{k1},\cdots,\mu_{kp})\transpose{}$, $\bmu_{1:K}=(\bmu_1,\cdots,\bmu_K)$, $z_{1:n}=(z_1,\cdots,z_n)$, 
$
n_k=\sum_{i=1}^n\mathbb{I}(z_i=k)$, $\by_{(k)j}=(y_{ij}:z_i=k)\transpose\in\mathbb{R}^{n_k}$, $\mb_{(k)j}=\left(m_{ij}:z_i=k\right)\transpose{}$, $\mathbf{1}_{n_k}=(1,\cdots,1)\transpose\in\mathbb{R}^{n_k}$, $\widehat\by_i=\bU\transpose{}\by_i=\left(\widehat y_{i1},\cdots,\widehat y_{ip}\right)\transpose{}$, $\widehat\by_{1:n}=\left(\widehat\by_1,\cdots,\widehat\by_n\right)$, $\widehat\bmu_k = \bU\transpose{}\bmu_k=\left(\widehat\mu_{k1},\cdots,\widehat\mu_{kp}\right)\transpose{}$, $\widehat\bmu_{1:K} = \left(\widehat\bmu_1,\cdots,\widehat\bmu_K\right)$, $\widehat\mb_i=\bU\transpose{}\mb_i=(\widehat m_{i1},\cdots,\widehat m_{ip})\transpose{}$, $\widehat\mb_{1:n}=(\widehat\mb_1,\cdots,\widehat\mb_n)$, $\bLambda_0=\bU\transpose{}\bSigma_0\bU = \mathrm{diag}(\sigma_1^2,\cdots,\sigma_p^2)$, $\widehat{\by}_{(k)j}=\left(\widehat y_{ij}:z_i=k\right)\transpose{}$, $\widehat\mb_{(k)j}=\left(\widehat m_{ij}:z_i=k\right)\transpose{}$, $\kappa_j^2=\sigma_j^2/\tau^2$, and $F_0^{(n)}$ be the $n$-fold product measure of $F_0$  over $\mathbb{R}^{p\times n}$. 
\begin{lemma}\label{lemma:marginal_likelihood_upper_bound}
Assume the conditions in Theorme \ref{thm:model_complexity} hold. Then for $K\geq 3$
\begin{align}
&p(\by_{1:n}|z_{1:n},K)\nonumber\\
&\quad\leq\frac{1}{Z_K}\prod_{j=1}^p\prod_{k=1}^K\phi\left(\widehat\by_{(k)j}\middle|\zero_{n_k},\sigma_j^2\eye_{n_k}+\tau^2\one_{n_k}\one_{n_k}\transpose{}\right)\nonumber\\
&\qquad \times{K\choose 2}^{-1}\sum_{k<k'}g\left(\left[\sum_{j=1}^p\left(\frac{\one_{n_k}\transpose{}\widehat\by_{(k)j}}{n_k+\kappa_j^2}-\frac{\one_{n_{k'}}\transpose{}\widehat\by_{(k')j}}{n_{k'}+\kappa_j^2}\right)^2+\sum_{j=1}^p\sigma_j^2\left(\frac{1}{n_k+\kappa_j^2}+\frac{1}{n_{k'}+\kappa_j^2}\right)\right]^{\frac{1}{2}}\right)\nonumber.
\end{align}
\end{lemma}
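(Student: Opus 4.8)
The plan is to expand the conditional likelihood as an integral of the Gaussian kernel against the repulsive prior on the component means, and then bound it by a product of one--dimensional Gaussian marginal likelihoods times a single residual factor controlled by $g$. Using $p_\bSigma=\delta_{\bSigma_0}$, $\beta=1$, and $p_\bmu=\phi(\cdot\mid\zero,\tau^2\eye)$, I would first write
\[
p(\by_{1:n}\mid z_{1:n},K)=\frac{1}{Z_K}\int\left[\prod_{k=1}^K\prod_{i:z_i=k}\phi(\by_i\mid\bmu_k,\bSigma_0)\right]\left[\prod_{k=1}^K\phi(\bmu_k\mid\zero,\tau^2\eye)\right]h_K(\bmu_{1:K})\,\mathrm{d}\bmu_{1:K}.
\]
The crucial first bound replaces $h_K$ by an arithmetic mean over pairs: for $K\geq 3$ both repulsive forms satisfy $h_K(\bmu_{1:K})\leq\binom{K}{2}^{-1}\sum_{k<k'}g(\|\bmu_k-\bmu_{k'}\|)$. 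For the minimum form \eqref{eqn:repulsive_function_1} this is just ``min $\leq$ average''; for the geometric--mean form \eqref{eqn:repulsive_function_2} it follows from $(\prod_{k<k'}g)^{1/K}\leq(\prod_{k<k'}g)^{1/\binom{K}{2}}$ (valid since each $g\leq 1$ and $1/K\geq 1/\binom{K}{2}$ exactly when $K\geq 3$) followed by AM--GM. This is precisely where the hypothesis $K\geq 3$ enters.

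Next I would diagonalize via condition A4: because $\bU$ is orthogonal, $p_\bmu$ is rotation invariant and $\phi(\by_i\mid\bmu_k,\bSigma_0)=\prod_{j=1}^p\phi(\widehat y_{ij}\mid\widehat\mu_{kj},\sigma_j^2)$ in the rotated coordinates, so the entire integrand factors over the $p$ coordinates while $g(\|\bmu_k-\bmu_{k'}\|)=g(\|\widehat\bmu_k-\widehat\bmu_{k'}\|)$ is unchanged. After interchanging the pair--sum with the integral, for each fixed pair $(k,k')$ I would integrate out every $\bmu_l$ with $l\neq k,k'$ using the conjugate Gaussian identity (prior times likelihood equals marginal times posterior); the marginal factor is exactly $\prod_{j=1}^p\phi(\widehat\by_{(l)j}\mid\zero_{n_l},\sigma_j^2\eye_{n_l}+\tau^2\one_{n_l}\one_{n_l}\transpose)$. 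Applying the same factorization to clusters $k$ and $k'$ pulls out their marginal factors too and leaves the residual
\[
J_{kk'}=\mathbb{E}\,g(\|\widehat\bmu_k-\widehat\bmu_{k'}\|),
\]
where $\widehat\bmu_k,\widehat\bmu_{k'}$ are now independent Gaussians whose $j$th coordinates have posterior means $\one_{n_k}\transpose\widehat\by_{(k)j}/(n_k+\kappa_j^2)$ and posterior variances $\sigma_j^2/(n_k+\kappa_j^2)$.

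It remains to bound $J_{kk'}$ by $g$ evaluated at the stated argument, and here I would invoke Jensen's inequality twice. Since $g(x)=x/(g_0+x)$ is concave and increasing, $J_{kk'}\leq g(\mathbb{E}\|\widehat\bmu_k-\widehat\bmu_{k'}\|)$; then monotonicity of $g$ together with $\mathbb{E}\|\bm{D}\|\leq(\mathbb{E}\|\bm{D}\|^2)^{1/2}$ for $\bm{D}=\widehat\bmu_k-\widehat\bmu_{k'}$ gives $J_{kk'}\leq g\big((\mathbb{E}\|\bm{D}\|^2)^{1/2}\big)$. Computing $\mathbb{E}\|\bm{D}\|^2$ coordinatewise --- the sum of the squared mean--differences plus the sum of the two posterior variances --- reproduces exactly the bracketed expression inside $g$ in the statement. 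Assembling the per--cluster marginal factors (now a full product over all $k=1,\dots,K$ and all $j$) together with $\binom{K}{2}^{-1}\sum_{k<k'}g(\cdots)$ and $1/Z_K$ yields the claim. I expect the main difficulty to be organizational rather than analytic: one must keep the bookkeeping straight so that, after the min/mean bound is split into a sum over pairs, the marginal likelihoods of the two ``active'' clusters $k,k'$ are correctly recombined with those of the remaining $K-2$ clusters into the single product $\prod_{k=1}^K\prod_{j=1}^p\phi(\widehat\by_{(k)j}\mid\cdots)$, and to check that both Jensen steps point in the upper--bounding direction, which hinges on $g$ being simultaneously concave and increasing.
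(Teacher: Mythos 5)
Your proposal is correct and follows essentially the same route as the paper's proof: the same conjugate-Gaussian factorization into the marginal factors $\phi\left(\widehat\by_{(k)j}\middle|\zero_{n_k},\sigma_j^2\eye_{n_k}+\tau^2\one_{n_k}\one_{n_k}\transpose\right)$, the same bound $h_K\leq\binom{K}{2}^{-1}\sum_{k<k'}g(\|\bmu_k-\bmu_{k'}\|)$ (min $\leq$ mean for form \eqref{eqn:repulsive_function_1}, and the exponent comparison $1/K\geq 1/\binom{K}{2}$ for $K\geq3$ plus AM--GM for form \eqref{eqn:repulsive_function_2}), and the same Jensen step, which the paper carries out in one shot via concavity of $x\mapsto g(\sqrt{x})$ while you equivalently split it into concavity of $g$ followed by $\mathbb{E}\|\bm{D}\|\leq(\mathbb{E}\|\bm{D}\|^2)^{1/2}$ and monotonicity. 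The only differences are organizational (you apply the $h_K$ bound before the conjugacy step, the paper after), and both orderings are valid.
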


\begin{lemma}\label{lemma:marginal_likelihood_lower_bound}
Assume the conditions in Theorem \ref{thm:model_complexity} hold. Then the marginal density of the data $\by_1,\cdots,\by_n$ under the $\mathrm{RGM}$ prior satisfies:
\begin{enumerate}
	\item[(i)] If $f\sim\mathrm{RGM}_1(1,g,\phi(\bmu|\zero,\tau^2\eye),\delta_{\bSigma_0},p(K))$, 
	\emph{i.e.} $h$ is of the form of \eqref{eqn:repulsive_function_1}, then
	\begin{align}
	p(\by_{1:n})\geq C(\lambda)\exp\left[-\frac{n\tau^2}{2}\mathrm{tr}\left(\bSigma_0^{-1}\right)\right]\prod_{i=1}^n\phi_{\bSigma_0}(\by_i)
	{\Omega(\mathrm{e}^\lambda-1)}{\left(1+g_0^{\frac{2}{3}}\delta(\tau)\right)^{-\frac{3}{2}}}\nonumber;
	\end{align}
	\item[(ii)] If $f\sim\mathrm{RGM}_2(1,g,\phi(\bmu|\zero,\tau^2\eye),\delta_{\bSigma_0},p(K))$,
	\emph{i.e.} $h$ is of the form of \eqref{eqn:repulsive_function_2}, then
	\begin{align}
	p(\by_{1:n})\geq C(\lambda)\exp\left[-\frac{n\tau^2}{2}\mathrm{tr}\left(\bSigma_0^{-1}\right)\right]\prod_{i=1}^n\phi_{\bSigma_0}(\by_i){\Omega(\mathrm{e}^\lambda -1)}{(1+\delta(\tau)\sqrt{g_0})^{-1}}\nonumber.
	\end{align}
\end{enumerate}
Here $C(\lambda)$ is a constant only depending on $\lambda$, and $\delta(\tau)<1$ when $\tau$ is sufficiently large. 
\end{lemma}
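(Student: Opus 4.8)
The plan is to expand the marginal density $p(\by_{1:n})$ through the hierarchical model, exploit the cancellation of the intractable normalizing constant that is built into the prior on $K$, and then peel off the target factors by two successive applications of Jensen's inequality, leaving the $g_0$-dependent repulsion factor to be estimated by restricting the integration to a well-separated configuration of the component means.

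First I would write, using $p(K)=\Omega Z_K\lambda^K/K!$ together with the repulsive density \eqref{eqn:repulsive_density} in which $p_\bSigma=\delta_{\bSigma_0}$ and $p_\bmu=\phi(\cdot\mid\zero,\tau^2\eye)$, that
\[
p(\by_{1:n})=\Omega\sum_{K=1}^\infty\frac{\lambda^K}{K!}\,\mathbb{E}\!\left[\Big(\prod_{i=1}^n\sum_{k=1}^K w_k\,\phi_{\bSigma_0}(\by_i-\bmu_k)\Big)h_K(\bmu_{1:K})\right],
\]
where the expectation is taken over $\bmu_1,\dots,\bmu_K\iidsim\phi(\cdot\mid\zero,\tau^2\eye)$ and $(w_1,\dots,w_K)\sim\mathcal{D}_K(1)$. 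The factor $1/Z_K$ in the prior density is cancelled by the $Z_K$ in $p(K)$, which is exactly why $p(K)$ is taken proportional to $Z_K\lambda^K/K!$, and it is what makes the series tractable.

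Next I would lower-bound the mixture likelihood by the weighted arithmetic--geometric mean inequality, $\sum_k w_k\phi_{\bSigma_0}(\by_i-\bmu_k)\ge\prod_k\phi_{\bSigma_0}(\by_i-\bmu_k)^{w_k}$. Writing $\phi_{\bSigma_0}(\by_i-\bmu_k)=\phi_{\bSigma_0}(\by_i)\exp(\bmu_k\transpose\bSigma_0^{-1}\by_i-\tfrac12\bmu_k\transpose\bSigma_0^{-1}\bmu_k)$ and using $\sum_k w_k=1$ collapses the product over $i$ into $\prod_i\phi_{\bSigma_0}(\by_i)$ times $\exp(\sum_k w_k X_k)$, with $X_k=\bmu_k\transpose\bSigma_0^{-1}\big(\sum_i\by_i\big)-\tfrac n2\bmu_k\transpose\bSigma_0^{-1}\bmu_k$. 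A second use of Jensen, pulling the exponential outside the expectation, produces $\exp(\mathbb{E}[\sum_k w_k X_k])$; since $\mathbb{E}[\bmu_k]=\zero$ the linear term vanishes and $\mathbb{E}[\bmu_k\transpose\bSigma_0^{-1}\bmu_k]=\tau^2\mathrm{tr}(\bSigma_0^{-1})$, giving exactly $\exp[-\tfrac{n\tau^2}{2}\mathrm{tr}(\bSigma_0^{-1})]$. This is the step that uses the centering $\int\mb\,F_0(\mathrm{d}\mb)=\zero$ and the fact that $p_\bmu$ is centered at $\zero$.

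The remaining and hardest step is to supply the $g_0$-dependent constant while retaining the full series $\sum_K\lambda^K/K!=\mathrm{e}^\lambda-1$. For $\mathrm{RGM}_1$ I would restrict the $\bmu$-integration to the event $E_r=\{\|\bmu_k-\bmu_{k'}\|\ge r\ \text{for all }k<k'\}$, on which $h_K=\min_{k<k'}g(\|\bmu_k-\bmu_{k'}\|)\ge g(r)$, a bound that is independent of $K$; carrying $g(r)$ outside the sum then reconstructs $\mathrm{e}^\lambda-1$, while applying the Jensen estimate of the previous paragraph conditionally on $E_r$ reproduces the trace factor up to a multiplicative $\delta(\tau)$ arising from the mild bias that conditioning on separation imposes on the Gaussian means (negligible, so that $\delta(\tau)<1$, once $\tau$ is large, because the means are then typically $O(\tau)$ apart). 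Optimizing the threshold $r\propto g_0^{1/3}$ in $g(r)=(1+g_0/r)^{-1}$ against the probability cost of $E_r$ yields the stated $(1+g_0^{2/3}\delta(\tau))^{-3/2}$. For $\mathrm{RGM}_2$ the same separation event gives only $h_K\ge g(r)^{(K-1)/2}$, so the geometric-mean structure forces a $K$-dependent value; resumming $\sum_K\frac{\lambda^K}{K!}g(r)^{(K-1)/2}$ and comparing with $\mathrm{e}^\lambda-1$ produces the milder exponent and the factor $(1+\delta(\tau)\sqrt{g_0})^{-1}$. I expect the delicate point throughout to be the bookkeeping of these shrinkage constants in this last step, in particular controlling the conditional mean $\mathbb{E}[\sum_k w_k X_k\mid E_r]$ and choosing $r$ so that the $K$-independent (for $\mathrm{RGM}_1$) or $K$-geometric (for $\mathrm{RGM}_2$) repulsion value combines with the Poisson weights to leave precisely $\mathrm{e}^\lambda-1$ and the claimed power of $(1+\cdots)$; the constant $C(\lambda)$ absorbs the $\lambda$-only factors incurred when bounding the resummed series below by $\mathrm{e}^\lambda-1$.
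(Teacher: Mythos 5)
Your first half is exactly the paper's: the expansion of $p(\by_{1:n})$ with the $Z_K$ in $p(K)$ cancelling the normalizing constant, and a Jensen/AM--GM step whose unconditional Gaussian expectation yields precisely $\prod_i\phi_{\bSigma_0}(\by_i)\exp[-\tfrac{n\tau^2}{2}\mathrm{tr}(\bSigma_0^{-1})]$. The gap is in your treatment of $h_K$ by restriction to the separation event $E_r=\{\|\bmu_k-\bmu_{k'}\|\ge r,\ k<k'\}$, and it is fatal as sketched. Once you restrict, your Jensen step must be applied conditionally on $E_r$, and conditioning on pairwise separation strictly inflates the second moment: $\mathbb{E}[\bmu_k\transpose\bSigma_0^{-1}\bmu_k\mid E_r]=\tau^2\mathrm{tr}(\bSigma_0^{-1})+\eta$ with $\eta=\eta(r,K,\tau)>0$ (write $\bmu_1=\tfrac12[(\bmu_1+\bmu_2)+(\bmu_1-\bmu_2)]$; the event inflates $\mathbb{E}\|\bmu_1-\bmu_2\|^2$ while $\mathbb{E}[\bmu_k\mid E_r]=\zero$ by sign symmetry). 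This bias enters the exponent multiplied by $n/2$, so your bound carries an extra factor $\mathrm{e}^{-n\eta/2}$, exponentially small in $n$, which cannot be absorbed into $\delta(\tau)$ or $C(\lambda)$ (both $n$-free in the statement); shrinking $r$ with $n$ to kill $\eta$ kills $g(r)$ instead. Two further problems: $\mathbb{P}(E_r)$ decays in $K$ (one cannot pack arbitrarily many pairwise $r$-separated Gaussian draws at scale $\tau$), so the series does not resum to $\mathrm{e}^\lambda-1$ times a $\lambda$-only constant as you assert; and your optimization $r\propto g_0^{1/3}$ gives $g(r)=(1+g_0^{2/3}/c)^{-1}$, i.e.\ exponent $-1$, not the claimed $(1+g_0^{2/3}\delta(\tau))^{-3/2}$ — the stated constants are never actually derived.

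The paper avoids all of this by never conditioning: working given $(z_{1:n},K)$, it applies a single Jensen inequality to the \emph{entire} logarithm, so $\mathbb{E}[\log h_K]$ splits off additively while the log-likelihood expectation stays under the unconditioned Gaussian prior, giving the exact trace factor. The repulsion term is then bounded through fractional powers chosen to make negative moments finite in every dimension $p\ge1$: for the min form, $-\log(1+\max_{k\neq k'}g_0/\|\bmu_k-\bmu_{k'}\|)\ge-\tfrac32\log[1+\sum_{k<k'}(g_0/\|\bmu_k-\bmu_{k'}\|)^{2/3}]$ followed by a second Jensen on $\log$ and $\mathbb{E}\|\bmu_1-\bmu_2\|^{-2/3}\propto\tau^{-2/3}$ (this is where the exponent $3/2$ and $\delta(\tau)$ really come from); for the product form, power $1/2$ gives $-K\log(1+\sqrt{g_0}\delta(\tau))$. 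The resulting clean $K$-dependence ($K^{-3}$, resp.\ geometric in $K$) is then resummed exactly against the Poisson weights, via $\mathbb{E}[K^{-3}]\ge[\mathbb{E}(K^3)]^{-1}$ in the first case and an explicit exponential series plus the mean value theorem in the second. If you want to salvage your outline, replace the separation-event device by this additive Jensen split; the rest of your skeleton then goes through.
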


\begin{lemma}\label{lemma:conditional_density_upper_bound}
Assume the conditions in Theorem \ref{thm:model_complexity} hold. Then
	\begin{align}
	&\int_{\mathbb{R}^p}\cdots\int_{\mathbb{R}^p}\frac{p(\by_{1:n}|z_{1:n},K)}{p(\by_{1:n})}\prod_{i=1}^n\phi(\by_i|\mb_i,\bSigma_0)\mathrm{d}\by_1\cdots\mathrm{d}\by_n\nonumber\\
	&\qquad\leq 
	C(\lambda)
	\exp\left[\frac{n\tau^2}{2}\mathrm{tr}\left(\bSigma_0^{-1}\right)\right]\frac{
	\omega(g_0)
	}{Z_K\Omega(\mathrm{e}^\lambda - 1)}
	\nonumber\\
	&\qquad\quad \times{K\choose 2}^{-1}\sum_{1\leq k<k'\leq K}
	g\left(\left[\sum_{j=1}^p\frac{1}{\kappa_j^4}\left(\one_{n_k}\transpose{}\widehat\mb_{(k)j}-\one_{n_{k'}}\transpose{}\widehat\mb_{(k')j}\right)^2+2p\tau^2\right]^{\frac{1}{2}}\right)
	\nonumber,
	\end{align}
	where $\omega(g_0)=\left(1+\delta(\tau)g_0^{\frac{2}{3}}\right)^{\frac{3}{2}}$ if $r=1$, and $\omega(g_0) = 1+\delta(\tau)\sqrt{g_0}$ if $r=2$. 
\end{lemma}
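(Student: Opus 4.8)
The plan is to feed the two preceding lemmas into the ratio $p(\by_{1:n}\mid z_{1:n},K)/p(\by_{1:n})$ and then carry out the $\by$-integration by completing squares. First I would bound the numerator from above by \textbf{Lemma \ref{lemma:marginal_likelihood_upper_bound}}, which already exposes the factor ${K\choose2}^{-1}\sum_{k<k'}g(A_{kk'})$ together with the product of cluster-wise marginal likelihoods $q(\widehat\by)=\prod_{j,k}\phi(\widehat\by_{(k)j}\mid\zero,\sigma_j^2\eye_{n_k}+\tau^2\one_{n_k}\one_{n_k}\transpose)$. For the denominator I would use the lower bound on $p(\by_{1:n})$ supplied by \textbf{Lemma \ref{lemma:marginal_likelihood_lower_bound}}, retaining the single-component marginal likelihood $\prod_{j}\phi(\widehat\by_{\cdot j}\mid\zero,\sigma_j^2\eye_n+\tau^2\one_n\one_n\transpose)$ that underlies it, since its Gaussian tail is what must match that of $q$ for the integral to remain finite. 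Throughout I would pass to the rotated coordinates $\widehat\by=\bU\transpose\by$, in which condition A4 makes $\bSigma_0$ and $\tau^2\eye$ simultaneously diagonal ($\bLambda_0=\mathrm{diag}(\sigma_1^2,\dots,\sigma_p^2)$) and every Gaussian factorizes across the $p$ coordinates; the Jacobian is one because $\bU$ is unitary.

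After forming the ratio and pulling out the constant prefactor $C(\lambda)\,\omega(g_0)\,[Z_K\Omega(\mathrm{e}^\lambda-1)]^{-1}$, the $\by$-dependent part is an integral of $q(\widehat\by)$ divided by the single-component marginal likelihood, multiplied by the data density $\prod_i\phi(\by_i\mid\mb_i,\bSigma_0)$ and by ${K\choose2}^{-1}\sum_{k<k'}g(A_{kk'})$. For each fixed pair $(k,k')$ I would integrate out the spectator clusters and then complete the square block by block: the product $\phi(\widehat\by_{(k)j}\mid\zero,\Sigma_{kj})\,\phi(\widehat\by_{(k)j}\mid\widehat\mb_{(k)j},\sigma_j^2\eye_{n_k})$, with $\Sigma_{kj}=\sigma_j^2\eye_{n_k}+\tau^2\one_{n_k}\one_{n_k}\transpose$, equals an explicit Gaussian normalizing constant times a shifted density $\mathrm{N}(\widetilde{\mb}_{(k)j},\Sigma_{kj})$ with $\widetilde{\mb}_{(k)j}=\sigma_j^{-2}\Sigma_{kj}\widehat\mb_{(k)j}$. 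This reduces the pair integral to a shifted-Gaussian expectation of $g(A_{kk'})$.

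Next I would use that for $g(x)=x/(g_0+x)$ the map $x\mapsto g(\sqrt x)$ is increasing and concave (its derivative $g_0/[2\sqrt x(g_0+\sqrt x)^2]$ is positive and decreasing in $x$), so Jensen's inequality gives $\mathbb{E}[g(\sqrt{B_{kk'}})]\le g(\sqrt{\mathbb{E}[B_{kk'}]})$, where $B_{kk'}$ is the bracketed quantity inside $g$ in \textbf{Lemma \ref{lemma:marginal_likelihood_upper_bound}}. The remaining step is the first moment of $B_{kk'}$ under the shifted Gaussian: the statistic $\one_{n_k}\transpose\widehat\by_{(k)j}/(n_k+\kappa_j^2)$ has mean $\kappa_j^{-2}\one_{n_k}\transpose\widehat\mb_{(k)j}$, so the squared-difference terms assemble into $\sum_j\kappa_j^{-4}(\one_{n_k}\transpose\widehat\mb_{(k)j}-\one_{n_{k'}}\transpose\widehat\mb_{(k')j})^2$, while its variance $\tau^2 n_k/(n_k+\kappa_j^2)$ combines with the deterministic term $\sigma_j^2/(n_k+\kappa_j^2)$ from \textbf{Lemma \ref{lemma:marginal_likelihood_upper_bound}} into exactly $\tau^2$ per coordinate and cluster (using $\sigma_j^2=\tau^2\kappa_j^2$), i.e. $2p\tau^2$ in total. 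This is precisely the argument of $g$ in the claim, and summing over pairs reproduces the ${K\choose2}^{-1}\sum_{k<k'}$ structure.

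The hard part will be the normalization bookkeeping: showing that the accumulated Gaussian constants — those produced by completing the square together with the ratio of the per-cluster covariance determinants $|\Sigma_{kj}|$ to the single-component determinant — are dominated by $\exp[\tfrac{n\tau^2}{2}\mathrm{tr}(\bSigma_0^{-1})]$ uniformly in $\mb$. This is delicate because the shift $\widetilde{\mb}_{(k)j}$ contributes an $\mb$-dependent exponential that grows, and it must be cancelled by the decay of the marginal-likelihood normalization. Here the choice of denominator is essential: pairing $q$ against the narrow factor $\prod_i\phi_{\bSigma_0}(\by_i)$ alone would leave an uncontrolled factor $\exp[\sum_{k,j}\tfrac{\tau^2}{2\sigma_j^4}(\one_{n_k}\transpose\widehat\mb_{(k)j})^2]$, whereas pairing against the single-component Gaussian (whose covariance matches $q$'s broad structure) is exactly what makes the $\mb$-growth cancel, leaving the clean prefactor $\exp[\tfrac{n\tau^2}{2}\mathrm{tr}(\bSigma_0^{-1})]$ and completing the proof.
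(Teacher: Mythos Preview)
Your overall architecture—bound the numerator by Lemma~\ref{lemma:marginal_likelihood_upper_bound}, bound the denominator by Lemma~\ref{lemma:marginal_likelihood_lower_bound}, complete the square block by block in the rotated coordinates, then apply Jensen to the concave map $x\mapsto g(\sqrt{x})$—is exactly the paper's. The moment computation you sketch (mean $\kappa_j^{-2}\one_{n_k}\transpose\widehat\mb_{(k)j}$ for the linear statistic, and the variance plus the deterministic $\sigma_{kk'j}^2$ term collapsing to $\tau^2$ per coordinate, hence $2p\tau^2$ in total) also matches the paper line for line.

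The gap is in your reading of Lemma~\ref{lemma:marginal_likelihood_lower_bound}. That lemma does \emph{not} supply a single-component marginal likelihood $\prod_j\phi(\widehat\by_{\cdot j}\mid\zero,\sigma_j^2\eye_n+\tau^2\one_n\one_n\transpose)$; its lower bound is
\[
p(\by_{1:n})\ \ge\ C(\lambda)\,\exp\!\Bigl[-\tfrac{n\tau^2}{2}\mathrm{tr}(\bSigma_0^{-1})\Bigr]\,\prod_{i=1}^n\phi_{\bSigma_0}(\by_i)\cdot\frac{\Omega(\mathrm{e}^\lambda-1)}{\omega(g_0)},
\]
i.e.\ the \emph{narrow} Gaussian $\prod_{j,k}\phi(\widehat\by_{(k)j}\mid\zero,\sigma_j^2\eye_{n_k})$ in the denominator. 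The paper uses precisely this narrow denominator—the case you assert would fail. After forming the ratio it completes the square to obtain
\[
\prod_{j,k}\phi\!\left(\widehat\by_{(k)j}\,\middle|\,\Bigl(\eye_{n_k}+\tfrac{\tau^2}{\sigma_j^2}\one_{n_k}\one_{n_k}\transpose\Bigr)\widehat\mb_{(k)j},\ \sigma_j^2\eye_{n_k}+\tau^2\one_{n_k}\one_{n_k}\transpose\right)\exp\!\left[-\frac{(\one_{n_k}\transpose\widehat\mb_{(k)j})^2}{2\sigma_j^2(n_k+\kappa_j^2)}\right],
\]
and then simply drops the exponential as $\le1$ before integrating against the shifted Gaussian and applying Jensen. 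So your final paragraph—claiming the narrow pairing leaves uncontrolled $\mb$-growth and that one must instead match the broad covariance—runs directly counter to the paper's route. Since the broad denominator you invoke is not actually furnished by Lemma~\ref{lemma:marginal_likelihood_lower_bound}, you should carry out the completing-the-square step explicitly with the narrow one and confront the sign of that residual exponential directly, rather than argue around it.
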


\begin{lemma}\label{lemma:expected_value_equation}
Assume the conditions in Theorem \ref{thm:model_complexity} hold. Then
\begin{align}
&\sum_{j=1}^p\frac{1}{\kappa_j^4}\mathbb{E}_\bz\left[\int_{\mathbb{R}^{p\times n}}\int_{\mathbb{R}^{p\times n}}\left(\one_{n_k}\transpose{}\widehat\mb_{(k)j}-\one_{n_{k'}}\transpose{}\widehat\mb_{(k')j}\right)^2F_0^{(n_k+n_{k'})}\left(\mathrm{d}\mb_{(k)j}\mathrm{d}\mb_{(k')j}\right)\right]
\nonumber\\&\qquad 
= \tau^4\mathbb{E}_0\left(\mb\transpose{}\bSigma_0^{-2}\mb\right)\frac{2n}{K}\nonumber,
\end{align}
where $\mathbb{E}_\bz$ is the expected value with respect to $p(z_{1:n}|K)$. 
\end{lemma}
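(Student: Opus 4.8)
The plan is to evaluate the inner double integral for a fixed cluster configuration $\bz$ by exploiting the zero-mean and independence structure induced by $F_0$, then to take the expectation $\mathbb{E}_\bz$ via a symmetry argument, and finally to reassemble the coordinatewise sum into the quadratic form $\mathbb{E}_0(\mb\transpose\bSigma_0^{-2}\mb)$ using the common diagonalizing basis $\bU$.

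First I would record the relevant second-moment structure. Condition A0 guarantees that $F_0$ has a finite second moment, so $\bV_0:=\mathbb{E}_0(\mb\mb\transpose)$ exists, and the centering assumption $\int_{\mathbb{R}^p}\mb\,F_0(\mathrm{d}\mb)=\zero$ gives $\mathbb{E}_0(\widehat\mb_i)=\bU\transpose\mathbb{E}_0(\mb_i)=\zero$, whence each coordinate $\widehat m_{ij}$ has mean zero. Writing $v_j:=\mathbb{E}_0(\widehat m_{ij}^2)$ for the $j$-th diagonal entry of $\bU\transpose\bV_0\bU$, the identity
\[
\mathbb{E}_0(\mb\transpose\bSigma_0^{-2}\mb)=\mathrm{tr}(\bSigma_0^{-2}\bV_0)=\mathrm{tr}(\bLambda_0^{-2}\bU\transpose\bV_0\bU)=\sum_{j=1}^p\frac{v_j}{\sigma_j^4}
\]
(using $\bSigma_0^{-2}=\bU\bLambda_0^{-2}\bU\transpose$) records how the right-hand side will be recovered at the end.

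Next I would compute the inner integral for fixed $\bz$. Because the blocks $\{i:z_i=k\}$ and $\{i:z_i=k'\}$ are disjoint and the $\mb_i$ are i.i.d.\ under $F_0^{(n_k+n_{k'})}$, the two sums $\one_{n_k}\transpose\widehat\mb_{(k)j}=\sum_{i:z_i=k}\widehat m_{ij}$ and $\one_{n_{k'}}\transpose\widehat\mb_{(k')j}$ are independent, each centered with variances $n_k v_j$ and $n_{k'}v_j$ respectively. Hence the integral of the squared difference equals $(n_k+n_{k'})v_j$, and multiplying by $1/\kappa_j^4=\tau^4/\sigma_j^4$ and summing over $j$ turns this into $(n_k+n_{k'})\tau^4\sum_{j=1}^p v_j/\sigma_j^4=(n_k+n_{k'})\tau^4\,\mathbb{E}_0(\mb\transpose\bSigma_0^{-2}\mb)$.

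It then remains to take $\mathbb{E}_\bz$. The distribution $p(z_{1:n}\mid K)$ in \eqref{eqn:latent_class_configuration} with $\beta=1$ is invariant under relabeling of the components, so the marginal law of each $n_k$ is identical; combined with $\sum_{k=1}^K n_k=n$ this forces $\mathbb{E}_\bz(n_k)=n/K$ and thus $\mathbb{E}_\bz(n_k+n_{k'})=2n/K$. Substituting and using the linearity of $\mathbb{E}_\bz$ and of the sum over $j$ yields exactly $\tau^4\,\mathbb{E}_0(\mb\transpose\bSigma_0^{-2}\mb)\,\frac{2n}{K}$. The only points demanding care are the independence of the two cluster-sums, which hinges on the partition blocks being disjoint, and the exchangeability of the component labels that delivers $\mathbb{E}_\bz(n_k)=n/K$; both are routine, so beyond this bookkeeping I do not anticipate a genuine obstacle.
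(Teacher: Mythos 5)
Your proposal is correct and takes essentially the same route as the paper's proof: the paper evaluates the inner integral for fixed $\bz$ by expanding the square and using the centering $\mathbb{E}_0\widehat m_j=0$ to get $(n_k+n_{k'})\mathbb{E}_0\left[(\widehat m_j)^2\right]$, which is your independent-centered-sums variance computation in slightly different clothing, and then obtains $\mathbb{E}_\bz(n_k+n_{k'})=2n/K$ by label symmetry (the paper routes this through conditioning on the Dirichlet weights, you through exchangeability of $p(z_{1:n}\mid K)$ — both valid). The final reassembly via $\sum_{j=1}^p\sigma_j^{-4}\bu_j\bu_j\transpose=\bSigma_0^{-2}$ is likewise identical.
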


\subsection*{Proof of Theorem \ref{thm:model_complexity}} 
\label{sub:proof_of_theorem_thm:model_complexity}
\begin{proof}
By Fubini's theorem we directly write
	\begin{align}
	&\mathbb{E}_0\left[\Pi(K>N|\by_1,\cdots,\by_n)\right]\nonumber\\
	&\qquad=\int_{\mathbb{R}^p}\cdots\int_{\mathbb{R}^p}\sum_{K=N+1}^\infty\frac{\mathbb{E}_{\bz}\left[p(\by_{1:n}|z_{1:n},K)\right]\pi(K)}{p(\by_{1:n})}\prod_{i=1}^n\int_{\mathbb{R}^p}\phi_{\bSigma{}_0}(\by_i-\mb_i)F_0(\mathrm{d}\mb_i)\mathrm{d}\by_1\cdots\mathrm{d}\by_n\nonumber\\
	&\qquad=\sum_{K=N+1}^\infty\pi(K)
	\mathbb{E}_\bz\left\{\int_{\mathbb{R}^{p\times n}}\left[\int_{\mathbb{R}^{p\times n}}\frac{p(\by_{1:n}|z_{1:n},K)}{p(\by_{1:n})}\prod_{i=1}^n\phi(\by_i|\mb_i,\bSigma{}_0)\mathrm{d}\by_{1:n}\right] F_0^n(\mathrm{d}\mb_{1:n})\right\}\nonumber.	
	\end{align}
	We may without loss of generality assume that $h$ is of the form of \eqref{eqn:repulsive_function_1}, since the following proof directly applies to the case where $h$ is of the form \eqref{eqn:repulsive_function_2}. By Lemma \ref{lemma:conditional_density_upper_bound} the quantity in the square bracket is upper bounded by 
	\begin{align}
	&C(\lambda)\exp\left[\frac{n\tau^2}{2}\mathrm{tr}\left(\bSigma_0^{-1}\right)\right]\frac{\left(1+\delta(\tau)g_0^{\frac{2}{3}}\right)^{\frac{3}{2}}}{Z_K\Omega(\mathrm{e}^\lambda - 1)}\nonumber\\
	&\quad\times{K\choose 2}^{-1}\sum_{1\leq k<k'\leq K}
	g\left(\left[\sum_{j=1}^p\frac{1}{\kappa_j^4}\left(\one_{n_k}\transpose{}\widehat\mb_{(k)j}-\one_{n_{k'}}\transpose{}\widehat\mb_{(k')j}\right)^2+2p\tau^2\right]^{\frac{1}{2}}\right)
	\nonumber,
	\end{align}
	where $C(\lambda)$ is a constant only depending on $\lambda$. Observing that $x\mapsto g(\sqrt{x})$ is concave, we directly obtain by Jensen's inequality and Lemma \ref{lemma:expected_value_equation} that 
	\begin{align}
	&\mathbb{E}_0\left[\Pi(K>N|\by_1,\cdots,\by_n)\right]\nonumber\\
	&\quad\leq C_1(\lambda)\exp\left[\frac{n\tau^2}{2}\mathrm{tr}\left(\bSigma{}_0^{-1}\right)\right]\sum_{K=N+1}^\infty\frac{\lambda^K}{(\mathrm{e}^\lambda - 1)K!}\left(1+\delta(\tau)g_0^{\frac{2}{3}}\right)^{\frac{3}{2}}{K\choose 2}^{-1}
	\nonumber\\
	&\qquad\times \sum_{k<k'}g\left(
	\left\{2p\tau^2+\sum_{j=1}^p\frac{1}{\kappa_j^4}\mathbb{E}_\bz\left[\iint\left(\one_{n_k}\transpose{}\widehat\mb_{(k)j}-\one_{n_{k'}}\transpose{}\widehat\mb_{(k')j}\right)^2F_0^{(n_k+n_{k'})}\left(\mathrm{d}\mb_{(k)j}\mathrm{d}\mb_{(k')j}\right)\right]\right\}^{\frac{1}{2}}
	\right)\nonumber\\
	&\quad\leq C_1(\lambda)\exp\left[\frac{n\tau^2}{2}\mathrm{tr}\left(\bSigma{}_0^{-1}\right)\right]\sum_{K=N+1}^\infty\frac{\lambda^K}{(\mathrm{e}^\lambda - 1)K!}\left(1+\delta(\tau)g_0^{\frac{2}{3}}\right)^{\frac{3}{2}}\nonumber\\
	&\qquad\times {K\choose 2}^{-1}\sum_{1\leq k<k'\leq K}g\left(
	\left[2p\tau^2+\frac{2n}{N}\tau^4\mathbb{E}_0\left(\mb\transpose{}\bSigma_0^{-2}\mb\right)\right]^{\frac{1}{2}}
	\right)\nonumber.
	\end{align}
	The proof is completed by observing the fact that
	\begin{align}
	& g\left(
	\left[2p\tau^2+\frac{2n}{N}\tau^4\mathbb{E}_0\left(\mb\transpose{}\bSigma_0^{-2}\mb\right)\right]^{\frac{1}{2}}
	\right)
	= \frac{\left[2p\tau^2+\frac{2n}{N}\tau^4\mathbb{E}_0\left(\mb\transpose{}\bSigma_0^{-2}\mb\right)\right]^{\frac{1}{2}}}{g_0+\left[2p\tau^2+\frac{2n}{N}\tau^4\mathbb{E}_0\left(\mb\transpose{}\bSigma_0^{-2}\mb\right)\right]^{\frac{1}{2}}}\nonumber.
	\end{align}
\end{proof}
\cbend

\subsection*{Proof of Corollary \ref{corr:model_complexity}} 
\label{sub:proof_of_corollary_corr:model_complexity}
\begin{proof}
By \textbf{Theorem 5}, we see that for any sufficiently large $N$, 
\begin{align}
\mathbb{E}_0\left[\Pi\left(K>N|\by_1,\cdots,\by_n\right)\right]&\lesssim \exp\left[\frac{n\tau^2}{2}\mathrm{tr}\left(\bSigma_0^{-1}\right)\right]\sum_{K=N+1}^\infty\frac{\lambda^K}{(\mathrm{e}^\lambda - 1)K!}\nonumber\\
&\leq\exp\left[\frac{n\tau^2}{2}\mathrm{tr}\left(\bSigma_0^{-1}\right)+\frac{1}{2}N\log N\right]\nonumber.
\end{align}
Since $\liminf_{n\to\infty}K_n/n>0$, then there exists some $\delta_0>0$, such that $K_n\geq\delta_0n$ for sufficiently large $n$. Hence for sufficiently large $n$, 
\begin{eqnarray}
\frac{n\tau^2}{2}\mathrm{tr}(\bSigma_0^{-1})-\frac{1}{2}K_n\log K_n\leq\frac{n\tau^2}{2}\mathrm{tr}(\bSigma_0^{-1})-\frac{\delta_0}{2} n(\log \delta_0 n)\to-\infty\nonumber
\end{eqnarray}
as $n\to\infty$. Since
\begin{eqnarray}
\limsup_{n\to\infty}\mathbb{E}\left[\Pi(K\geq K_n\mid\by_1,\cdots,\by_n)\right]\leq\lim_{n\to\infty}\exp\left[\frac{n\tau^2}{2}\mathrm{tr}(\bSigma_0^{-1})-\frac{1}{2}K_n\log K_n\right]=0.\nonumber
\end{eqnarray}
By Markov's inequality, for any $\epsilon>0$, 
\begin{eqnarray}
\mathbb{P}_0\left[\Pi(K\geq K_n\mid\by_1,\cdots,\by_n)>\epsilon\right]\leq\frac{1}{\epsilon}\mathbb{E}\left[\Pi(K\geq K_n\mid\by_1,\cdots,\by_n)\right]\to 0\nonumber
\end{eqnarray}
as $n\to\infty$, and the proof is thus completed.
\end{proof}

\subsection*{Proofs of Preliminary Lemmas}

\begin{proof}[Proof of \textbf{Lemma \ref{lemma:marginal_likelihood_upper_bound}}]
	Directly compute
	\begin{align}
	&p(\by_{1:n}| z_{1:n},K)\nonumber\\
	&\qquad=\frac{1}{Z_K}\int_{\mathbb{R}^p}\cdots\int_{\mathbb{R}^p}h(\bmu_1,\cdots,\bmu_K)\prod_{k=1}^K\left[\prod_{i:z_i=k}\phi_{\bSigma_0}(\by_i-\bmu_k)\right]p(\bmu_k)\mathrm{d}\bmu_1\cdots\mathrm{d}\bmu_K\nonumber\\
	&\qquad=\frac{1}{Z_K}\int_{\mathbb{R}^p}\cdots\int_{\mathbb{R}^p}h(\bmu_1,\cdots,\bmu_K)\prod_{j=1}^p\left\{\prod_{k=1}^K\left[\prod_{i:z_i=k}\phi_{\sigma_j}(\widehat{y}_{ij}-\widehat{\mu}_{kj})\right]p(\widehat{\mu}_{kj})\right\}\mathrm{d}\widehat\bmu_1\cdots\mathrm{d}\widehat\bmu_K\nonumber\\	
	&\qquad=\frac{1}{Z_K}\prod_{j=1}^p\prod_{k=1}^K\phi\left(\widehat\by_{(k)j}\middle|\zero_{n_k},\sigma_j^2\eye_{n_k}+\tau^2\one_{n_k}\one_{n_k}\transpose{}\right)\nonumber\\
	&\qquad\quad\times \int_{\mathbb{R}^p}\cdots\int_{\mathbb{R}^p}h(\bmu_1,\cdots,\bmu_K)\prod_{j=1}^p\prod_{k=1}^K\phi\left(\widehat\mu_{kj}\middle|\frac{\one_{n_k}\transpose{}\widehat\by_{(k)j}}{n_k+\kappa_j^2},\frac{\sigma_j^2}{n_k+\kappa_j^2}\right)\mathrm{d}\bmu_1\cdots\mathrm{d}\bmu_K\nonumber\\
	&\qquad = \frac{1}{Z_K}\prod_{j=1}^p\prod_{k=1}^K\phi\left(\widehat\by_{(k)j}\middle|\zero_{n_k},\sigma_j^2\eye_{n_k}+\tau^2\one_{n_k}\one_{n_k}\transpose{}\right)\nonumber\\
	&\qquad\quad\times \int_{\mathbb{R}^p}\cdots\int_{\mathbb{R}^p}h(\widehat\bmu_1,\cdots,\widehat\bmu_K)\prod_{j=1}^p\prod_{k=1}^K\phi\left(\widehat\mu_{kj}\middle|\frac{\one_{n_k}\transpose{}\widehat\by_{(k)j}}{n_k+\kappa_j^2},\frac{\sigma_j^2}{n_k+\kappa_j^2}\right)\mathrm{d}\widehat\bmu_1\cdots\mathrm{d}\widehat\bmu_K\nonumber.
	\end{align}
	where 
	we have used the fact that $h$ is unitary invariant. 	For any $k\neq k'$, denote
	\begin{align}
	\Delta_{kk'j}=\widehat\mu_{kj}-\widehat\mu_{k'j},\quad x_{kk'j}=\frac{\one_{n_k}\transpose{}\widehat\by_{(k)j}}{n_k+\kappa_j^2}-\frac{\one_{n_{k'}}\transpose{}\widehat\by_{(k')j}}{n_{k'}+\kappa_j^2},\quad \sigma_{kk'j}^2=\sigma_j^2\left(\frac{1}{n_k+\kappa_j^2}+\frac{1}{n_{k'}+\kappa_j^2}\right)\nonumber.
	\end{align}
	\begin{itemize}
		\item Suppose $h$ is of the form \eqref{eqn:repulsive_function_1}. Since $K\geq 3$, then $K(K-1)/2\geq K$ and hence by the geometric-algorithmic mean inequality
	\begin{align}
	h(\widehat\bmu_1,\cdots,\widehat\bmu_K)&
	=\left[\prod_{1\leq k<k'\leq K}\left(\frac{\|\widehat\bmu_k-\widehat\bmu_{k'}\|}{g_0+\|\widehat\bmu_k-\widehat\bmu_{k'}\|}\right)\right]^{\frac{1}{K}}\nonumber\\
	&\leq \left[\prod_{1\leq k<k'\leq K}\left(\frac{\|\widehat\bmu_k-\widehat\bmu_{k'}\|}{g_0+\|\widehat\bmu_k-\widehat\bmu_{k'}\|}\right)\right]^{\frac{2}{K(K-1)}}
	\nonumber\\
	&\leq{K\choose 2}^{-1}\sum_{k<k'}g\left(\|\widehat\bmu_k-\widehat\bmu_{k'}\|\right)\nonumber.
	\end{align}
	Notice that $g$ is concave, it follows by Jensen's inequality that
	\begin{align}
	&\int_{\mathbb{R}^p}\cdots\int_{\mathbb{R}^p}h\left(\widehat\bmu_1,\cdots,\widehat\bmu_K\right)\prod_{j=1}^p\prod_{k=1}^K\phi\left(\widehat\mu_{kj}\middle|\frac{\one_{n_k}\transpose{}\widehat\by_{(k)j}}{n_k+\kappa_j^2},\frac{\sigma_j^2}{n_k+\kappa_j^2}\right)\mathrm{d}\widehat\bmu_1\cdots\mathrm{d}\widehat\bmu_K\nonumber\\
	&\quad\leq{K\choose 2}^{-1}\sum_{k<k'}\int_{\mathbb{R}^p}\cdots \int_{\mathbb{R}^p}g(\|\bmu_k-\bmu_{k'}\|)\prod_{j=1}^p\prod_{k=1}^K\phi\left(\widehat\mu_{kj}\middle|\frac{\one_{n_k}\transpose{}\widehat\by_{(k)j}}{n_k+\kappa^2_j},\frac{\sigma_j^2}{n_k+\kappa_j^2}\right)\mathrm{d}\widehat\bmu_1\cdots\mathrm{d}\widehat\bmu_K\nonumber\\
	&\quad={K\choose 2}^{-1}\sum_{k<k'}\int_{\mathbb{R}^p}\cdots \int_{\mathbb{R}^p}g\left(\left(\sum_{j=1}^p\Delta_{kk'j}^2\right)^{\frac{1}{2}}\right)\prod_{j=1}^p\prod_{k=1}^K\phi\left(\widehat\mu_{kj}\middle|\frac{\one_{n_k}\transpose{}\widehat\by_{(k)j}}{n_k+\kappa^2_j},\frac{\sigma_j^2}{n_k+\kappa_j^2}\right)\mathrm{d}\widehat\bmu_1\cdots\mathrm{d}\widehat\bmu_K\nonumber\\
	&\quad\leq {K\choose 2}^{-1}\sum_{1\leq k<k'\leq K}g\left(\left[\sum_{j=1}^p\int_{\mathbb{R}}\Delta_{kk'j}^2\phi\left(\Delta_{kk'j}\middle|
	x_{kk'j},\sigma_{kk'j}^2
	\right)\mathrm{d}\Delta_{kk'j}\right]^{\frac{1}{2}}
	\right)\nonumber\\
	&\quad= {K\choose 2}^{-1}\sum_{1\leq k<k'\leq K}g\left(\left[\sum_{j=1}^p\left(x_{kk'j}^2+\sigma_{kk'j}^2\right)\right]^{\frac{1}{2}}\right)\nonumber.
	\end{align}
	\item Suppose $h$ is of the form \eqref{eqn:repulsive_function_2}. Since for $K\geq 3$, the following holds:
	\[
	h\left(\widehat\bmu_1,\cdots,\widehat\bmu_K\right)
	=\min_{1\leq k<k'\leq K}g(\|\bmu_k-\bmu_{k'}\|)\leq {K\choose 2}^{-1}\sum_{1\leq k<k'\leq K}g(\|\bmu_k-\bmu_{k'}\|),
	\]
	then the above derivation directly applies. 
	\end{itemize}
	The proof is thus completed. 
\end{proof}

\clearpage
\begin{proof}[Proof of \textbf{Lemma \ref{lemma:marginal_likelihood_lower_bound}}]
First we obtain directly by Jensen's inequality that
\begin{align}
	\log p(\by_{1:n}|z_{1:n},K)
	&\geq -\log Z_K+\int_{\mathbb{R}^p}\cdots\int_{\mathbb{R}^p}\log h(\bmu_1,\cdots,\bmu_K)\prod_{k=1}^K p(\bmu_k)\mathrm{d}\bmu_1\cdots\mathrm{d}\bmu_K\nonumber\\
	&\quad +\sum_{k=1}^K\sum_{i:z_i=k}\int_{\mathbb{R}^p}\log\phi_{\bSigma_0}(\by_i-\bmu_k)p(\bmu_k)\mathrm{d}\bmu_k\nonumber.
	\end{align}
	Now compute
	\begin{align}
	&\sum_{k=1}^K\sum_{i:z_i=k}\int_{\mathbb{R}^p}\log\phi_{\bSigma_0}(\by_i-\bmu_k)p(\bmu_k)\mathrm{d}\bmu_k\nonumber\\
	&\qquad=\sum_{k=1}^K\sum_{i:z_i=k}\left[-\frac{1}{2}\log(\det(2\pi\bSigma_0))-\frac{1}{2}\by_i\transpose{}\bSigma_0^{-1}\by_i-\frac{1}{2}\mathrm{tr}(\bSigma_0^{-1}\tau^2\eye_p)\right]\nonumber\\
	&\qquad=\sum_{i=1}^n\log\phi_{\bSigma_0}(\by_i)-\frac{n\tau^2}{2}\mathrm{tr}\left(\bSigma{}_0^{-1}\right).\nonumber
	\end{align}
	\begin{itemize}
		\item Suppose $h$ is of the form \eqref{eqn:repulsive_function_1}. Then by Jensen's inequality we obtain
		\begin{align}
		&\int_{\mathbb{R}^p}\cdots\int_{\mathbb{R}^p}\log h(\bmu_1,\cdots,\bmu_K)\prod_{k=1}^K p(\bmu_k)\mathrm{d}\bmu_1\cdots\mathrm{d}\bmu_K\nonumber\\
		&\qquad=-\int_{\mathbb{R}^p}\cdots\int_{\mathbb{R}^p}\log\left(1+\max_{k\neq k'}\frac{g_0}{\|\bmu_k-\bmu_{k'}\|}\right) \prod_{k=1}^Kp(\bmu_k)\mathrm{d}\bmu_1\cdots\mathrm{d}\bmu_K\nonumber\\
		&\qquad\geq -\frac{3}{2}\int_{\mathbb{R}^p}\cdots\int_{\mathbb{R}^p}\log\left[\left(1+\sum_{1\leq k<k'\leq K}\frac{g_0}{\|\bmu_k-\bmu_{k'}\|}\right)^{\frac{2}{3}}\right]\prod_{k=1}^K p(\bmu_k)\mathrm{d}\bmu_1\cdots\mathrm{d}\bmu_K\nonumber\\
		&\qquad\geq -\frac{3}{2}\int_{\mathbb{R}^p}\cdots\int_{\mathbb{R}^p}\log\left[1+\sum_{1\leq k<k'\leq K}\left(\frac{g_0}{\|\bmu_k-\bmu_{k'}\|}\right)^{\frac{2}{3}}\right]\prod_{k=1}^K p(\bmu_k)\mathrm{d}\bmu_1\cdots\mathrm{d}\bmu_K\nonumber\\
		&\qquad\geq -\frac{3}{2}\log\left[1+\int_{\mathbb{R}^p}\cdots\int_{\mathbb{R}^p}\sum_{1\leq k<k'\leq K}\left(\frac{g_0}{\|\bmu_k-\bmu_{k'}\|}\right)^{\frac{2}{3}}\prod_{k=1}^K p(\bmu_k)\mathrm{d}\bmu_1\cdots\mathrm{d}\bmu_K\right]\nonumber\\
		&\qquad\geq -\frac{3}{2}\log\left[1+g_0^{\frac{2}{3}}K^2\frac{1}{2}
		\left(\frac{1}{2\tau^2}\right)^{\frac{1}{3}}
		\int_0^\infty\Delta^{-\frac{1}{3}}
		\frac{1}{2^{\frac{p}{2}}\Gamma\left(\frac{p}{2}\right)}\Delta^{\frac{p}{2}-1}\exp\left(-\frac{\Delta}{2}\right)
		\mathrm{d}\Delta
		\right]\nonumber\\
		&\qquad=-\frac{3}{2}\log\left(1+K^2\delta(\tau)g_0^{\frac{2}{3}}\right)\nonumber\\
		&\qquad\geq -\log\left[\left(1+\delta(\tau)g_0^{\frac{2}{3}}\right)^{\frac{3}{2}}K^3\right]\nonumber
		\end{align}
		where $\Delta\overset{\mathcal{L}}{=}\frac{1}{2\tau^2}\|\bmu_k-\bmu_{k'}\|^2\sim\chi^2(p)$, and $\delta(\tau)$ is a constant with $\delta(\tau)<1$ for sufficiently large $\tau$.
		Hence we can integrate $p(\by_{1:n}|z_{1:n},K)$ against $\pi(z_{1:n}|w_{1:K},K)$, $\pi(w_{1:K}|K)$ and obtain
		\begin{align}
		p(\by_{1:n}|K)\geq\frac{1}{Z_K}\exp\left[-\frac{n\tau^2}{2}\mathrm{tr}\left(\bSigma_0^{-1}\right)\right]\prod_{i=1}^n\phi_{\bSigma_0}(\by_i)\frac{1}{\left(1+g_0^{\frac{2}{3}}\delta(\tau)\right)^{\frac{3}{2}}K^3},
		\end{align}
		and hence by the fact that $\mathbb{E}\left(K^{-3}\right)\geq \left[\mathbb{E}(K^3)\right]^{-1}$,
		\begin{align}
		p(\by_{1:n})
		&\geq\exp\left[-\frac{n\tau^2}{2}\mathrm{tr}\left(\bSigma_0^{-1}\right)\right]\prod_{i=1}^n\phi_{\bSigma_0}(\by_i)
		\left[\frac{\Omega(\mathrm{e}^\lambda-1)}{\left(1+\delta(\tau)g_0^{\frac{2}{3}}\right)^{\frac{3}{2}}}\right]\left[\sum_{K=1}^\infty\frac{K^3\lambda^K}{(\mathrm{e}^\lambda-1)K!}\right]^{-1}\nonumber\\
		&=\exp\left[-\frac{n\tau^2}{2}\mathrm{tr}\left(\bSigma_0^{-1}\right)\right]\prod_{i=1}^n\phi_{\bSigma_0}(\by_i)
		\left[\frac{\Omega(\mathrm{e}^\lambda-1)^2}{\mathrm{e}^\lambda\left(1+\delta(\tau)g_0^{\frac{2}{3}}\right)^{\frac{3}{2}}}
		\right]\left[\sum_{K=0}^\infty\frac{K^3\mathrm{e}^{-\lambda}\lambda^K}{K!}\right]^{-1}
		\nonumber\\
		&=C(\lambda)\exp\left[-\frac{n\tau^2}{2}\mathrm{tr}\left(\bSigma_0^{-1}\right)\right]\prod_{i=1}^n\phi_{\bSigma_0}(\by_i)
		{\Omega(\mathrm{e}^\lambda-1)^2}{\left(1+\delta(\tau)g_0^{\frac{2}{3}}\right)^{-\frac{3}{2}}}\nonumber,
		\end{align}
		where $C(\lambda)$ only depends on $\lambda$, and $\delta(\tau)<1$ for sufficiently large $\tau$.
		\item Suppose $h$ is of the form \eqref{eqn:repulsive_function_2}. Then by Jensen's inequality we obtain for $K\geq 2$
		\begin{align}
		&\int_{\mathbb{R}^p}\cdots\int_{\mathbb{R}^p}\log h(\bmu_1,\cdots,\bmu_K)\prod_{k=1}^K p(\bmu_k)\mathrm{d}\bmu_1\cdots\mathrm{d}\bmu_K\nonumber\\
		&\qquad=-\frac{1}{K}\sum_{1\leq k<k'\leq K}\int_{\mathbb{R}^p}\cdots\int_{\mathbb{R}^p}\log\left(1+\frac{g_0}{\|\bmu_k-\bmu_{k'}\|}\right) \prod_{k=1}^Kp(\bmu_k)\mathrm{d}\bmu_1\cdots\mathrm{d}\bmu_K\\
		&\qquad= -\frac{1}{K}\sum_{k<k'}2\int_{\mathbb{R}^p}\cdots\int_{\mathbb{R}^p}\log\left(1+\frac{g_0}{\|\bmu_k-\bmu_{k'}\|}\right)^{\frac{1}{2}}\prod_{k=1}^K p(\bmu_k)\mathrm{d}\bmu_1\cdots\mathrm{d}\bmu_K\nonumber\\
		&\qquad\geq -\frac{1}{K}2\sum_{k<k'}\int_{\mathbb{R}^p}\cdots\int_{\mathbb{R}^p}\log\left[1+\left(\frac{g_0}{\|\bmu_k-\bmu_{k'}\|}\right)^{\frac{1}{2}}\right]\prod_{k=1}^Kp(\bmu_k)\mathrm{d}\bmu_1\cdots\mathrm{d}\bmu_K\nonumber\\
		&\qquad\geq -\frac{1}{K}2\sum_{k<k'}\log\left[1+\left(\frac{g_0^2}{2\tau^2}\right)^{\frac{1}{4}}\int_0^\infty\Delta^{-\frac{1}{4}}\frac{1}{2^{\frac{p}{2}}\Gamma\left(\frac{p}{2}\right)}\Delta^{\frac{p}{2}-1}\exp\left(-\frac{\Delta}{2}\right)
		\mathrm{d}\Delta\right]\nonumber\\
		&\qquad = -(K-1)\log\left(1+\sqrt{g_0}\delta(\tau)\right)\nonumber\\
		&\qquad \geq -K\log\left(1+\sqrt{g_0}\delta(\tau)\right)\nonumber,
		\end{align}
		where $\delta(\tau)<1$ when $\tau$ is sufficiently large.
		When $K=1$ the above inequality still holds. Hence we can integrate $p(\by_{1:n}|z_{1:n},K)$ against $\pi(z_{1:n}|w_{1:K},K)$, $\pi(w_{1:K}|K)$ and obtain
		\begin{align}
		p(\by_{1:n}|K)\geq\frac{1}{Z_K}\exp\left[-\frac{n\tau^2}{2}\mathrm{tr}\left(\bSigma{}_0^{-1}\right)\right]\prod_{i=1}^n\phi_{\bSigma_0}(\by_i)\exp\left[-\log\left(1+g_0^{\frac{1}{2}}\delta(\tau)\right)K\right],\nonumber
		\end{align}
		and hence,
		\begin{align}
		p(\by_{1:n})&\geq\exp\left[-\frac{n\tau^2}{2}\mathrm{tr}\left(\bSigma{}_0^{-1}\right)\right]\prod_{i=1}^n\phi_{\bSigma_0}(\by_i)\sum_{K=1}^\infty\exp\left[-\log\left(1+g_0^{\frac{1}{2}}\delta(\tau)\right)K\right]{\pi(K)}\nonumber\\
		&=\exp\left[-\frac{n\tau^2}{2}\mathrm{tr}\left(\bSigma{}_0^{-1}\right)\right]\prod_{i=1}^n\phi_{\bSigma_0}(\by_i)\Omega\mathrm{e}^\lambda\sum_{K=1}^\infty\exp\left[-\log\left(1+g_0^{\frac{1}{2}}\delta(\tau)\right)K\right]\frac{\mathrm{e}^{-\lambda}\lambda^K}{K!}\nonumber\\
		&=\exp\left[-\frac{n\tau^2}{2}\mathrm{tr}\left(\bSigma{}_0^{-1}\right)\right])\prod_{i=1}^n\phi_{\bSigma_0}(\by_i)\Omega\mathrm{e}^\lambda\left[\exp\left(-\frac{\lambda \sqrt{g_0}\delta(\tau)}{1+\sqrt{g_0}\delta(\tau)}\right)-\mathrm{e}^{-\lambda}\right]
		\nonumber\\
		&\geq \exp\left[-\frac{n\tau^2}{2}\mathrm{tr}\left(\bSigma{}_0^{-1}\right)\right]\prod_{i=1}^n\phi_{\bSigma_0}(\by_i)\left(\frac{\Omega(\mathrm{e}^\lambda - 1)}{C(\lambda)(1+\delta(\tau)\sqrt{g_0})}\right)\nonumber
		\end{align}
		for some constant $C(\lambda)$ that depends on $\lambda$ only, 
		where the last inequality is due to the mean-value theorem. 
	\end{itemize}
	The proof is thus completed. 
\end{proof}

\begin{proof}[Proof of \textbf{Lemma \ref{lemma:conditional_density_upper_bound}}]
	Suppose $h$ is of the form \eqref{eqn:repulsive_function_1}. Then by Lemma \ref{lemma:marginal_likelihood_upper_bound} and Lemma \ref{lemma:marginal_likelihood_lower_bound} we can write 
	\begin{align}
	&\frac{p(\by_{1:n}|z_{1:n},K)}{p(\by_{1:n})}\prod_{i=1}^n\phi\left(\by_i\middle|\mb_i,\bSigma_0\right)\nonumber\\
	&\qquad
	\leq C(\lambda)
	\exp\left[\frac{n\tau^2}{2}\mathrm{tr}\left(\bSigma{}_0^{-1}\right)\right]\frac{\left(1+\delta(\tau)g_0^{\frac{2}{3}}\right)^{\frac{3}{2}}}{\Omega Z_K(\mathrm{e}^\lambda - 1)}\nonumber\\
	&\qquad\quad\times {K\choose 2}^{-1}\sum_{k<k'} g\left(\left[\sum_{j=1}^p
	\left(
	\frac{\one_{n_k}\transpose{}\widehat\by_{(k)j}}{n_k+\kappa_j^2}-\frac{\one_{n_{k'}}\transpose{}\widehat\by_{(k')j}}{n_{k'}+\kappa_j^2}
	\right)^2
	+\sum_{j=1}^p\sigma_j^2\left(\frac{1}{n_k+\kappa_j^2}+\frac{1}{n_{k'}+\kappa_j^2}\right)\right]^{\frac{1}{2}}\right)\nonumber
	\\
	&\qquad\quad\times \prod_{j=1}^p\prod_{k=1}^K\frac{\phi\left(\widehat\by_{(k)j}\middle|\zero_{n_k},\sigma^2_j\eye_{n_k}+\tau^2\one_{n_k}\one_{n_k}\transpose{}\right)\phi(\widehat\by_{(k)j}|\widehat\mb_{(k)j},\sigma^2_j\eye_{n_k})}{\phi(\widehat\by_{(k)j}|\zero_{n_k},\sigma^2_j\eye_{n_k})}\nonumber.
	\end{align}
	Simple algebra shows that
	\begin{align}
	&\prod_{j=1}^p\prod_{k=1}^K\frac{\phi\left(\widehat\by_{(k)j}\middle|\zero_{n_k},\sigma^2_j\eye_{n_k}+\tau^2\one_{n_k}\one_{n_k}\transpose{}\right)\phi(\widehat\by_{(k)j}|\widehat\mb_{(k)j},\sigma^2_j\eye_{n_k})}{\phi(\widehat\by_{(k)j}|\zero_{n_k},\sigma^2_j\eye_{n_k})}\nonumber\\
	&\qquad =\prod_{j=1}^p\prod_{k=1}^K\phi\left(\widehat\by_{(k)j}\middle|\left(\eye_{n_k}+\frac{\tau^2}{\sigma_j^2}\one_{n_k}\one_{n_k}\transpose{}\right)
	\widehat\mb_{(k)j},\sigma_j^2\eye_{n_k}+\tau^2\one_{n_k}\one_{n_k}\transpose{}\right)\exp\left[-\frac{(\one_{n_k}\transpose{}\widehat\mb_{(k)j})^2}{2\sigma^2_j(n_k+\kappa_j^2)}\right]\nonumber\\
	&\qquad\leq \prod_{j=1}^p\prod_{k=1}^K\phi\left(\widehat\by_{(k)j}\middle|\left(\eye_{n_k}+\frac{\tau^2}{\sigma_j^2}\one_{n_k}\one_{n_k}\transpose{}\right)
	\widehat\mb_{(k)j},\sigma_j^2\eye_{n_k}+\tau^2\one_{n_k}\one_{n_k}\transpose{}\right)\nonumber.
	\end{align}
	It follows by Jensen's inequality that
	\begin{align}
	&\int_{\mathbb{R}^{n}}\cdots\int_{\mathbb{R}^n}\frac{p(\by_{1:n}|z_{1:n},K)}{p(\by_{1:n})}\prod_{i=1}^n\phi(\by_i|\mb_i,\bSigma_0)\mathrm{d}\by_1\cdots\mathrm{d}\by_n\nonumber\\
	&\qquad\leq C(\lambda)
	\exp\left[\frac{n\tau^2}{2}\mathrm{tr}\left(\bSigma_0^{-1}\right)\right]\frac{\left(1+\delta(\tau)g_0^{\frac{2}{3}}\right)^{\frac{3}{2}}}{\Omega Z_K(\mathrm{e}^\lambda - 1)}
	\nonumber\\
	&\qquad\quad \times {K\choose 2}^{-1}\sum_{k<k'}\int_{\mathbb{R}}
	g\left(\left[\sum_{j=1}^p
	\left(
	\frac{\one_{n_k}\transpose{}\widehat\by_{(k)j}}{n_k+\kappa_j^2}-\frac{\one_{n_{k'}}\transpose{}\widehat\by_{(k')j}}{n_{k'}+\kappa_j^2}
	\right)^2
	+\sum_{j=1}^p\sigma_j^2\left(\frac{1}{n_k+\kappa_j^2}+\frac{1}{n_{k'}+\kappa_j^2}\right)\right]^{\frac{1}{2}}\right)\nonumber
	\\
	&\qquad\quad\times 
	\prod_{j=1}^p\phi\left(x_{kk'j}\middle|\frac{1}{\kappa_j^2}\left(\one_{n_k}\transpose{}\widehat\mb_{(k)j}-\one_{n_{k'}}\transpose{}\widehat\mb_{(k')}\right),\sigma^2_j\left(\frac{n_k}{\kappa_j^2(n_k+\kappa_j^2)}+\frac{n_{k'}}{\kappa_j^2(n_{k'}+\kappa_j^2)}\right)\right)\mathrm{d}x_{kk'j}\nonumber\\
	&\qquad \leq 
	C(\lambda)
	\exp\left[\frac{n\tau^2}{2}\mathrm{tr}\left(\bSigma{}_0^{-1}\right)\right]\frac{\left(1+\delta(\tau)g_0^{\frac{2}{3}}\right)^{\frac{3}{2}}}{\Omega Z_K(\mathrm{e}^\lambda - 1)}\nonumber\\
	&\qquad\quad\times {K\choose 2}^{-1}\sum_{k<k'}
	g\left(\left[\sum_{j=1}^p\frac{1}{\kappa_j^4}\left(\one_{n_k}\transpose{}\widehat{\mb}_{(k)j}-\one_{n_{k'}}\transpose{}\widehat{\mb}_{(k')j}\right)^2+2p\tau^2\right]^{\frac{1}{2}}\right)
	\end{align}
	where
	\[
	x_{kk'j}=\frac{\one_{n_k}\transpose{}\widehat\by_{(k)j}}{n_k+\kappa_j^2}-\frac{\one_{n_{k'}}\transpose{}\widehat\by_{(k')j}}{n_{k'}+\kappa_j^2},\qquad\text{and}\qquad \sigma_{kk'j}^2=\sigma^2_j\left(\frac{1}{n_k+\kappa_j^2}+\frac{1}{n_{k'}+\kappa_j^2}\right).
	\]
	The case where $h$ is of the form \eqref{eqn:repulsive_function_2} can be proved in the exactly same fashion. The proof is thus completed.
\end{proof}

\begin{proof}[Proof of \textbf{Lemma \ref{lemma:expected_value_equation}}]
	First for each fixed $j$ we write
	\begin{align}
	&\int_{\mathbb{R}^{p\times n}}\int_{\mathbb{R}^{p\times n}}\left(\one_{n_k}\transpose{}\widehat\mb_{(k)j}-\one_{n_{k'}}\transpose{}\widehat\mb_{(k')j}\right)^2F_0^{(n_k+n_{k'})}(\mathrm{d}\mb_{(k)}\mathrm{d}\mb_{(k')})\nonumber\\
	&\qquad = \iint\left[\left(\sum_{i:z_i=k}\widehat m_{ij}\right)^2+\left(\sum_{i:z_i=k'}\widehat m_{ij}\right)^2-2\left(\sum_{i:z_i=k}\widehat m_{ij}\right)\left(\sum_{i:z_i=k'}\widehat m_{ij}\right)\right]F_0^{(n_k+n_{k'})}(\mathrm{d}\mb_{(k)}\mathrm{d}\mb_{(k')})\nonumber\\
	&\qquad = \left(n_k\mathbb{E}_0\widehat m_j\right)^2+n_k\mathrm{Var}_0(\widehat m_j)+\left(n_{k'}\mathbb{E}_0\widehat m_j\right)^2+n_{k'}\mathrm{Var}_0(\widehat m_j)-2n_kn_{k'}\left(\mathbb{E}_0\widehat m_j\right)^2\nonumber\\
	&\qquad = \mathbb{E}_0\left[(\widehat m_j)^2\right](n_k+n_k').\nonumber
	\end{align}
	Writting $\bU=\left(\bu_1,\cdots,\bu_p\right)$ 
	where $\bu_j\in\mathbb{R}^p$, it follows that
	\begin{align}
	&\sum_{j=1}^p\frac{1}{\kappa_j^4}\mathbb{E}_\bz\left[\iint\left(\one_{n_k}\transpose{}\widehat\mb_{(k)j}-\one_{n_{k'}}\transpose{}\widehat\mb_{(k')j}\right)^2F_0^{(n_k+n_{k'})}(\mathrm{d}\mb_{(k)}\mathrm{d}\mb_{(k')})\right]\nonumber\\
	&\qquad =\mathbb{E}_{\bw_{1:K}}\left[\mathbb{E}_{\bz_{1:n}}\left( n_k+n_{k'} \mid \bw_{1:K}\right)\right]
	\sum_{j=1}^p\frac{\tau^4}{\sigma_j^4}\mathbb{E}_0\left(\left[\bu_j\transpose{}\mb\right]^2\right)\nonumber\\
	&\qquad =\frac{2n}{K}\tau^4\mathbb{E}_0\left(\sum_{j=1}^p\frac{1}{\sigma_j^4}\mb\transpose{}\bu_j\bu_j\transpose{}\mb\nonumber\right)\nonumber\\
	&\qquad = \tau^4\frac{2n}{K}\mathbb{E}_0\left[\mb\transpose{}\left(\sum_{j=1}^p\frac{1}{\sigma_j^4}\bu_j\bu_j\transpose{}\right)\mb\right]\nonumber\\
	&\qquad = \tau^4\mathbb{E}_0\left(\mb\transpose{}\bSigma{}_0^{-2}\mb\right)\frac{2n}{K}.\nonumber
	\end{align}
\end{proof}



\section{Proofs of Auxiliary Results for Lemma \ref{lemma:KL_ball_lower_bound}} 
\label{sec:proofs_of_auxiliary_results_in_appendix_e}
\subsection*{Proof of Lemma \ref{lemma:discretization}}
\begin{proof}
	The proofs are similar to those in {Lemma 3.1}, {Lemma 3.2}, and {Lemma 3.4} in \cite{ghosal2001entropies}. Let $M=\max\left\{2a,\sqrt{8}\usigma\left(\log\frac{1}{\epsilon}\right)^{\frac{1}{2}}\right\}$, and let $\epsilon$ be sufficiently small such that $M>2a$. Then
	\begin{eqnarray}
		\sup_{\|\by\|\geq M}|f_F(\by)-f_{F^\star}(\by)|\leq 2\phi_{\bSigma}(M-a)\leq 2\phi_{\bSigma}(M/2)\lesssim \exp(-M^2/(8\usigma^2))= \epsilon,\nonumber
	\end{eqnarray}
	so that it suffices to consider $\|\by\|\leq M$. Denote $Q_{\bSigma}(\by)=\by\transpose\bSigma^{-1}\by$. By Taylor's expansion we have
	\begin{eqnarray}
		\left|\phi_{\bSigma}(\by-\bmu)-\sum_{j=1}^{J-1}\frac{(-1)^j}{2^j(2\pi)^{\frac{p}{2}}}\det({\bSigma})^{-\frac{1}{2}}Q^j_{\bSigma}(\by-\bmu)\right|\lesssim\left(\frac{\mathrm{e}/2Q_{\bSigma}(\by-\bmu)}{J}\right)^J.\nonumber
	\end{eqnarray}
	Hence for any probability distribution $F^\star$ on $\{\bmu:\|\bmu\|\leq a\}\times\calS$, a standard argument of triangle inequality yields
	\begin{eqnarray}\label{ineq:sup_norm_lessthan_M}
		\sup_{\|\by\|\leq M}|f_F(\by)-f_{F^\star}(\by)|
		&\leq&\sup_{\|\by\|\leq M}\left|\sum_{j=1}^{J-1}\frac{(-1)^j}{2^j(2\pi)^{\frac{p}{2}}}\int\det(\bSigma)^{-\frac{1}{2}}Q^j_{\bSigma}(\by-\bmu)\left(\mathrm{d}F-\mathrm{d}F^\star\right)\right|\nonumber\\
		&&+2\sup_{\|\by\|\leq M,\|\bmu\|\leq a}\left|
		\phi_{\bSigma}(\by-\bmu)-\sum_{j=1}^{J-1}\frac{(-1)^j}{2^j(2\pi)^{\frac{p}{2}}}\det(\bSigma)^{-\frac{1}{2}}Q^j_{\bSigma}(\by-\bmu)
		\right|\nonumber\\
		&\leq&
		\sup_{\|\by\|\leq M}\left|\sum_{j=1}^{J-1}\frac{(-1)^j}{2^j(2\pi)^{\frac{p}{2}}}\int \det(\bSigma)^{-\frac{1}{2}}Q^j_{\bSigma}(\by-\bmu)\left(\mathrm{d}F-\mathrm{d}F^\star\right)\right|\nonumber\\
		&&+2c_1\sup_{\|\by\|\leq M,\|\bmu\|\leq a}\left(\frac{\mathrm{e}/2Q_{\bSigma}(\by-\bmu)}{J}\right)^J,
	\end{eqnarray}
	for some constant $c_1>0$. Suppose $\bU=\eye_p$. Expanding $Q^j_{\bSigma}(\by-\bmu)$ by multinomial theorem:
	\begin{eqnarray}
	Q_{\bSigma}^j(\by-\bmu)&=&
	\sum_{\substack{r+s+t=j\\ r_1+\cdots+r_p=r\\ t_1+\cdots+t_p=t\\ s_1+\cdots+s_p=s}}
	\left({j\choose r_1\cdots r_p,s_1\cdots s_p, t_1\cdots t_p}\prod_{i=1}^py_i^{2r_i}\right)\left(\prod_{i=1}^p\frac{\mu_i^{s_i+2t_i}}{\lambda_i^{r_i+s_i+t_i}(\bSigma)}\right).\nonumber
	\end{eqnarray}
	In order that the first term on the RHS of \eqref{ineq:sup_norm_lessthan_M} vanishes, it is sufficient that 
	$$
	\int \det(\bSigma)^{-\frac{1}{2}}Q_{\bSigma}^j(\by-\bmu)(\mathrm{d}F-\mathrm{d}F^\star)=0
	$$
	for all $j=0,1,\cdots,J-1$. By the multinomial expansion, a sufficient condition for the last display is that
	$$
	\int \det(\bSigma)^{-\frac{1}{2}}\prod_{i=1}^p\frac{\mu_i^{s_i+2t_i}}{\lambda_i^{r_i+s_i+t_i}(\bSigma)}\left(\mathrm{d}F'-\mathrm{d}F^\star\right)=0
	$$
	for all possible $r_i,s_i,t_i,i=1,\cdots,p$. 
	According to {Lemma A.1} in \cite{ghosal2001entropies}, ${F}^\star$ can be select to be a discrete distribution with at most $N\lesssim J^p(2J-1)^p+1\lesssim J^{2p}$ support points. For the case $\bU$ is not the identity matrix, the above argument can be applied with $y_i$ and $\mu_i$ replaced by $(\bU\transpose y)_i$ and $(\bU\transpose \bmu)_i$, respectively. 

	Now we focus on the selection of $J$. Notice that
	\begin{eqnarray}
		\sup_{\|\by\|\leq M,\|\bmu\|\leq a}Q_{\bSigma}(\by-\bmu)\lesssim
		\sup_{\|\by\|\leq M,\|\bmu\|\leq a}\|\by-\bmu\|^2\lesssim M^2\lesssim\left(\log\frac{1}{\epsilon}\right).\nonumber
	\end{eqnarray}
	Hence the second term on the RHS of \eqref{ineq:sup_norm_lessthan_M} is upper bounded by a constant multiple of $\left(\left(c_2\log\frac{1}{\epsilon}\right)/J\right)^J$
	for some constant $c_2>0$. Set $J=\lceil(1+c_2)\left(\log\frac{1}{\epsilon}\right)\rceil$. Then
	\begin{equation}
		\sup_{\|\by\|\leq M}|f_F(\by)-f_{F^\star}(\by)|\lesssim
		\left(\frac{\left(c_2\log\frac{1}{\epsilon}\right)}{J}\right)^J\lesssim \left(\frac{c_2}{1+c_2}\right)^{(1+c_2)\log(1/\epsilon)}=\epsilon^{(1+c)\log(1+1/c)}\leq\epsilon\nonumber
	\end{equation}
	for sufficiently small $\epsilon>0$, where the last inequality is due to the fact $(1+c)\log(1+1/c)$ decrease with $c$ and converges to $1$ as $c\to\infty$.  Hence the number $N$ of support points for discrete $F^\star$ such that $\|f_F-f_{F^\star}\|_\infty\lesssim\epsilon$ is of order $J^{2p}\propto\left(\log\frac{1}{\epsilon}\right)^{2p}$. 

	For the inequality regarding $L_1$ distance, notice that for $\|\by\|>T\geq 2a$, $f_F(\by)\lesssim \exp\left(-\|\by\|^2/8\usigma^2\right)$, so that
	\begin{eqnarray}\label{ineq:L1_versus_Linfinity}
		\|f_F-f_{F^\star}\|_1&\lesssim&\int_{\|\by\|>T}\exp\left(-\frac{\|\by\|^2}{8\usigma^2}\right)\mathrm{d}\by+\int_{\|\by\|<T}\|f_F-f_{F^\star}\|_\infty\mathrm{d}\by\nonumber\\
		&\lesssim&\exp\left(-\frac{T^2}{8\usigma^2}\right)+T^p\|f_F-f_{F^\star}\|_\infty.
	\end{eqnarray} 
	Now take
	$$
	T=\max\left\{2a,\usigma\sqrt{8\log\left(\frac{1}{\|f_F-f_{F^\star}\|_{\infty}}\right)}\right\}.
	$$
	It follows that the first term on the RHS of \eqref{ineq:L1_versus_Linfinity} is bounded by $\|f_F-f_{F^\star}\|_\infty\lesssim\epsilon$, while the second term is bounded by a multiple of
	$$
	\|f_F-f_{F^\star}\|_\infty\max\left\{a^p,\log\left(\frac{1}{\|f_F-f_{F^\star}\|_\infty}\right)^{\frac{p}{2}}\right\}\lesssim\epsilon\left(\log\frac{1}{\epsilon}\right)^{\frac{p}{2}}.
	$$
	Therefore, for sufficiently small $\epsilon>0$, $\|f_F-f_{F^\star}\|_1\lesssim\epsilon\left(\log\frac{1}{\epsilon}\right)^{\frac{p}{2}}$.
\end{proof}
\subsection*{Proof of Lemma \ref{lemma:discretization_grid}}
\begin{proof}
	First for a given $\epsilon$, obtain $F'$ by \textbf{Lemma \ref{lemma:discretization}} with at most $n\lesssim\left(\log\frac{1}{\epsilon}\right)^{\frac{1}{2}}$ support points. Write $F'=\sum_kw_k\delta_{(\bmu_k,\bSigma_k)}$. For each $k$, find $\bmu_k^\star\in\{\bmu:\bmu/(2\epsilon)\in\mathbb{Z}^p\},\bSigma^\star_k\in\{\bSigma:\lambda_j(\bSigma)/(2\epsilon)\in\mathbb{N}_+,j=1,\cdots,p\}$ such that $\|\bmu_k-\bmu_k^\star\|\lesssim\epsilon$ and $\|\bSigma_k-\bSigma_k^\star\|\lesssim\epsilon$. Furthermore the function class $\{(\bmu,\bSigma)\mapsto\phi(\by\mid\bmu,\bSigma)\}_{\by\in\mathbb{R}^p}$ indexed by $\by\in\mathbb{R}^p$ is uniformly Lipschitz continuous, since $\nabla_{\bmu}\phi_{\bSigma}(\by-\bmu)$ is uniformly bounded and $\bSigma\in\calS$ is compact. Therefore, by taking $F^\star=\sum_kw_k\delta_{(\bmu_k^\star,\bSigma_k^\star)}$, we have by the triangle inequality
	\begin{align}
		\|f_F-f_{F^\star}\|_\infty\leq&\|f_F-f_{F'}\|_\infty+\sum_{k=1}^Kw_k\|\phi_{\bSigma_k}(\by-\bmu_k)-\phi_{\bSigma^\star_k}(\by-\bmu_k^\star)\|_\infty\nonumber\\
		\lesssim &\epsilon+\sum_{k=1}^Kw_kL\left(\|\bmu_k-\bmu_k^\star\|+\|\bSigma_k-\bSigma_k^\star\|\right)		\lesssim\epsilon\nonumber
	\end{align}
	where $L$ is the (uniform) Lipschitz constant for the function class $\{(\bmu,\bSigma)\mapsto\phi_{\bSigma}(\by-\bmu)\}_{\by\in\mathbb{R}^p}$. Now applying the exactly same argument used in deriving \eqref{ineq:L1_versus_Linfinity} yields $\|f_F-f_{F^\star}\|_1\lesssim \epsilon\left(\log\frac{1}{\epsilon}\right)^{\frac{p}{2}}$. 
\end{proof}
\subsection*{Proof of Lemma \ref{lemma:KL_ball_versus_Hellinger_ball}}
\begin{proof}
	Since $f_0(\by)\leq \lsigma^{p}\phi_{\eye_p}(\mathbf{0})$, and
	$$
	f_F(\by)\geq \frac{1}{\usigma^p}\int_{\{\|\bmu\|\leq B\}}\phi_{\eye_p}\left(\frac{\by-\bmu}{\lsigma}\right)\mathrm{d}F\geq\frac{1}{2\usigma^p}\phi_{\eye_p}\left(\frac{\by(\|\by\|+B)}{\|\by\|\lsigma}\right),
	$$
	then we see that $f_0/f_F\lesssim \exp(b_1\|\by\|^2)$ for some constant $b_1>0$. Hence for sufficientl small $\delta>0$, 
	\begin{eqnarray}
		\int \left(\frac{f_0(\by)}{f_F(\by)}\right)^\delta f_0(\by)\mathrm{d}\by\lesssim 
		\int \int\exp(\delta b_1\|\by\|^2)\exp\left(-\frac{1}{2\lsigma^2}\|\by-\bmu\|^2\right)\mathrm{d}F_0\mathrm{d}\by<\infty.
	\end{eqnarray}
	The proof is completed by applying {Theorem 5} in \cite{wong1995probability}. 
\end{proof}
\subsection*{Proof of Lemma \ref{lemma:discrete_approximation}}
\begin{proof}
	Let ${E}_0=\left(\bigcup_k E_k\right)^c$. We estimate
	\begin{eqnarray}
		|f_F(\by)-f_{F^\star}(\by)|
		&\leq &\int_{E_0}\phi_{\bSigma}(\by-\bmu)\mathrm{d}F+\sum_{k=1}^N\int_{E_k}|\phi_{\bSigma}(\by-\bmu)-\phi_{\bSigma_k^\star}(\by-\bmu_k^\star)|\mathrm{d}F
		\nonumber\\
		&&+\sum_{k=1}^N\phi_{\bSigma^\star_k}(\by-\bmu_k^\star)|{P}_F(E_k)-w_k^\star|.
	\end{eqnarray}
	For $(\bmu,\bSigma)\in E_k$, we see that $\|\bmu-\bmu_k^\star\|\lesssim\epsilon$ and $|\lambda_j(\bSigma)-\lambda_j(\bSigma_k^\star)|\lesssim\epsilon$. Since eigenvalues of covariance matrices are bounded away from $0$ and $\infty$, we see that $
	\left|\sqrt{\lambda_j(\bSigma)}-\sqrt{\lambda_j(\bSigma_k^\star)}\right|\lesssim{\epsilon}/{\left|\sqrt{\lambda_j(\bSigma)}+\sqrt{\lambda_j(\bSigma_k^\star)}\right|}\lesssim\epsilon.
	$
	Hence by \eqref{ineq:Hellinger_distance_upper_bound} and the relation between Hellinger distance and $\|\cdot\|_1$, we have $\|\phi_{\bSigma}(\by-\bmu)-\phi_{\bSigma_k^\star}(\by-\bmu_k^\star)\|_1\lesssim \epsilon$ whenever $(\bmu,\bSigma)\in E_k$ for all $k$ and all sufficiently small $\epsilon$. Thus we obtain from Fubini's theorem that
	\begin{eqnarray}
		\|f_F-f_{F^\star}\|_1
		&\leq&
		\int_{E_0}\int \phi_{\bSigma}(\by-\bmu)\mathrm{d}\by\mathrm{d}F
		+\sum_{k=1}^N\int_{E_k}\|\phi_{\bSigma}(\by-\bmu)-\phi_{\bSigma_k^\star}(\by-\bmu_k^\star)\|_1\mathrm{d}F\nonumber\\
		&&+\sum_{k=1}^N|F(E_k)-w_k^\star|\int\phi_{\bSigma_k^\star}(\by-\bmu_k^\star)\mathrm{d}\by
		\nonumber\\
		&\lesssim & \left[\sum_{k=1}^Nw_k^\star-\sum_{k=1}^N F(E_k)\right]+\epsilon+\sum_{k=1}^N|F(E_k)-w_k^\star|
		\lesssim \epsilon+\sum_{k=1}^N|F(E_k)-w_k^\star|.\nonumber
	\end{eqnarray}
\end{proof}

\clearpage

\section{Derivation of the Generalized Urn Model} 
\label{sec:proof_of_the_generalized_urn_model}
As shown in \cite{miller2016mixture}, the marginal distribution of $\mathcal{C}_n$ with $K$ and $z=(z_1,\cdots,z_n)$ marginalized out is given by
\begin{eqnarray}
p(\mathcal{C}_n)=V_n(|\mathcal{C}_n|)\prod_{c\in\mathcal{C}_n}\frac{\Gamma(\beta+|c|)}{\Gamma(\beta)}
\end{eqnarray}
where
\begin{eqnarray}
V_n(t):=\sum_{K=t}^{\infty}\frac{\Gamma(K+1)\Gamma(\beta K+1)}{\Gamma(K-t+1)\Gamma(\beta K+n+1)}p(K). 
\end{eqnarray}
The following generalized Bayes rule is useful: If $p(\by\mid\btheta)=\phi(\by\mid\btheta)$ and $\btheta\sim\Pi$, then
\begin{eqnarray}\label{eqn:generalized_bayes_rule}
\Pi(\btheta\in A\mid\by)=\left.\int_A \phi(\by\mid\btheta)\Pi(\mathrm{d}\btheta)\right/\int \phi(\by\mid\btheta)\Pi(\mathrm{d}\btheta)\propto \int_A \phi(\by\mid\btheta)\Pi(\mathrm{d}\btheta).
\end{eqnarray}
\begin{proof}[Proof of \textbf{Theorem \ref{thm:urn_model}}]
	The restaurant process for the exchangeable partition model proposed by \cite{miller2016mixture} is given by
	\begin{eqnarray}
		\Pi(\mathcal{C}_n=\mathcal{C}_{n-1}\cup\{\{n\}\}\mid \mathcal{C}_{n-1})&\propto&\frac{V_n(\ell+1)}{V_n(\ell)}\beta\nonumber\\
		\Pi(\mathcal{C}_n=(\mathcal{C}_{n-1}\backslash \{c\})\cup\{c\cup\{n\}\}\mid \mathcal{C}_{n-1})&\propto&|c|+\beta\nonumber
	\end{eqnarray}
	where $|\mathcal{C}_{n-1}|=\ell$. Then for any measurable $A$, the following derivation using chain rule of conditional distributions is available
	\begin{align}
		&\Pi(\btheta_n\in A\mid\btheta_1,\cdots,\btheta_{n-1})\nonumber\\
		&\qquad=\sum_{\mathcal{C}_{n}}\Pi(\btheta_n\in A\mid\btheta_1,\cdots,\btheta_{n-1},\mathcal{C}_n)p(\mathcal{C}_{n}\mid\btheta_1,\cdots,\btheta_{n-1})\nonumber\\
		&\qquad=\sum_{\mathcal{C}_{n}}\Pi(\btheta_n\in A\mid\btheta_1,\cdots,\btheta_{n-1},\mathcal{C}_n)p(\mathcal{C}_{n}\mid\mathcal{C}_{n-1})\nonumber\\
		&\qquad\quad\propto \left[\frac{V_n(\ell+1)\beta}{V_n(\ell)}\right]\Pi(\btheta_n\in A\mid\btheta_1,\cdots,\btheta_{n-1},\mathcal{C}_n=\mathcal{C}_{n-1}\cup\{\{n\}\})\nonumber\\
		&\qquad\quad+\sum_{c\in\mathcal{C}_{n-1}}(|c|+\beta)\Pi(\btheta_n\in A\mid\btheta_1,\cdots,\btheta_{n-1},\mathcal{C}_n=(\mathcal{C}_{n-1}\backslash \{c\})\cup\{c\cup\{n\}\})\nonumber
	\end{align}
	Since $\Pi(\btheta_n\in A\mid\btheta_1,\cdots,\btheta_{n-1},\mathcal{C}_n=(\mathcal{C}_{n-1}\backslash \{c\})\cup\{c\cup\{n\}\})=\delta_{(\bgamma_c^\star,\bGamma_c^\star)}(A)$, we focus on deriving $\Pi(\btheta_n\in A\mid\btheta_1,\cdots,\btheta_{n-1},\mathcal{C}_n=\mathcal{C}_{n-1}\cup\{\{n\}\}).$	Since
	\begin{align}
		&\Pi(\btheta_n\in A\mid \btheta_1,\cdots,\btheta_{n-1},\mathcal{C}_n=\mathcal{C}_{n-1}\cup\{\{n\}\})\nonumber\\
		&\qquad=\sum_{K=\ell+1}^{\infty}\Pi(\btheta_n\in A\mid\btheta_1,\cdots,\btheta_{n-1},K)p(K\mid\mathcal{C}_n=\mathcal{C}_{n-1}\cup\{\{n\}\})\nonumber\\
		&\qquad=\sum_{K=\ell+1}^{\infty}\Pi(\btheta_n\in A\mid\bgamma_c^\star,\bGamma_c^\star,c\in\mathcal{C}_{n-1},K)p(K\mid \mathcal{C}_n=\mathcal{C}_{n-1}\cup\{\{n\}\})\nonumber.
	\end{align}
	Hence 
	\begin{align}
		&\Pi(\btheta_n\in A\mid\btheta_1,\cdots,\btheta_{n-1})\nonumber\\
		&\qquad\propto \left[\frac{V_n(t+1)\beta}{V_n(t)}\right]\sum_{K=\ell+1}^{\infty}
		p(K\mid \mathcal{C}_{n}=\mathcal{C}_{n-1}\cup\{\{n\}\})
		\Pi(\btheta_n\in A\mid\bgamma_c^\star,\bGamma_c^\star,c\in\mathcal{C}_{n-1},K)
		\nonumber\\
		&\qquad\quad+\sum_{c\in\mathcal{C}_{n-1}}(|c|+\beta)\delta_{(\bgamma_c^\star,\bGamma_c^\star)}(A),
	\end{align}
	and hence, by generalized Bayes rule \eqref{eqn:generalized_bayes_rule},
	\begin{eqnarray}
		\Pi(\btheta_n\in A\mid\by_n,\btheta_1,\cdots,\btheta_{n-1})
		&\propto &\left[\frac{V_n(t+1)\beta}{V_n(t)}\right]\sum_{K=\ell+1}^{\infty}p(K\mid \mathcal{C}_n=\mathcal{C}_{n-1}\cup\{\{n\}\})\times\nonumber\\
		&&\iint_A\phi(\by_n\mid\bgamma_n,\bGamma_n)\Pi(\mathrm{d}\bgamma_n\mathrm{d}\bGamma_n\mid\bgamma_c^\star,\bGamma_c^\star,c\in\mathcal{C}_{n-1},K)\nonumber\\
		&&+\sum_{c\in\mathcal{C}_{n-1}}(|c|+\beta)\delta_{(\bgamma_c^\star,\bGamma_c^\star)}(A)\phi(\by_n\mid\bgamma_c^\star,\bGamma_c^\star)\nonumber.
	\end{eqnarray}
	By definition, for any measurable $A\subset\mathbb{R}^p\times\calS$, when $K\geq \ell+1$, we have
	\begin{align}
	&\Pi(\btheta_n\in A\mid\bgamma_c^\star,\bGamma_c^\star,c\in\mathcal{C}_{n-1},K)\nonumber\\
	&\qquad\propto\iint_A
		\left[\int\cdots\int h_K(\bgamma_{c}^\star:c\in\mathcal{C}_{n-1}\cup\calC_\varnothing)\prod_{c\in\calC_\varnothing\backslash\{\underline{c}\}}p_\bmu(\bgamma_c^\star)\mathrm{d}\bgamma_{c}^\star\right]p_\bSigma(\bGamma_{\underline{c}}^\star)p_\bmu(\bgamma_{\underline{c}}^\star)\mathrm{d}\bgamma_{\underline{c}}^\star\mathrm{d}\bGamma_{\underline{c}}^\star.\nonumber\\
		&\qquad=\iint_A L_K(\bgamma_{\underline{c}}^\star)p_\bmu(\bgamma_{\underline{c}}^\star)p_{\bSigma}(\bGamma_{\underline{c}}^\star)\mathrm{d}\bgamma_{\underline{c}}^\star\mathrm{d}\bGamma_{\underline{c}}^\star\nonumber.
	\end{align}
	Normalizing the above conditional probability distribution yields
	\begin{eqnarray}
	\Pi(\btheta_n\in A\mid\bgamma_c^\star,\bGamma_c^\star,c\in\mathcal{C}_{n-1},K)=
	\frac{\displaystyle
	\iint_A L_K(\bgamma_{\underline{c}}^\star)p_\bmu(\bgamma_{\underline{c}}^\star)p_{\bSigma}(\bGamma_{\underline{c}}^\star)\mathrm{d}\bgamma_{\underline{c}}^\star\mathrm{d}\bGamma_{\underline{c}}^\star
	}{\displaystyle
	\iint L_K(\bgamma_{\underline{c}}^\star)p_\bmu(\bgamma_{\underline{c}}^\star)p_{\bSigma}(\bGamma_{\underline{c}}^\star)\mathrm{d}\bgamma_{\underline{c}}^\star\mathrm{d}\bGamma_{\underline{c}}^\star
	}.
	\end{eqnarray}
	Hence the generalized Bayes rule \eqref{eqn:generalized_bayes_rule} yields
	\begin{eqnarray}
	\Pi(\btheta_n\in A\mid\by_n,\bgamma_c^\star,\bGamma_c^\star,c\in\mathcal{C}_{n-1},K)=
	\frac{\displaystyle
	\iint_A\phi(\by_n\mid\bgamma_{\underline{c}}^\star,\bGamma_{\underline{c}}^\star) L_K(\bgamma_{\underline{c}}^\star)p_\bmu(\bgamma_{\underline{c}}^\star)p_{\bSigma}(\bGamma_{\underline{c}}^\star)\mathrm{d}\bgamma_{\underline{c}}^\star\mathrm{d}\bGamma_{\underline{c}}^\star
	}{\displaystyle
	\iint\phi(\by_n\mid\bgamma_{\underline{c}}^\star,\bGamma_{\underline{c}}^\star) L_K(\bgamma_{\underline{c}}^\star)p_\bmu(\bgamma_{\underline{c}}^\star)p_{\bSigma}(\bGamma_{\underline{c}}^\star)\mathrm{d}\bgamma_{\underline{c}}^\star\mathrm{d}\bGamma_{\underline{c}}^\star
	}.\nonumber
	\end{eqnarray}
	Notice that, again, by the generalized Bayes rule \eqref{eqn:generalized_bayes_rule}, we have
	\begin{align}
		&p(K\mid\mathcal{C}_n=\mathcal{C}_{n-1}\cup\{\{n\}\})\iint_A\phi(\by_n\mid\bgamma_n,\bGamma_n)\Pi(\mathrm{d}\bgamma_n\mathrm{d}\bGamma_n\mid\bgamma_c^\star,\bGamma_c^\star,c\in\mathcal{C}_{n-1},K)\nonumber\\
		&\qquad=p(K\mid\mathcal{C}_n=\mathcal{C}_{n-1}\cup\{\{n\}\})\iint\phi(\by_n\mid\bgamma_n,\bGamma_n)\Pi(\mathrm{d}\bgamma_n\mathrm{d}\bGamma_n\mid\bgamma_c^\star,\bGamma_c^\star,c\in\mathcal{C}_{n-1},K)\nonumber\\
		&\qquad\quad\times \Pi(\btheta_n\in A\mid\by_n,\bgamma_c^\star,\bGamma_c^\star,c\in\mathcal{C}_{n-1},K)\nonumber\\
		&\qquad=p(K\mid\mathcal{C}_n=\mathcal{C}_{n-1}\cup\{\{n\}\})
		\left[\frac{\displaystyle\iint\phi(\by_n\mid\bgamma_{\underline{c}}^\star,\bGamma_{\underline{c}}^\star)
		L_K(\bgamma_{\underline{c}}^\star)p_\bmu(\bgamma_{\underline{c}}^\star)p_{\bSigma}(\bGamma_{\underline{c}}^\star)\mathrm{d}\bgamma_{\underline{c}}^\star\mathrm{d}\bGamma_{\underline{c}}^\star}
		{\displaystyle
		\iint
		L_K(\bgamma_{\underline{c}}^\star)p_\bmu(\bgamma_{\underline{c}}^\star)p_{\bSigma}(\bGamma_{\underline{c}}^\star)\mathrm{d}\bgamma_{\underline{c}}^\star\mathrm{d}\bGamma_{\underline{c}}^\star
		}\right]
		\nonumber\\
		&\qquad\quad\times \Pi(\btheta_n\in A\mid\by_n,\bgamma_c^\star,\bGamma_c^\star,c\in\mathcal{C}_{n-1},K)\nonumber\\
		&\qquad=p(K\mid\mathcal{C}_n=\mathcal{C}_{n-1}\cup\{\{n\}\})\nonumber\\
		&\qquad\quad\times\left[\frac
		{\displaystyle\int\cdots\iint\phi(\by_n\mid\bgamma_{\underline{c}}^\star,\bGamma_{\underline{c}}^\star)
		h_K(\bgamma_c^\star:c\in\calC_{n-1}\cup\calC_{\varnothing})
		p_{\bSigma}(\bGamma_{\underline{c}}^\star)\mathrm{d}\bGamma_{\underline{c}}^\star\prod_{c\in\calC_\varnothing}p_\bmu(\bgamma_c^\star)\mathrm{d}\bgamma_{c}^\star}
		{\displaystyle
		\int\cdots\int
		h_K(\bgamma_c^\star:c\in\mathcal{C}_{n-1}\cup \mathcal{C}_\varnothing)\prod_{c\in\mathcal{C}_\varnothing}p_\bmu(\bgamma_c^\star)\mathrm{d}\bgamma_{c}^\star
		}\right]
		\nonumber\\
		&\qquad\quad\times  \left[
		\frac{\displaystyle
		\iint_A \phi(\by_n\mid\bgamma_{\underline{c}}^\star,\bGamma^\star_{\underline{c}})L_K(\bgamma_{\underline{c}}^\star)p_\bmu(\bgamma_{\underline{c}}^\star)\mathrm{d}\bgamma_{\underline{c}}^\star\mathrm{d}\bGamma_{\underline{c}}^\star
		}{\displaystyle
		\iint \phi(\by_n\mid\bgamma_{\underline{c}}^\star,\bGamma^\star_{\underline{c}})L_K(\bgamma_{\underline{c}}^\star)p_\bmu(\bgamma_{\underline{c}}^\star)\mathrm{d}\bgamma_{\underline{c}}^\star\mathrm{d}\bGamma_{\underline{c}}^\star
		}\right]\nonumber\\
		&\qquad=\alpha_K G_K(A).\nonumber
	\end{align}
	The proof is thus completed. 
\end{proof}
\begin{proof}[Proof of \textbf{Theorem \ref{thm:Gibbs_sampler_auxiliary_variable}}]
We first check the first assertion. By definition
\begin{align}
&\iint_A
\phi(\by_i|\bgamma_{\underline{c}}^\star,\bGamma_{\underline{c}}^\star)\widetilde{g}\left(\bgamma_{\underline{c}}^\star,\bGamma_{\underline{c}}^\star\middle|\calC_{-i},\btheta_{-i}\right)\mathrm{d}\bgamma_{\underline{c}}^\star\mathrm{d}\bGamma_{\underline{c}}^\star\nonumber\\
&\qquad =\sum_{K=|\calC_{-i}|+1}^\infty p(K|\calC=\calC_{-i}\cup\{\{i\}\})
\frac
{\displaystyle \iint_A \phi(\by_i|\bgamma_{\underline{c}}^\star,\bGamma_{\underline{c}}^\star) L_K(\bgamma_{\underline{c}}^\star)p_\bmu(\bgamma_{\underline{c}}^\star)p_\bSigma(\bGamma_{\underline{c}}^\star)\mathrm{d}\bgamma_{\underline{c}}^\star\mathrm{d}\bGamma_{\underline{c}}^\star}
{\displaystyle \int L_K(\bgamma_{\underline{c}}^\star)p_\bmu(\bgamma_{\underline{c}}^\star)\mathrm{d}\bgamma_{\underline{c}}^\star}\nonumber\\
&\qquad = \sum_{K=|\calC_{-i}|+1} m_Kp(K|\calC=\calC_{-i}\cup\{\{i\}\})
\frac
{\displaystyle \iint_A \phi(\by_n|\bgamma_{\underline{c}}^\star,\bGamma_{\underline{c}}^\star) L_K(\bgamma_{\underline{c}}^\star)p_\bmu(\bgamma_{\underline{c}}^\star)p_\bSigma(\bGamma_{\underline{c}}^\star)\mathrm{d}\bgamma_{\underline{c}}^\star\mathrm{d}\bGamma_{\underline{c}}^\star}
{\displaystyle \iint \phi(\by_n|\bgamma_{\underline{c}}^\star,\bGamma_{\underline{c}}^\star) L_K(\bgamma_{\underline{c}}^\star)p_\bmu(\bgamma_{\underline{c}}^\star)p_\bSigma(\bGamma_{\underline{c}}^\star)\mathrm{d}\bgamma_{\underline{c}}^\star\mathrm{d}\bGamma_{\underline{c}}^\star}\nonumber\\
&\qquad = \sum_{K=|\calC_{-i}|+1} \alpha_KG_K(A)\nonumber,
\end{align}
and hence we see that
\begin{align}
\iint \phi(\by_i|\bgamma_{\underline{c}}^\star,\bGamma_{\underline{c}}^\star)\widetilde{g}(\bgamma_{\underline{c}}^\star,\bGamma_{\underline{c}}^\star|\calC_{-i},\btheta_{-i})\mathrm{d}\bgamma_{\underline{c}}^\star\mathrm{d}\bGamma_{\underline{c}}^\star=\sum_{K=|\calC_{-i}|+1} \alpha_K\nonumber.
\end{align}
Given observation $\by_i$, denote
\begin{align}
\widetilde{G}\left(A\middle|\by_i,\calC_{-i},\btheta_{-i}\right)=
\frac
{\displaystyle\iint_A\phi(\by_i|\bgamma_{\underline{c}}^\star,\bGamma_{\underline{c}}^\star)\widetilde{g}(\bgamma_{\underline{c}}^\star,\bGamma_{\underline{c}}^\star|\by_i,\calC_{-i},\btheta_{-i})\mathrm{d}\bgamma_{\underline{c}}^\star\mathrm{d}\bGamma_{\underline{c}}^\star}
{\displaystyle\iint\phi(\by_i|\bgamma_{\underline{c}}^\star,\bGamma_{\underline{c}}^\star)\widetilde{g}(\bgamma_{\underline{c}}^\star,\bGamma_{\underline{c}}^\star|\by_i,\calC_{-i},\btheta_{-i})\mathrm{d}\bgamma_{\underline{c}}^\star\mathrm{d}\bGamma_{\underline{c}}^\star},\nonumber
\end{align}
and let $\widetilde{g}(\bgamma_{\underline{c}}^\star,\bGamma_{\underline{c}}^\star|\by_i,\calC_{-i},\btheta_{-i})$ be the corresponding density of $\widetilde{G}(\cdot|\by_i,\calC_{-i},\btheta_{-i})$. 
By construction, given the auxiliary variable $\bgamma_{\underline{c}}^\star,\bGamma_{\underline{c}}^\star$, we have
\begin{align}
\mathbb{P}\left(\btheta_i\in A|\bgamma_{\underline{c}}^\star,\bGamma_{\underline{c}}^\star,\by_i,\calC_{-i},\btheta_{-i}\right)
&=\frac
{\displaystyle\left[\frac{V_n(|\calC_{-i}|+1)\beta}{V_n(|\calC_{-i}|)}\right]\phi(\by_i|\bgamma_{\underline{c}}^\star,\bGamma_{\underline{c}}^\star)\delta_{(\bgamma_{\underline{c}}^\star,\bGamma_{\underline{c}}^\star)}(A)}
{\displaystyle\left[\frac{V_n(|\calC_{-i}|+1)\beta}{V_n(|\calC_{-i}|)}\right]\phi(\by_i|\bgamma_{\underline{c}}^\star,\bGamma_{\underline{c}}^\star)+\sum_{c\in\calC_{-i}}(|c|+\beta)\phi(\by_i|\bgamma_c^\star,\bGamma_c^\star)}\nonumber\\
&\quad+ \frac
{\displaystyle \sum_{c\in\calC_{-i}}(|c|+\beta)\phi(\by_i|\bgamma_{{c}}^\star,\bGamma_{{c}}^\star)\delta_{(\bgamma_{{c}}^\star,\bGamma_{{c}}^\star)}(A)}
{\displaystyle\left[\frac{V_n(|\calC_{-i}|+1)\beta}{V_n(|\calC_{-i}|)}\right]\phi(\by_i|\bgamma_{\underline{c}}^\star,\bGamma_{\underline{c}}^\star)+\sum_{c\in\calC_{-i}}(|c|+\beta)\phi(\by_i|\bgamma_c^\star,\bGamma_c^\star)}\nonumber.
\end{align}
Integrate the RHS of the last display against $p(\bgamma_{\underline{c}}^\star,\bGamma_{\underline{c}}^\star|\by_i,\calC_{-i},\btheta_{-i})$ yields
\begin{align}
&\mathbb{P}\left(\btheta_i\in A|\by_i,\calC_{-i},\btheta_{-i}\right)\nonumber\\
&\qquad=\iint \mathbb{P}\left(\btheta_i\in A|\bgamma_{\underline{c}}^\star,\bGamma_{\underline{c}}^\star,\by_i,\calC_{-i},\btheta_{-i}\right)p(\bgamma_{\underline{c}}^\star,\bGamma_{\underline{c}}^\star|\by_i,\calC_{-i},\btheta_{-i})\mathrm{d}\bgamma_{\underline{c}}^\star\mathrm{d}\bGamma_{\underline{c}}^\star\nonumber\\
&\qquad=\frac
{\displaystyle\left[\frac{V_n(|\calC_{-i}|+1)\beta}{V_n(|\calC_{-i}|)}\right]\iint_A\phi(\by_i|\bgamma_i,\bGamma_i)\widetilde{g}(\bgamma_i,\bGamma_i|\calC_{-i},\btheta_{-i})\mathrm{d}\bgamma_i\mathrm{d}\bGamma_i}
{\displaystyle\left[\frac{V_n(|\calC_{-i}|+1)\beta}{V_n(|\calC_{-i}|)}\right]\sum_{K=|\calC_{-i}|+1}^\infty\alpha_K+\sum_{c\in\calC_{-i}}(|c|+\beta)\phi(\by_i|\bgamma{}_c^\star,\bGamma{}_c^\star)}\nonumber\\
&\qquad\quad + \frac
{\displaystyle\sum_{c\in\calC_{-i}}(|c|+\beta)\phi(\by_i|\bgamma_c^\star,\bGamma{}_c^\star)\delta_{(\bgamma_c^\star,\bGamma_c^\star)}(A)}
{\displaystyle\left[\frac{V_n(|\calC_{-i}|+1)\beta}{V_n(|\calC_{-i}|)}\right]\sum_{K=|\calC_{-i}|+1}^\infty\alpha_K+\sum_{c\in\calC_{-i}}(|c|+\beta)\phi(\by_i|\bgamma{}_c^\star,\bGamma{}_c^\star)}\nonumber\\
&\qquad\propto \left[\frac{V_n(|\calC_{-i}|+1)\beta}{V_n(|\calC_{-i}|)}\right]\iint_A\phi(\by_i|\bgamma_i,\bGamma_i)\widetilde{g}(\bgamma_i,\bGamma_i|\calC_{-i},\btheta_{-i})\mathrm{d}\bgamma_i\mathrm{d}\bGamma_i
\nonumber\\&\qquad\quad 
+ \sum_{c\in\calC_{-i}}(|c|+\beta)\phi(\by_i|\bgamma_c^\star,\bGamma_c^\star)\delta_{(\bgamma_c^\star,\bGamma_c^\star)}(A)\nonumber\\
&\qquad = \left[\frac{V_n(|\calC_{-i}|+1)\beta}{V_n(|\calC_{-i}|)}\right]\sum_{K=|\calC_{-i}|+1}^\infty\alpha_KG_K(A)
+ \sum_{c\in\calC_{-i}}(|c|+\beta)\phi(\by_i|\bgamma_c^\star,\bGamma_c^\star)\delta_{(\bgamma_c^\star,\bGamma_c^\star)}(A)\nonumber,
\end{align}
which coincides with \eqref{eqn:Gibbs_sampler_urn_model}.
This completes the proof the first assertion. For the second assertion, 
by construction, we have
\begin{align}
&\mathbb{P}\left(\calC=(\calC_{-i}\backslash\{c\})\cup(\{c\cup\{i\}\}),(\bgamma_{\underline{c}}^\star,\bGamma_{\underline{c}}^\star)\in A\middle|\calC_{-i},\btheta_{-i},\by_i\right)\nonumber\\
&\quad = 
\iint_A \mathbb{P}\left(\calC=(\calC_{-i}\backslash\{c\})\cup(\{c\cup\{i\}\})\middle|\calC_{-i},\btheta_{-i},\bgamma_{\underline{c}}^\star,\bGamma_{\underline{c}}^\star,\by_i\right)p(\bgamma_{\underline{c}}^\star,\bGamma_{\underline{c}}^\star|\by_i,\bgamma_c^\star,\bGamma_c^\star,c\in\calC_{-i})\mathrm{d}\bgamma_{\underline{c}}^\star\mathrm{d}\bGamma_{\underline{c}}^\star\nonumber\\
&\quad = \iint_A 
\frac{(|c|+\beta)\phi(\by_i|\bgamma_c^\star,\bGamma_c^\star)p(\bgamma_{\underline{c}}^\star,\bGamma_{\underline{c}}^\star|\by_i,\bgamma_c^\star,\bGamma_c^\star,c\in\calC_{-i})}
{\displaystyle\left[\frac{V_n(|\calC_{-i}|+1)\beta}{V_n(|\calC_{-i}|)}\right]\phi(\by_i|\bgamma_{\underline{c}}^\star,\bGamma_{\underline{c}}^\star)+\sum_{c\in\calC_{-i}}(|c|+\beta)\phi(\by_i|\bgamma_c^\star,\bGamma_c^\star)}\mathrm{d}\bgamma_{\underline{c}}^\star\mathrm{d}\bGamma_{\underline{c}}^\star\nonumber\\
&\quad = \iint_A 
\frac{(|c|+\beta)\phi(\by_i|\bgamma_c^\star,\bGamma_c^\star)\widetilde{g}(\bgamma_{\underline{c}}^\star,\bGamma_{\underline{c}}^\star|\bgamma_c^\star,\bGamma_c^\star,c\in\calC_{-i})}
{\displaystyle\left[\frac{V_n(|\calC_{-i}|+1)\beta}{V_n(|\calC_{-i}|)}\right]\sum_{K=|\calC_{-i}|+1}^\infty\alpha_K+\sum_{c\in\calC_{-i}}(|c|+\beta)\phi(\by_i|\bgamma_c^\star,\bGamma_c^\star)}\mathrm{d}\bgamma_{\underline{c}}^\star\mathrm{d}\bGamma_{\underline{c}}^\star\nonumber.
\end{align}
Since given $\calC=(\calC_{-i}\backslash\{c\})\cup(\{c\cup\{i\}\})$, $\btheta_{-i}=(\btheta_1,\cdots,\btheta_n)$, it follows that the conditional distribution can be directly computed:
\begin{align}
&\mathbb{P}\left((\bgamma_{\underline{c}}^\star,\bGamma_{\underline{c}}^\star)\in A\middle|\calC=(\calC_{-i}\backslash\{c\})\cup(\{c\cup\{i\}\}), \calC_{-i},\btheta_{-i},\btheta_i,\by_i\right)\nonumber\\
&\quad = \mathbb{P}\left((\bgamma_{\underline{c}}^\star,\bGamma_{\underline{c}}^\star)\in A\middle|\calC=(\calC_{-i}\backslash\{c\})\cup(\{c\cup\{i\}\}), \calC_{-i},\btheta_{-i},\by_i\right)\nonumber\\
&\quad = \frac{\mathbb{P}\left(\calC=(\calC_{-i}\backslash\{c\})\cup(\{c\cup\{i\}\}),(\bgamma_{\underline{c}}^\star,\bGamma_{\underline{c}}^\star)\in A\middle|\calC_{-i},\btheta_{-i},\by_i\right)}{\mathbb{P}\left(\calC=(\calC_{-i}\backslash\{c\})\cup(\{c\cup\{i\}\})\middle|\calC_{-i},\btheta_{-i},\by_i\right)}\nonumber\\
&\quad = \widetilde{G}(A|\bgamma_c^\star,\bGamma_c^\star,c\in\calC_{-i})\nonumber.
\end{align}
On the other hand, we know from definition that
\begin{align}
\mathbb{P}\left(\btheta_i\in A|\calC=\calC_{-i}\cup\{\{i\}\},\by_i,\bgamma_{\underline{c}}^\star,\bGamma_{\underline{c}}^\star,\btheta_{-i},\calC_{-i}\right)=\delta_{(\bgamma_{\underline{c}}^\star,\bGamma{}_{\underline{c}}^\star)}(A).\nonumber
\end{align}
It follows directly that $\mathbb{P}\left((\bgamma_{\underline{c}}^\star,\bGamma{}_{\underline{c}}^\star)\in A|\calC_{-i}\cup\{\{i\}\},\by_i,\btheta_i,\btheta_{-i},\calC_{-i}\right)=\delta_{\btheta_i}(A)$, and hence the second assertion is proved. 
\end{proof}


\clearpage
\section{Details of Posterior Inference} 
\label{sec:details_of_posterior_inference}
In this section we provide the detailed blocked-collapsed Gibbs sampler in \textbf{Algorithm \ref{alg:blocked_collapsed_gibbs_sampler}} when a conjugate prior on the covariance matrices for all components is used: $\bSigma_k=\mathrm{diag}(\lambda_{1k},\cdots,\lambda_{pk})$ and $\lambda_{jk}\iidsim p(\lambda)\propto \mathbb{I}(\lambda\in[\usigma^{-2},\lsigma^{-2}])\lambda^{-a_0-1}\exp(-b_0/\lambda)$, $j=1,\cdots,p,k=1,\cdots,K$. Easy extension of the sampler is available when one use Inverse-Wishart distribution on the non-diagonal covariance matrices $\bSigma_k$'s. A practical issue for the implementation of the Gibbs sampler is sampling from the conditional prior $p(K|\mathcal{C})$ as well as the conditional posterior $p(K|\mathcal{C},\by_1,\cdots,\by_n,\bGamma_c^\star:c\in\calC)$. Using formula (3.7) in \cite{miller2016mixture}, we see that 
\[
p(K\mid\mathcal{C})\propto \frac{K!}{(K+n)!(K-|\mathcal{C}|)!}.
\] 
Notice that for $K>>|\mathcal{C}|$, $p(K)\approx 0$, and therefore in practice one may use the following approximate sampling scheme
\begin{eqnarray}\label{eqn:approximate_p_K_given_partition}
p(K\mid\mathcal{C})\propto \frac{K!}{(K+n)!(K-|\mathcal{C}|)!},\quad  K=|\mathcal{C}|,|\mathcal{C}|+1,\cdots,|\mathcal{C}|+m\nonumber	
\end{eqnarray}
for a moderate choice of the perturbation range $m$, especially when $n$ is large, in which case the probability of having large number of empty components(\emph{i.e.} $K>>|\mathcal{C}|$) is negligible. 
Sampling from the conditional posterior $p(K\mid\mathcal{C},\by_1,\cdots,\by_n,\bGamma_c^\star:c\in\calC)$, however, is a slightly harder issue. 
Denote 
\begin{align}
p_\bmu(\bgamma_c^\star\mid\by_i:i\in c,\bGamma_c^\star)=\frac{\displaystyle p_\bmu(\bgamma_c^\star)\prod_{i\in c}\phi(\by_i|\bgamma_c^\star,\bGamma_c^\star)}{\displaystyle \int p_\bmu(\bgamma_c^\star)\prod_{i\in c}\phi(\by_i\mid\bgamma_c^\star,\bGamma_c^\star)\mathrm{d}\bgamma_c^\star}\nonumber
\end{align}
to be the conditional posterior of $\bgamma_c^\star$ given observations when the repulsive prior is not introduced, namely, when $\bgamma_c^\star\sim p_\bmu$ independently. Given the partition $\calC$, the cluster-spefic covariance matrices $(\bGamma_c^\star:c\in\calC)$, and the observations $(\by_i)_{i=1}^n$, the posterior of $(\bgamma_c^\star:c\in\calC)$ 
\begin{align}
p(\bgamma_c^\star:c\in\calC\mid\bGamma_c^\star,c\in\calC,(\by_i)_{i=1}^n)
\propto\sum_{K=|\calC|}^\infty p(\bgamma_c^\star:c\in\calC\mid K,\bGamma_c^\star,c\in\calC,(\by_i)_{i=1}^n)p(K\mid \bGamma_c^\star,c\in\calC,(\by_i)_{i=1}^n)\nonumber,
\end{align}
where
\begin{align}
&p(\bgamma_c^\star:c\in\calC \mid K,\bGamma_c^\star,c\in\calC,(\by_i)_{i=1}^n)\nonumber\\
&\qquad\propto
{\displaystyle\int\cdots\int h_K(\bgamma_c^\star:c\in\calC\cup\calC_\varnothing)\left[\prod_{c\in\calC}p(\bgamma_c^\star\mid\by_i:i\in c,\bGamma_c^\star)\right]\left[\prod_{c\in\calC_\varnothing}p_\bmu(\bgamma_c^\star)\mathrm{d}\bgamma_c^\star\right]}
\nonumber,
\end{align}
and
\begin{align}
p(K\mid\bGamma_c^\star,c\in\calC,(\by_i)_{i=1}^n)\propto\frac{p(K\mid\calC)}{Z_K}\int\cdots\int h_K(\bgamma_c^\star:c\in\calC\cup\calC_\varnothing)\left[\prod_{c\in\calC\cup\calC_\varnothing}p(\bgamma_c^\star\mid\by_i:i\in c,\bGamma_c^\star)\mathrm{d}\bgamma_c^\star\right]\nonumber.
\end{align}
\textbf{Step 4} of the blocked-collapsed Gibbs sampler in Section \ref{sec:posterior_inference} of the manuscript samples from $p(\bgamma_c^\star:c\in\calC|K,\bGamma_c^\star,c\in\calC,(\by_i)_{i=1}^n)$. To sample from $p(K|\bGamma_c^\star,c\in\calC,(\by_i)_{i=1}^n)$, we use numerically compute $Z_K$ and the intractable integral when sampling from $p(\bgamma_c^\star|\by_i:i\in c,\bGamma_c^\star)$ is tractable, which is usually the case when $p_\bmu$ is the conjugate normal prior. In what follows we provide the detailed blocked-collapsed Gibbs sampler. Alternatively, to gain computational efficiency, one can use $p(K\mid\calC)$ to approximate $p(K\mid\bGamma_c^\star,c\in\calC,(\by_i)_{i=1}^n)$ in the resampling steps. 

\begin{algorithm}[h] 
	\renewcommand{\algorithmicrequire}{\textbf{Input:}}
	\renewcommand{\algorithmicensure}{\textbf{Output:} }
	\caption{Blocked-Collapsed Gibbs Sampler} 
	\label{alg:blocked_collapsed_gibbs_sampler} 
	\begin{algorithmic}[1] 
		\State{\textbf{Input: }
			\State{\INDSTATE Observations $(\by_i)_{i=1}^n$; }
			\State{\INDSTATE Hyperparameters $(a_0,b_0)$, $\tau,g_0$, $0<\lsigma<\usigma<\infty$; }
			\State{\INDSTATE Burn-in time $B$; }
			\State{\INDSTATE Number of posterior samples $T$; }
			\State{\INDSTATE Guess upper bound $K_{\max}$ on $K$; }
			\State{\INDSTATE Perturbation range $m$ for approximate sampling $p(K\mid\mathcal{C})$. }
			}
		\State{\textbf{Initialize: }
			\State{\INDSTATE Set $\ell=1$, select $K\leq n$, and sample $\bmu_k\sim\mathrm{N}(\mathbf{0},\tau^2\eye_p)$; }
			\State{\INDSTATE For $k=1,\cdots,K$, set $\bSigma_k=\eye_p$; }
			\State{\INDSTATE For $i=1,\cdots,n$, set $\btheta_i=(\bmu_k,\bSigma_k)$ if $k=\mathrm{argmax}_k\phi(\by_i\mid\bmu_k,\bSigma_k)$; }
			\State{\INDSTATE Compute $\mathcal{C}$ from $(\btheta_1,\cdots,\btheta_n)$; }
			\State{\INDSTATE Set $(\btheta_1^{(1)},\cdots,\btheta_n^{(1)})=(\btheta_1,\cdots,\btheta_n)$. }
			\State{\INDSTATE For $k=1,\cdots,K_{\max}$, numerically compute the normalization constants $Z_K$. }
		}
		\algstore{blocked_collapsed_gibbs_sampler}
	\end{algorithmic}
\end{algorithm}

\begin{algorithm}[h] 
	\renewcommand{\algorithmicrequire}{\textbf{Input:}}
	\renewcommand{\algorithmicensure} {\textbf{Output:} }
	\begin{algorithmic}[1] 
	\algrestore{blocked_collapsed_gibbs_sampler}
	\For{$t_{\mathrm{it}}=2,\cdots,(B+T)$}
		\State{\textbf{Set} $(\btheta_1,\cdots,\btheta_n)=(\btheta_1^{(t_{\mathrm{it}}-1)},\cdots,\btheta_n^{(t_{\mathrm{it}}-1)})$}
		\For{$i=1,\cdots,n$}
		\State{\textbf{Set} $\btheta_{-i}=(\btheta_1,\cdots,\btheta_n)\backslash\{\btheta_i\}$; \textbf{Compute} $\mathcal{C}_{-i}$ from $\btheta_{-i}$; }
		\State{\textbf{Set} $\ell=|\mathcal{C}_{-i}|$; \textbf{Label} $(c:c\in\mathcal{C}_{-i})$ as $(c_1,\cdots,c_\ell)$; }
		\For{$k=1,\cdots,\ell$}
			\State{\textbf{Set} $(\bgamma_k^\star,\bGamma_k^\star)=(\bgamma_i,\bGamma_i)$ if $i\in c_k$; }
		\EndFor. 
		\If{$\calC=\calC_{-i}\cup\{\{i\}\}$}
			{\textbf{Set} $(\bgamma_{\underline{c}}^\star,\bGamma_{\underline{c}}^\star)=\btheta_i$; }
		\Else
			\State{\textbf{Sample} $K$ from
			$$
				p(K\mid \mathcal{C})\propto \frac{K!}{(K-\ell)!(K+n)!},\quad K=\ell,\ell+1,\cdots,\ell+m.
			$$
			}
			\For{$j=1,\cdots,p$}
				\State{\textbf{Sample }$\lambda_j^\star$ from $\text{Inv-Gamma}(a_0,b_0)\mathbb{I}([\lsigma^2,\usigma^2])$; }
			\EndFor
			\State{\textbf{Set }$\bSigma_K^\star=\mathrm{diag}(\lambda_1^\star,\cdots,\lambda_p^\star)$; }
			\For{$k = \ell + 1,\cdots, K$}
				\State{\textbf{Sample} $\bgamma_k^\star\sim\mathrm{N}\left(\mathbf{0},\tau^2\eye_p\right)$; }
			\EndFor
			\State{\textbf{Sample} $U\sim\mathrm{Unif}(0,1)$, and compute
				$$
				g(\bgamma_1^\star,\cdots,\bgamma_K^\star)=\min_{1\leq k<k'\leq K}\left(\frac{\|\bgamma_k^\star-\bgamma_{k'}^\star\|}{g_0+\|\bgamma_k^\star-\bgamma_{k'}^\star\|}\right);
				$$}
			\If{$U<g(\bgamma_1^\star,\cdots,\bgamma_K^\star)$}
				{\textbf{Accept} the new proposed samples; }
			\Else
				{ \textbf{Go} to line NO. 32 and resample. }
			\EndIf
			\State{\textbf{Set} $\bgamma_{\underline{c}}^\star=\bgamma_K^\star$; }
		\EndIf
		\State{\textbf{Sample} $\mathcal{C}$ according to the categorical distribution
		\begin{align}
			&\mathbb{P}(\mathcal{C}=\mathcal{C}_{-i}\cup\{\{i\}\}\mid-)\propto\left[\frac{V_{n}(\ell+1)}{V_{n}(\ell)}\beta\right]\phi(\by_i\mid\bgamma_{\underline{c}}^\star,\bGamma_{\underline{c}}^\star),\nonumber\\
			&\mathbb{P}(\mathcal{C}=(\mathcal{C}_{-i}\backslash\{c\})\cup\{c\cup\{i\}\}\mid-)\propto\phi(\by_i\mid\bgamma_c^\star,\bGamma_c^\star)\left(|c|+\beta\right),\quad c\in\mathcal{C}_{-i}.\nonumber
		\end{align}}
		\If{$\calC = \calC_{-i}\cup\{\{i\}\}$}
			\State{\textbf{Set} $\btheta_i=(\bgamma_{\underline{c}}^\star,\bGamma_{\underline{c}}^\star)$}
		\ElsIf{$\calC = (\calC_{-i}\backslash\{c\})\cup(\{c\cup \{i\}\})$ for some $c\in\calC_{-i}$}
			\State{\textbf{Set} $\btheta_i=(\bgamma_j^\star,\bGamma_j^\star)$ for some $j\in c$; }
		\EndIf
		\EndFor
	\algstore{blocked_collapsed_gibbs_sampler2}
	\end{algorithmic}
\end{algorithm}

\begin{algorithm}[h] 
	\renewcommand{\algorithmicrequire}{\textbf{Input:}}
	\renewcommand{\algorithmicensure} {\textbf{Output:} }
	\begin{algorithmic}[1] 
		\algrestore{blocked_collapsed_gibbs_sampler2}	
		\State{\textbf{Set} $\ell=|\mathcal{C}|$, label $(c:c\in\mathcal{C})$ as $(c_1,\cdots,c_\ell)$; }
		\For{$k=1,\cdots,\ell$}
			\State{\textbf{Set} $(\bgamma_k^\star,\bGamma_k^\star)=(\bgamma_i,\bGamma_i)$ if $i\in c_k$; }
		\EndFor
		\For{$K=\ell,\cdots,\ell+m-1$}
			\State{Numerically compute the integral}
			\[
			\widetilde{Z}_K=\int\cdots\int h_K(\bgamma_c^\star:c\in\calC\cup\calC_\varnothing)\left[\prod_{c\in\calC\cup\calC_\varnothing}p(\bgamma_c^\star|\by_i:i\in\calC,\bGamma_c^\star)\mathrm{d}\bgamma_c^\star\right].
			\]
		\EndFor
		\State{\textbf{Sample} $K$ from
		$$
		p(K\mid\bGamma_c^\star,c\in\mathcal{C},(\by_i)_{i=1}^n)\propto \frac{\tilde{Z}_K K!}{Z_K(K-\ell)!(K+n)!},\quad K=\ell,\ell+1,\cdots,\ell+m.
		$$}
		\For{$k=1,\cdots,\ell$}
			\State{\textbf{Set} $\bgamma_k^\star=(\gamma_{1k}^\star,\cdots,\gamma_{pk}^\star)\transpose$; }
			\For{$j=1,\cdots,p$}
				\State{\textbf{Sample} $\lambda_{jk}$ from 
				$$
				\text{Inv-Gamma}\left(a_0+\frac{|c_k|}{2},b_0+\frac{1}{2}\sum_{i\in c_k}(y_{ij}-\gamma_{kj}^\star)^2\right)\mathbb{I}([\lsigma^2,\usigma^2]);
				$$
				}
			\EndFor
			\State{\textbf{Set} $\bGamma_k^\star=\mathrm{diag}(\lambda_{1k}^\star,\cdots,\lambda_{pk}^\star)$; }
		\EndFor
		\For{$k=1,\cdots,K$}
				\State{\textbf{Sample} $\bgamma_k^\star$ from $\mathrm{N}(\mb_k,\bV_k)$ where
				\begin{align}
					\bV_k&=\left(\bGamma_k^{\star-1}\sum_{i=1}^n\mathbb{I}(\bgamma_i=\bgamma_k^\star)+\frac{1}{\tau^2}\eye_p\right)^{-1},\nonumber\\
					\mb_k&=\bV_k\left(\bGamma_k^{\star-1}\sum_{i=1}^n\mathbb{I}(\bgamma_i=\bgamma_k^\star)\by_i\right)\nonumber;
					\end{align}
				}
		\EndFor
		\State{\textbf{Sample} $U$ from $\mathrm{Unif}(0,1)$;
		}
		\If{$U<g(\bgamma_1^\star,\cdots,\bgamma_K^\star)$}
			{\textbf{Accept} the new proposed samples $(\bgamma_1^\star,\cdots,\bgamma_K^\star)$; }
		\Else
			{\textbf{Go} to Line NO. 61 and resample; }
		\EndIf; 
		\For{$i=1,\cdots,n$}
			{\textbf{Set} $\btheta_i=(\bgamma_i,\bGamma_i)=(\bgamma_k^\star,\bGamma_k^\star)$ if $i\in c_k$. }
		\EndFor
		\State{\textbf{Change} the current state to $(\btheta_1^{(t_{\mathrm{it}})},\cdots,\btheta_n^{(t_{\mathrm{it}})})=(\btheta_1,\cdots,\btheta_n)$;}
	\EndFor
	\State{\textbf{Output: }
			\State{\INDSTATE Posterior Samples $(\btheta_1^{(t_{\mathrm{it}})},\cdots,\btheta_n^{(t_{\mathrm{it}})})_{t_{\mathrm{it}}=B+1}^{B+T}$, where $\btheta_i^{(t_{\mathrm{it}})}=(\bgamma_i^{(t_{\mathrm{it}})},\bGamma_i^{(t_{\mathrm{it}})})$, $i=1,\cdots,p$}
		}
	\end{algorithmic}
\end{algorithm}

\clearpage
\section{Convergence Diagnostics}
\label{sec:convergence_diagnostics}
\subsection*{Convergence Check for Subsection \ref{sub:multi_modal_estimation_finite_gaussian_mixtures}}
We check convergence via the trace plots and autocorrelations of some randomly selected $\bgamma_i$'s (which are identifiable compared to the exact means for different components) in Figure \ref{trace_plot_acf_mu}, showing no signs of non-convergence. 
\begin{figure}[h!]
	\centerline{\includegraphics[width=1\textwidth]{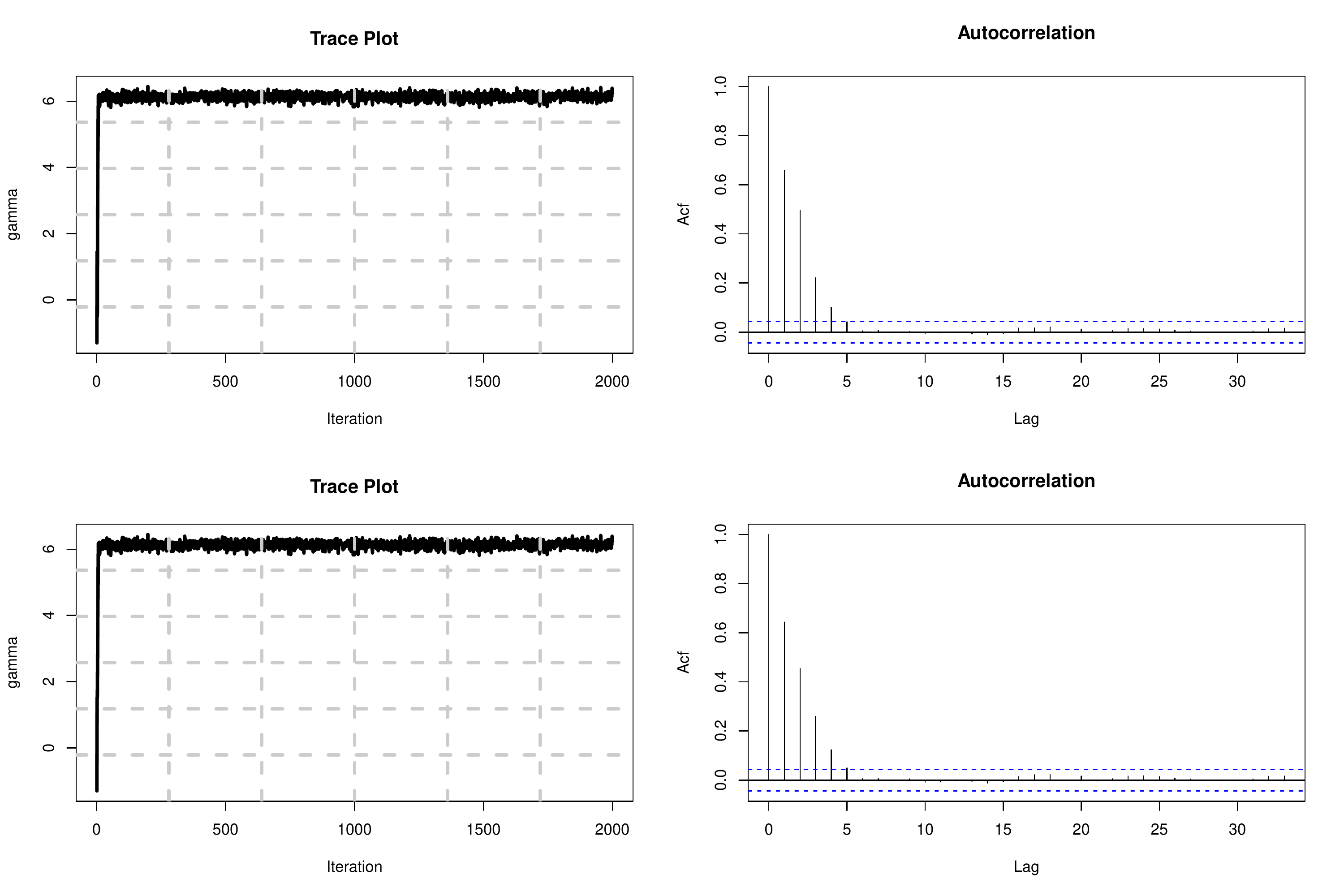}}
	\caption{Fitting Multi-Modal Density: The trace plots and the autocorrelation plots of the post-burn-in posterior samples of some randomly selected $\bgamma_i$'s. }
	\label{trace_plot_acf_mu}
\end{figure}

\newpage
\subsection*{Convergence Check for Subsection \ref{sub:uni_modal_density_continuous_gaussian_mixtures}}
We check convergence via the trace plots and the autocorrelations of some randomly selected $\bgamma_i$'s in Figure \ref{trace_plot_acf_mu}, showing no signs of non-convergence. 
\begin{figure}[h!]
	\centerline{\includegraphics[width=1\textwidth]{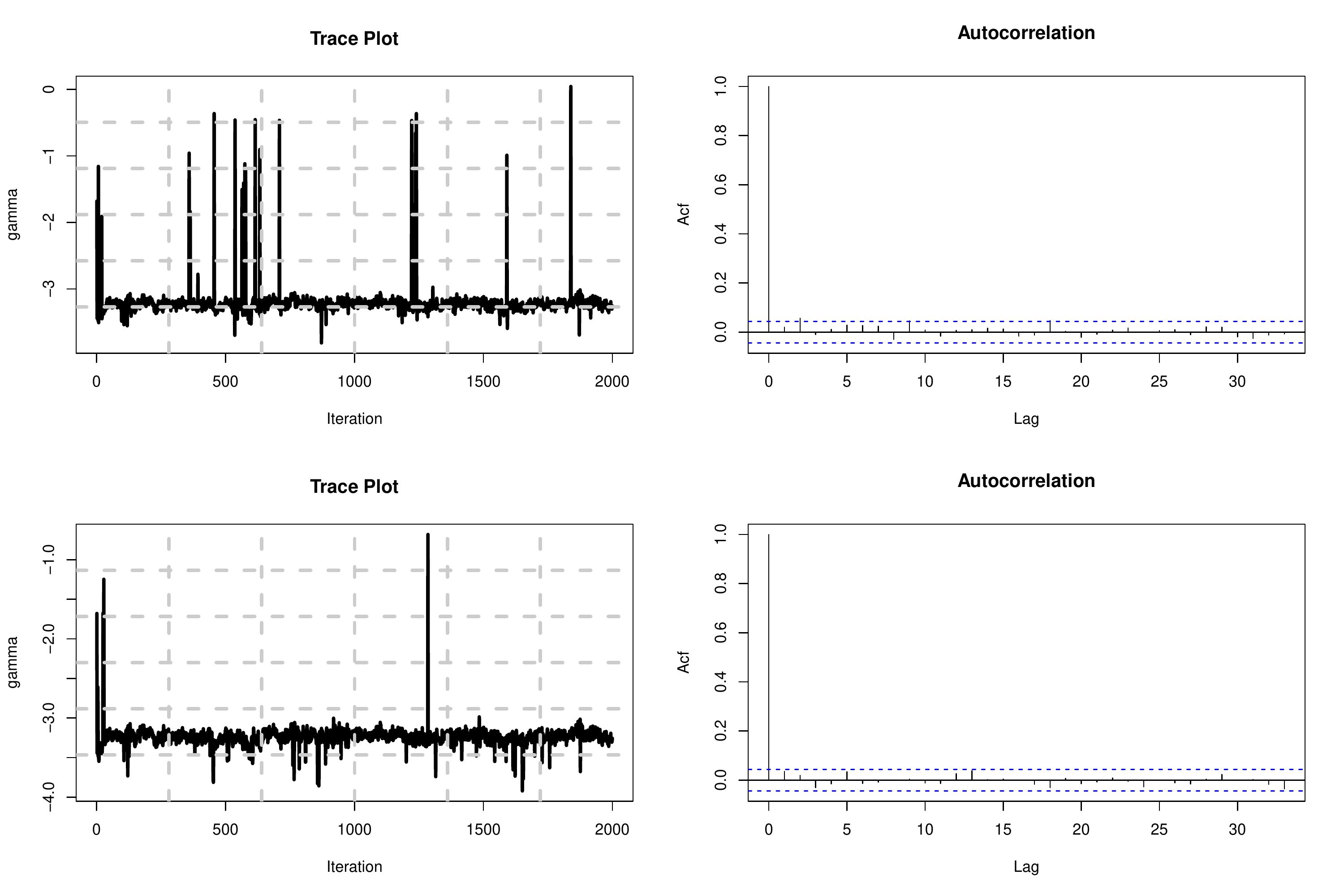}}
	\caption{Fitting Uni-Modal Density: The trace plots and the autocorrelation plots of the post-burn-in posterior samples of some randomly selected $\bgamma_i$'s. }
	\label{trace_plot_acf_mu_unimodal}
\end{figure}

\newpage
\subsection*{Convergence Check for Subsection \ref{sub:multivariate_model_based_clustering}}
The trace plots and the autocorrelations of some randomly selected $\bgamma_i$'s in Figure \ref{trace_plot_acf_mu_10D}, indicate no signs of non-convergence. 

\begin{figure}[h!]
	\centerline{\includegraphics[width=1\textwidth]{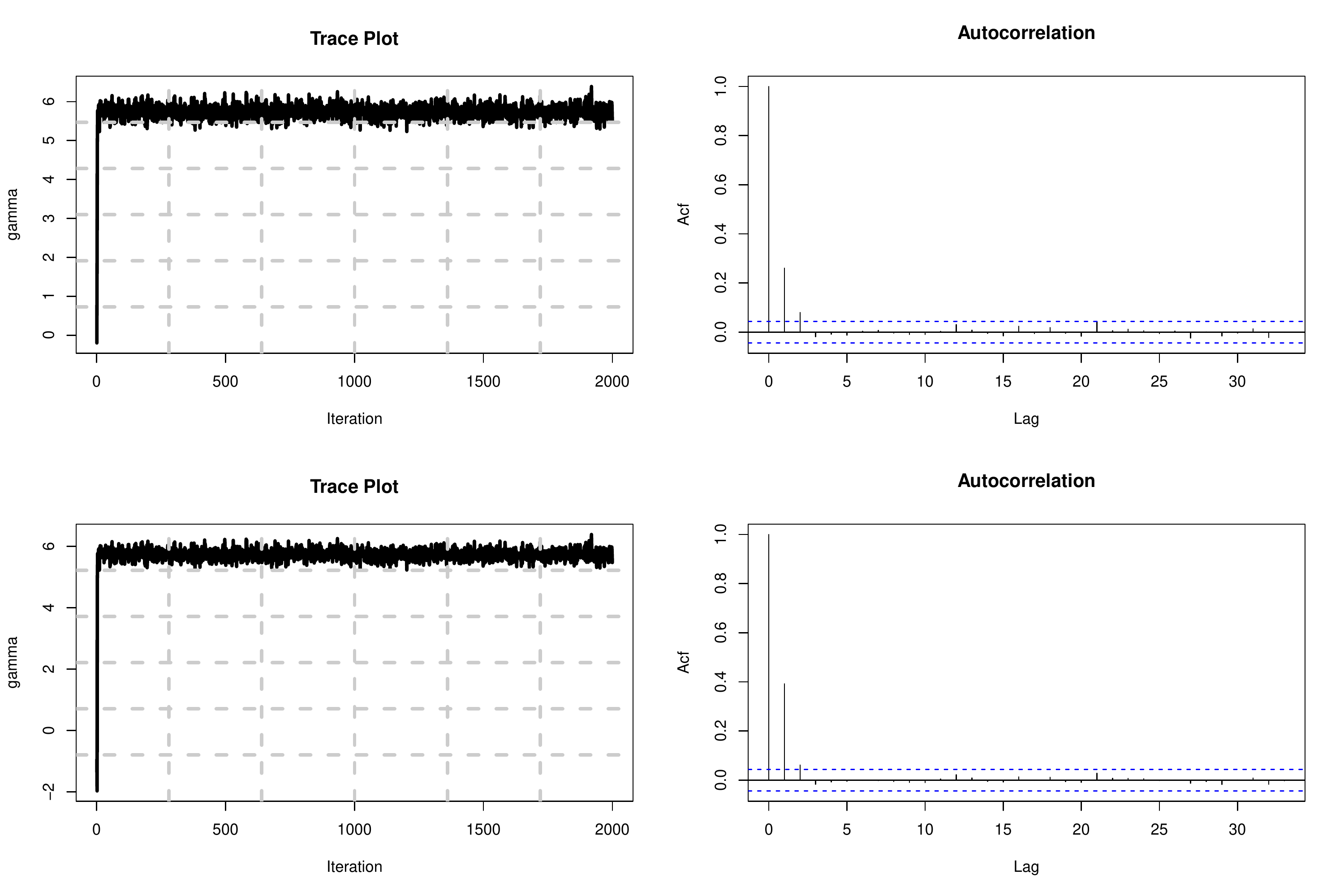}}
	\caption{Multivariate Model-Based Clustering: The trace plots and the autocorrelation plots of the post-burn-in posterior samples of some randomly selected $\bgamma_i$'s. }
	\label{trace_plot_acf_mu_10D}
\end{figure}

\newpage
\subsection*{Convergence Check for Subsection \ref{sub:real_data_analysis}}
The trace plots and the autocorrelations of some randomly selected $\bgamma_i$'s in Figure \ref{trace_plot_acf_mu_10D}, indicate no signs of non-convergence. 

\begin{figure}[h!]
	\centerline{\includegraphics[width=1\textwidth]{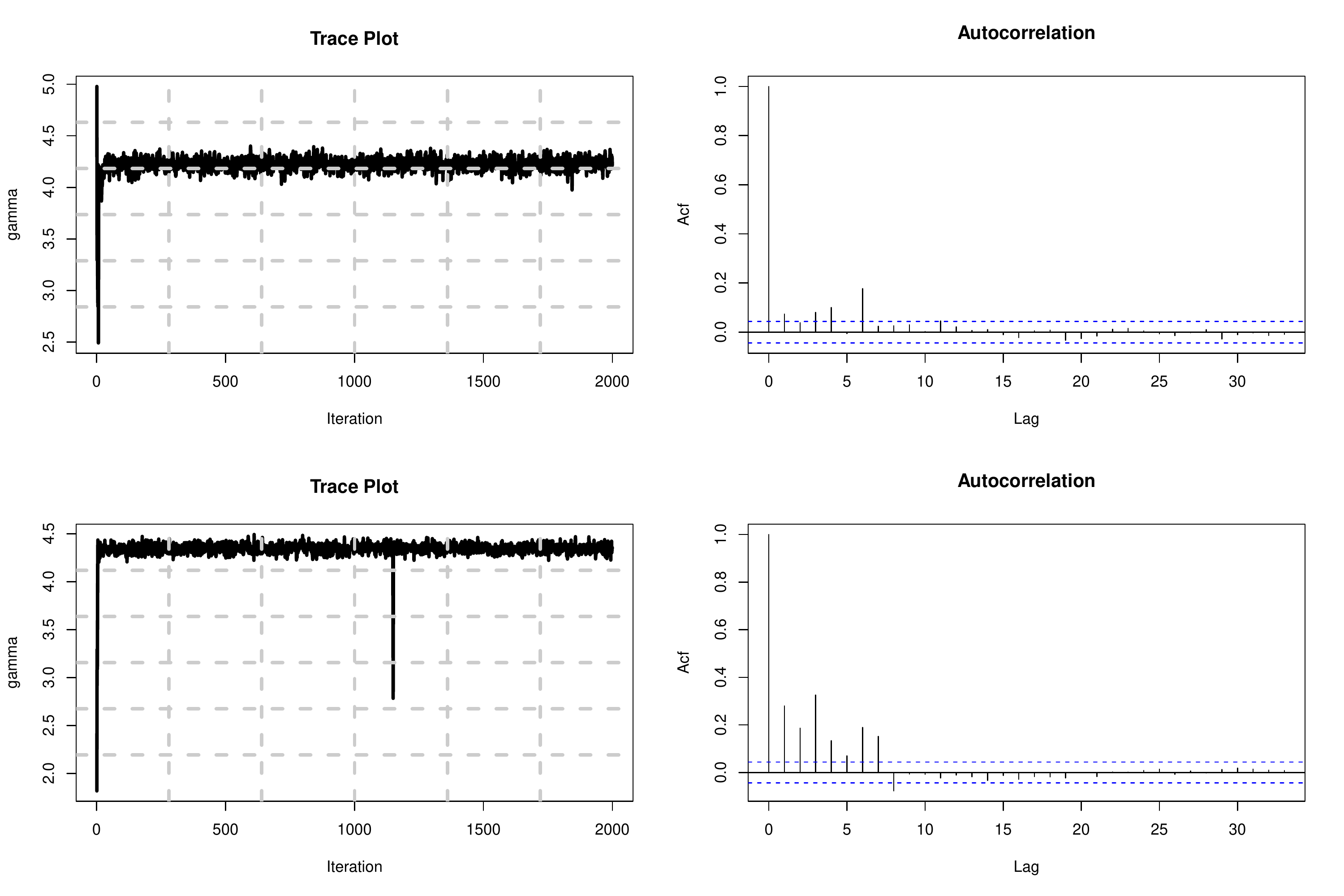}}
	\caption{Old Faithful Geyser Eruption Data: Trace plots and autocorrelation plots of the post-burn-in posterior samples of some randomly selected $\bgamma_i$'s. }
	\label{trace_plot_acf_mu_faithful}
\end{figure}

\clearpage

\section{Additional Simulation Study} 
\label{sec:additional_simulation_study}
In this section we consider a synthetic example where the number of observations and number of components are moderately large. The ground true density is given by a mixture of $K=13$ Gaussians. The first $12$ Gaussian components are equally weighted with mixing weight being $1/24$, and the weight of the last component is $12/24$. The first $12$ components are centered at
\begin{align}
&\left[\begin{array}{c}6\\6\end{array}\right],\quad
\left[\begin{array}{c}6\\12\end{array}\right],\quad
\left[\begin{array}{c}12\\6\end{array}\right],\quad
\left[\begin{array}{c}-6\\6\end{array}\right],\quad    
\left[\begin{array}{c}-6\\12\end{array}\right],\quad
\left[\begin{array}{c}-12\\6\end{array}\right],\nonumber\\
&\left[\begin{array}{c}6\\-6\end{array}\right],\quad
\left[\begin{array}{c}6\\12\end{array}\right],\quad
\left[\begin{array}{c}12\\-6\end{array}\right],\quad
\left[\begin{array}{c}-6\\-6\end{array}\right],\quad
\left[\begin{array}{c}-6\\-12\end{array}\right],\quad
\left[\begin{array}{c}-12\\-6\end{array}\right],\nonumber
\end{align}
respectively, with identical covariance matrix $\eye_2$. The last component is centered at the origin with covariance matrix $30\eye_2$. 

We collect $2000$ i.i.d. observations from this Gaussian mixture distribution, and implement the proposed blocked-collapsed Gibbs sampler with $g_0 = 10$, $\tau = 10$, $m = 2,\lsigma=0.1, \usigma = 10$, and a total number of 2000 iterations with the first $1000$ iterations discarded as burn-in. For comparison, we consider the following DPM model,
\begin{eqnarray}
(\by_i\mid\bmu_{z_i},\bSigma_{z_i})\sim N(\bmu_{z_i},\bSigma_{z_i}),\quad (\bmu_{z_i},\bSigma_{z_i}\mid G)\iidsim G,\quad\text{and }(G\mid\alpha, G_0)\sim\mathrm{DP}(\alpha, G_0), \nonumber
\end{eqnarray}
where $G_0=\mathrm{N}(\bmu, \bSigma)$ with $\bmu \sim \mathrm{N}\left(\mb_1,\bSigma/k_0\right)$ and $\bSigma\sim\text{Inv-Wishart}(4,\boldsymbol{\Psi}_1)$, $\alpha\sim\mathrm{Gamma}(1, 1)$, $\mb_1\sim\mathrm{N}(\mathbf{0},2\eye_2)$, $k_0\sim\mathrm{Gamma}(0.5,0.5)$, and $\boldsymbol{\Psi}_1\sim\text{Inv-Wishart}(4, 0.5\eye_2)$. 
\begin{figure}[hb!]
	\centerline{\includegraphics[width=1\textwidth]{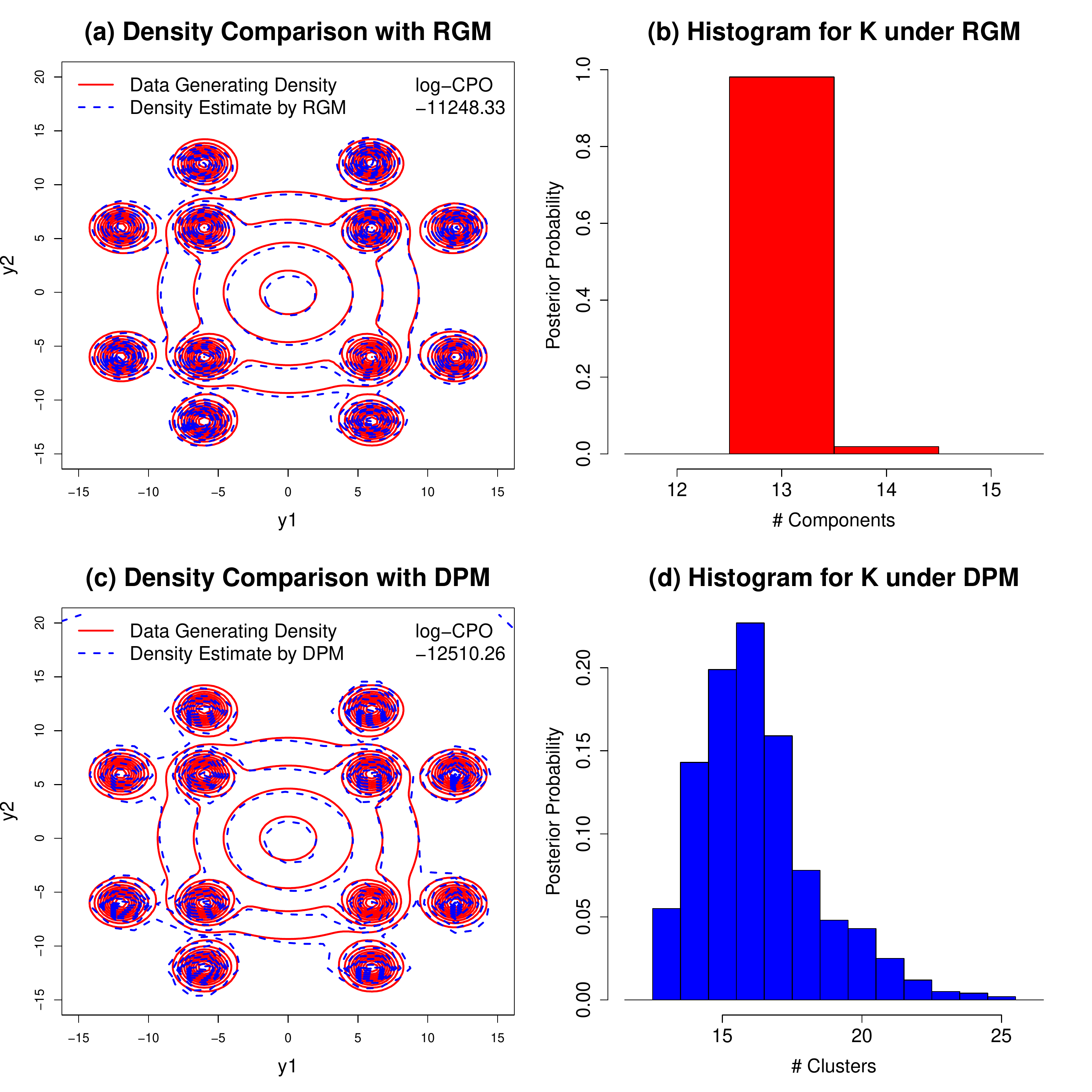}}
	\caption{Synthetic Example for Section \ref{sec:additional_simulation_study}. Panels (a) and (c) are the contour plots for the posterior density estimation under the RGM model and the DPM model, respectively. Panels (b) and (d) are the histograms of the posterior number of components under the RGM model and the posterior number of clusters under the DPM model, respectively, where the underlying true number of components is $K = 13$. 
	}
	\label{fig:GMM_K13_figures}
\end{figure} 
Figures \ref{fig:GMM_K13_figures}a and \ref{fig:GMM_K13_figures}c visualize the comparison between the posterior mean of the density under the RGM model and the DP mixture model with the data generating density, respectively, together with the corresponding log-CPO values. The log-CPO values indicate that the RGM model is a better choice compared to the DP mixture model. Furthermore, Figure \ref{fig:GMM_K13_figures}b indicates that the posterior distribution of $K$ is highly concentrated around the underlying true $K=13$ under the RGM model, whereas the DPM model assigns relatively higher posterior probability to redundant clusters (see Figure \ref{fig:GMM_K13_figures}d). 


\end{document}